\numberwithin{equation}{section}
\theoremstyle{plain}
\newtheorem{thm}{Theorem}[section]
\newtheorem{prop}[thm]{Proposition}
\newtheorem{cor}[thm]{Corollary}
\newtheorem{defn}[thm]{Definition}
\newtheorem{remark}[thm]{Remark}
\newtheorem{example}{Example}[section]
\def\@makefnmark{\hbox{\@textsuperscript{\normalfont(\@thefnmark)}}}
\newcommand{\R}{\mathbb{R}}
\newcommand{\bmu}{\bm{\mu}}
\newcommand{\diag}{\text{diag}}
\newcommand{\bmhatTheta}{\mathbf{\hat\Theta}}
\newcommand{\bma}{\mathbf{a}}
\newcommand{\bmb}{\mathbf{b}}
\newcommand{\bme}{\mathbf{e}}
\newcommand{\bmS}{\mathbf{S}}
\newcommand{\bmI}{\mathbf{I}}
\newcommand{\bmZ}{\mathbf{Z}}
\newcommand{\bmW}{\mathbf{W}}
\newcommand{\bmpi}{\bm{\pi}}
\newcommand{\bmtheta}{\bm{\theta}}
\newcommand{\bmtSigma}{\mathbf{\tilde\Sigma}}
\newcommand{\bmSigma}{\mathbf{\Sigma}}
\newcommand\de{\mathrm{d}}
\DeclareMathOperator*{\argmax}{arg\,max} 
\begin{document}
\author[1]{Katia Colaneri  \thanks{katia.colaneri@uniroma2.it}}
\author[2]{Federico D'Amario \thanks{ federico.d’amario@bankofengland.co.uk}}
\author[3]{Daniele Mancinelli\thanks{Corresponding author: daniele.mancinelli@polimi.it}}
\affil[1]{Department of Economics and Finance, University of Rome Tor Vergata.}
\affil[2]{Bank of England.}
\affil[3]{Department of Mathematics, Politecnico di Milano.}
\title{Carbon-Penalised Portfolio Insurance Strategies in a Stochastic Factor Model with Partial Information}
\date{\today}
\maketitle
\vspace{-0.5cm}
\bigskip
\begin{abstract}
\noindent  
We investigate optimal proportional portfolio insurance (PPI) strategies aimed at reducing exposure to carbon intensive stocks. PPI strategies enable investors to mitigate downside risk while retaining the potential for upside gains. In this paper we determine the PPI strategies to maximise the expected utility of the terminal cushion, where the terminal cushion is penalised proportionally to the realised volatility of stocks issued by firms operating in carbon-intensive sectors. We model the risky assets’ dynamics using geometric Brownian motions whose drift rates are modulated by an unobservable common stochastic factor to capture market-specific or economy-wide state variables that are typically not directly observable. Using the classical stochastic filtering theory, we formulate a suitable optimisation problem and solve it for the CRRA utility function. We characterise optimal carbon-penalised PPI strategies and optimal value functions under full and partial information. 
We also carry a numerical analysis  showing that the proposed strategy reduces carbon-emissions intensity without compromising financial performance.
\end{abstract}
\textbf{Keywords:} Portfolio insurance strategies, Optimal control, Sustainable investment strategies, Partial information. \\
\textbf{JEL classification:} C61, G11, G22.\\
\textbf{AMS classification:} 49L12, 60J76, 91B16, 91G20.

\section{Introduction}

As recently documented in several studies, including \cite{hartzmark2019investors}, \cite{lagerkvist2020preferences}, and \cite{anquetin2022scopes}, stakeholders around the world increasingly perceive climate change as a major global risk. As a result, institutional investors progressively incorporate environmental, social and governance (ESG) considerations into portfolio management. Among the different ESG dimensions, particular attention has been devoted to climate-related risks associated with firms operating in carbon-intensive sectors.

In this context, investors commonly evaluate the exposure to assets in their portfolios. 
Monitoring such exposure has become increasingly relevant not only from a sustainability perspective but also from a financial risk perspective, since carbon intensive firms with high carbon intensity may be more exposed to regulatory changes, technological transitions, and market repricing associated with the transition to a low-carbon economy.

Large institutional investors have already incorporated these considerations into their investment policies. For example, as reported by \cite{peng2024optimal}, the Government Pension Investment Fund has allocated \$163 trillion yen to passive ESG index products, while the California Public Employees' Retirement System follows a ``social change investment'' approach guided by ESG principles.\\ Although ESG encompasses several dimensions, this article focuses specifically on \emph{carbon risk}, which carries relevant regulatory, market and reputational implications for institutional investors. Measuring firms’ exposure to carbon risk is therefore an important step in portfolio construction. Two widely used indicators are the Brown–Green Score introduced by \cite{gorgen2020carbon} and carbon intensity\footnote{Carbon intensity is defined as the ratio between a firm’s greenhouse-gas emissions and its revenues.} proposed by \cite{hellmich2021carbon}; in this article we adopt the latter measure.

\bigskip

Against this backdrop, \emph{Proportional Portfolio Insurance} (PPI) strategies provide a natural framework for combining downside protection with dynamic participation in financial markets. PPI strategies were originally introduced following the 1973–1974 market collapse by \cite{rubinstein1976evolution} and \cite{brennan1976pricing} and have since become widely used by institutional investors such as pension funds, insurance companies and mutual funds (see, e.g., \cite{temocin2018constant} and \cite{di2024pension}). Their main objective is to guarantee a predetermined level of wealth at the end of the investment horizon while allowing participation in favourable market movements (see \cite{grossman1989portfolio} and \cite{basak2002comparative}). The strategy relies on a dynamic allocation between a risky reference portfolio and a reserve asset. The allocation rule is governed by the concept of the \emph{cushion}, defined as the difference between the current portfolio value and a protection floor representing the minimum wealth level to be preserved. The exposure to the risky portfolio is proportional to the cushion through a time-varying coefficient known as the \emph{multiplier}. When the cushion is positive, the strategy allows participation in equity market gains, whereas if the cushion reaches zero the portfolio is entirely reallocated to the reserve asset, thereby ensuring the protection of the guaranteed wealth level.

\bigskip
Despite the growing literature on sustainable portfolio allocation, relatively little attention has been devoted to the integration of carbon-related risks within portfolio insurance strategies. In particular, the interaction between downside protection mechanisms and exposures to carbon-sensitive assets has, to the best of our knowledge, not yet been systematically studied in a continuous-time stochastic control framework.

In this paper we investigate how carbon-related considerations can be incorporated into the construction of portfolio insurance strategies. We propose a modified PPI framework in which the investor maximises the expected utility of a \emph{carbon-penalised terminal cushion}. Specifically, the terminal cushion is adjusted by a penalty term proportional to the realised variance of stocks issued by firms operating in carbon-intensive sectors. The proportionality parameter represents the investor’s degree of \emph{carbon aversion} and determines the strength of the penalisation applied to exposures to carbon-intensive assets. A similar idea has recently been explored in \cite{colaneri2025design} within a general portfolio optimisation framework. Our modelling approach differs from a large part of the existing literature on low-carbon portfolio construction (see, for instance, \cite{andersson2016hedging}, \cite{bolton2022net}, and \cite{LeGuenedalRoncalli2023}). In many existing contributions, sustainability considerations are incorporated through exclusion rules or explicit constraints on the whole carbon intensity of the portfolio. By contrast, in our framework sustainability considerations enter directly into the investor’s objective function through a penalty term. This modelling choice allows the optimisation problem to account jointly for financial performance and carbon exposure. In particular, a carbon-intensive asset may still be included in the portfolio if its risk–return profile is sufficiently favourable to compensate for the additional penalty. Such flexibility is particularly relevant for portfolio insurance strategies, whose primary objective remains the preservation of the guaranteed wealth level.

\bigskip

{From an actuarial perspective, PPI strategies are widely used in the accumulation phase of defined contribution pension schemes and in capital-protected investment products, where benefits depend on investment performance while downside protection remains essential. In this setting, our framework delivers an optimal dynamic allocation rule, determined jointly by the multiplier and the composition of the risky portfolio, that balances participation in market returns with the protection of the guarantee. The introduction of a carbon penalty can also be interpreted from a prudential perspective. Supervisory authorities increasingly emphasise the importance of incorporating climate-related risks into insurers’ risk-management frameworks. In particular, \cite{EIOPA2021ORSAClimate} highlights that insurers should assess the potential impact of climate-related transition risks within their Own Risk and Solvency Assessment (ORSA). Within this perspective, the carbon penalty in our model can be interpreted not merely as an ESG preference parameter but as a forward-looking risk adjustment reflecting the potential volatility and repricing associated with exposures to carbon-intensive sectors. Since portfolio insurance strategies are designed to preserve a protection floor, limiting excessive exposure to assets that are more sensitive to transition risk contributes to the robustness of the guaranteed payoff and is therefore consistent with the Solvency II objective of policyholder protection.}

\bigskip

Our contributions can be summarised as follows. First, we introduce a novel framework for incorporating carbon-related risk considerations into portfolio insurance strategies. We extend the classical proportional portfolio insurance setting by introducing a carbon-penalised terminal cushion, in which exposures to carbon-intensive assets are penalised through a term proportional to their realised variance. {This modeling approach embeds carbon considerations directly into the investor’s objective function, so that the optimal allocation endogenously reflects both financial performance and exposure to carbon-intensive assets. From a modeling perspective, we assume that the rate of return of risky assets is affected by a latent process. This specification is motivated by the fact that returns of risky assets are hard to estimate and typically depend on many risk macroeconomic, political, and environmental factors that cannot be directly measured (see, among others, \cite{lakner1995utility, lakner1998optimal}, \cite{xia2001learning}, \cite{brendle2006portfolio} and \cite{bjork2010optimal}). The choice of the dynamics has been made to allow for analytical tractability of the optimisation problem and interpretability of optimal carbon-penalised PPI strategies.}
Second, we characterise the optimal design of the carbon-penalised PPI strategy in a continuous-time stochastic control framework. In particular, we determine jointly the optimal multiplier and the optimal composition of the risky reference portfolio that maximise the expected CRRA utility of the terminal carbon-penalised cushion. By modelling this latent factor as an Ornstein–Uhlenbeck process and applying stochastic filtering techniques, we derive the optimal strategy under incomplete information and obtain explicit expressions for the associated value function. We perform several comparisons of the optimal value and PPI strategies under full information (i.e. pretending that an insurer is able to perfectly observe the factor process) and under partial information. In particular, we quantify the value of information superiority by measuring the loss of utility arising from the inability to observe the latent factor directly.

\bigskip

The remainder of the paper is organised as follows. Section \ref{sec:review} discusses further references. Section \ref{sect:model_setting} introduces the model setting. Section \ref{sect:PPI} presents the carbon-penalised PPI strategy. Section \ref{sect:opt_problem_full_info} studies the optimisation problem under full information, while Section \ref{sect:opt_partial_info} considers the partial-information case. Section \ref{sect:num_experiments} reports the numerical analysis and Section \ref{sect:conclusions} concludes. Proofs of all results are collected in the Appendix.

\subsection{Literature review}\label{sec:review}
This article relates to different strands of the literature that study how sustainability considerations, typically measured through carbon emissions or ESG indicators, can be incorporated into portfolio optimisation alongside the traditional objectives of return maximisation and risk control. From a methodological perspective, three main approaches can be identified.

The first approach consists of excluding assets that do not meet predetermined sustainability criteria from the investment universe. A pioneering contribution is \cite{andersson2016hedging}, who propose excluding stocks with high carbon intensity and selecting the remaining assets so as to minimise tracking error relative to a benchmark portfolio. Their results show that the portfolio’s carbon footprint can be substantially reduced while maintaining negligible tracking error. \cite{bolton2022net} extend this idea by constructing portfolios consistent with the targets of the Paris Agreement, achieving a gradual reduction in carbon exposure while preserving a close tracking of major market indices.

The second approach keeps the investment universe unchanged but imposes sustainability constraints at the portfolio level. In this context, \cite{LeGuenedalRoncalli2023} study an optimisation problem in which tracking error relative to a benchmark is minimised subject to a constraint on carbon risk. Similarly, \cite{de2023esg} construct portfolios satisfying sustainability requirements based on both carbon intensity and ESG ratings. Their empirical analysis shows that portfolios with lower ESG scores may initially exhibit higher performance, while higher-ESG portfolios tend to perform better over longer horizons. \cite{BoltonEtAl2024} propose a framework for constructing equity portfolios aligned with net-zero targets by imposing a time-varying carbon budget within a tracking-error minimisation problem, obtaining substantial reductions in carbon exposure while preserving diversification.

The third approach, which is closer to our contribution, incorporates sustainability directly into investor preferences, without imposing it through explicit constraints. For instance, \cite{pastor2021sustainable} develop an equilibrium model in which investors derive utility from holding sustainable assets, leading to a lower cost of capital for environmentally friendly firms. Similarly, \cite{escobar2022multivariate} introduce a multivariate CRRA utility framework allowing investors to assign different levels of risk aversion to green and brown assets, showing that stronger aversion to brown assets significantly increases optimal allocations to sustainable investments. In contrast to the above approaches, this paper incorporates carbon-related risk considerations into the design of  downside protection portfolio insurance strategies by embedding a carbon penalty directly into the optimisation problem.

\section{The market setup}\label{sect:model_setting}
Let $\left(\Omega,\mathbb{G},\mathbb{P}\right)$ be a fixed probability space and $T$ a finite time horizon coinciding with the terminal time of an investment. We also introduce a $\mathbb{P}$-complete and right-continuous filtration $\mathbb{G}=\left\lbrace\mathcal{G}_t \right\rbrace_{t\in[0,T]}$ representing the global information flow, and we assume that all processes below are $\mathbb{G}$-adapted. We consider a financial market model consisting of $n$ stocks with $n$-dimensional price processes $\bmS=\left\lbrace\bmS_t\right\rbrace_{t\in[0,T]}$ 
where $\bmS_t=(S^1_t,\,\dots,\,S^n_t)^\top$ for all $t \in [0,T]$, and one risk-free asset $S^0$, that are traded continuously on $[0,T]$. The dynamics of the risk-free are given by 
\begin{equation}\label{eq:cash_account}
\de S^0_t=rS^0_t\de t,\quad S^0_0=1, 
\end{equation}
where $r>0$ denotes the constant risk-free interest rate. The price dynamics of the risky assets $\bmS$ are given by
\begin{equation}\label{eq:risk_assets_dyn}
\de\bmS_t=\diag\left(\bmS_t\right)\left(\bmu_t\de t+\bmSigma_{\bmS}\de\bmW^{\bmS}_t\right),
\end{equation}
where $\bmS_0=(S^1_0,\,\dots,\,S^n_0)^\top$ and $S^i_0\in \R_+$ for all $i=1, \dots, n$. In equation \eqref{eq:risk_assets_dyn}, $\bmSigma_{\bmS}=\diag\left(\sigma_1,\dots,\sigma_n\right)$, with $\sigma_i>0$ for every $i=1,\dots,n$, and $\bmW^\bmS=\{\bmW^{\bmS}\}_{t\in[0,T]}$ is a standard $\mathbb{G}$-Brownian motion in $\R^n$ with correlated components, namely $\de\langle W^{\bmS}_{i},\,W^{\bmS}_{j}\rangle_t=\rho_{i,j}\de t,$ for constant correlation coefficients $\rho_{i,j}\in[-1,1]$, such that $\rho_{i,j}=\rho_{j,i}$, for every $i,\,j=1,\dots,n$, and $\rho_{i,i}=1$, for every $i=1,\dots,n$. 
Moreover, $\bm{\mu}_t$ is stochastic and unobservable. This assumption is motivated by the fact that drifts of financial assets are rarely constant and subject to random fluctuations. In particular, we assume that the drift process $\bmu=\left\lbrace \bmu_t\right\rbrace_{t\in[0,T]}$ is of the form $\bmu_t=\bmu(Y_t)=\bm{a}Y_t+\bmb$ for every $t\in[0,T]$, with $\bm{a}\in \R^n$ and $\bmb \in \R^n$, where $Y=\{Y_t\}_{t\in[0,T]}$ is the common unobservable factor process. Indeed, $Y_t$ can represents macro-financial states that are hard to observe cleanly over time. Typical examples include the business cycle, monetary and financial conditions, credit and funding conditions, systemic liquidity, inflation pressures, and transition-to-net-zero emissions pressure. While these variables have observable proxies, none of them is directly observed in a noise-free way. Consequently, a partial information framework is necessary to model these state processes. In this paper, we model the common latent factor $Y$ as an Ornstein-Uhlenbeck (OU) process, namely, 
\begin{equation}\label{eq:latent_factor}
\de Y_t=\left(\lambda Y_t+\beta\right)\de t+\sigma_Y\de W_t^Y, \quad Y_0\sim N\left(\Gamma_0,P_0\right),
\end{equation}
with $\lambda,\,\beta\in\R,\,\sigma_Y>0$. Here, $W^Y=\left\lbrace W_t^Y\right\rbrace_{t\in[0,T]}$ is a standard one-dimensional $\mathbb{G}$-Brownian motion correlated with $\bmW^{\bmS}$ with $\de\langle W^{Y},\,W_i^{\bmS}\rangle_t=\rho_{i,Y}\de t$, where $\rho_{i,Y}\in[-1,1]$ for every $i=1,\dots,n$. The OU choice captures the cyclical, mean-reverting nature of the above macro-financial variables while preserving the linear–Gaussian structure that makes filtering under partial information analytically tractable (see Section \ref{sect:opt_partial_info}). 
{\begin{remark}
In this work, partial information refers to the unobservability of the expected return (drift), which represents a key source of informational incompleteness in portfolio choice. Indeed, while volatility may be stochastic, in continuous-time settings it can be identified from the quadratic variation of observed price paths and is therefore naturally adapted to the filtration generated by asset prices.\footnote{Alternative approaches to modelling price processes under partial information exist. For instance, \cite{frey2001nonlinear} propose a framework based on high-frequency data in which stock prices are modelled through marked point processes and nonlinear filtering techniques are used to estimate asset price volatility.} This modelling choice is standard in the portfolio-optimization literature and allows us to capture learning about expected returns while preserving analytical tractability. More generally, our specification should be interpreted as a parsimonious framework aimed at highlighting the analysis of carbon-penalised PPI strategies, while still incorporating more realistic features than models in which the rate of return is assumed to be constant and directly observable.
\end{remark}}
Stocks are assumed to be issued by firms with different levels of carbon emissions, measured by \textit{carbon intensity}. A firm's carbon intensity is defined as the ratio between the total greenhouse gas emissions in metric tonnes of CO$_2$ and total revenues (in USD millions). Based on carbon intensity, we cluster the stocks into two groups; in particular, we assume that the first $k$ assets are characterised by low carbon intensity (green stocks) and the remaining $n-k$ assets by high carbon intensity (brown stocks). {From a practical perspective, a common approach (see, e.g., \cite{ardia2023factor}) is to rank firms’ carbon intensity cross-sectionally and identify the two groups using percentiles. For instance, firms with carbon intensity above (respectively, below) the $p$-th (respectively, $(1-p)$-th) percentile are labeled as brown (respectivel, green)}.
\paragraph{A convenient representation for the latent factor--stock model.}
We denote by $\mathbf{R}$ the positive definite correlation matrix of $\left(\bmW^{\bmS},\,W^Y\right)^\top$, 
\begin{equation}
\mathbf{R}=
\begin{pmatrix}
1          & \rho_{1,2} & \dots  & \rho_{1,n}   & \rho_{1,Y}  \\
\rho_{1,2} & 1          & \dots  & \rho_{2,n}   & \rho_{2,Y}  \\
\vdots     & \vdots     & \ddots & \vdots       & \vdots      \\
\rho_{1,n} & \rho_{2,n} & \dots  & 1            & \rho_{n,Y}\\
\rho_{1,Y} & \rho_{2,Y} & \dots  & \rho_{n,Y} & 1
\end{pmatrix}.
\end{equation}
We express $\bmW^\bmS$ and $W^Y$ as a linear combination of uncorrelated standard $\mathbb{G}$-Brownian motions, namely $\bmZ=\left(\bmZ^{\bmS},\,Z^Y\right)^\top=\left(Z^S_1,\,\dots,\,Z_n^S,\,Z^Y\right)^\top$, as follow
\begin{equation}
\begin{pmatrix}\bmW^{\bmS}_t\\W^Y_t\end{pmatrix}=\mathbf{L}\begin{pmatrix}\bmZ^{\bmS}_t\\Z^Y_t\end{pmatrix},\quad t\in[0,T],
\end{equation}
where $\mathbf{L}=\left(l_{i,j}\right)_{i,j\in\left\lbrace1,\dots,n+1\right\rbrace}\in\R^{\left(n+1\right)\times\left(n+1\right)}$ is a lower triangular matrix obtained through the Cholesky decomposition of the correlation matrix $\mathbf{R}$, that is $\mathbf{R}=\mathbf{L}\mathbf{L}^\top$. Thus, the dynamics in \eqref{eq:latent_factor} and \eqref{eq:risk_assets_dyn} can be rewritten as
\begin{align}
\label{eq:risky_asset_dyn_uncorrelated}
\de Y_t&=\left(\lambda Y_t+\beta\right)\de t+\bmtSigma_Y\de\bmZ_t^{\bmS}+\tilde\sigma_Y\de Z_t^Y, \quad Y_0\sim N\left(\Gamma_0,P_0\right),\\
\de\bmS_t&=\diag\left(\bmS_t\right)\left[\left(\bma Y_t+\bmb\right)\de t+\bmtSigma_{\bmS}\de\bmZ^{\bmS}_t\right],\quad\bmS_0\in\R^{n}_{+},
\end{align}
respectively, where $\bmtSigma_Y=\sigma_{Y}\mathbf{L}_{Y}\in\R^{1\times n}$, $\tilde\sigma_Y=\sigma_{Y}l_{n+1,n+1}\in\R$, $\mathbf{\tilde\Sigma_S}=\bmSigma_{\bmS}\mathbf{L}_{\bmS}=\left(\tilde\sigma_{i,j}\right)_{i,j\in\left\lbrace1,\dots,n\right\rbrace}\in\R^{n\times n}$,  with $\mathbf{L}_Y=\left(l_{n+1,j}\right)_{j\in\left\lbrace1,\dots,n\right\rbrace}\in\R^{1\times n}$ and $\mathbf{L}_{\bmS}=\left(l_{i,j}\right)_{i,j\in\left\lbrace1,\dots,n\right\rbrace}\in\R^{n\times n}$. 

\section{The carbon-penalised proportional portfolio insurance strategy}\label{sect:PPI}
The portfolio insurer employs a proportional portfolio insurance (PPI) strategy. Such strategies are designed to capitalise on the returns of the risky assets traded on the market while securing a pre-specified amount $G$ at maturity $T$. To achieve this goal, the fund manager divides her position between the bank account $S^0$, and a risky reference portfolio with value $X=\left\lbrace X_t\right\rbrace_{t\in[0,T]}$. The fund manager defines a floor process $F=\left\lbrace F_t\right\rbrace_{t\in[0,T]}$ and a cushion process $C=\left\lbrace C_t\right\rbrace_{t\in[0,T]}$. The floor $F$ is given by the present value of the guarantee amount $G$ at maturity, that is $F_t=Ge^{-r(T-t)}$ for all $t \in [0,T]$, and represents the capital to be protected at every time.\footnote{Typically, the guaranteed amount $G$ is a pre-specified percentage of the initial endowment $V_0$, namely $G = V_0 \cdot PL$, where $\mathrm{PL} \in (0,1]$ is the so-called \textit{protection level}.} The cushion $C$ is the difference between the current PPI portfolio value $V=\left\lbrace V_t\right\rbrace_{t\in[0,T]}$ and the floor, that is $C_t=V_t-F_t$ for every $t\in[0,T]$. The exposure to the risky reference portfolio $X$ is linked to the cushion in the following way. At every time $t\in[0,T]$, if $V_t>F_t$ the exposure to $X$ is given by $m_tC_t$, where $m=\left\lbrace m_t\right\rbrace_{t\in[0,T]}$ is the proportionality factor known as multiplier. However, if there exists a time $\tau:=\inf\left\lbrace t>0:V_t\leq F_t\right\rbrace\wedge T$, the portfolio value is entirely invested into the bank account $S^0$, since $C_t=0$ for all $t\in\left[\tau\wedge T,T\right]$. To summarize, the exposure to the market index is given by $m_t\left(C_t\right)^+$ for every $t\in[0,T]$. Hence, the dynamics of the PPI portfolio is given by 
\begin{equation}\label{eq:PPI_strategy_0}
\de V_t=
\begin{cases}
\begin{aligned}
&rV_{t}\de t+\left(V_t-F_t\right)m_t\left( \frac{\de X_t}{X_t} - r dt\right),\quad t<\tau,\\[8pt]
&rV_{t}\de t,\quad t\geq\tau, 
\end{aligned}
\end{cases}
\end{equation}
with $V_0=v_0$ being the initial endowment, and the dynamics of the cushion $C=\{C_t\}_{t\in[\tau\wedge T, T]}$ are
\begin{align}\label{eq:cushion_process_0}
\dfrac{\de C_t}{C_t}=& r \de t +m_t \left(\frac{\de X_t}{X_t} - r \de t \right),\quad C_0=c_0=v_0-F_0.
\end{align}
Next, we introduce the dynamics of the \textit{risky reference portfolio}. Let $\bmpi=\left\lbrace\pi_{1,t},\dots,\pi_{n,t}\right\rbrace_{t\in[0,T]}$ be the vector-valued process in $\R^n$ containing the composition percentage of the $i$-th stock in the risky reference portfolio, for every $i=1,\dots,n$ and $t\in[0,T]$. Hence, the dynamics of $X^{\bmpi}=\left\lbrace X_t^{\bmpi}\right\rbrace_{t\in[0,T]}$ read as
\begin{equation}
\dfrac{\de X_t^{\bmpi}}{X_t^{\bmpi}}=\bmpi_t^\top\left(\bma Y_t+\bmb\right)\de t+\bmpi^\top_t\bmtSigma_{\bmS}\de\bmZ^{\bmS},\quad X_0= x_0.
\end{equation}
Assuming that $\sum_{i=1}^n\pi_{i,t}=1$ for every $t\in[0,T]$, for any given couple $\left(m,\bmpi\right)=\left\lbrace m_t,\bmpi_t\right\rbrace_{t\in[0,T]}$, equation \eqref{eq:PPI_strategy_0} becomes   
\begin{equation}\label{eq:PPI_strategy}
\de V_t^{m,\bmpi}=
\begin{cases}
\begin{aligned}
&rV_{t}^{m,\bmpi}\de t+\left(V_t^{m,\bmpi}-F_t\right)m_t\left[\bmpi_{t}^\top\left(\bma Y_t+\bmb-\mathbf{r}_n\right)\de t+\bmpi_{t}^\top\bmtSigma_{\bmS}\de\bmZ^{\bmS}_t\right],\quad t<\tau,\\[8pt]
&rV_{t}^{m,\bmpi}\de t,\quad t\geq\tau, 
\end{aligned}
\end{cases}
\end{equation}
with $V_0^{m,\bmpi}=v_0$ being the initial endowment, and consequently, \eqref{eq:cushion_process_0} is 
\begin{align}\label{eq:cushion_process}
\dfrac{\de C_t^{m,\bmpi}}{C_t^{m,\bmpi}}=&\left[r+m_t\bmpi_{t}^\top\left(\bma Y_t+\bmb-\mathbf{r}_n\right)\right]\de t+m_t\bmpi_{t}^\top\bmtSigma_{\bmS}\de\bmZ^{\bmS}_t,\quad C_0^{m,\bmpi}=c_0.
\end{align}
Here, we stress the dependence of the risky reference portfolio $X$ on its composition $\bmpi$, and the dependence of both the PPI portfolio value $V$ and the cushion $C$ on $\bmpi$ and the multiplier $m$. The fund manager's objective is to maximise the expected utility from the terminal cushion in a carbon-penalised setting. In particular, the fund manager wants to prevent a high exposure of the strategy to brown stocks by adding a penalty term to the terminal cushion. In the same spirit of \cite{rogers2013optimal}, we assume that such penalisation is proportional to the riskiness of brown stocks, which is measured according to their realised variance. The carbon-penalised cushion at maturity is given by 
\begin{equation}
\hat{C}^{m,\bmpi}_T=C^{m,\bmpi}_T\exp\left\lbrace-\dfrac{1}{2}\int_0^Tm_s^2\bmpi_{s}^\top\left(\bmSigma_{\bmS}\bmSigma_{\bmS}^\top\odot\bme\right)\bmpi_s\de s\right\rbrace,
\end{equation}
where $\odot$ denotes the Hadamard product, and $\bme=\begin{pmatrix}\bm{0}_{k} & \bm{1}_{n-k}\varepsilon\end{pmatrix}^\top\in\R^{n}$ with $\varepsilon\geq 0$ represents the fund manager’s carbon aversion with respect to brown stocks. It follows from It\^{o}'s formula that the dynamics of $\hat{C}^{m,\bmpi}=\{\hat C_t^{m,\bmpi}\}_{t\in[\tau\wedge T,T]}$ is given by
\begin{align}\label{eq:penalised_cushion_dyn}
\dfrac{\de\hat{C}_t^{m,\bmpi}}{\hat{C}_t^{m,\bmpi}}=\left[r+m_t\bmpi_t^\top\left(\bma Y_t+\bmb-\mathbf{r}_n\right)-\dfrac{1}{2}m_t^2\bmpi^\top_t\left(\bmSigma_{\bmS}\bmSigma_{\bmS}^\top\odot\bme\right)\bmpi_t\right]\de t+m_t\bmpi_t^\top\bmtSigma_\bmS\de\bmZ^{\bmS}_t,\quad\hat C_0^{\bmpi}=\hat c_0.
\end{align}
\begin{remark}
\begin{itemize}
\item[(i)] The penalisation embeds sustainability into the portfolio insurer’s preferences by increasing risk aversion specifically toward high–carbon-intensity stocks. Unlike \cite{rogers2013optimal}, our penalty excludes the variance–covariance matrix to avoid bias from negatively correlated brown stocks; instead, it relies solely on realised variance. Moreover, we do not impose a fixed sustainability target as in \cite{bolton2022net} and \cite{LeGuenedalRoncalli2023}. Instead this approach accounts jointly for financial performance and carbon exposure: high-carbon assets may still be held if their low volatility or high expected return compensates for their emissions. This is crucial for PI strategies, whose main goal is capital protection, as it prevents excessive penalisation of low-risk brown assets.
\item[(ii)] The carbon penalty admits two interpretations. It can be seen as (i) a proportional cost on carbon-intensive holdings, balancing risk premia against reputational or regulatory costs, or (ii) an endogenous increase in the insurer’s risk aversion toward brown stocks. As shown in Example \ref{Ex:example_CRRA_full_info}, the effective risk aversion to such assets equals the market risk-aversion parameter plus the penalty term, naturally reducing exposure to carbon-intensive stocks (see, e.g. \cite{colaneri2025design} for more details on this point).
\end{itemize}
\end{remark}
To reduce the number of controls of the optimisation problem, we introduce the process $\bmtheta=\{\bmtheta_t\}_{t\in[0,T]}$ such that $\bmtheta_t=m_t\bmpi_t$, for every $t\in[0,T]$. Hence, the dynamics of the carbon-penalised cushion can be rewritten as
\begin{align}\label{eq:carbon_penalised_cushion_FULL_INFO}
\dfrac{\de\hat{C}_t^{\bmtheta}}{\hat{C}_t^{\bmtheta}}=\left[r+\bmtheta_t^\top\left(\bma Y_t+\bmb-\mathbf{r}_n\right)-\dfrac{1}{2}\bmtheta^\top_t\left(\bmSigma_{\bmS}\bmSigma_{\bmS}^\top\odot\bme\right)\bmtheta_t\right]\de t+\bmtheta_t^\top\bmtSigma_\bmS\de\bmZ^{\bmS}_t,\quad\hat C_0^{\bmtheta}=\hat c_0.
\end{align}
In the next section, we address the optimisation problem of the portfolio insurer under two different information settings. We begin with the case where she has full information on all factor processes that drive the market, and we refer to this as the full information case. Second, we assume that she cannot observe the common stochastic factor $Y$ directly, but she can only infer its value from the observation of stock prices, and we call this case the partial information setting. 
\section{Optimisation problem under full information}\label{sect:opt_problem_full_info}
We introduce the set of admissible strategies.
\begin{defn}\label{defn:G_admissible_strategies_theta}
A $\mathbb{G}$-admissible carbon-penalised PPI strategy $\bmtheta=\left\lbrace\bmtheta_t\right\rbrace_{t\in[0,T]}$ is a self-financing, $\mathbb{G}$-predictable process such that 
\begin{itemize}
\item[(i)] $\mathbb{E}\left[\int_0^T|Y_s|\|\bmtheta_s\|_1+\|\bmtheta_s\|_2^2\de s\right]<\infty$,
\item[(ii)] $\displaystyle\sup_{t\in[0,T]}\mathbb{E}\left[(\hat C_t^{\bmtheta})^{d\left(1-\delta\right)(1+\alpha)}\right]<\infty$, for some $\alpha>0$ and $d>1$.
\end{itemize}
We denote the set of $\mathbb{G}$-admissible strategies by $\mathcal{A}^{\mathbb{G}}$.
\end{defn}

Note that we can equivalently rewrite the set of admissible strategies in terms of $\left(m,\,\bmpi\right)$ as follows. Precisely, a $\mathbb{G}$-admissible carbon-penalised PPI strategy $\left(m,\,\bmpi\right)=\left\lbrace m_t,\,\bmpi_t\right\rbrace_{t\in[0,T]}$ is a self-financing, $\mathbb{G}$-predictable process such that 
\begin{itemize}
\item[(i)] the following integrability condition holds
\begin{equation}\label{eq:integrability_conds}
\mathbb{E}\left[\int_0^T|Y_s||m_s|\|\bmpi_s\|_1+m_s^2\|\bmpi_s\|_2^2\de s\right]
<\infty,
\end{equation}
where $\|\cdot \|_1$ and $\|\cdot \|_2$ denote the $\ell_1$ and $\ell_2$ norms in $\R^n$,
\item[(ii)] $\displaystyle\sup_{t\in[0,T]}\mathbb{E}\left[(\hat C_t^{m,\,\bmpi})^{d\left(1-\delta\right)(1+\alpha)}\right]<\infty$, for some $\alpha>0$ and $d>1$.
\end{itemize}

A fully informed portfolio insurer seeks to solve the following optimisation: 
\begin{equation}\label{eq:optimisation_problem_full_info_CRRA}
\mbox{Maximise }\mathbb{E}^{t,c,y}\left[\dfrac{(\hat C_T^{\bmtheta})^{1-\delta}}{1-\delta}\right],\mbox{ over all }\bmtheta\in\mathcal{A}^{\mathbb{G}},
\end{equation}
where $\delta\in\left(0,1\right)\cup\left(1,+\infty\right)$ represents the fund manager's risk aversion parameter, and $\mathbb{E}^{t,c,y}$ denotes the conditional expectation given $\hat C_t=c$ and $Y_t=y$. The value function of the optimisation problem in equation \eqref{eq:optimisation_problem_full_info_CRRA}, is given by
\begin{equation}\label{eq:value_function_full_info_CRRA}
\hat{v}(t,c,y):=\sup_{\bmtheta\in\mathcal{A}^{\mathbb{G}}}\mathbb{E}^{t,c,y}\left[\dfrac{(\hat C_T^{\bmtheta})^{1-\delta}}{1-\delta}\right].
\end{equation}
The problem is solved by employing dynamic programming principle. We consider the following Hamilton-Jacobi-Bellman equation
\begin{equation}\label{eq:HJB_equation_FULL_INFO}
\begin{cases}
\displaystyle\sup_{\bmtheta\in\mathcal{A}}\hat{v}_{t}(t,c,y)+\mathcal{L}^{\bmtheta}\hat{v}(t,c,y)=0,&(t,c,y)\in[0,T)\times\R_+\times\R,\\[8pt]
\hat{v}(T,c,y)=\dfrac{c^{1-\delta}}{1-\delta},& (c,y)\in\R_+\times\R,
\end{cases}
\end{equation}
where for any constant control $\bmtheta\in\R^n$, the operator $\mathcal{L}^{\theta}$ denotes the infinitesimal generator of the process $(\hat{C}_t^{\bmtheta},Y_t)$ which is given by 
\begin{align}
\mathcal{L}^{\bmtheta}F(t,c,y)=&c\left[r+\bmtheta^\top\left(\bma y+\bmb-\bm{r}_{n}\right)-\dfrac{1}{2}\bmtheta^\top\left(\bmSigma_{\bmS}\bmSigma_{\bmS}^\top\odot\bme\right)\bmtheta\right]F_{c}(t,c,y)+\frac{c^2}{2}\bmtheta^\top\bmtSigma_{\bmS}\bmtSigma_{\bmS}^\top\bmtheta F_{c,c}(t,c,y)\\
&+\left(\lambda y+\beta\right)F_y(t,c,y)+\dfrac{\sigma_Y^2}{2}F_{y,y}(t,c,y)+c\bmtheta^\top\bmtSigma_{\bmS}\bmtSigma_Y^\top F_{c,y}(t,c,y),
\end{align}
for every function $F(\cdot)\in\mathcal{C}^{1,2,2}\left([0,T]\times\R_{+}\times\R\right)$. In the sequel, we prove that the value function, defined in equation \eqref{eq:value_function_full_info_CRRA}, solves the equation \eqref{eq:HJB_equation_FULL_INFO}. We begin our analysis of the optimisation problem under full information with a verification result. 
\begin{thm}[Verification Theorem]\label{thm:verification_thm_full_info_CRRA}
Let $f(t,c,y)\in\mathcal{C}^{1,2,2}([0,T]\times\R_{+}\times\R)$ be a classical solution to the HJB equation \eqref{eq:HJB_equation_FULL_INFO} and assume that the following conditions hold:
\begin{itemize}
\item[(i)] for any $\bmtheta\in\mathcal{A}^{\mathbb{G}}$ 
the family $\{f(t \wedge \tau, \hat{C}_{t \wedge \tau}, Y_{t \wedge \tau}), \text{ for all } \mathbb{G}\text{--stopping times } \tau \}$ is uniformly integrable;
\item[(ii)] there exists $\bmtheta^\star$ at which the supremum in equation \eqref{eq:HJB_equation_FULL_INFO} is attained.
\end{itemize}
Then $f(t,c,y)=\hat{v}(t,c,y)$ and if $\{\bmtheta^\star(t,Y_t)\}_{t\in[0,T]}\in\mathcal{A}^{\mathbb{G}}$ this is an optimal Markovian control.
\end{thm}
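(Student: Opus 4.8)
The plan is to run the classical two-sided verification argument built on It\^{o}'s formula and the HJB inequality, invoking the two hypotheses precisely where the limiting procedure requires them. First I would fix a starting point $(t,c,y)$ and an arbitrary $\bmtheta\in\mathcal{A}^{\mathbb{G}}$, and apply It\^{o}'s formula to $f(s,\hat C_s^{\bmtheta},Y_s)$ on $[t,T\wedge\tau]$, using that the penalised cushion dynamics \eqref{eq:carbon_penalised_cushion_FULL_INFO} are frozen after $\tau$ so that the relevant generator is $\mathcal{L}^{\bmtheta_s}$ up to $\tau$. Since $f\in\mathcal{C}^{1,2,2}$, this gives
\[
f(T\wedge\tau,\hat C_{T\wedge\tau}^{\bmtheta},Y_{T\wedge\tau})
= f(t,c,y) + \int_t^{T\wedge\tau}\bigl(f_t+\mathcal{L}^{\bmtheta_s}f\bigr)(s,\hat C_s^{\bmtheta},Y_s)\,\de s + M_{T\wedge\tau}-M_t,
\]
where $M$ collects the stochastic integrals against $\bmZ^{\bmS}$ and $Z^Y$ and is an a priori local martingale.

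The crux is the sign of the drift. Because $f$ solves \eqref{eq:HJB_equation_FULL_INFO}, for every admissible control value the pointwise inequality $f_t+\mathcal{L}^{\bmtheta}f\le 0$ holds; evaluating it along the trajectory with the adapted control $\bmtheta_s$ makes the Lebesgue integral above nonpositive, so $f(s,\hat C_s^{\bmtheta},Y_s)$ is a local supermartingale. Introducing a localising sequence $\tau_k\uparrow T\wedge\tau$ that reduces $M$ to a true martingale and taking $\mathbb{E}^{t,c,y}$ kills the martingale increment, yielding $f(t,c,y)\ge \mathbb{E}^{t,c,y}\bigl[f(T\wedge\tau\wedge\tau_k,\hat C_{T\wedge\tau\wedge\tau_k}^{\bmtheta},Y_{T\wedge\tau\wedge\tau_k})\bigr]$. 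Here hypothesis (i) enters: the uniform integrability of the family $\{f(t\wedge\tau,\hat C_{t\wedge\tau},Y_{t\wedge\tau})\}$ lets me send $k\to\infty$ inside the expectation and use the terminal condition $f(T,\cdot)=c^{1-\delta}/(1-\delta)$ to obtain $f(t,c,y)\ge \mathbb{E}^{t,c,y}\bigl[(\hat C_T^{\bmtheta})^{1-\delta}/(1-\delta)\bigr]$. Taking the supremum over $\bmtheta\in\mathcal{A}^{\mathbb{G}}$ gives $f\ge\hat v$.

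For the reverse inequality I would repeat the computation with the maximiser $\bmtheta^\star$ supplied by hypothesis (ii). By construction the supremum in \eqref{eq:HJB_equation_FULL_INFO} is attained at $\bmtheta^\star$, so the drift vanishes identically and $f(s,\hat C_s^{\bmtheta^\star},Y_s)$ is a local martingale; provided $\{\bmtheta^\star(t,Y_t)\}_{t\in[0,T]}\in\mathcal{A}^{\mathbb{G}}$, the same localisation combined with (i) upgrades it to a true martingale, turning each inequality into an equality and giving $f(t,c,y)=\mathbb{E}^{t,c,y}\bigl[(\hat C_T^{\bmtheta^\star})^{1-\delta}/(1-\delta)\bigr]\le\hat v(t,c,y)$. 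Combining the two bounds yields $f=\hat v$ and identifies $\bmtheta^\star$ as an optimal Markovian control.

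The hard part will be the limiting procedure rather than the algebra: I must verify that the localised $M$ is genuinely reduced to a martingale with vanishing expectation and that $f(T\wedge\tau\wedge\tau_k,\cdot)$ converges in $L^1$ to $(\hat C_T^{\bmtheta})^{1-\delta}/(1-\delta)$. This is exactly why the admissibility set imposes the moment bound $\sup_{t}\mathbb{E}[(\hat C_t^{\bmtheta})^{d(1-\delta)(1+\alpha)}]<\infty$ with $d>1,\,\alpha>0$: together with the power (CRRA) structure $f\sim c^{1-\delta}g(t,y)$ it furnishes an $L^{p}$-bound with $p=d(1+\alpha)>1$ on the relevant family, hence (via de la Vall\'ee-Poussin) the uniform integrability assumed in (i), while simultaneously controlling the quadratic variation of $M$ so the stochastic integrals are square-integrable martingales; the integrability condition in Definition \ref{defn:G_admissible_strategies_theta}(i) handles the finite-variation term. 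A secondary, essentially cosmetic subtlety is the absorption at $\tau$: since the penalised cushion obeys multiplicative dynamics it remains strictly positive whenever the initial cushion is positive, so in fact $\tau=T$ and the stopping is only bookkeeping, retained to cover the degenerate case of a non-positive initial cushion.
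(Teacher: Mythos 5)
Your proposal is correct and follows essentially the same route as the paper's proof: It\^{o}'s formula, the HJB inequality to obtain the (local) supermartingale property along an arbitrary admissible control, localisation by a sequence of stopping times that reduces the stochastic integral to a true martingale, passage to the limit via the uniform-integrability hypothesis (i) together with the terminal condition, and the reverse inequality along the maximiser $\bmtheta^\star$ from (ii). The only cosmetic difference is that the paper works on $[t,T]$ directly and localises with $\tau_n=\inf\{t\ge 0:\hat C_t^{\bmtheta}\ge n\text{ and }|Y_t|\le n\}$, rather than first stopping at the PPI absorption time as you do.
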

\begin{proof}
See Appendix \ref{app:A_1}.
\end{proof}
\begin{thm}\label{thm:existence_CRRA}
Let $\hat{f}(t),\hat{g}(t),\hat{h}(t)\in\mathcal{C}_b^{1}([0,T])$ be the unique solutions to the following system of ODEs
\begin{align}
0=&\hat{f}_t(t)+\left[\left(1-\delta\right)\bmtSigma_Y\bmtSigma_{\bmS}^\top\bmhatTheta^{-1}\bmtSigma_{\bmS}\bmtSigma_Y^\top+\sigma_Y^2\right]\hat{f}^2(t)+2\left[\left(1-\delta\right)\bmtSigma_Y\bmtSigma_{\bmS}^\top\bmhatTheta^{-1}\bma+\lambda\right]\hat{f}(t)\\
\label{eq:f_hat}&+\left(1-\delta\right)\bma^\top\bmhatTheta^{-1}\bma,\\[8pt]
0=&\hat{g}_t(t)+\left[\left(1-\delta\right)\bmtSigma_Y\bmtSigma_{\bmS}^\top\bmhatTheta^{-1}\bma+\lambda\right]\hat{g}(t)+\left[\left(1-\delta\right)\bmtSigma_Y\bmtSigma_{\bmS}^\top\bmhatTheta^{-1}\left(\bmb-\mathbf{r}_{n}\right)+\beta\right]\hat{f}(t)\\
\label{eq:g_hat}&+\left[\left(1-\delta\right)\bmtSigma_Y\bmtSigma_{\bmS}^\top\bmhatTheta^{-1}\bmtSigma_{\bmS}\bmtSigma_Y^\top+\sigma_Y^2\right]\hat{f}(t)\hat{g}(t)+\left(1-\delta\right)\bma^\top\bmhatTheta^{-1}\left(\bmb-\bm{r}_{n}\right),\\[8pt]
0=&\hat{h}_t(t)+\left(1-\delta\right)r+\left[\left(1-\delta\right)\bmtSigma_Y\bmtSigma_{\bmS}^\top\bmhatTheta^{-1}\left(\bmb-\mathbf{r}_n\right)+\beta\right]\hat{g}(t)+\dfrac{\sigma_Y^2}{2}\hat{f}(t)\\
&\label{eq:h_hat}+\dfrac{1}{2}\left[\left(1-\delta\right)\bmtSigma_Y\bmtSigma_{\bmS}^\top\bmhatTheta^{-1}\bmtSigma_{\bmS}\bmtSigma_Y^\top+\sigma_Y^2\right]\hat{g}^2(t)+\dfrac{1-\delta}{2}\left(\bmb-\bm{r}_{n}\right)^\top\bmhatTheta^{-1}\left(\bmb-\bm{r}_{n}\right),
\end{align}
with terminal conditions $\hat{f}(T)=\hat{g}(T)=\hat{h}(T)=0$, where $\bmhatTheta=\left(\bmSigma_{\bmS}\bmSigma_{\bmS}^\top\right)\odot\bme+\delta\bmtSigma_{\bmS}\bmtSigma_{\bmS}^\top$. Then, the optimal control is given by 
\begin{equation}\label{eq:opt_controls_CRRA_full}
\bmtheta^\star(t,y)=\bmhatTheta^{-1}\left(\bma y+\bmb-\bm{r}_{n}\right)+\bmhatTheta^{-1}\bmtSigma_{\bmS}\bmtSigma_Y^\top\left(\hat{f}(t)y+\hat{g}(t)\right),
\end{equation}
and the value function satisfies
\begin{equation}\label{eq:value_fun_CRRA_full_info}
\hat v(t,c,y)=\dfrac{c^{1-\delta}}{1-\delta}\exp\left\lbrace\frac{\hat{f}(t)}{2}y^2+\hat{g}(t)y+\hat{h}(t)\right\rbrace.
\end{equation}
\end{thm}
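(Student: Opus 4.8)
The plan is to solve the HJB equation \eqref{eq:HJB_equation_FULL_INFO} by a guess-and-verify argument built on the separable, exponential--quadratic ansatz
\begin{equation}
\hat v(t,c,y)=\frac{c^{1-\delta}}{1-\delta}\,\phi(t,y),\qquad \phi(t,y)=\exp\Big\{\tfrac{\hat f(t)}{2}y^2+\hat g(t)y+\hat h(t)\Big\},
\end{equation}
and then to invoke Theorem \ref{thm:verification_thm_full_info_CRRA}. This form is dictated by two features of the model: the CRRA criterion is homogeneous of degree $1-\delta$ in the cushion $c$, which forces the factor $c^{1-\delta}/(1-\delta)$, while the linear--Gaussian OU dynamics \eqref{eq:latent_factor} together with the affine drift $\bma y+\bmb$ make an exponential--quadratic dependence on $y$ the natural candidate. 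First I would record the derivatives $c\hat v_c=c^{1-\delta}\phi$, $c^2\hat v_{cc}=-\delta c^{1-\delta}\phi$, $c\hat v_{cy}=c^{1-\delta}(\hat f y+\hat g)\phi$, $\hat v_y/\hat v=\hat f y+\hat g$, $\hat v_{yy}/\hat v=\hat f+(\hat f y+\hat g)^2$ and $\hat v_t/\hat v=\tfrac{\hat f_t}{2}y^2+\hat g_t y+\hat h_t$, and substitute them into $\mathcal L^{\bmtheta}$.

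The key structural observation is that the three $\bmtheta$-dependent terms of $\mathcal L^{\bmtheta}\hat v$ all share the strictly positive prefactor $c^{1-\delta}\phi$ (arising from $c\hat v_c>0$), so the inner supremum reduces to maximising the concave quadratic
\begin{equation}
q(\bmtheta)=\bmtheta^\top(\bma y+\bmb-\mathbf{r}_n)-\tfrac{1}{2}\bmtheta^\top\bmhatTheta\,\bmtheta+(\hat f y+\hat g)\,\bmtheta^\top\bmtSigma_{\bmS}\bmtSigma_Y^\top,
\end{equation}
where $\bmhatTheta=(\bmSigma_{\bmS}\bmSigma_{\bmS}^\top)\odot\bme+\delta\bmtSigma_{\bmS}\bmtSigma_{\bmS}^\top$ gathers the carbon-penalty Hadamard term and the diffusion term $\delta\bmtSigma_{\bmS}\bmtSigma_{\bmS}^\top$ coming from $c^2\hat v_{cc}$. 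Since $\delta\bmtSigma_{\bmS}\bmtSigma_{\bmS}^\top$ is positive definite and $(\bmSigma_{\bmS}\bmSigma_{\bmS}^\top)\odot\bme$ is positive semidefinite, $\bmhatTheta$ is invertible; the first-order condition $\bmhatTheta\,\bmtheta=(\bma y+\bmb-\mathbf{r}_n)+\bmtSigma_{\bmS}\bmtSigma_Y^\top(\hat f y+\hat g)$ then yields exactly \eqref{eq:opt_controls_CRRA_full}, and negative definiteness of $-\bmhatTheta$ confirms it is the maximiser, discharging hypothesis (ii) of Theorem \ref{thm:verification_thm_full_info_CRRA}.

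Next I would substitute $\bmtheta^\star$ back, using $\sup_{\bmtheta}q(\bmtheta)=\tfrac12\,\bm{\xi}^\top\bmhatTheta^{-1}\bm{\xi}$ with $\bm{\xi}=(\bma y+\bmb-\mathbf{r}_n)+\bmtSigma_{\bmS}\bmtSigma_Y^\top(\hat f y+\hat g)$, divide the resulting identity by the nonzero factor $\tfrac{c^{1-\delta}}{1-\delta}\phi$, and collect the $y^2$, $y$, and constant terms. Matching the $y^2$-coefficient produces the scalar Riccati equation \eqref{eq:f_hat} for $\hat f$, the $y$-coefficient the linear equation \eqref{eq:g_hat} for $\hat g$, and the constant term \eqref{eq:h_hat} for $\hat h$; the terminal conditions $\hat f(T)=\hat g(T)=\hat h(T)=0$ follow from $\phi(T,\cdot)=1$. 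Here the scalar symmetry $\bma^\top\bmhatTheta^{-1}\bmtSigma_{\bmS}\bmtSigma_Y^\top=\bmtSigma_Y\bmtSigma_{\bmS}^\top\bmhatTheta^{-1}\bma$ (using that $\bmhatTheta^{-1}$ is symmetric) is what lets one recognise the cross terms in the stated form.

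The hard part will be the global solvability of the system on all of $[0,T]$ rather than the algebra. Equation \eqref{eq:f_hat} is a scalar Riccati equation, and the genuine obstacle is to exclude finite-time blow-up of $\hat f$: when $\delta\le 1$ its quadratic coefficient $(1-\delta)\bmtSigma_Y\bmtSigma_{\bmS}^\top\bmhatTheta^{-1}\bmtSigma_{\bmS}\bmtSigma_Y^\top+\sigma_Y^2$ is positive and a comparison with an autonomous Riccati with $\hat f(T)=0$ gives global existence, whereas for $\delta>1$ one must control the sign of the discriminant to ensure the solution remains bounded on $[0,T]$; existence and uniqueness then propagate to $\hat g$ and $\hat h$, which are first-order linear ODEs with coefficients bounded through $\hat f$ and are solved by variation of constants, giving $\hat f,\hat g,\hat h\in\mathcal C_b^1([0,T])$. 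Finally I would verify that the Markovian candidate $\bmtheta^\star(t,Y_t)$, being affine in the Gaussian process $Y$, satisfies the integrability condition (i) and the moment bound (ii) of Definition \ref{defn:G_admissible_strategies_theta} and that the associated family is uniformly integrable (hypothesis (i) of the verification theorem); Theorem \ref{thm:verification_thm_full_info_CRRA} then identifies $\hat v$ with \eqref{eq:value_fun_CRRA_full_info} and establishes optimality of $\bmtheta^\star$.
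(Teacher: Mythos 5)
Your proposal is correct and follows essentially the same route as the paper: the separable ansatz $\hat v=\frac{c^{1-\delta}}{1-\delta}\hat\varphi(t,y)$ with $\hat\varphi$ exponential--quadratic, pointwise maximisation of the concave quadratic in $\bmtheta$ (negative-definite Hessian $-\bmhatTheta\hat\varphi$), and matching of the $y^2$, $y$ and constant coefficients to obtain \eqref{eq:f_hat}--\eqref{eq:h_hat} and \eqref{eq:opt_controls_CRRA_full}. Note only that the global solvability of the Riccati equation and the admissibility/uniform-integrability checks you sketch at the end are not part of the paper's proof of this theorem --- existence of $\hat f,\hat g,\hat h\in\mathcal C_b^1([0,T])$ is a hypothesis of the statement and is addressed separately via the discriminant set $\mathcal{P}$ and Propositions \ref{prop:sufficient_cond_ver_FULL_INFO}--\ref{prop:sufficient_cond_admissibility_FULL_INFO}.
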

\begin{proof}
See Appendix \ref{proof_existence_CRRA}.
\end{proof}
We now characterise the range of risk aversion parameters that guarantee $\hat{f}(t)\in\mathcal{C}^{1}_b([0,T])$. We define the function $\Delta(x):(0,+\infty)\to\R$ as follows
\begin{equation}\label{eq:discriminant}
\Delta(x)=4\left\lbrace\left[\left(1-x\right)\bmtSigma_Y\bmtSigma_{\bmS}^\top\mathbf{\hat{\Theta}}^{-1}\bma+\lambda\right]^2-\left[\left(1-x\right)^2\bmtSigma_Y\bmtSigma_{\bmS}^\top\mathbf{\hat{\Theta}}^{-1}\bmtSigma_{\bmS}\bmtSigma_Y^\top+\left(1-x\right)\sigma_Y^2\right]\bma^\top\mathbf{\hat{\Theta}}^{-1}\bma\right\rbrace,
\end{equation}
which represents the \textit{discriminant} of the Riccati ODE $\hat{f}$ in \eqref{eq:f_hat}, and define the set $\mathcal{P}=\{\delta\in(0,1)\cup(1,+\infty):\Delta(\delta)>0\}$. The set $\mathcal{P}$ represents set of risk aversion parameters for which $\hat{f}(t)\in\mathcal{C}^1_b([0,T])$.
\begin{prop}\label{prop:P_non_empty}
The set $\mathcal{P}$ is not empty.
\end{prop}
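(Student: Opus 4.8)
The plan is to exhibit a single admissible risk-aversion level at which the discriminant is strictly positive, and the natural place to look is near the excluded value $x=1$. First I would record that for every $x>0$ the matrix $\bmhatTheta(x)=\left(\bmSigma_{\bmS}\bmSigma_{\bmS}^\top\right)\odot\bme+x\,\bmtSigma_{\bmS}\bmtSigma_{\bmS}^\top$ is positive definite: the Hadamard term is diagonal with nonnegative entries (hence positive semidefinite), while $\bmtSigma_{\bmS}\bmtSigma_{\bmS}^\top=\bmSigma_{\bmS}\mathbf{L}_{\bmS}\mathbf{L}_{\bmS}^\top\bmSigma_{\bmS}$ is positive definite because $\mathbf{L}_{\bmS}$, being the leading $n\times n$ block of the Cholesky factor of the positive-definite matrix $\mathbf{R}$, is lower triangular with strictly positive diagonal and thus invertible. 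Consequently $x\mapsto\bmhatTheta(x)^{-1}$ is well defined and continuous on $(0,\infty)$, so $\Delta$ is continuous there.

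Second, I would evaluate $\Delta$ at $x=1$. Every term of $\Delta(x)$ involving $\bmhatTheta^{-1}$ carries a factor $(1-x)$, so at $x=1$ all of them vanish and only $\lambda^2$ survives, giving $\Delta(1)=4\lambda^2$. If $\lambda\neq0$ this already finishes the argument: $\Delta(1)>0$ and, by continuity, $\Delta(x)>0$ on a punctured neighbourhood of $1$, which meets both $(0,1)$ and $(1,\infty)$; any such $x$ belongs to $\mathcal{P}$.

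Third, I would treat the borderline case $\lambda=0$, where $\Delta(1)=0$. Writing $t=x-1$ and $p(x)=\bmtSigma_Y\bmtSigma_{\bmS}^\top\bmhatTheta(x)^{-1}\bma$, $q(x)=\bmtSigma_Y\bmtSigma_{\bmS}^\top\bmhatTheta(x)^{-1}\bmtSigma_{\bmS}\bmtSigma_Y^\top$ and $s(x)=\bma^\top\bmhatTheta(x)^{-1}\bma$, a direct rearrangement gives
$$\tfrac14\Delta(x)=\lambda^2+t\left(\sigma_Y^2 s(x)-2\lambda p(x)\right)+t^2\left(p(x)^2-q(x)s(x)\right).$$
Since $\bmhatTheta(x)^{-1}$ is positive definite, the Cauchy–Schwarz inequality for the inner product $\langle u,v\rangle=u^\top\bmhatTheta(x)^{-1}v$ (with $u=\bmtSigma_{\bmS}\bmtSigma_Y^\top$ and $v=\bma$) yields $p(x)^2\le q(x)s(x)$, so the $t^2$-coefficient is nonpositive. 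With $\lambda=0$ the expression factors as $\tfrac14\Delta(x)=t\left[\sigma_Y^2 s(x)+t\left(p(x)^2-q(x)s(x)\right)\right]$, and as $x\to1^+$ the bracket tends to $\sigma_Y^2\,\bma^\top\bmhatTheta(1)^{-1}\bma>0$ (using $\sigma_Y>0$ and the non-degeneracy $\bma\neq0$, for which positive definiteness of $\bmhatTheta(1)^{-1}$ gives strict positivity). Hence $\Delta(x)>0$ for all $x$ slightly larger than $1$, and $\mathcal{P}\neq\emptyset$ in this case as well.

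Finally, the main obstacle is precisely this borderline case: when $\lambda=0$ the leading value $\Delta(1)$ degenerates, and one must extract the sign from the first-order term in $t$. The role of Cauchy–Schwarz is to guarantee that the only term that could spoil positivity — the quadratic one — points the right way, so that the surviving linear term $\sigma_Y^2 s$ controls the sign for $x$ just above $1$. The mild standing assumptions $\bma\neq0$ and $\sigma_Y>0$ (a genuinely return-relevant, noisy latent factor) are what make the statement true; indeed, with $\lambda=0$ and $\bma=0$ one would have $\Delta\equiv0$ and the claim would fail, so these non-degeneracy conditions should be made explicit.
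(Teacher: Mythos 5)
Your proof is correct and takes essentially the same route as the paper: continuity of $\Delta$ together with $\Delta(1)=4\lambda^2>0$ (the paper writes $\lambda^2$, an immaterial constant factor) yields a punctured neighbourhood of $\delta=1$ inside $\mathcal{P}$. Your extra verification that $\bmhatTheta(x)$ is positive definite and your treatment of the borderline case $\lambda=0$ via Cauchy--Schwarz go beyond the paper, whose one-line proof implicitly assumes $\lambda\neq 0$; that additional case analysis is sound under your stated non-degeneracy condition $\bma\neq 0$, and you are right that without it (i.e.\ $\lambda=0$ and $\bma=\mathbf{0}$) one gets $\Delta\equiv 0$ and the claim fails.
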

\begin{proof}
This result is a consequence of the fact that $\Delta(x)$ is a continuous function and that $\Delta(1)=\lambda^2>0$; hence, there exists a neighborhood of $\delta=1$ contained in $\mathcal{P}$ such that $\Delta(\delta)>0$. 
\end{proof}
By virtue of Proposition \ref{prop:P_non_empty}, there exist values of $\delta$ contained in $\mathcal{P}$ such that $\hat{f}(t)\in\mathcal{C}^{1}_b([0,T])$. As a consequence, the solutions of the linear ODEs in equations \eqref{eq:g_hat} and \eqref{eq:h_hat} also exist and share the same regularity.
\begin{remark}
Proposition \ref{prop:P_non_empty} ensures that the system of ODEs in equations \eqref{eq:f_hat}, \eqref{eq:g_hat}, and \eqref{eq:h_hat} admits a solution that does not explode in finite time, for some values of the risk aversion parameter $\delta$. In particular, it guarantees the existence of a solution for risk aversion parameters that are close to logarithmic utility. In a multidimensional setting, such as the one considered in this paper, deriving conditions for the existence of a solution over a broader range of $\delta$ is not straightforward. As a result, identifying the largest possible set $\mathcal{P}$, which depends on several model parameters (e.g., the variance-covariance matrices), remains a challenging task. Nevertheless, $\mathcal{P}$ can be explicitly identified in a simplified setting with two uncorrelated assets, independent of the common stochastic factor $Y$ (see Appendix \ref{sect:example}).
\end{remark}
The optimal candidate strategy $\bmtheta^\star=\{\bmtheta^\star(t,Y_t)\}_{t\in[0,T]}$, where $\bmtheta^\star(t,y)$ is defined by equation \eqref{eq:opt_controls_CRRA_full}, is Markovian, as it depends exclusively on time and the exogenous factor $Y$. We now provide conditions on the model parameters ensuring that condition (i) of Theorem \ref{thm:verification_thm_full_info_CRRA} is satisfied and that $\bmtheta^\star$ is an admissible control, according to Definition \ref{defn:G_admissible_strategies_theta}. These results are stated and proved in the following propositions.
\begin{prop}\label{prop:sufficient_cond_ver_FULL_INFO}
Assume that one of the two following conditions holds
\begin{itemize}
\item[(i)] $\delta\in\mathcal{P}\cap(1,+\infty)$,
\item[(ii)] $\delta\in\mathcal{P}\cap(0,1)$ and
\begin{equation}\label{eq:cond_1}
1-q(1+\alpha)\hat{f}(0)\max\left\lbrace P_0,\text{Var}[Y_T]\right\rbrace>0,
\end{equation}
for some $q>1$.
\end{itemize}
Then, for any admissible strategy $\bmtheta\in\mathcal{A}^{\mathbb{G}}$, $\{v(\tau, \hat{C}_{\tau}, Y_{\tau}),\mbox{ for all }\mathbb{G}\mbox{-stopping times }\tau\le T\}$ forms a uniformly integrable family.
\end{prop}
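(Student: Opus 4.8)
The plan is to reduce the claim to an $L^{1+\delta'}$-type moment bound. Since any family of random variables that is bounded in $L^{p}$ for some $p>1$ is automatically uniformly integrable, it suffices to produce an $\alpha>0$ — precisely the one appearing in condition (ii) of Definition \ref{defn:G_admissible_strategies_theta} — such that $\sup_{\tau\le T}\mathbb{E}\big[|v(\tau,\hat C_\tau,Y_\tau)|^{1+\alpha}\big]<\infty$. Inserting the closed form \eqref{eq:value_fun_CRRA_full_info}, the integrand factorises as a cushion contribution $(\hat C_\tau)^{(1-\delta)(1+\alpha)}$ times a factor contribution $\exp\{(1+\alpha)(\tfrac{\hat f(\tau)}{2}Y_\tau^2+\hat g(\tau)Y_\tau+\hat h(\tau))\}$, up to the constant $|1-\delta|^{-(1+\alpha)}$. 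First I would apply Hölder's inequality with conjugate exponents $d>1$ (the integrability parameter from admissibility) and $q=d/(d-1)$, which decouples the two contributions.

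The cushion factor is then handled directly by condition (ii) of Definition \ref{defn:G_admissible_strategies_theta}: the Hölder term $\mathbb{E}[(\hat C_\tau)^{d(1-\delta)(1+\alpha)}]^{1/d}$ is exactly the quantity whose supremum over $t\in[0,T]$ is assumed finite, the only subtlety being the passage from deterministic times to the random time $\tau$. The core of the argument is the factor term, which is a Gaussian quadratic-exponential moment. Since $Y$ is an Ornstein--Uhlenbeck process with Gaussian initial law, $Y_t\sim N(m_t,\text{Var}[Y_t])$, and for $\xi\sim N(m,\sigma^2)$ the moment $\mathbb{E}[\exp\{A\xi^2+B\xi\}]$ is finite if and only if $2A\sigma^2<1$. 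With $A=q(1+\alpha)\hat f(t)/2$ this finiteness threshold reads $q(1+\alpha)\hat f(t)\,\text{Var}[Y_t]<1$. I would control it using the monotonicity of the OU variance, which yields $\sup_{t\le T}\text{Var}[Y_t]=\max\{P_0,\text{Var}[Y_T]\}$, together with the boundedness of $\hat f,\hat g,\hat h$ on $[0,T]$ guaranteed by $\delta\in\mathcal{P}$ (Proposition \ref{prop:P_non_empty}) and the monotonicity of $\hat f$; the linear and constant terms $\hat g,\hat h$ then only contribute finite, uniformly bounded multiplicative factors.

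The two cases of the statement correspond precisely to the sign of $\hat f$, which can be read off from the Riccati equation \eqref{eq:f_hat}. At maturity $\hat f_t(T)=-(1-\delta)\bma^\top\bmhatTheta^{-1}\bma$, so for $\delta\in\mathcal{P}\cap(1,+\infty)$ one has $\hat f\le 0$ on $[0,T]$; the quadratic coefficient $A$ is non-positive, the Gaussian moment is finite and uniformly bounded with no further restriction, and this is exactly case (i). For $\delta\in\mathcal{P}\cap(0,1)$ instead $\hat f\ge 0$ and is monotone, so $\sup_{t\le T}\hat f(t)=\hat f(0)$, and the product bound $q(1+\alpha)\hat f(t)\,\text{Var}[Y_t]\le q(1+\alpha)\hat f(0)\max\{P_0,\text{Var}[Y_T]\}$ reduces the subcriticality threshold to exactly condition \eqref{eq:cond_1}, giving case (ii). The non-explosion and monotonicity of $\hat f$ between the two equilibria of the scalar Riccati follow from $\Delta(\delta)>0$, i.e. from the very definition of $\mathcal{P}$ through the discriminant \eqref{eq:discriminant}.

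I expect the main obstacle to be two-fold. The first and more substantive part is the sharp Gaussian quadratic-exponential threshold together with the sign/monotonicity analysis of the Riccati solution $\hat f$ — namely establishing $\hat f\le 0$ when $\delta>1$ and $\sup_{t}\hat f(t)=\hat f(0)$ when $\delta<1$ — which I would carry out by a phase-line and comparison argument for \eqref{eq:f_hat} under $\Delta(\delta)>0$. The second, more technical, is upgrading the uniform-in-$t$ deterministic moment bounds to the random stopping time $\tau$; I would handle this via optional sampling, exploiting the (sub/super)martingale structure that the cushion and factor processes inherit from their linear SDEs, or by dominating the stopped quantities over $t\le T$.
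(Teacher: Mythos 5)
Your proposal follows essentially the same route as the paper's proof: reduce uniform integrability to an $L^{1+\alpha}$ bound on $v(t,\hat C_t,Y_t)$, apply H\"older's inequality to decouple the cushion moment (handled by condition (ii) of Definition \ref{defn:G_admissible_strategies_theta}) from the Gaussian quadratic-exponential moment of $Y_t$, invoke the threshold $q(1+\alpha)\hat f(t)\mathrm{Var}[Y_t]<1$, and close the two cases via the sign and monotonicity of $\hat f$ together with the monotonicity of the OU variance. The only difference is that you explicitly flag the deterministic-time-to-stopping-time passage, which the paper's proof does not address; otherwise the arguments coincide.
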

\begin{proof}
The proof is provided in Appendix \ref{app:A_4}.
\end{proof}
In the next Proposition, we provide sufficient conditions for admissibility of the optimal strategy.
\begin{prop}\label{prop:sufficient_cond_admissibility_FULL_INFO}
Assume that one of the two following conditions holds
\begin{itemize}
\item[(i)] $\delta\in\mathcal{P}\cap(0,1)$ and 
\begin{equation}\label{eq:cond_1_ammissibility_delta_in_0_1}
1-8d(1-\delta)(1+\alpha)nT\left[\left(1\vee d(1-\delta)(1+\alpha)w\right)c_1^2+a_M^2\right]\max\left\lbrace P_0,\text{Var}[Y_T]\right\rbrace>0,
\end{equation}
\item[(ii)] $\delta\in\mathcal{P}\cap(1,+\infty)$ and
\begin{equation}\label{eq:cond_1_ammissibility_delta_in_1_infty}
1-8d(1-\delta)(1+\alpha)nT\left[\left(-(1+w)\wedge d(1-\delta)(1+\alpha)\tilde{w}\right)c_1^2-a_M^2\right]\max\left\lbrace P_0,\text{Var}[Y_T]\right\rbrace>0,
\end{equation}
for some $d>1$, where
\begin{align}
\label{eq:a}
a_M&=\max_{i=1,\dots,n}|\left(\mathbf{a}\right)_i|,\\
\label{eq:exp_for_w}
w&=\max_{i,j=1,\dots,n}\left|\left(\mathbf{\tilde\Sigma}_{\mathbf{S}}\mathbf{\tilde\Sigma}_{\mathbf{S}}^\top\right)_{i,j}\right|,\\
\label{eq:exp_for_tilde_w}
\tilde{w}&=\max_{i,j=1,\dots,n}|(\mathbf{\hat\Theta})_{i,j}|,\\
\label{eq:exp_for_c_1}
c_1&=\max_{i=1,\dots,n}\bigg|\left(\mathbf{\hat\Theta}^{-1}\left(\mathbf{a}+\bm{\tilde\Sigma}_{\mathbf{S}}\bm{\tilde\Sigma}_Y^\top\sup_{t\in[0,T]}\hat{f}(t)\right)\right)_i\bigg|.
\end{align}
\end{itemize}
Then, the process $\bmtheta^\star$ given by equation \eqref{eq:opt_controls_CRRA_full} is an admissible strategy.
\end{prop}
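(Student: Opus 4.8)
The plan is to verify directly the two requirements of Definition~\ref{defn:G_admissible_strategies_theta} for the candidate $\bmtheta^\star$ in \eqref{eq:opt_controls_CRRA_full}, relying on two structural features. First, $\bmtheta^\star$ is \emph{affine} in the factor: grouping the terms in \eqref{eq:opt_controls_CRRA_full},
$$\bmtheta^\star(t,y)=\underbrace{\bmhatTheta^{-1}(\bmb-\bm{r}_n)+\bmhatTheta^{-1}\bmtSigma_{\bmS}\bmtSigma_Y^\top\hat g(t)}_{=:A(t)}+\underbrace{\bmhatTheta^{-1}\big(\bma+\bmtSigma_{\bmS}\bmtSigma_Y^\top\hat f(t)\big)}_{=:B(t)}\,y,$$
where $A,B$ are bounded on $[0,T]$ because $\hat f,\hat g\in\mathcal C^1_b([0,T])$, and the entrywise size of $B$ is controlled by $c_1$ of \eqref{eq:exp_for_c_1}. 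Second, $Y$ is an OU (hence Gaussian) process whose marginal variance $\text{Var}(Y_s)$ solves $v'=2\lambda v+\sigma_Y^2$, hence is monotone on $[0,T]$, so that $\sup_{s\in[0,T]}\text{Var}(Y_s)=\max\{P_0,\text{Var}[Y_T]\}$. Condition (i) is then immediate: from the affine form one has $\|\bmtheta^\star_s\|_1\le c(1+|Y_s|)$ and $\|\bmtheta^\star_s\|_2^2\le c(1+Y_s^2)$, so the integrand in (i) is dominated by a degree-two polynomial in $|Y_s|$ whose expectation is bounded uniformly on $[0,T]$, and Tonelli gives finiteness.

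The content lies in condition (ii). Writing $p:=d(1-\delta)(1+\alpha)$ and integrating \eqref{eq:carbon_penalised_cushion_FULL_INFO} along $\bmtheta^\star$, I would represent
$$(\hat C_t^{\bmtheta^\star})^{p}=\hat c_0^{\,p}\exp\Big\{p\!\int_0^t\!\mu_s\,\de s+p\,M_t\Big\},\quad M_t:=\int_0^t\bmtheta_s^{\star\top}\bmtSigma_{\bmS}\,\de\bmZ^{\bmS}_s,$$
with $\mu_s=r+\bmtheta_s^{\star\top}(\bma Y_s+\bmb-\bm{r}_n)-\tfrac12\bmtheta_s^{\star\top}(\bmSigma_{\bmS}\bmSigma_{\bmS}^\top\odot\bme)\bmtheta_s^\star$ a quadratic polynomial in $Y_s$ and $\langle M\rangle_t=\int_0^t\bmtheta_s^{\star\top}\bmtSigma_{\bmS}\bmtSigma_{\bmS}^\top\bmtheta_s^\star\,\de s$ likewise quadratic. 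By Cauchy--Schwarz I separate the drift from the martingale, and the martingale factor is disposed of using only the supermartingale inequality $\mathbb E[\mathcal E(4pM)_t]\le1$ for the exponential local martingale $\mathcal E(4pM)$: explicitly, $\mathbb E[e^{2pM_t}]\le(\mathbb E[\mathcal E(4pM)_t])^{1/2}(\mathbb E[e^{8p^2\langle M\rangle_t}])^{1/2}\le(\mathbb E[e^{8p^2\langle M\rangle_t}])^{1/2}$. This is the step that generates the factor $8$ and leaves only exponentials of the two quadratic-in-$Y$ functionals $\int_0^t\mu_s\,\de s$ and $\langle M\rangle_t$, now under $\mathbb P$.

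It remains to estimate $\mathbb E[\exp\{\kappa\int_0^t Y_s^2\,\de s+(\text{linear in }Y_s)\}]$ for the explicit $\kappa$ produced by bounding those quadratic forms. Using $|\bm{x}^\top N\bm{x}|\le n\max_{i,j}|N_{i,j}|\,\|\bm{x}\|_\infty^2$ together with the affine bound on $\bmtheta^\star$ and $2ab\le a^2+b^2$ on the cross terms, the coefficients are controlled by $a_M,w,\tilde w,c_1$ of \eqref{eq:a}--\eqref{eq:exp_for_c_1}, the precise combination depending on the sign of $p$: for $\delta\in(0,1)$ one keeps $(1\vee d(1-\delta)(1+\alpha)w)c_1^2+a_M^2$, whereas for $\delta\in(1,+\infty)$ the penalty and drift terms flip sign, producing $(-(1+w)\wedge d(1-\delta)(1+\alpha)\tilde w)c_1^2-a_M^2$. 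For the pure quadratic functional, Jensen applied to the normalised time average gives $\exp\{\kappa\int_0^t Y_s^2\,\de s\}\le\tfrac1t\int_0^t\exp\{\kappa t\,Y_s^2\}\,\de s$, so finiteness follows once $\mathbb E[\exp\{\kappa t\,Y_s^2\}]<\infty$ for every $s$, which for the Gaussian $Y_s$ holds exactly when $2\kappa t\,\text{Var}(Y_s)<1$ (the linear term being harmless). Replacing $\text{Var}(Y_s)$ by its supremum $\max\{P_0,\text{Var}[Y_T]\}$ and tracking $\kappa$ through the bounds above yields precisely \eqref{eq:cond_1_ammissibility_delta_in_0_1} in case (i) and \eqref{eq:cond_1_ammissibility_delta_in_1_infty} in case (ii).

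The main obstacle is the bookkeeping in the last two steps rather than any single deep argument: one must separate the drift and martingale contributions without ever needing $\mathcal E(pM)$ to be a true martingale (the supermartingale inequality suffices after the Cauchy--Schwarz split), and then bound the two quadratic forms carefully enough that the resulting threshold on $\kappa$ coincides exactly with the stated constant $8d(1-\delta)(1+\alpha)nT[\cdots]\max\{P_0,\text{Var}[Y_T]\}$, including getting the sign-dependent bracket right in each of the two regimes $\delta\in(0,1)$ and $\delta\in(1,+\infty)$.
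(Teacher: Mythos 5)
Your proposal follows essentially the same route as the paper's proof: the affine-in-$Y$ form of $\bmtheta^\star$ for condition (i), and for condition (ii) the explicit exponential representation of $\hat C^{\bmtheta^\star}$, a Cauchy--Schwarz/Young separation of the drift, penalty and stochastic-integral factors, reduction of the latter to its quadratic variation via the exponential (super)martingale, Jensen on the time integral, the entrywise bounds through $a_M$, $w$, $\tilde w$, $c_1$, and the Gaussian criterion $1-2\kappa\,\mathrm{Var}[Y_u]>0$ combined with $\mathrm{Var}[Y_u]\le\max\{P_0,\mathrm{Var}[Y_T]\}$. The only real divergence is the stochastic-integral factor: the paper disposes of it by asserting $\mathbb{E}\left[e^{2pM_t}\right]=\mathbb{E}\left[e^{2p^2\langle M\rangle_t}\right]$ (an identity valid only for deterministic integrands), whereas your supermartingale/Cauchy--Schwarz bound is the rigorous version but produces $8p^2\langle M\rangle_t$ in place of $2p^2\langle M\rangle_t$, so your final threshold would carry a larger constant than the one stated in \eqref{eq:cond_1_ammissibility_delta_in_0_1}--\eqref{eq:cond_1_ammissibility_delta_in_1_infty} unless the splitting is retuned (e.g.\ via H\"older with exponents near $1$) -- a bookkeeping discrepancy, arguably in your favour on rigour.
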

\begin{proof}
The proof is provided in Appendix \ref{app:A_5}.
\end{proof}
Under the assumption of Proposition \ref{prop:sufficient_cond_ver_FULL_INFO}, the value function $\hat v$ is the unique solution of the optimisation problem \ref{eq:optimisation_problem_full_info_CRRA} and $\bmtheta^\star\in\mathcal{A}$. Given $\bmtheta^\star$, we can characterise the optimal multiplier $m^\star$ and the optimal stock composition percentages $\bmpi^\star$ of the risky reference portfolio as in the following Proposition.   
\begin{prop}\label{prop:originalcontrols} The optimal multiplier is given by $m^\star_t=\bm{\theta}^{\star,\top}\mathbf{1}_n$ and the optimal composition percentage of the $i$-th stock in the risky reference portfolio $X$ is given by $\pi^\star_{i,t}=\frac{\theta^\star_{i,t}}{\bm{\theta}^{\star,\top}\mathbf{1}_n}$, for every $i=1,\dots,n$, and $t\in[0,T]$.
\end{prop}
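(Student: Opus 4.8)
The plan is to invert the change of variables $\bmtheta_t=m_t\bmpi_t$, introduced just before equation \eqref{eq:carbon_penalised_cushion_FULL_INFO} to reduce the dimension of the control, by exploiting the budget constraint $\sum_{i=1}^n\pi_{i,t}=1$ imposed on the composition of the risky reference portfolio. This is a purely algebraic recovery step: once the optimal $\bmtheta^\star$ from Theorem \ref{thm:existence_CRRA} is fixed, the pair $(m^\star,\bmpi^\star)$ is pinned down by the two relations $\bmtheta^\star_t=m^\star_t\bmpi^\star_t$ and $\bmpi_t^{\star,\top}\mathbf{1}_n=1$.

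First I would left-multiply the identity $\bmtheta^\star_t=m^\star_t\bmpi^\star_t$ by the row vector $\mathbf{1}_n^\top$. Using $\mathbf{1}_n^\top\bmpi^\star_t=\sum_{i=1}^n\pi^\star_{i,t}=1$, this collapses to $\mathbf{1}_n^\top\bmtheta^\star_t=m^\star_t$, which is exactly the claimed expression $m^\star_t=\bm{\theta}^{\star,\top}_t\mathbf{1}_n$ for the optimal multiplier. Dividing each component of $\bmtheta^\star_t=m^\star_t\bmpi^\star_t$ by the scalar $m^\star_t$ then gives $\pi^\star_{i,t}=\theta^\star_{i,t}/m^\star_t=\theta^\star_{i,t}/(\bm{\theta}^{\star,\top}_t\mathbf{1}_n)$ for every $i=1,\dots,n$, as stated.

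The only point that requires care, and the one I regard as the main (if modest) obstacle, is the well-posedness of this inversion, namely that $m^\star_t=\bm{\theta}^{\star,\top}_t\mathbf{1}_n\neq 0$ so that the ratios defining $\pi^\star_{i,t}$ are meaningful. I would establish this from the explicit form of $\bmtheta^\star(t,y)$ in equation \eqref{eq:opt_controls_CRRA_full}: since $\mathbf{1}_n^\top\bmtheta^\star(t,y)$ is an affine function of $y$ with coefficients assembled from $\bmhatTheta^{-1}\bma$, $\bmhatTheta^{-1}(\bmb-\bm{r}_n)$ and the $\hat f(t),\hat g(t)$ terms, it is non-degenerate and hence vanishes only on a negligible set of states, so the multiplier is almost surely nonzero and the composition $\bmpi^\star$ is well defined. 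On this set the representation follows at once from the two displayed identities, which completes the proof.
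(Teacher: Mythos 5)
Your proof is correct and follows essentially the same route as the paper's: invert the relation $\bmtheta^\star_t=m^\star_t\bmpi^\star_t$ using the budget constraint $\bmpi_t^{\star,\top}\mathbf{1}_n=1$ to read off $m^\star_t=\bmtheta_t^{\star,\top}\mathbf{1}_n$ and $\pi^\star_{i,t}=\theta^\star_{i,t}/(\bmtheta_t^{\star,\top}\mathbf{1}_n)$. Your additional remark on well-posedness (that $\bmtheta_t^{\star,\top}\mathbf{1}_n\neq 0$ is needed for the division) is a reasonable refinement the paper leaves implicit, but it does not change the argument.
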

\begin{proof}
The proof is provided in Appendix \ref{app:A_6}.    
\end{proof}
\begin{remark}[Structure of the optimal strategy and role of the carbon penalty]
{
The optimal allocation in \eqref{eq:opt_controls_CRRA_full} admits the classical decomposition into a myopic demand and an intertemporal hedging component. The first term, $\hat{\bm{\Theta}}^{-1}(ay+b-r\mathbf{1}_n)$, represents the instantaneous mean-variance trade-off, while the second term captures the investor’s desire to hedge against future changes in investment opportunities driven by the factor process $Y$.
A key feature of the model is that both components are multiplied by the matrix $\hat{\bm{\Theta}}^{-1}$, which embeds not only the market risk aversion parameter $\delta$ but also the carbon penalisation parameter $\varepsilon$. In particular, the carbon penalty affects the effective risk loading in an asset-specific way. As a consequence, the model does not simply rescale the overall level of risk, but modifies the relative attractiveness of assets by altering their risk-adjusted returns. This mechanism can be interpreted as an endogenous increase in risk aversion toward carbon-intensive assets, leading to a systematic reweighting of the portfolio. To better explain this mechanism, we provide below a stylized example with two assets only.}
\end{remark}
\begin{example}\label{Ex:example_CRRA_full_info}
To analyze the optimal PPI strategy, we consider the case in which only two stocks, $S_1$ and $S_2$, are traded on the market, representing a green and a brown stock, respectively. For simplicity, we assume that $S_1$ and $S_2$ are driven by independent Brownian motions. Applying Proposition \ref{prop:originalcontrols}, the optimal multiplier $m^\star$ reads as follows 
\begin{equation}
m^\star(t,y;\delta,\varepsilon)=\theta_{1}^{\star}(t,y;\delta)+\theta_{2}^{\star}(t,y;\delta,\varepsilon),
\end{equation}
where $\theta_1^{\star}(t,y;\delta)=\xi_{1}^M(t,y;\delta)+\xi_{1}^I(t,y;\delta)$ and $\theta_2^{\star}(t,y;\delta,\varepsilon)=\xi_{2}^M(t,y;\delta,\varepsilon)+\xi_{2}^I(t,y;\delta,\varepsilon)$, with 
\begin{align}
\label{eq:xi_1}\xi_1^M(t,y;\delta)&=\dfrac{1}{\delta}\dfrac{a_1y+b_1-r}{\sigma_1^2},\quad \xi_1^I(t,y;\delta)=\dfrac{1}{\delta}\dfrac{\sigma_Y\rho_{1,Y}}{\sigma_1}\left(\hat{f}(t)y+\hat{g}(t)\right),\\
\label{eq:xi_2}\xi_2^M(t,y;\delta,\varepsilon)&=\dfrac{1}{\varepsilon+\delta}\dfrac{a_2y+b_2-r}{\sigma_2^2},\quad \xi_2^I(t,y;\delta,\varepsilon)=\dfrac{1}{\varepsilon+\delta}\dfrac{\sigma_Y\rho_{2,Y}}{\sigma_2}\left(\hat{f}(t)y+\hat{g}(t)\right),
\end{align}
for every $\left(t,y\right)\in[0,T]\times\R$. The optimal multiplier is the sum of the myopic and intertemporal hedging demand relative to each of the two stocks included in the risky reference portfolio. Both the myopic and the intertemporal components relative to the brown stock depend on the carbon aversion factor $\varepsilon$. Hence, by introducing a penalty term proportional to the realised volatilities of brown stocks in the objective function, we have effectively increased the fund manager's risk aversion toward this category of assets. The optimal composition percentages of the stocks in the risky reference portfolio $\left(\pi^\star_1,\pi^\star_2\right)$ are given by

\begingroup
\small{
\begin{align}
\pi_1^\star(t,y;\delta,\varepsilon)&=\dfrac{\left(\varepsilon+\delta\right)\left[a_1y+b_1-r+\sigma_1\sigma_Y\rho_{1,Y}\left(\hat{f}(t)y+\hat{g}(t)\right)\right]\sigma_2^2}{\left(\varepsilon+\delta\right)\left(a_1y+b_1-r\right)\sigma_2^2+\delta\left(a_2y+b_2-r\right)\sigma_1^2+\left[\left(\varepsilon+\delta\right)\sigma_1\sigma_2^2\rho_{1,Y}+\delta\sigma_1^2\sigma_2\rho_{2,Y}\right]\sigma_Y(\hat{f}(t)y+\hat{g}(t))},\\[3pt]
\pi_2^\star(t,y;\delta,\varepsilon)&=\dfrac{\delta\left[a_2y+b_2-r+\sigma_2\sigma_Y\rho_{2,Y}(\hat{f}(t)y+\hat{g}(t))\right]\sigma_1^2}{\left(\varepsilon+\delta\right)\left(a_1y+b_1-r\right)\sigma_2^2+\delta\left(a_2y+b_2-r\right)\sigma_1^2+\left[\left(\varepsilon+\delta\right)\sigma_1\sigma_2^2\rho_{1,Y}+\delta\sigma_1^2\sigma_2\rho_{2,Y}\right]\sigma_Y(\hat{f}(t)y+\hat{g}(t))},
\end{align}
}
\endgroup

for every $\left(t,y\right)\in[0,T]\times\R$. We observe that $\pi_1^\star$ (respectively,  $\pi_2^\star$) is increasing (respectively,  decreasing) with respect to the carbon aversion parameter $\varepsilon$. As expected, the higher $\varepsilon$, the lower (respectively,  higher) the presence of brown (respectively,  green) stock in $X$. Hence, any increase of $\varepsilon$ results in a reduction of the overall carbon intensity of the risky reference portfolio and, consequently, of the PPI strategy. In the limiting case where $\varepsilon\to\infty$, $\pi^\star_1=1$ and $\pi^\star_2=0$, meaning that the risky reference portfolio fully coincides with the green stock. Moreover, the optimal multiplier becomes
\begin{equation}
m^\star(t,y;\delta,\varepsilon=+\infty)=\dfrac{1}{\delta}\left[\dfrac{a_1y+b_1-r}{\sigma_1^2}+\dfrac{\sigma_Y\rho_{1,Y}(\hat{f}(t;\varepsilon=\infty)y+\hat{g}(t;\varepsilon=\infty))}{\sigma_1}\right],
\end{equation}
recovering the optimal PPI strategy with one single investment asset, see, e.g., \cite{zieling2014performance}.
\end{example}
\paragraph{Logarithmic case.} We assume that the fund manager is endowed with a logarithmic utility function. In such a case, the optimisation problem \eqref{eq:optimisation_problem_full_info_CRRA} can be reformulated as follows
\begin{equation}\label{eq:LOG_optimisation_problem_full_info}
\mbox{Maximise }\mathbb{E}^{t,c,y}\left[\log(\hat{C}_T^{\bmtheta})\right],
\end{equation}
over all $\bmtheta\in\mathcal{A}^{\mathbb{G}}$, and the corresponding value function is given by
\begin{equation}\label{eq:_LOGvalue_function_full_info}
v(t,c,y):=\sup_{\bmtheta\in\mathcal{A}^{\mathbb{G}}}\mathbb{E}^{t,c,y}\left[\log(\hat{C}_T^{\bmtheta})\right].
\end{equation}
For the logarithmic case, the optimal strategy can be derived by applying pointwise maximisation, which also yields an explicit characterisation for the value function. This result is presented in the following corollary.
\begin{cor}\label{cor:solution_full_info_log}
Consider a fund manager endowed with a logarithmic utility function and a carbon aversion $\varepsilon\geq 0$, then the optimal controls $\bmtheta^\star\in\mathcal{A}^{\mathbb{G}}$ is given by
\begin{align}
\bmtheta^\star(t,y)=\mathbf{\Theta}^{-1}\left(\bma y+\bmb-\mathbf{r}_{n}\right),
\end{align}
where $\mathbf{\Theta}=\left(\bmSigma_{\bmS}\bmSigma_{\bmS}^\top\right)\odot\bme+\bmtSigma_{\bmS}\bmtSigma_{\bmS}^\top$. The value function reads as
\begin{equation}\label{eq:value_fun_full_info_log}
v(t,c,y)=\log(c)+r(T-t)+f(t)y^2+g(t)y+h(t), 
\end{equation}
where
\begin{align}
&f(t)=\frac{\mathbf{a}^\top\mathbf{\Theta}^{-1}\mathbf{a}}{2\lambda}\left(e^{2\lambda\left(T-t\right)}-1\right),\\[6pt]
&g(t)=\frac{\mathbf{a}^\top\mathbf{\Theta}^{-1}(\mathbf{b}-\mathbf{r}_n)}{\lambda}\left(e^{\lambda\left(T-t\right)}-1\right)+\beta\frac{\mathbf{a}^\top\mathbf{\Theta}^{-1}\mathbf{a}}{2\lambda^2}\left(e^{\lambda\left(T-t\right)}-1\right)^2,\\[6pt] 
&h(t)=\left[r+\dfrac{1}{2}\left(\mathbf{b}-\mathbf{r}_n\right)^\top \mathbf{\Theta}^{-1}\left(\mathbf{b}-\mathbf{r}_n\right)\right]\left(T-t\right)+ \beta\dfrac{\mathbf{a}^\top\mathbf{\Theta}^{-1}\left(\mathbf{b}-\mathbf{r}_n\right)}{\lambda}\left[\frac{e^{\lambda\left(T-t\right)}-1}{\lambda}-\left(T-t\right)\right]\\
&+\beta^2\dfrac{\mathbf{a}^\top\mathbf{\Theta}^{-1}\mathbf{a}}{2\lambda^2}\left[\dfrac{e^{2\lambda\left(T-t\right)}-1}{2\lambda}-\frac{2}{\lambda}\left(e^{\lambda\left(T-t\right)}-1\right)+T-t\right]+ \frac{\sigma_Y^2}{2}\dfrac{\mathbf{a}^\top\mathbf{\Theta}^{-1}\mathbf{a}}{2\lambda}\left[\frac{e^{2\lambda\left(T-t\right)}-1}{2\lambda}-\left(T-t\right)\right],
\end{align}
for every $t\in[0,T]$.
\end{cor}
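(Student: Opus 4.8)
The plan is to exploit the special additive structure of logarithmic utility, which lets me bypass the exponential ansatz of Theorem~\ref{thm:existence_CRRA} and optimise the drift of $\log\hat C^{\bmtheta}$ pointwise. First I would apply It\^o's formula to $\log\hat C_t^{\bmtheta}$ using the dynamics \eqref{eq:carbon_penalised_cushion_FULL_INFO}. The It\^o correction contributes $-\tfrac12\bmtheta_t^\top\bmtSigma_{\bmS}\bmtSigma_{\bmS}^\top\bmtheta_t$, which combines with the penalty term $-\tfrac12\bmtheta_t^\top(\bmSigma_{\bmS}\bmSigma_{\bmS}^\top\odot\bme)\bmtheta_t$ already present in the drift to produce exactly $-\tfrac12\bmtheta_t^\top\mathbf{\Theta}\bmtheta_t$, with $\mathbf{\Theta}=(\bmSigma_{\bmS}\bmSigma_{\bmS}^\top)\odot\bme+\bmtSigma_{\bmS}\bmtSigma_{\bmS}^\top$. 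Integrating from $t$ to $T$ and taking the conditional expectation $\mathbb{E}^{t,c,y}$, the stochastic integral $\int_t^T\bmtheta_s^\top\bmtSigma_{\bmS}\de\bmZ_s^{\bmS}$ is a true martingale with zero mean, since condition~(i) of Definition~\ref{defn:G_admissible_strategies_theta} guarantees $\mathbb{E}[\int_0^T\|\bmtheta_s\|_2^2\de s]<\infty$ and $\bmtSigma_{\bmS}\bmtSigma_{\bmS}^\top$ is bounded. This yields the identity
\begin{equation}
\mathbb{E}^{t,c,y}[\log\hat C_T^{\bmtheta}]=\log c+\mathbb{E}^{t,c,y}\!\left[\int_t^T\!\left(r+\bmtheta_s^\top(\bma Y_s+\bmb-\mathbf{r}_n)-\tfrac12\bmtheta_s^\top\mathbf{\Theta}\bmtheta_s\right)\de s\right].
\end{equation}

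Next I would maximise the integrand pointwise in $\bmtheta$. Since $\mathbf{\Theta}$ is positive definite (the Hadamard term is positive semidefinite and $\bmtSigma_{\bmS}\bmtSigma_{\bmS}^\top$ is positive definite), the map $\bmtheta\mapsto \bmtheta^\top(\bma y+\bmb-\mathbf{r}_n)-\tfrac12\bmtheta^\top\mathbf{\Theta}\bmtheta$ is strictly concave, and its unique maximiser is $\bmtheta^\star(t,y)=\mathbf{\Theta}^{-1}(\bma y+\bmb-\mathbf{r}_n)$, which is precisely the claimed optimal control. Substituting $\bmtheta^\star$ back gives the maximal drift $r+\tfrac12(\bma y+\bmb-\mathbf{r}_n)^\top\mathbf{\Theta}^{-1}(\bma y+\bmb-\mathbf{r}_n)$. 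Because the bound obtained by replacing $\bmtheta_s$ with $\bmtheta^\star(Y_s)$ holds pathwise and is attained exactly at the feedback law $\bmtheta^\star$, the supremum in \eqref{eq:_LOGvalue_function_full_info} is attained at $\bmtheta^\star$ provided it is admissible; admissibility follows because $\bmtheta^\star$ is affine in $Y_s$ and the OU process \eqref{eq:latent_factor} has finite moments of every order, uniformly on $[0,T]$, so both conditions of Definition~\ref{defn:G_admissible_strategies_theta} hold.

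It remains to evaluate $v(t,c,y)=\log c+\mathbb{E}^{t,c,y}[\int_t^T(r+\tfrac12(\bma Y_s+\bmb-\mathbf{r}_n)^\top\mathbf{\Theta}^{-1}(\bma Y_s+\bmb-\mathbf{r}_n))\de s]$ in closed form. Expanding the quadratic form separates the integrand into a constant, a term linear in $Y_s$, and a term in $Y_s^2$, weighted by $(\bmb-\mathbf{r}_n)^\top\mathbf{\Theta}^{-1}(\bmb-\mathbf{r}_n)$, $\bma^\top\mathbf{\Theta}^{-1}(\bmb-\mathbf{r}_n)$ and $\bma^\top\mathbf{\Theta}^{-1}\bma$. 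I would then insert the explicit conditional moments of the OU factor, namely $\mathbb{E}[Y_s\mid Y_t=y]=ye^{\lambda(s-t)}+\tfrac{\beta}{\lambda}(e^{\lambda(s-t)}-1)$ and $\mathrm{Var}[Y_s\mid Y_t=y]=\tfrac{\sigma_Y^2}{2\lambda}(e^{2\lambda(s-t)}-1)$, and integrate over $s\in[t,T]$ term by term. Collecting the coefficients of $y^2$, $y$ and $1$ produces the functions $f(t)$, $g(t)$, $h(t)$ with terminal values $f(T)=g(T)=h(T)=0$, completing the proof.

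The same expressions arise, as a useful cross-check, from a guess-and-verify route: inserting the ansatz $v(t,c,y)=\log c+r(T-t)+f(t)y^2+g(t)y+h(t)$ into the logarithmic form of the HJB \eqref{eq:HJB_equation_FULL_INFO} and matching powers of $y$ gives ODEs for $f,g,h$. Crucially, unlike the CRRA Riccati equation \eqref{eq:f_hat}, the additive (rather than exponential) dependence of $v$ on $c$ means that $v_{yy}$ is independent of $y$, so no quadratic $f^2$ term appears and all three ODEs are \emph{linear}; they are solved by the integrating-factor method, yielding the stated closed forms. The main obstacle is therefore not the algebra but the analytic justification of the two interchanges---the vanishing of the stochastic integral in expectation and the identification of the pointwise maximiser as a genuine optimiser over $\mathcal{A}^{\mathbb{G}}$---both of which hinge on the integrability and admissibility bounds of Definition~\ref{defn:G_admissible_strategies_theta} together with the finiteness of the relevant exponential moments of the OU factor.
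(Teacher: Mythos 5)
Your proposal is correct and follows essentially the same route as the paper's proof in Appendix~\ref{app:A_7}: expand $\log\hat C_T^{\bmtheta}$ via It\^o's formula, kill the stochastic integral in expectation, maximise the resulting drift pointwise using the strict concavity induced by the positive definite $\mathbf{\Theta}$, and then evaluate $\mathbb{E}^{t,y}\bigl[\int_t^T Y_s\,\de s\bigr]$ and $\mathbb{E}^{t,y}\bigl[\int_t^T Y_s^2\,\de s\bigr]$ explicitly from the OU moments. Your write-up is in fact slightly more careful than the paper's on the two analytic interchanges (martingale property of the stochastic integral and admissibility of the affine feedback law), which the paper leaves implicit.
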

\begin{proof}
The proof is provided in Appendix \ref{app:A_7}.    
\end{proof}
In the case of the logarithmic utility function, the optimal strategy $\left(m^\star,\pi_1^\star,\pi_2^\star\right)$ discussed in Example \ref{Ex:example_CRRA_full_info}, becomes
\begin{align}
m^\star(t,y;1,\varepsilon)&=\xi^M_1(t,y;1)+\xi^M_2(t,y;1,\varepsilon),\\
\pi^\star_1(t,y;1,\varepsilon)&=\dfrac{\xi_1^M(t,y;1)}{m^\star(t,y;1,\varepsilon)},\quad\pi^\star_2(t,y;1,\varepsilon)=\dfrac{\xi_2^M(t,y;1,\varepsilon)}{m^\star(t,y;1,\varepsilon)},
\end{align}
for every $\left(t,y\right)\in[0,T]\times\R$. As expected by the nature of the utility function, the optimal multiplier presents only the myopic component. The factor $\xi_2$ depends on carbon penalisation in the same form as for the power utility case. Similar considerations on $\left(\pi_1^\star,\pi_2^\star\right)$, as for the power utility case, hold for logarithmic utility.
\section{Optimisation problem under partial information}\label{sect:opt_partial_info}
In this section, we address the utility maximisation problem faced by a portfolio insurer who cannot directly observe the common stochastic factor $Y$. The portfolio insurer's available information is limited to observing the price processes of green and brown stocks. Mathematically, the information flow accessible to the fund manager is given by the natural filtration generated by $\bmS$, referred to as $\mathbb{F}=\left\lbrace\mathcal{F}_t\right\rbrace_{t\in[0,T]}$, where $\mathcal{F}_t=\sigma\left\lbrace\bmS_u,\,0\leq u\leq t\right\rbrace\vee\mathcal{N}$ such that $\mathcal{F}_t\subset\mathcal{G}_t$. Here, $\mathcal{N}$ represents the collection of $\mathbb{P}$-null sets, and $\mathcal{F}_0$ is the trivial $\sigma$-algebra. The portfolio insurer, operating under partial information, seeks to maximise the expected CRRA utility of the terminal carbon-penalised cushion over the set of $\mathbb{F}$-admissible strageies  $\mathcal{A}^{\mathbb{F}}$ defined below (see Definition \ref{defn:F_admissible_strategies_theta}). To address the optimisation problem with partial observations, we introduce the conditional distribution of the unobservable factor process $Y$, using stochastic filtering theory.  

Let $\Gamma$ and $P$ be the conditional expectation and the conditional variance of the common stochastic factor $Y$ given the available information, that is, $\Gamma_t:=\mathbb{E}\left[Y_t|\mathcal{F}_t\right]$ and $P_t:=\mathbb{E}\left[\left(Y_t-\Gamma_t\right)^2|\mathcal{F}_t\right]$ for every $t\in[0,T]$, respectively. Since the conditional distribution of $Y$ is Gaussian, it is fully characterised by its conditional mean and variance dynamics. Moreover, since $\mathcal{F}_0$ is the trivial $\sigma$-algebra, the initial values $\Gamma$ and $P$ correspond to the parameters of the initial distribution of $Y$, that is, $Y_0\sim N(\Gamma_0, P_0)$.
To characterise the dynamics of $\Gamma$ and $P$, we introduce the innovation process $\bmI^{\bmS}=\left\lbrace \bmI^{\bmS}_t\right\rbrace_{t\in[0,T]}$,
\begin{align}
\bmI^{\bmS}_t=\bmtSigma_{\bmS}^{-1}\bma\int_0^t\left(Y_s-\Gamma_s\right)\de s+\bmZ_t^{\bmS},
\end{align}
for every $t\in[0,T]$. As proven in \cite[Section $10.3$]{liptser2013statistics},  $\bmI^{\bmS}$ is an $\left(\mathbb{F},\mathbb{P}\right)$-Brownian motion in $\R^{n}$, and the processes $\Gamma$ and $P$ are the unique solutions
to the system
\begin{align}
\label{eq:cond_exp}\de\Gamma_t&=\left(\lambda\Gamma_t+\beta\right)\de t+\mathbf{\bar{P}}_t\left(\bmtSigma_{\bmS}^\top\right)^{-1}\de\bmI_t^{\bmS},\quad\Gamma_0\in\R,\\
\label{eq:variance}\dfrac{\de P_t}{\de t}&=2\lambda P_t+\sigma_Y^2-\mathbf{\bar{P}}_t\left(\bmtSigma_{\bmS}\bmtSigma_{\bmS}^\top\right)^{-1}\mathbf{\bar{P}}_t^\top,\quad P_0\in\R_{+},
\end{align}
where $\mathbf{\bar{P}}_t=\bmtSigma_Y\bmtSigma_{\bmS}^\top+P_t\bma^\top$ for every $t\in[0,T]$, and $P_t$ and $\mathbf{\bar P}_t$ are deterministic functions. To highlight this property, from now on we will write $P(t)$ and $\mathbf{\bar P}(t)$ instead of $P_t$ and $\mathbf{\bar P}_t$, respectively. The semimartingale representations of $\mathbf{S}$ with respect to the information filtration $\mathbb{F}$ are given by
\begin{align}
\de\bmS_t&=\diag\left(\bmS_t\right)\left[\left(\bma\Gamma_t+\bmb\right)\de t+\bmtSigma_{\bmS}\de\bmI_t^{\bmS}\right],\quad \bmS_0\in\R_+^{n},
\end{align}
leading to the following representation for the carbon-penalised cushion process
\begin{align}\label{eq:filtered_penalised_cushion_dyn}
\dfrac{\de\hat{C}_t^{\bmtheta}}{\hat{C}_t^{\bmtheta}}=\left[r+\bmtheta_t^\top\left(\bma\Gamma_t+\bmb-\mathbf{r}_n\right)-\dfrac{1}{2}\bmtheta^\top_t\left(\bmSigma_{\bmS}\bmSigma_{\bmS}^\top\odot\bme\right)\bmtheta_t\right]\de t+\bmtheta_t^\top\bmtSigma_\bmS\de\bmI^{\bmS}_t,\quad\hat C_0^{\bmtheta}=\hat c_0.
\end{align}
Since the portfolio insurer's decisions depend on the information available at time $t$, we define the set of admissible strategies $\bmtheta$ as follows.
\begin{defn}\label{defn:F_admissible_strategies_theta}
A $\mathbb{F}$-admissible carbon-penalised PPI strategy $\bmtheta=\left\lbrace\bmtheta\right\rbrace_{t\in[0,T]}$ is a self-financing, $\mathbb{F}$-predictable process such that 
\begin{itemize}
\item[(i)] $\mathbb{E}\left[\int_0^T|\Gamma_s|\|\bmtheta_s\|_1+\|\bmtheta_s\|_2^2\de s\right]<\infty,$
\item[(ii)] $\displaystyle\sup_{t\in[0,T]}\mathbb{E}\left[(\hat C_t^{\bmtheta})^{d\left(1-\delta\right)(1+\alpha)}\right]<\infty$, for some $\alpha>0$ and $d>1$.
\end{itemize}
We denote the set of $\mathbb{F}$-admissible strategies by $\mathcal{A}^{\mathbb{F}}$.\footnote{As in the full-information case, the set of admissible strategies can also be characterised in terms of $m$ and $\bmpi$, but we omit reporting it here for brevity.}
\end{defn}
Thanks to uniqueness of the solution of the filtering equation, we can consider $\hat{C}$ and $\Gamma$ as state processes and formulate the separated problem as follows 
\begin{equation}\label{eq:CRRA_opt_problem_partial_info}
\mbox{Maximise }\mathbb{E}^{t,c,\gamma}\left[\dfrac{(\hat C_T^{\bmtheta})^{1-\delta}}{1-\delta}\right],\mbox{ over all }\bmtheta\in\mathcal{A}^{\mathbb{F}},
\end{equation}
where $\mathbb{E}^{t,c,\gamma}$ denotes the conditional expectation given $\hat C_t=c$ and $\Gamma_t =\gamma$, where $\left(c,\gamma\right)\in\R_+\times\R$. We define the value function by
\begin{equation}
\hat{V}(t,c,\gamma):=\sup_{\bmtheta\in\mathcal{A}^{\mathbb{F}}}\mathbb{E}^{t,c,\gamma}\left[\dfrac{(\hat C_T^{\bmtheta})^{1-\delta}}{1-\delta}\right].
\end{equation}
Also in this case, we resort to dynamic programming principle. The HJB equation is given by  \begin{equation}\label{eq:HJB_partial_info_CRRA}
\begin{cases}
\displaystyle\sup_{\bmtheta\in\mathcal{A}^{\mathbb{F}}}\hat{V}_t(t,c,\gamma)+\mathcal{L}^{\bmtheta}\hat{V}(t,c,\gamma)=0,&(t,c,\gamma)\in[0,T)\times\R_+\times\R,\\[8pt]
\hat{V}(T,c,\gamma)=\dfrac{c^{1-\delta}}{1-\delta},&(c,\gamma)\in\R_+\times\R,
\end{cases}
\end{equation}
where for any constant control $\bmtheta\in\R^n$, the operator $\mathcal{L}^{\bmtheta}$ is given by
\begin{align}
\mathcal{L}^{\bmtheta}F(t,c,\gamma)=&c\left[r+\bmtheta_{t}^\top\left(\bma\gamma+\bmb-\mathbf{r}_n\right)-\dfrac{1}{2}\bmtheta^\top\left(\bmSigma_{\bmS}\bmSigma_{\bmS}^\top\odot\bme\right)\bmtheta\right]F_{c}(t,c,\gamma)\\
&+\frac{c^2}{2}\bmtheta^\top\bmtSigma_{\bmS}\bmtSigma_{\bmS}^\top\bmtheta^\top F_{c,c}(t,c,\gamma)+\left(\lambda\gamma+\beta\right)F_\gamma(t,c,\gamma)\\
&+\frac{1}{2}\mathbf{\bar{P}}(t)\left(\bmtSigma_{\bmS}\bmtSigma_{\bmS}^\top\right)^{-1}\mathbf{\bar{P}}(t)^\top F_{\gamma,\gamma}(t,c,\gamma)+c\bmtheta^\top\mathbf{\bar{P}}(t)^\top F_{c,\gamma}(t,c,\gamma),
\end{align}
for every function $F\left(\cdot\right)\in \mathcal{C}^{1,2,2}\left([0,T]\times\R_{+}\times\R\right)$. First, we establish the following verification result.
\begin{thm}[Verification Theorem]\label{thm:verification_thm_PARTIAL_INFO}
Let $f(t,c,\gamma)\in\mathcal{C}^{1,2,2}([0,T]\times\R_{+}\times\R)$ be a classical solution to the HJB equation \eqref{eq:HJB_equation_FULL_INFO} and assume that the following conditions hold:
\begin{itemize}
\item[(i)] for any $\bmtheta\in\mathcal{A}^{\mathbb{F}}$ 
the family $\{f(t \wedge \tau, \hat{C}_{t \wedge \tau}, \Gamma_{t \wedge \tau}), \text{ for all } \mathbb{F}-\text{stopping times } \tau \}$ is uniformly integrable;
\item[(ii)] there exists $\bar\bmtheta^\star$ at which the supremum in equation \eqref{eq:HJB_partial_info_CRRA} is attained.
\end{itemize}
Then $f(t,c,\gamma)=\hat{V}(t,c,\gamma)$ and if $\{\bar\bmtheta^\star(t,\Gamma_t)\}_{t\in[0,T]}\in\mathcal{A}^{\mathbb{F}}$ this is an optimal Markovian control.
\end{thm}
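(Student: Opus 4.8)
The plan is to run the classical martingale-based verification argument, now in the reduced state space $(\hat{C}^{\bmtheta},\Gamma)$ supplied by the filter, exploiting that the innovation process $\bmI^{\bmS}$ is an $(\mathbb{F},\mathbb{P})$-Brownian motion. The structure parallels the full-information Theorem \ref{thm:verification_thm_full_info_CRRA}, the only substantive difference being that the latent factor $Y$ is replaced by its conditional mean $\Gamma$ and that the diffusion coefficients are expressed through $\bmtSigma_{\bmS}$ and $\mathbf{\bar{P}}(t)$ rather than through $\bmtSigma_{\bmS},\bmtSigma_Y$. Since $P(t)$ and $\mathbf{\bar{P}}(t)$ are deterministic by the filtering equation \eqref{eq:variance}, the pair $(\hat{C}^{\bmtheta},\Gamma)$ is a genuine time-inhomogeneous Markov diffusion with generator $\mathcal{L}^{\bmtheta}$, so the guess-and-verify machinery transfers almost verbatim.

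First I would fix an arbitrary $\bmtheta\in\mathcal{A}^{\mathbb{F}}$ and apply It\^o's formula to $f(s,\hat{C}_s^{\bmtheta},\Gamma_s)$ on $[t,T]$, using the filtered cushion dynamics \eqref{eq:filtered_penalised_cushion_dyn} and the conditional-mean dynamics \eqref{eq:cond_exp}. Because both state processes are driven by the same innovation Brownian motion $\bmI^{\bmS}$, their cross-variation reproduces the coupling term $c\,\bmtheta^\top\mathbf{\bar{P}}(t)^\top F_{c,\gamma}$, and the finite-variation part of the expansion equals exactly $(f_t+\mathcal{L}^{\bmtheta}f)(s,\hat{C}_s^{\bmtheta},\Gamma_s)\,\de s$. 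The HJB equation \eqref{eq:HJB_partial_info_CRRA} gives $f_t+\mathcal{L}^{\bmtheta}f\le 0$ for every constant control, with equality attained at $\bar\bmtheta^\star$ by hypothesis (ii); hence $f(s,\hat{C}_s^{\bmtheta},\Gamma_s)$ is a local supermartingale for every admissible $\bmtheta$ and a local martingale when $\bmtheta=\bar\bmtheta^\star$.

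Next I would localise with a sequence of $\mathbb{F}$-stopping times $\tau_k\uparrow T\wedge\tau$ rendering the stochastic integral a true martingale, take $\mathbb{E}^{t,c,\gamma}$, and obtain $\mathbb{E}^{t,c,\gamma}[f(\tau_k,\hat{C}_{\tau_k}^{\bmtheta},\Gamma_{\tau_k})]\le f(t,c,\gamma)$. Invoking the uniform-integrability hypothesis (i) along stopping times, I would let $k\to\infty$ and use the terminal condition $f(T,c,\gamma)=c^{1-\delta}/(1-\delta)$ to conclude $\mathbb{E}^{t,c,\gamma}[(\hat{C}_T^{\bmtheta})^{1-\delta}/(1-\delta)]\le f(t,c,\gamma)$. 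Taking the supremum over $\bmtheta\in\mathcal{A}^{\mathbb{F}}$ yields $\hat{V}\le f$. Repeating the argument with $\bar\bmtheta^\star$, where the drift vanishes identically, upgrades the local supermartingale to a uniformly integrable martingale and turns the inequality into an equality, so that $\hat{V}(t,c,\gamma)=f(t,c,\gamma)$ and, provided $\{\bar\bmtheta^\star(t,\Gamma_t)\}_{t\in[0,T]}\in\mathcal{A}^{\mathbb{F}}$, the control $\bar\bmtheta^\star$ is optimal and Markovian.

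The main obstacle is the passage from the localised (super)martingale to a genuine one: removing the localising sequence and justifying the limit under the expectation requires precisely the uniform integrability assumed in (i), which is why that assumption is built into the statement rather than derived here. A secondary point is that the candidate $\bar\bmtheta^\star(t,\Gamma_t)$ must actually lie in $\mathcal{A}^{\mathbb{F}}$; this is not part of the verification theorem itself but has to be checked separately, by establishing integrability and moment bounds analogous to Propositions \ref{prop:sufficient_cond_ver_FULL_INFO}--\ref{prop:sufficient_cond_admissibility_FULL_INFO} of the full-information case, before the optimality claim can be asserted.
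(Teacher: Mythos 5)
Your proposal is correct and follows essentially the same route as the paper, which itself proves this result by declaring that the argument replicates the full-information Verification Theorem: It\^o's formula on $f(s,\hat{C}^{\bmtheta}_s,\Gamma_s)$ driven by the innovation Brownian motion, localisation to make the stochastic integral a true martingale, the HJB inequality to obtain the supermartingale bound, and the uniform-integrability hypothesis (i) to pass to the limit, with equality for $\bar\bmtheta^\star$. Your added observations --- that $P(t)$ and $\mathbf{\bar P}(t)$ being deterministic makes $(\hat{C}^{\bmtheta},\Gamma)$ a genuine Markov diffusion, and that admissibility of the candidate control is a separate check handled by the analogues of Propositions \ref{prop:sufficient_cond_ver_FULL_INFO}--\ref{prop:sufficient_cond_admissibility_FULL_INFO} --- are accurate and consistent with the paper's structure.
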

\begin{proof}
The proof replicates the line of that of Theorem \ref{thm:verification_thm_full_info_CRRA}.
\end{proof}
In view of the Verification Theorem, we characterise the value function as the unique classical solution
of the HJB equation. Also in this case, we resort to a guess-and-verify approach. The following result presents a candidate for the value function $\hat{V}$ and the optimal control $\bar{\bmtheta}^\star$ under partial information. We let $\bmhatTheta$ be the same of Theorem \ref{thm:existence_CRRA} and we introduce the following system od ODEs: 
\begingroup
\small
\begin{align}
0=&\bar{f}_t(t)+\left[\left(1-\delta\right)\mathbf{\bar{P}}(t)\mathbf{\hat{\Theta}}^{-1}\left(\mathbf{\bar{P}}(t)\right)^\top+\mathbf{\bar{P}}(t)\left(\mathbf{\tilde\Sigma}_{\mathbf{S}}\mathbf{\tilde\Sigma}_{\mathbf{S}}^\top\right)^{-1}\left(\mathbf{\bar{P}}(t)\right)^\top\right]\bar{f}^2(t)\\
\label{eq:bar_f}&+2\left[\left(1-\delta\right)\mathbf{\bar{P}}(t)\mathbf{\hat{\Theta}}^{-1}\mathbf{a}+\lambda\right]\bar{f}(t)+\left(1-\delta\right)\mathbf{a}^\top\mathbf{\hat{\Theta}}^{-1}\mathbf{a},\\[8pt]
0=&\bar{g}_t(t)+\left[\left(1-\delta\right)\mathbf{\bar{P}}(t)\mathbf{\hat{\Theta}}^{-1}\mathbf{a}+\lambda\right]\bar{g}(t)+\left[\left(1-\delta\right)\mathbf{\bar{P}}(t)\mathbf{\hat{\Theta}}^{-1}\left(\mathbf{b}-\mathbf{r}_n\right)+\beta\right]\bar{f}(t)\\
&+\left[\left(1-\delta\right)\mathbf{\bar{P}}(t)\mathbf{\hat{\Theta}}^{-1}\left(\mathbf{\bar{P}}(t)\right)^\top+\mathbf{\bar{P}}(t)\left(\mathbf{\tilde\Sigma}_{\mathbf{S}}\mathbf{\tilde\Sigma}_{\mathbf{S}}^\top\right)^{-1}\left(\mathbf{\bar{P}}(t)\right)^\top\right]\bar{f}(t)\bar{g}(t)\\
\label{eq:bar_g}&+\left(1-\delta\right)\mathbf{a}^\top\mathbf{\hat{\Theta}}^{-1}\left(\mathbf{b}-\mathbf{r}_n\right),\\[8pt]
0=&\bar{h}_t(t)+(1-\delta)r+\left[\left(1-\delta\right)\mathbf{\bar{P}}(t)\mathbf{\hat{\Theta}}^{-1}\left(\mathbf{b}-\mathbf{r}_n\right)+\beta\right]\bar{g}(t)+\frac{1}{2}\mathbf{\bar{P}}(t)\left(\mathbf{\tilde\Sigma}_{\mathbf{S}}\mathbf{\tilde\Sigma}_{\mathbf{S}}^\top\right)^{-1}\left(\mathbf{\bar{P}}(t)\right)^\top\bar{f}(t)\\
&+\dfrac{1}{2}\left[\left(1-\delta\right)\mathbf{\bar{P}}(t)\mathbf{\hat{\Theta}}^{-1}\left(\mathbf{\bar{P}}(t)\right)^\top+\mathbf{\bar{P}}(t)\left(\mathbf{\tilde\Sigma}_{\mathbf{S}}\mathbf{\tilde\Sigma}_{\mathbf{S}}^\top\right)^{-1}\left(\mathbf{\bar{P}}(t)\right)^\top\right]\bar{g}^2(t)\\
\label{eq:bar_h}&+\dfrac{1-\delta}{2}\left(\mathbf{b}-\mathbf{r}_n\right)^\top\mathbf{\hat{\Theta}}^{-1}\left(\mathbf{b}-\mathbf{r}_n\right).
\end{align}
\endgroup

\begin{thm}\label{thm:CRRA_case_PARTIAL_INFO} 
Let $\bar{f}(\cdot),\,\bar{g}(\cdot),\,\bar{h}(\cdot)\in\mathcal{C}^1_b([0,T])$ be the unique solutions of the following system of ODEs \eqref{eq:bar_f},\eqref{eq:bar_g},\eqref{eq:bar_h},  
with terminal conditions $\bar{f}(T)=\bar{g}(T)=\bar{h}(T)=0$. Then, the optimal control $\bar{\bmtheta}^\star$ is given by $\bar{\bmtheta}^\star_t=\bar{\bmtheta}^\star(t, \Gamma_t)$ where
\begin{equation}\label{eq:candidate_control_CRRA_PARTIAL}
\bar{\bmtheta}^\star(t,\gamma)=\bmhatTheta^{-1}\left(\bma\gamma+\bmb-\mathbf{r}_n\right)+\bmhatTheta^{-1}\mathbf{\bar{P}}(t)^\top\left(\bar{f}(t)\gamma+\bar{g}(t)\right),
\end{equation}
and the value function satisfies
\begin{equation}\label{eq:VAL_F_P_INFO}
\hat{V}(t,c,\gamma)=\dfrac{c^{1-\delta}}{1-\delta}\exp\left\lbrace\dfrac{\bar{f}(t)}{2}\gamma^2+\bar{g}(t)\gamma+\bar{h}(t)\right\rbrace.
\end{equation}
Moreover, let $(\hat{f}(t),\,\hat{g}(t),\,\hat{h}(t))$ be the unique solutions on $[0,T]$ of the systems of ODEs given by equations \eqref{eq:f_hat}, \eqref{eq:g_hat},\eqref{eq:h_hat} with $\hat{f}(T)=\hat{g}(T)=\hat{h}(T)=0$. Then, for all $t\in[0,T]$, $1-P(t)\hat{f}(t)>0$ and 
\begin{align}
\label{eq:rel_hat_f_bar_f}\bar{f}(t)=&\dfrac{\hat{f}(t)}{1-P(t)\hat{f}(t)},\\[6pt]
\label{eq:rel_hat_g_bar_g}\bar{g}(t)=&\dfrac{\hat{g}(t)}{1-P(t)\hat{f}(t)},\\[6pt]
\bar{h}(t)=&\hat{h}(t)-\dfrac{1}{2}\log\left(1-P(t)\hat{f}(t)\right)+\dfrac{1}{2}\dfrac{\hat g^2(t)P(t)}{1-P(t)\hat{f}(t)}\\
\label{eq:rel_hat_h_bar_h}&-\dfrac{1-\delta}{2}\int_t^T\dfrac{P(s)}{1-P(s)\hat{f}(s)}\left[\bm{\tilde\Sigma}_Y\bm{\tilde\Sigma}_{\mathbf{S}}^\top\hat{f}(s)+\mathbf{a}^\top\right]\mathbf{\hat\Theta}^{-1}\left[\bm{\tilde\Sigma}_Y\bm{\tilde\Sigma}_{\mathbf{S}}^\top\hat{f}(s)+\mathbf{a}^\top\right]^\top\de s,
\end{align}
implying that $\bar{f}(t),\,\bar{g}(t),\,\bar{h}(t)\in\mathcal{C}^1_{b}([0,T]).$ 
\end{thm}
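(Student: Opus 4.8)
The plan is to mirror the guess-and-verify scheme of Theorem~\ref{thm:existence_CRRA}. First I would insert the ansatz~\eqref{eq:VAL_F_P_INFO} into the HJB equation~\eqref{eq:HJB_partial_info_CRRA}, using $\hat V_c=(1-\delta)c^{-1}\hat V$, $\hat V_{cc}=-\delta(1-\delta)c^{-2}\hat V$, $\hat V_\gamma=(\bar f\gamma+\bar g)\hat V$, $\hat V_{\gamma\gamma}=\big((\bar f\gamma+\bar g)^2+\bar f\big)\hat V$ and $\hat V_{c\gamma}=(1-\delta)c^{-1}(\bar f\gamma+\bar g)\hat V$, so that the common factor $\hat V$ divides out and only the exponent remains. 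The $\bmtheta$-dependent part is a quadratic whose stationarity condition reads $\bmhatTheta\bmtheta=(\bma\gamma+\bmb-\mathbf{r}_n)+\mathbf{\bar{P}}(t)^\top(\bar f\gamma+\bar g)$; since $\bmhatTheta=(\bmSigma_\bmS\bmSigma_\bmS^\top)\odot\bme+\delta\bmtSigma_\bmS\bmtSigma_\bmS^\top$ is positive definite and hence invertible, this gives exactly $\bar{\bmtheta}^\star(t,\gamma)$ of~\eqref{eq:candidate_control_CRRA_PARTIAL} (a genuine maximiser for $\delta<1$, and the relevant critical point for $\delta>1$, with optimality confirmed by the Verification Theorem~\ref{thm:verification_thm_PARTIAL_INFO}). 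Reinserting $\bar{\bmtheta}^\star$ and collecting the coefficients of $\gamma^2$, $\gamma$ and $\gamma^0$ yields the ODEs~\eqref{eq:bar_f},~\eqref{eq:bar_g},~\eqref{eq:bar_h}, the $\hat V_{\gamma\gamma}$ term contributing the factor $\mathbf{\bar{P}}(t)(\bmtSigma_\bmS\bmtSigma_\bmS^\top)^{-1}\mathbf{\bar{P}}(t)^\top$. Admissibility of $\bar{\bmtheta}^\star$ and the uniform integrability in hypothesis~(i) are then checked as in Propositions~\ref{prop:sufficient_cond_ver_FULL_INFO}--\ref{prop:sufficient_cond_admissibility_FULL_INFO}, with the filter $\Gamma$ replacing $Y$.

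The more delicate part is the link~\eqref{eq:rel_hat_f_bar_f}--\eqref{eq:rel_hat_h_bar_h} with the full-information coefficients, which I would obtain by substitution and uniqueness rather than by re-solving. Writing $\psi(t):=1-P(t)\hat f(t)$, the first task is to show $\psi>0$ on $[0,T]$. From $\hat f(T)=0$ we get $\psi(T)=1$, and since $\hat f_t(T)=-(1-\delta)\bma^\top\bmhatTheta^{-1}\bma$, a barrier argument at the zeros of $\hat f$ gives $\hat f<0$ on $[0,T)$ when $\delta>1$ (so $\psi\ge 1$ trivially) and $\hat f>0$ on $[0,T)$ when $\delta<1$. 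In the latter case I would differentiate $\phi:=\hat f/\psi$ and, eliminating $\hat f_t$ and $P_t$ through the full-information Riccati~\eqref{eq:f_hat} and the filter Riccati~\eqref{eq:variance}, verify that $\phi$ solves precisely~\eqref{eq:bar_f} wherever $\psi\neq 0$. Were $\psi$ to vanish first at an interior $t^\star$ with $\hat f(t^\star)>0$, then $\phi\to+\infty$ as $t\downarrow t^\star$; but a scalar Riccati whose leading coefficient $(1-\delta)\mathbf{\bar{P}}\bmhatTheta^{-1}\mathbf{\bar{P}}^\top+\mathbf{\bar{P}}(\bmtSigma_\bmS\bmtSigma_\bmS^\top)^{-1}\mathbf{\bar{P}}^\top$ is strictly positive cannot blow up to $+\infty$ at a finite left endpoint (near such a point $\phi_t\approx A\phi^2>0$ forces $\phi$ increasing in $t$, contradicting $\phi\to+\infty$ from the right). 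Hence $\psi>0$ throughout, $\bar f:=\hat f/\psi$ is bounded and solves~\eqref{eq:bar_f}, and uniqueness of the Riccati forces~\eqref{eq:rel_hat_f_bar_f}.

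For $\bar g$ I would proceed identically: setting $\bar g:=\hat g/\psi$, differentiating and substituting~\eqref{eq:g_hat},~\eqref{eq:f_hat} and~\eqref{eq:variance}, I expect $\bar g$ to satisfy the linear equation~\eqref{eq:bar_g} driven by the $\bar f$ just identified, so uniqueness of a linear ODE yields~\eqref{eq:rel_hat_g_bar_g}. For $\bar h$, equation~\eqref{eq:bar_h} with $\bar h(T)=0$ is directly integrable once $\bar f,\bar g$ are known; substituting their expressions I would recognise that most of the integrand is a total derivative, the terms $-\tfrac12\log\psi$ and $\tfrac12\hat g^2 P/\psi$ coming from integrating $\tfrac12\psi_t/\psi$ and the cross contribution, while the non-exact remainder is exactly the integral correction in~\eqref{eq:rel_hat_h_bar_h}. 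This residual is the genuine information cost; it is absent from the naive identity $\hat V(t,c,\gamma)=\mathbb{E}^{t,c,\gamma}[\hat v(t,c,Y_t)]$, which reproduces $\bar f,\bar g$ but only part of $\bar h$. Boundedness of $\hat f,\hat g,\hat h,P$ together with $\psi$ bounded away from zero then gives $\bar f,\bar g,\bar h\in\mathcal{C}^1_b([0,T])$, closing the existence, uniqueness and regularity claims.

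The main obstacle is twofold: proving $\psi=1-P\hat f>0$ in the regime $\delta<1$, which I would settle by the Riccati no-blow-up argument above rather than by a direct sign estimate, and the bookkeeping behind~\eqref{eq:bar_h}, where exposing the total-derivative structure and isolating the residual integral is the most error-prone computation. The remaining steps are a finite, if lengthy, matching of polynomial coefficients in $\gamma$ and repeated use of the two Riccati equations.
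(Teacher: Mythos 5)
Your first paragraph is essentially the paper's own route: the published proof simply repeats the guess-and-verify argument of Theorem \ref{thm:existence_CRRA} with the ansatz $f(t,c,\gamma)=\frac{c^{1-\delta}}{1-\delta}\exp\{\frac{\bar f(t)}{2}\gamma^2+\bar g(t)\gamma+\bar h(t)\}$, and your stationarity condition and the diffusion factor $\mathbf{\bar{P}}(t)(\bmtSigma_{\bmS}\bmtSigma_{\bmS}^\top)^{-1}\mathbf{\bar{P}}(t)^\top$ are correct. (Minor point: the critical point is a genuine maximiser for every $\delta>0$, not only $\delta<1$, since the relevant Hessian is $-\bmhatTheta$ times a positive factor.) For the identities \eqref{eq:rel_hat_f_bar_f}--\eqref{eq:rel_hat_h_bar_h} the paper does not redo the computation but cites equations (28)--(30) of Brendle (2006); your substitution-plus-uniqueness derivation is a legitimate, more self-contained alternative and would be fine if carried out.

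The genuine gap is your argument that $\psi(t):=1-P(t)\hat f(t)>0$ when $\delta\in\mathcal{P}\cap(0,1)$. Equation \eqref{eq:bar_f} reads $\bar f_t=-A(t)\bar f^2-2B(t)\bar f-C$ with $A(t)=(1-\delta)\mathbf{\bar{P}}(t)\bmhatTheta^{-1}\mathbf{\bar{P}}(t)^\top+\mathbf{\bar{P}}(t)(\bmtSigma_{\bmS}\bmtSigma_{\bmS}^\top)^{-1}\mathbf{\bar{P}}(t)^\top>0$, so near a putative blow-up one has $\phi_t\approx-A\phi^2<0$, not $A\phi^2>0$ as you write; a decreasing $\phi$ on $(t^\star,T]$ is perfectly compatible with $\phi\to+\infty$ as $t\downarrow t^\star$ (the model case $y'=-Ay^2$ has the solution $y=1/(A(t-t^\star))$). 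Finite backward escape time is exactly the generic pathology of such Riccati equations --- it is precisely what the discriminant condition defining $\mathcal{P}$ is introduced to exclude for $\hat f$ --- so it cannot be ruled out for $\phi$ on general grounds, and your contradiction does not materialise. With it fall the positivity of $\psi$, the identification $\bar f=\hat f/\psi$ on all of $[0,T]$, and your closing regularity claim (which you treat as a conclusion, whereas the theorem assumes $\bar f,\bar g,\bar h\in\mathcal{C}^1_b([0,T])$ as a hypothesis). The paper closes this point differently and correctly: since the bounded solution $\bar f$ is given, the identity $(1-P(t)\hat f(t))\bar f(t)=\hat f(t)$ holds wherever $\psi\neq 0$; if $\bar t$ were the largest zero of $\psi$, letting $t\downarrow\bar t$ and using continuity and finiteness of $\bar f(\bar t)$ forces $\hat f(\bar t)=0$, contradicting $P(\bar t)\hat f(\bar t)=1$. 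Replacing your no-blow-up argument by this limit argument (i.e., actually using the boundedness of $\bar f$ assumed in the statement) repairs the proof.
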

\begin{proof}
The proof is provided in Appendix \ref{app:B_1}.
\end{proof}
Note that, in view of the relationship between $\hat{f}, \hat{g}, \hat{h}$ and $\bar{f}, \bar{g}, \bar{h}$ and the properties of the solution of the system \eqref{eq:f_hat}, \eqref{eq:g_hat}, \eqref{eq:h_hat}, we immediately get that the system  \eqref{eq:bar_h}, \eqref{eq:bar_g} and \eqref{eq:bar_h} admits a unique solution in $\mathcal{C}_b^{1}([0,T])$.

\begin{remark}
{The optimal strategy in equation \eqref{eq:candidate_control_CRRA_PARTIAL} id Markovian and preserves the same qualitative structure as in the full-information case, with the factor process $Y$ replaced by its filtered estimate $\Gamma$.  
However, partial information introduces an additional channel through which uncertainty affects portfolio choice. In particular, the intertemporal hedging component now depends on the process $\bar{P}(t)$, which captures the conditional variance of the estimation error. This generates an additional demand that can be interpreted as a hedge against parameter uncertainty (or learning risk), reflecting the investor’s need to account for imperfect information about expected returns. All components of the optimal allocation are multiplied by $\hat{\bm{\Theta}}^{-1}$. Hence, the carbon penalty  distorts the standard risk-return trade-off, but also affects the way the investor hedges estimation risk.} \end{remark}
We now provide sufficient conditions on model parameters that guarantee that condition (ii) of Theorem \ref{thm:verification_thm_PARTIAL_INFO} is satisfied and that $\bar{\bm{\theta}}^\star$ given by equation \eqref{eq:candidate_control_CRRA_PARTIAL} is an admissible control, according to Definition \ref{defn:F_admissible_strategies_theta}. The following Proposition is a preliminary results.
\begin{prop}\label{prop:sign_bar_f}
Let $\bar{f}(t)$ be solution of the ODE in equation \eqref{eq:bar_f} on $[0,T]$. Then, $\bar{f}(t)$ is strictly positive and decreasing on $[0,T]$ if $\delta\in\mathcal{P}\cap(0,1)$ and is strictly negative and increasing if $\delta\in\mathcal{P}\cap(1,+\infty)$.
\end{prop}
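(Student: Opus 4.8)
The plan is to read \eqref{eq:bar_f} as a scalar, time-inhomogeneous Riccati equation $\bar f_t(t)=-A(t)\bar f^2(t)-B(t)\bar f(t)-C$ with terminal datum $\bar f(T)=0$, where $A(t)=(1-\delta)\mathbf{\bar{P}}(t)\bmhatTheta^{-1}\mathbf{\bar{P}}(t)^\top+\mathbf{\bar{P}}(t)(\bmtSigma_{\bmS}\bmtSigma_{\bmS}^\top)^{-1}\mathbf{\bar{P}}(t)^\top$, $B(t)=2[(1-\delta)\mathbf{\bar{P}}(t)\bmhatTheta^{-1}\bma+\lambda]$ and $C=(1-\delta)\bma^\top\bmhatTheta^{-1}\bma$. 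First I would record that $\bmhatTheta$ and $\bmtSigma_{\bmS}\bmtSigma_{\bmS}^\top$ are symmetric positive definite, so every quadratic form above is positive; in particular the sign of the constant term is governed solely by $1-\delta$, that is $C>0$ for $\delta\in(0,1)$ and $C<0$ for $\delta\in(1,+\infty)$. Existence of a bounded $\mathcal{C}^1$ solution is already granted by Theorem \ref{thm:CRRA_case_PARTIAL_INFO}, so no blow-up has to be excluded and I may work on all of $[0,T]$.

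For the sign I would use a terminal-condition plus no-interior-zero argument that is insensitive to the possibly indefinite coefficients $A(t),B(t)$. Evaluating the equation at $T$ gives $\bar f_t(T)=-C$, so $\bar f$ leaves the terminal value $0$ with slope of sign $-\operatorname{sgn}(1-\delta)$; hence $\bar f>0$ just to the left of $T$ when $\delta\in(0,1)$ and $\bar f<0$ when $\delta>1$. To propagate this sign over $[0,T)$ I would argue by contradiction: if $\bar f$ had an interior zero, let $t_1<T$ be the largest one, so that continuity forces $\bar f$ to keep a constant sign on $(t_1,T)$; the equation then gives $\bar f_t(t_1)=-C$ again, whose sign is incompatible with the one-sided slope imposed at $t_1$ by that constant sign. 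This rules out interior zeros and yields $\bar f>0$ on $[0,T)$ for $\delta\in\mathcal{P}\cap(0,1)$ and $\bar f<0$ on $[0,T)$ for $\delta\in\mathcal{P}\cap(1,+\infty)$.

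Monotonicity I would instead route through the explicit link \eqref{eq:rel_hat_f_bar_f}, $\bar f(t)=\hat f(t)/(1-P(t)\hat f(t))$, using $1-P(t)\hat f(t)>0$ from Theorem \ref{thm:CRRA_case_PARTIAL_INFO} and $P(t)\ge 0$. The advantage is that the full-information equation \eqref{eq:f_hat} for $\hat f$ has \emph{constant} coefficients, so an autonomous phase-line analysis applies cleanly: the hypothesis $\delta\in\mathcal{P}$, i.e.\ $\Delta(\delta)>0$ in \eqref{eq:discriminant}, guarantees that the frozen quadratic governing $\hat f$ has two distinct real equilibria. A standard phase-line argument then traps the backward trajectory issuing from $\hat f(T)=0$ on one side of the nearest equilibrium, which simultaneously fixes the sign of $\hat f$ (that of $1-\delta$) and the sign of $\hat f_t$, making $\hat f$ decreasing for $\delta\in(0,1)$ and increasing for $\delta\in(1,+\infty)$.

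Differentiating the link and inserting the two Riccati right-hand sides produces the compact identity $\bar f_t=(\hat f_t+P_t\hat f^2)/(1-P\hat f)^2$, in which the cross terms cancel. The sign of $\bar f$ is then immediate (it equals that of $\hat f$), and monotonicity reduces to showing that the numerator $\hat f_t+P_t\hat f^2$ carries the sign of $\hat f_t$, namely of $-(1-\delta)$. This is the step I expect to be the main obstacle: $\hat f_t$ already has the correct sign, but the Kalman-variance derivative $P_t$ from \eqref{eq:variance} has no a priori definite sign, because of the $2\lambda P(t)$ term with $\lambda$ of arbitrary sign. The crux is therefore to dominate $P_t\hat f^2$ by $\hat f_t$, most cleanly by first establishing monotonicity of the filtering variance $P$ (so that, since $\hat f/(1-P\hat f)$ is increasing in each of $\hat f$ and $P$ on the relevant region, both channels push $\bar f$ in the same direction), or, failing that, by substituting the explicit right-hand sides of \eqref{eq:f_hat} and \eqref{eq:variance} and estimating the resulting quadratic in $\hat f$ directly.
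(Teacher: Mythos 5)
Your sign argument is sound and is actually a legitimate alternative to the paper's: the paper obtains the sign of $\bar f$ indirectly, from the identity $\bar f(t)=\hat f(t)/(1-P(t)\hat f(t))$ together with $1-P(t)\hat f(t)>0$ and the (previously asserted) sign of $\hat f$, whereas you read it off the terminal slope $\bar f_t(T)=-C$ and exclude interior zeros by a last-zero contradiction. Both work; yours has the small advantage of not presupposing the sign of $\hat f$. Your computation $\bar f_t=(\hat f_t+P_t\hat f^2)/(1-P\hat f)^2$ is also correct.

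The monotonicity half, however, has a genuine gap, and you have correctly located it yourself: the sign of $P_t$ is not controlled. The Kalman variance $P$ solves an autonomous scalar Riccati equation, so it is monotone, but the \emph{direction} depends on whether $P_0$ lies above or below the steady-state variance; when $P_0$ is below it, $P_t>0$, and for $\delta\in(0,1)$ the term $P_t\hat f^2>0$ opposes $\hat f_t<0$, so neither of your two fallbacks (monotone $P$ pushing the same way, or a direct domination estimate) is actually carried out, and the first one fails outright in that regime. The paper avoids this detour entirely: it works directly on the non-autonomous Riccati equation for $\bar f$, writing $\bar f_t(t)=G(\bar f(t))$ with $G(x)=-A(t)x^2-B(t)x-C$, noting $G(0)=-(1-\delta)\bma^\top\bmhatTheta^{-1}\bma$, which is negative for $\delta\in\mathcal P\cap(0,1)$, and arguing that $G$ must keep that sign along the trajectory emanating backwards from $\bar f(T)=0$, so that $\bar f_t<0$ throughout (and symmetrically for $\delta>1$). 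In other words, the intended proof is the same last-critical-point/phase-line device you already deploy for the sign, applied to $\bar f_t$ rather than to $\bar f$; to complete your write-up you should either adopt that route or supply the missing estimate showing $\hat f_t+P_t\hat f^2$ carries the sign of $\hat f_t$, which your current text only labels as ``the main obstacle''.
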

\begin{proof}
The proof is provided in Appendix \ref{app:B_2}. 
\end{proof}
Next, we will use this result to show that condition (ii) of Theorem \ref{thm:verification_thm_PARTIAL_INFO} is satisfied.
\begin{prop}\label{prop:sufficient_condition_for_ver_thm_partial_info}
Assume that one of the two following conditions holds
\begin{itemize}
\item[(i)] $\delta\in\mathcal{P}\cap(1,+\infty)$,
\item[(ii)] $\delta\in\mathcal{P}\cap(0,1)$ and
\begin{equation}\label{eq:cond_3}
1-q(1+\alpha)\dfrac{\hat{f}(0)}{1-P(0)\hat{f}(0)}\max\left\lbrace P_0,\mbox{Var}[Y_T]\right\rbrace>0
\end{equation}
for some $q>1$.
\end{itemize}
Then, for any admissible strategy $\bmtheta\in\mathcal{A}^{\mathbb{F}}$, $\{\hat V(\tau, \hat{C}_{\tau}, Y_{\tau}),\mbox{ for all }\mathbb{F}\mbox{--stopping times }\tau\le T\}$ forms a uniformly integrable family.
\end{prop}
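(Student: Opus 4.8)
The plan is to mirror the full-information argument of Proposition~\ref{prop:sufficient_cond_ver_FULL_INFO}, replacing the latent factor $Y$ by its filter $\Gamma$ and $\hat f(0)$ by $\bar f(0)=\hat f(0)/(1-P(0)\hat f(0))$ via the identity \eqref{eq:rel_hat_f_bar_f}. Since $\hat V$ is a function of $(t,c,\gamma)$, the relevant family is $\{\hat V(\tau,\hat C_\tau,\Gamma_\tau):\ \tau\le T\}$, with the filtered mean $\Gamma$ in the third slot. Recalling that any family bounded in $L^{1+\alpha}$ for some $\alpha>0$ is uniformly integrable, it suffices to bound this family in $L^{1+\alpha}$ uniformly in $\tau$, taking $\alpha>0$ to be the exponent appearing in condition (ii) of Definition~\ref{defn:F_admissible_strategies_theta}. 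Substituting the explicit form \eqref{eq:VAL_F_P_INFO}, one has $|\hat V(\tau,\hat C_\tau,\Gamma_\tau)|^{1+\alpha}=|1-\delta|^{-(1+\alpha)}(\hat C_\tau)^{(1-\delta)(1+\alpha)}\exp\{(1+\alpha)(\tfrac12\bar f(\tau)\Gamma_\tau^2+\bar g(\tau)\Gamma_\tau+\bar h(\tau))\}$.

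The next step is a Hölder splitting with conjugate exponents $(d,q)$, $1/d+1/q=1$, where $d>1$ is the constant from the admissibility bound and $q$ will be identified with the Hölder conjugate $d/(d-1)$, i.e. the one appearing in \eqref{eq:cond_3}. This separates the cushion power from the Gaussian exponential:
\begin{equation}
\mathbb E\!\left[|\hat V(\tau,\hat C_\tau,\Gamma_\tau)|^{1+\alpha}\right]\le \frac{1}{|1-\delta|^{1+\alpha}}\Big(\mathbb E\big[(\hat C_\tau)^{d(1-\delta)(1+\alpha)}\big]\Big)^{1/d}\Big(\mathbb E\big[e^{\,q(1+\alpha)(\frac12\bar f(\tau)\Gamma_\tau^2+\bar g(\tau)\Gamma_\tau+\bar h(\tau))}\big]\Big)^{1/q}.
\end{equation}
The first factor is bounded, uniformly in $\tau\le T$, by $(\sup_{t}\mathbb E[(\hat C_t)^{d(1-\delta)(1+\alpha)}])^{1/d}<\infty$, which is exactly condition (ii) of Definition~\ref{defn:F_admissible_strategies_theta}. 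It remains to bound the second factor uniformly in $\tau$.

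Here the dichotomy of Proposition~\ref{prop:sign_bar_f} enters. If $\delta\in\mathcal P\cap(1,+\infty)$ (case (i)), then $\bar f<0$, so $\gamma\mapsto\frac12\bar f(\tau)\gamma^2+\bar g(\tau)\gamma$ is a downward parabola bounded above by $-\bar g(\tau)^2/(2\bar f(\tau))$; since $\bar f,\bar g,\bar h\in\mathcal C^1_b([0,T])$ this bound is a deterministic constant, the exponential is bounded, and no further hypothesis is needed. If $\delta\in\mathcal P\cap(0,1)$ (case (ii)), then $\bar f>0$ and is decreasing, so its coefficient is controlled by $\bar f(0)$ and the exponent carries a positive quadratic term. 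Writing $\kappa:=\tfrac12 q(1+\alpha)\bar f(0)$, finiteness of $\mathbb E[e^{\kappa\Gamma_t^2+\cdots}]$ at a fixed time requires $2\kappa\,\mathrm{Var}[\Gamma_t]<1$. Using the conditional-variance decomposition $\mathrm{Var}[Y_t]=\mathrm{Var}[\Gamma_t]+P(t)$, with $P(t)$ deterministic, gives $\mathrm{Var}[\Gamma_t]\le\mathrm{Var}[Y_t]$, and since the OU variance $\mathrm{Var}[Y_t]$ is monotone on $[0,T]$ we obtain $\sup_{t\le T}\mathrm{Var}[\Gamma_t]\le\max\{P_0,\mathrm{Var}[Y_T]\}$. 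Hence $2\kappa\max\{P_0,\mathrm{Var}[Y_T]\}<1$, which after \eqref{eq:rel_hat_f_bar_f} is precisely \eqref{eq:cond_3}: the stated inequality is exactly the threshold making the Gaussian quadratic-exponential moment finite.

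The main obstacle is that this moment must be controlled uniformly over stopping times, whereas $\Gamma_\tau$ at an $\mathbb F$-stopping time is not Gaussian, so the fixed-time computation does not apply verbatim. I would resolve this by reducing the quadratic-exponential moment at $\tau$ to a fixed-time (indeed deterministic) quantity. Two equivalent devices are available: one may linearise via the Gaussian identity $e^{\kappa\gamma^2}=\mathbb E_Z[e^{\sqrt{2\kappa}\,\gamma Z}]$ with $Z\sim N(0,1)$ independent and apply optional sampling to the genuine exponential martingale generated by the linear filtering SDE \eqref{eq:cond_exp} for $\Gamma$; or one may construct a quadratic-exponential supermartingale majorant $\exp\{a(t)\Gamma_t^2+b(t)\Gamma_t+c(t)\}$ whose quadratic coefficient solves a Riccati equation of the same type as \eqref{eq:bar_f}, so that optional sampling bounds the moment by its value at $t=0$, where $\Gamma_0$ is deterministic. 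In either route only the \emph{pointwise} conditional variance enters, which is why the clean bound $\max\{P_0,\mathrm{Var}[Y_T]\}$ suffices; the strict inequality in \eqref{eq:cond_3} leaves the slack needed to absorb the bounded linear and constant terms $\bar g,\bar h$ via Young's inequality. Combining the two uniformly bounded Hölder factors yields $\sup_{\tau\le T}\mathbb E[|\hat V(\tau,\hat C_\tau,\Gamma_\tau)|^{1+\alpha}]<\infty$, and uniform integrability follows.
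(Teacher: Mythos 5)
Your proposal follows essentially the same route as the paper's proof: a H\"older splitting with conjugate exponents $(d,q)$ separating the cushion moment (controlled by condition (ii) of Definition \ref{defn:F_admissible_strategies_theta}) from the Gaussian quadratic-exponential moment of $\Gamma$, the sign dichotomy for $\bar f$ from Proposition \ref{prop:sign_bar_f}, and the bounds $\mathrm{Var}[\Gamma_t]=\mathrm{Var}[Y_t]-P(t)\le\max\{P_0,\mathrm{Var}[Y_T]\}$ and $\bar f(t)\le\bar f(0)=\hat f(0)/(1-P(0)\hat f(0))$ leading exactly to \eqref{eq:cond_3}. The only difference is that you explicitly flag, and sketch a fix for, the passage from deterministic times to stopping times, a point the paper's own proof passes over silently by bounding only $\sup_{t\in[0,T]}\mathbb{E}[\hat V^{1+\alpha}(t,\hat C_t,\Gamma_t)]$.
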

\begin{proof}
The proof is provided in Appendix \ref{app:B_3}.
\end{proof}
To close the loop, we provide sufficient conditions for admissibility of the optimal strategy.
\begin{prop}
Assume that one of the two following conditions holds
\begin{itemize} 
\item[(i)] $\delta\in\mathcal{P}\cap(0,1)$ and 
\begin{equation}
\label{eq:cond_1_ammissibility_partial_delta_in_0_1}
1-8d(1-\delta)(1+\alpha)nT\left[\left(1\vee d(1-\delta)(1+\alpha)w\right)\tilde c_1^2+a_M^2\right]\max\left\lbrace P_0,\text{Var}[Y_T]\right\rbrace>0,
\end{equation}
\item[(ii)] $\delta\in\mathcal{P}\cap(1,+\infty)$ and
\begin{equation}\label{eq:cond_1_ammissibility_delta_partial_in_1_infty}
1-8d(1-\delta)(1+\alpha)nT\left[\left(-(1+w)\wedge d(1-\delta)(1+\alpha)\tilde w\right)\tilde{c}_1^2-a_M^2\right]\max\left\lbrace P_0,\text{Var}[Y_T]\right\rbrace>0,
\end{equation}
where $w$ and $\tilde w$ are given by equations \eqref{eq:exp_for_w} and \eqref{eq:exp_for_tilde_w} respectively, and $\tilde c_1$ is given by
\begin{equation}\label{eq:exp_for_tilde_c_1}
\tilde c_1=\max_{i=1,\dots,n}\bigg|\left(\mathbf{\hat\Theta}^{-1}\left[\mathbf{a}+\mathbf{\tilde\Sigma}_{\mathbf{   S}}\left(\mathbf{\tilde\Sigma}_{\mathbf{S}}^{-1}\right)^\top\left(\mathbf{a}\sup_{u\in[0,T]}P(u)\bar{f}(u)+\bm{\tilde\Sigma}_{\mathbf{S}}\bm{\tilde\Sigma}_Y^\top\sup_{u\in[0,T]}\bar{f}(u)\right)\right]\right)_i\bigg|
\end{equation}
\end{itemize}
Then the process $\bar\bmtheta^\star$ given by equation \eqref{eq:candidate_control_CRRA_PARTIAL} is an admissible strategy.
\end{prop}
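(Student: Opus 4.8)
The plan is to verify directly the two integrability requirements of Definition~\ref{defn:F_admissible_strategies_theta} for the Markovian candidate $\bar\bmtheta^\star_t=\bar\bmtheta^\star(t,\Gamma_t)$. The starting observation is that, by \eqref{eq:candidate_control_CRRA_PARTIAL}, the control is affine in the filter, $\bar\bmtheta^\star(t,\gamma)=A(t)\gamma+B(t)$, with $A(t)=\bmhatTheta^{-1}\big(\bma+\mathbf{\bar{P}}(t)^\top\bar f(t)\big)$ and $B(t)=\bmhatTheta^{-1}\big(\bmb-\mathbf{r}_n+\mathbf{\bar{P}}(t)^\top\bar g(t)\big)$. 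Since $\bar f,\bar g\in\mathcal C^1_b([0,T])$ by Theorem~\ref{thm:CRRA_case_PARTIAL_INFO} and $P(\cdot),\mathbf{\bar{P}}(\cdot)$ are bounded deterministic functions on $[0,T]$, both $A$ and $B$ are bounded, and the componentwise sup-bound on $A$ is exactly the constant $\tilde c_1$ of \eqref{eq:exp_for_tilde_c_1}. Two further facts are used repeatedly: $\Gamma$ is a Gaussian process, and by the law of total variance together with the monotonicity of the OU variance, $\mathrm{Var}[\Gamma_t]\le\mathrm{Var}[Y_t]\le\max\{P_0,\mathrm{Var}[Y_T]\}$ uniformly in $t\in[0,T]$.

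Condition~(i) is then routine: $\|\bar\bmtheta^\star_s\|_1$ and $\|\bar\bmtheta^\star_s\|_2^2$ grow at most linearly and quadratically in $|\Gamma_s|$, so $\mathbb E[\,|\Gamma_s|\|\bar\bmtheta^\star_s\|_1+\|\bar\bmtheta^\star_s\|_2^2]$ is bounded uniformly in $s$ by the (finite) first and second Gaussian moments of $\Gamma_s$, and integrating over the finite interval $[0,T]$ gives finiteness.

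The substance is condition~(ii). Writing $p:=d(1-\delta)(1+\alpha)$ and integrating \eqref{eq:filtered_penalised_cushion_dyn}, one obtains $(\hat C_t^{\bar\bmtheta^\star})^p=\hat c_0^{\,p}\,\mathcal E_t\,\exp\{R_t\}$, where $\mathcal E_t=\exp\{p\int_0^t\bar\bmtheta_s^{\star\top}\bmtSigma_\bmS\de\bmI_s^\bmS-\tfrac{p^2}{2}\int_0^t\bar\bmtheta_s^{\star\top}\bmtSigma_\bmS\bmtSigma_\bmS^\top\bar\bmtheta_s^\star\de s\}$ is a stochastic exponential and $R_t=p\int_0^t[\,r+\bar\bmtheta_s^{\star\top}(\bma\Gamma_s+\bmb-\mathbf{r}_n)\,]\de s-\tfrac p2\int_0^t\bar\bmtheta_s^{\star\top}(\bmSigma_\bmS\bmSigma_\bmS^\top\odot\bme)\bar\bmtheta_s^\star\de s+\tfrac{p^2-p}{2}\int_0^t\bar\bmtheta_s^{\star\top}\bmtSigma_\bmS\bmtSigma_\bmS^\top\bar\bmtheta_s^\star\de s$. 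Because $\bar\bmtheta^\star_s$ is affine in $\Gamma_s$, $R_t$ is a pathwise quadratic functional of the Gaussian process $\Gamma$. Combining the martingale factor $\mathcal E_t$ with $R_t$—either by an equivalent change of measure under which $\Gamma$ stays Gaussian, or directly by Hölder's inequality after writing $\mathcal E_t^{\,q}$ as another stochastic exponential times a quadratic correction—reduces the estimate to an exponential moment $\mathbb E[\exp\{\int_0^t\kappa(s)\Gamma_s^2\,\de s+\cdots\}]$ of a quadratic functional. Using the crude entrywise bounds $a_M$, $w$, $\tilde w$ of \eqref{eq:a}, \eqref{eq:exp_for_w}, \eqref{eq:exp_for_tilde_w} and the sup-bound $\tilde c_1$ on $A$, each quadratic coefficient is controlled by a constant times $\tilde c_1^2$ and $a_M^2$; splitting the quadratic form across its $n$ components contributes the factor $n$, the Jensen/time-averaging step the factor $T$, and the repeated use of $(a+b)^2\le 2a^2+2b^2$ together with completing the square the overall constant $8$. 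By Fubini and Jensen the bound reduces to $\sup_{s\le t}\mathbb E[\exp\{\kappa\,\Gamma_s^2\}]$, which for a Gaussian is finite precisely when $2\kappa\,\mathrm{Var}[\Gamma_s]<1$; inserting $\mathrm{Var}[\Gamma_s]\le\max\{P_0,\mathrm{Var}[Y_T]\}$ yields exactly \eqref{eq:cond_1_ammissibility_partial_delta_in_0_1} when $\delta\in(0,1)$ and \eqref{eq:cond_1_ammissibility_delta_partial_in_1_infty} when $\delta\in(1,+\infty)$, the two forms differing only through the sign of $p$, which flips whether the penalisation and the $\tfrac{p^2-p}{2}$ term add to or subtract from the quadratic coefficient (hence the $1\vee(\cdots)$ versus $-(1+w)\wedge(\cdots)$). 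This argument runs parallel to that of Proposition~\ref{prop:sufficient_cond_admissibility_FULL_INFO}, with $Y$ replaced by its filter $\Gamma$, $c_1$ by $\tilde c_1$, and the key simplification that $\mathrm{Var}[\Gamma_t]\le\mathrm{Var}[Y_t]$, so the same variance ceiling applies.

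The main obstacle I anticipate is the bookkeeping of signs and the uniformity in $t$: one must keep the sign of $p=d(1-\delta)(1+\alpha)$ (positive for $\delta<1$, negative for $\delta>1$) consistent through the completing-the-square step so that the threshold inequality has the correct orientation, and one must ensure that the quadratic coefficient $\kappa$ stays below the Gaussian ceiling \emph{uniformly} over $t\in[0,T]$, which is precisely what makes the supremum in condition~(ii) finite. The sign information from Proposition~\ref{prop:sign_bar_f} on $\bar f$—and hence on the product $P(t)\bar f(t)$ appearing in $\tilde c_1$—is what renders these sign determinations clean.
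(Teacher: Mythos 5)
Your proposal is correct and follows essentially the same route as the paper, whose own proof simply states that it replicates the argument of Proposition \ref{prop:sufficient_cond_admissibility_FULL_INFO} with $Y$ replaced by the filter $\Gamma$, $c_1$ by $\tilde c_1$, and the variance ceiling supplied by $\mathrm{Var}[\Gamma_t]\le\mathrm{Var}[Y_t]\le\max\{P_0,\mathrm{Var}[Y_T]\}$. Your decomposition of the powered cushion, the Jensen/time-averaging step, the entrywise bounds producing the constants $8$, $n$, $T$, and the sign bookkeeping for $d(1-\delta)(1+\alpha)$ all match the paper's full-information template.
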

\begin{proof}
The proof replicates the line of that of Theorem \ref{prop:sufficient_cond_admissibility_FULL_INFO}.
\end{proof}
Under the assumption of Proposition \ref{prop:sufficient_condition_for_ver_thm_partial_info}, the candidate optimal strategy is admissible and $\hat{V}$ in equation \eqref{eq:VAL_F_P_INFO} is the unique solution of the optimisation problem. As for the full information case, we can derive the original controls $\bar{m}^\star$ and $\bar{\bmpi}^\star$ by applying proposition \ref{prop:originalcontrols}. Adapting Example \ref{Ex:example_CRRA_full_info} to the case of a PI insurer with partial information, the optimal multiplier becomes 
\begin{align}\label{eq:OPT_mult_partial_info}
\bar{m}^\star(t,\gamma;\varepsilon,\delta)=\bar{\theta}_1^\star(t,\gamma;\delta)+\bar{\theta}_2^\star(t,\gamma;\delta,\varepsilon),
\end{align}
where 
\begin{align}
\bar{\theta}_1^\star(t,\gamma;\delta)&=\xi_1^M(t,\gamma;\delta)+\tilde\xi_1^I(t,\gamma;\delta)+\xi_1^P(t,\gamma;\delta),\\
\bar{\theta}_2^\star(t,\gamma;\delta,\varepsilon)&=\xi_2^M(t,\gamma;\delta,\varepsilon)+\tilde\xi_2^I(t,\gamma;\delta,\varepsilon)+\xi_2^P(t,\gamma;\delta,\varepsilon),
\end{align}
with
\begin{align}
\tilde{\xi}_1^I(t,\gamma;\delta)&=\dfrac{1}{\delta}\dfrac{\sigma_Y\rho_{1,Y}}{\sigma_1}\left(\bar{f}(t)\gamma+\bar{g}(t)\right),\quad \xi_1^P(t,\gamma;\delta)=\dfrac{1}{\delta}\dfrac{a_1P(t)}{\sigma_1^2}\left(\bar{f}(t)\gamma+\bar{g}(t)\right),\\
\tilde{\xi}_2^I(t,\gamma;\delta,\varepsilon)&=\dfrac{1}{\varepsilon+\delta}\dfrac{\sigma_Y\rho_{2,Y}}{\sigma_2}\left(\bar{f}(t)\gamma+\bar{g}(t)\right),\quad \xi_2^P(t,\gamma;\delta,\varepsilon)=\dfrac{1}{\varepsilon+\delta}\dfrac{a_2P(t)}{\sigma_2^2}\left(\bar{f}(t)\gamma+\bar{g}(t)\right),
\end{align}
for every $\left(t,\gamma\right)\in[0,T]\times\R$. $\xi_1^M(t,\gamma;\delta)$ and $\xi_2^M(t,\gamma;\delta,\varepsilon)$ are defined as in equations \eqref{eq:xi_1} and \eqref{eq:xi_2}. As shown in equation \eqref{eq:OPT_mult_partial_info}, the optimal multiplier retains the same structure obtained for the CRRA investor under complete information. However, in this case, two additional terms appear, namely $\xi^P_1$ and $\xi^P_2$, which act as correction factors accounting for the uncertainty due to the non-observability of the common stochastic factor $Y$. As for the previous cases, all the components related to the brown stock depend on the carbon aversion parameter $\varepsilon$.
\paragraph{Logarithmic case.} For the logarithmic case the separated problem reads as
\begin{equation}\label{eq:LOG_optimisation_problem_partial_info}
\mbox{Maximise }\mathbb{E}^{t,c,\gamma}\left[\log(\hat C_T^{\bmtheta})\right],\mbox{ over all }\bmtheta\in\mathcal{A}^{\mathbb{F}}
\end{equation}
and the corresponding value function is given by
\begin{equation}\label{eq:_LOGvalue_function_partial_info}
\tilde V(t,c,\gamma):=\sup_{\bmtheta\in\mathcal{A}^{\mathbb{F}}}\mathbb{E}^{t,c,\gamma}\left[\log(\hat C_T^{\bmtheta})\right].
\end{equation}
The next theorem characterizes the optimal strategy and the value function $\tilde V$. 
\begin{cor}\label{cor:log_case_PARTIAL_INFO} Consider a fund manager endowed with logarithmic utility function and a carbon aversion $\varepsilon\ge 0$, then the optimal controls $\bar{\bmtheta}^\star\in\mathcal{A}^{\mathbb{F}}$ is given by $\bar{\bmtheta}^\star_t=\bar{\bmtheta}^\star(t, \Gamma_t)$ where 
\begin{align}\label{eq:sol_partial_info_log}
\bar{\bmtheta}^\star(t,\gamma)=\mathbf{\Theta}^{-1}\left(\bma\gamma+\bmb-\mathbf{r}_{n}\right).
\end{align}
where $\mathbf{\Theta}$ is the same of Corollary \ref{cor:solution_full_info_log}. The value function is given by 
\begin{equation}\label{eq:PARTIAL_INFO_VALUE_FUN_LOG_CASE}
\tilde V(t,c,\gamma)=\log(c)+r(T-t)+\dfrac{f(t)}{2}\gamma^2+g(t)\gamma+\tilde{h}(t),
\end{equation}
where
\begin{equation}\label{eq:rel_log_case}
\tilde{h}(t)=h(t)+\dfrac{\mathbf{a}^\top\mathbf{\Theta}^{-1}\mathbf{a}}{2}\left(\int_t^TP(s)\de s-P(t)\dfrac{e^{2\lambda(T-t)}-1}{2}\right),
\end{equation}
for every $t\in[0,T]$, with $f$, $g$ and $h$ being the same of Corollary \ref{cor:solution_full_info_log}.
\end{cor}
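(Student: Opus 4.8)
The plan is to exploit the additive structure that the logarithmic utility imposes on the value function, exactly as in the full-information Corollary~\ref{cor:solution_full_info_log}, and then to isolate the single place where partial information enters. I would guess $\tilde V(t,c,\gamma)=\log c+\phi(t,\gamma)$ and substitute into the HJB equation \eqref{eq:HJB_partial_info_CRRA}. Since $\tilde V_c=1/c$, $\tilde V_{cc}=-1/c^2$ and $\tilde V_{c\gamma}=0$, the $c$-dependence cancels and the cross term $c\,\bmtheta^\top\mathbf{\bar{P}}(t)^\top\tilde V_{c\gamma}$ vanishes, so the supremum reduces to the pointwise maximisation of $\bmtheta^\top(\bma\gamma+\bmb-\mathbf{r}_n)-\tfrac12\bmtheta^\top\mathbf{\Theta}\bmtheta$, with $\mathbf{\Theta}=(\bmSigma_{\bmS}\bmSigma_{\bmS}^\top)\odot\bme+\bmtSigma_{\bmS}\bmtSigma_{\bmS}^\top$. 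Setting the gradient to zero yields the myopic maximiser $\bar{\bmtheta}^\star(t,\gamma)=\mathbf{\Theta}^{-1}(\bma\gamma+\bmb-\mathbf{r}_n)$, which is \eqref{eq:sol_partial_info_log}. Equivalently, one may argue directly: applying It\^o to $\log\hat C_T^{\bmtheta}$ along \eqref{eq:filtered_penalised_cushion_dyn}, taking $\mathbb{E}^{t,c,\gamma}$ so that the $\bmI^{\bmS}$-martingale part vanishes by admissibility, and maximising the remaining drift pointwise; this avoids the verification machinery and returns the same control.

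Next I would insert the maximiser back, turning the HJB into the linear PDE
\[
0=\phi_t+r+\tfrac12(\bma\gamma+\bmb-\mathbf{r}_n)^\top\mathbf{\Theta}^{-1}(\bma\gamma+\bmb-\mathbf{r}_n)+(\lambda\gamma+\beta)\phi_\gamma+\tfrac12\mathbf{\bar{P}}(t)(\bmtSigma_{\bmS}\bmtSigma_{\bmS}^\top)^{-1}\mathbf{\bar{P}}(t)^\top\phi_{\gamma\gamma},
\]
with $\phi(T,\gamma)=0$, and make the quadratic ansatz $\phi(t,\gamma)=r(T-t)+\tfrac{f(t)}{2}\gamma^2+g(t)\gamma+\tilde h(t)$. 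Matching the coefficients of $\gamma^2$, $\gamma^1$ and $\gamma^0$ produces three ODEs with zero terminal data. The decisive observation is that the $\gamma^2$ and $\gamma^1$ equations involve only $\lambda,\beta$ and the risk-premium blocks $\bma^\top\mathbf{\Theta}^{-1}\bma$ and $\bma^\top\mathbf{\Theta}^{-1}(\bmb-\mathbf{r}_n)$, whereas the filter diffusion $\mathbf{\bar{P}}(t)(\bmtSigma_{\bmS}\bmtSigma_{\bmS}^\top)^{-1}\mathbf{\bar{P}}(t)^\top$ appears only in the $\gamma^0$ equation. Hence $f$ and $g$ solve exactly the same ODEs as in the full-information logarithmic case and coincide with the $f,g$ of Corollary~\ref{cor:solution_full_info_log}, which accounts for the value function \eqref{eq:PARTIAL_INFO_VALUE_FUN_LOG_CASE}.

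It then remains to identify $\tilde h$. Its ODE differs from that of the full-information $h$ only through the replacement of the state diffusion $\sigma_Y^2$ by $\mathbf{\bar{P}}(t)(\bmtSigma_{\bmS}\bmtSigma_{\bmS}^\top)^{-1}\mathbf{\bar{P}}(t)^\top$, so subtracting the two equations gives $(\tilde h-h)'(t)=-\tfrac{f(t)}{2}\bigl[\mathbf{\bar{P}}(t)(\bmtSigma_{\bmS}\bmtSigma_{\bmS}^\top)^{-1}\mathbf{\bar{P}}(t)^\top-\sigma_Y^2\bigr]$ with $(\tilde h-h)(T)=0$. I would then use the filter Riccati equation \eqref{eq:variance} to rewrite the bracket as $2\lambda P(t)-P'(t)$, integrate from $t$ to $T$, and evaluate the resulting deterministic integral of $f(s)\,(2\lambda P(s)-P'(s))$ by integration by parts, using the explicit $f$ from Corollary~\ref{cor:solution_full_info_log} and the fact that the boundary term at $s=T$ vanishes; this delivers \eqref{eq:rel_log_case}. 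A more transparent, purely probabilistic variant derives $\tilde h-h$ directly from $\tilde V-v=\tfrac12\bma^\top\mathbf{\Theta}^{-1}\bma\int_t^T\bigl(\mathrm{Var}[\Gamma_s\mid\Gamma_t=\gamma]-\mathrm{Var}[Y_s\mid Y_t=\gamma]\bigr)\de s$, since $Y$ and $\Gamma$ share the drift $(\lambda\cdot+\beta)$ and hence the same conditional mean, so only the Gaussian conditional variances differ; inserting them and simplifying with \eqref{eq:variance} reproduces \eqref{eq:rel_log_case}.

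The pointwise maximisation and the linear-ODE matching are routine; the only genuine obstacle is the final deterministic computation relating $\tilde h$ to $h$. There one must correctly invoke \eqref{eq:variance} to convert the difference of diffusion coefficients into $2\lambda P-P'$ and then carry out the integration by parts without sign slips, checking that the contributions at the endpoint $T$ cancel. The admissibility of $\bar{\bmtheta}^\star$ and the vanishing of the martingale term, needed to justify the expectation step, follow the same estimates used in the CRRA admissibility propositions and require no new ideas.
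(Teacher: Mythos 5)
Your proposal is correct, and it reaches the result by a slightly different route from the paper. The paper does not go through the HJB equation at all in the logarithmic case: it computes $\mathbb{E}^{t,c,\gamma}[\log \hat C_T^{\bm{\theta}}]$ directly from \eqref{eq:filtered_penalised_cushion_dyn} (the route you label as the ``equivalent'' alternative), maximises the drift pointwise to get $\bar{\bm{\theta}}^\star$, and then evaluates the resulting stochastic representation by computing $\mathbb{E}^{t,\gamma}[\int_t^T\Gamma_s\,\de s]$ and $\mathbb{E}^{t,\gamma}[\int_t^T\Gamma_s^2\,\de s]$ explicitly from the Gaussian law of $\Gamma$ — see equations \eqref{eq:Exp_1_Partial_Info}--\eqref{eq:Exp_2_Partial_Info} — and rearranging. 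Your primary route (additive ansatz $\log c+\phi(t,\gamma)$ in \eqref{eq:HJB_partial_info_CRRA}, quadratic ansatz for $\phi$, coefficient matching, and the observation that the filter diffusion $\mathbf{\bar{P}}(t)(\bmtSigma_{\bmS}\bmtSigma_{\bmS}^\top)^{-1}\mathbf{\bar{P}}(t)^\top$ enters only the $\gamma^0$ equation) is equally valid and makes the identity $f^{\mathrm{partial}}=f$, $g^{\mathrm{partial}}=g$ structurally transparent rather than a by-product of matching Gaussian moments; its only cost is that, unlike the paper's direct argument, it formally requires a verification step, which you acknowledge. Your use of \eqref{eq:variance} to convert $\mathbf{\bar{P}}(t)(\bmtSigma_{\bmS}\bmtSigma_{\bmS}^\top)^{-1}\mathbf{\bar{P}}(t)^\top-\sigma_Y^2$ into $2\lambda P(t)-P'(t)$ and integrate by parts is precisely the manipulation the paper defers to the proof of Proposition \ref{cor:loss_utility_eff_log} (equation \eqref{eq:relation_log}).

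One caution on the last step: carried out carefully, your ODE $(\tilde h-h)'(t)=-\tfrac{f(t)}{2}\bigl[2\lambda P(t)-P'(t)\bigr]$ with $(\tilde h-h)(T)=0$ and $f(t)=\tfrac{\mathbf{a}^\top\mathbf{\Theta}^{-1}\mathbf{a}}{2\lambda}(e^{2\lambda(T-t)}-1)$ integrates to
\begin{equation}
\tilde h(t)-h(t)=\frac{\mathbf{a}^\top\mathbf{\Theta}^{-1}\mathbf{a}}{2}\left(P(t)\,\frac{e^{2\lambda(T-t)}-1}{2\lambda}-\int_t^T P(s)\,\de s\right),
\end{equation}
which is the version consistent with $L_t=\tfrac{f(t)}{2}P(t)+h(t)-\tilde h(t)=\tfrac{\mathbf{a}^\top\mathbf{\Theta}^{-1}\mathbf{a}}{2}\int_t^T P(s)\,\de s$ in Proposition \ref{cor:loss_utility_eff_log}. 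The displayed \eqref{eq:rel_log_case} differs from this by an overall sign and has $2$ rather than $2\lambda$ in the denominator; this appears to be a typo in the statement rather than a flaw in your argument, but you should not claim your computation ``delivers \eqref{eq:rel_log_case}'' verbatim without flagging the discrepancy.
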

\begin{proof}
The proof is provided in Appendix \ref{app:B_4}.
\end{proof}
\subsection{Loss of utility}
Since full information allows the portfolio insurer to observe the common stochastic factor directly, the fully informed portfolio insurer has an advantage over its partial-information counterpart. Therefore, as shown in \cite{lee2016pairs}, there is always an \textit{information premium}, which is non-negative. In the present paper, we quantify this premium by computing the \textit{loss of utility} $L=\{L_t\}_{t\in[0,T]}$ due to partial information, defined as
\begin{equation}
L_t=\mathbb{E}^{c}\left[V^{\mathrm{full}}(t,C,Y_t)-V^{\mathrm{partial}}(t,C,\Gamma_t)|\mathcal{F}_t\right],\quad t\in[0,T].
\end{equation}
An alternative way to assess the informational advantage is to express the information premium in monetary terms; this is the so-called \textit{efficiency} (see, e.g., \cite{rogers2001relaxed}, \cite{brendle2006portfolio} and \cite{sass2017expert}).
Specifically, in the PPI framework, the efficiency of the partially-informed strategy relative to the full-information strategy is defined as the fraction of the initial cushion $\xi$ that a fully informed investor would need to obtain the same the expected utility of the terminal cushion achieved by a partially informed investor starting with a unitary cushion. Hence, it is found by solving the following equation for $\zeta$:
\begin{equation}\label{eq:efficiency_definition}
\mathbb{E}\left[V^{\mathrm{full}}(0,\zeta,Y_0)-V^{\mathrm{partial}}(0,1,\Gamma_0)|\mathcal{F}_0\right]=0.
\end{equation}
In what follows, we analytically characterise the loss of utility and the efficiency of a portfolio insurer who does not directly observe the common stochastic factor $Y$, for both the CRRA and log-utility cases.
\begin{prop}\label{prop:LOSS_UTILITY_CRRA} 
The loss of utility of a partially informed portfolio insurer endowed with a CRRA utility function is given by
\begin{equation}\label{eq:LOSS_UTILITY_CRRA_CASE}
L_t=\frac{c^{1-\delta}}{1-\delta}\left(e^{\frac{1-\delta}{2}\int_t^T\frac{P(s)}{1-P(s)\hat{f}(s)}\left[\bm{\tilde\Sigma}_Y\bm{\tilde\Sigma}_{\mathbf{S}}^\top\hat{f}(s)+\mathbf{a}^\top\right]\mathbf{\hat\Theta}^{-1}\left[\bm{\tilde\Sigma}_Y\bm{\tilde\Sigma}_{\mathbf{S}}^\top\hat{f}(s)+\mathbf{a}^\top\right]^\top\de s}-1\right)e^{\frac{\bar{f}(t)}{2}\Gamma^2_t+\bar{g}(t)\Gamma_t+\bar{h}(t)},
\end{equation}
for every $t\in[0,T]$, and the corresponding efficiency of the carbon-penalised PPI strategy is given by
\begin{equation}\label{eq:efficiency_CRRA_CASE}
\zeta=\exp\left\lbrace-\frac{1}{2}\int_0^T\frac{P(s)}{1-P(s)\hat{f}(s)}\left[\bm{\tilde\Sigma}_Y\bm{\tilde\Sigma}_{\mathbf{S}}^\top\hat{f}(s)+\mathbf{a}^\top\right]\mathbf{\hat\Theta}^{-1}\left[\bm{\tilde\Sigma}_Y\bm{\tilde\Sigma}_{\mathbf{S}}^\top\hat{f}(s)+\mathbf{a}^\top\right]^\top\de s\right\rbrace.
\end{equation}
\end{prop}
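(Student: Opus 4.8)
The plan is to exploit the explicit Gaussian forms of the two value functions together with the linear--Gaussian structure of the Kalman filter. I identify $V^{\mathrm{full}}=\hat v$ as given by \eqref{eq:value_fun_CRRA_full_info} and $V^{\mathrm{partial}}=\hat V$ as given by \eqref{eq:VAL_F_P_INFO}. Since $\hat V(t,c,\Gamma_t)$ is $\mathcal F_t$-measurable, conditioning leaves it untouched, so the whole content of $L_t$ reduces to evaluating $\mathbb E[\hat v(t,c,Y_t)\mid\mathcal F_t]$. The first step is to invoke the fact, built into the filter of Section \ref{sect:opt_partial_info}, that the conditional law of $Y_t$ given $\mathcal F_t$ is Gaussian with mean $\Gamma_t$ and deterministic variance $P(t)$, so that the conditional expectation becomes a one-dimensional Gaussian integral of the exponential of the quadratic $\tfrac{\hat f(t)}{2}y^2+\hat g(t)y+\hat h(t)$.

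Second, I would carry out this Gaussian integral by completing the square, yielding the standard identity
\begin{equation}
\mathbb E\!\left[e^{\frac a2 Z^2+bZ}\right]=\frac{1}{\sqrt{1-a\sigma^2}}\exp\!\left\{\frac{b^2\sigma^2+2b\mu+a\mu^2}{2(1-a\sigma^2)}\right\},\qquad Z\sim N(\mu,\sigma^2),
\end{equation}
valid precisely when $1-a\sigma^2>0$. Applying this with $a=\hat f(t)$, $b=\hat g(t)$, $\mu=\Gamma_t$, $\sigma^2=P(t)$ is legitimate because Theorem \ref{thm:CRRA_case_PARTIAL_INFO} guarantees $1-P(t)\hat f(t)>0$ on $[0,T]$; this positivity is exactly what keeps the premium finite, and it is where the groundwork of Theorem \ref{thm:CRRA_case_PARTIAL_INFO} is indispensable.

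Third, and this is the crux, I would substitute the relations \eqref{eq:rel_hat_f_bar_f}--\eqref{eq:rel_hat_h_bar_h} linking $(\hat f,\hat g,\hat h)$ to $(\bar f,\bar g,\bar h)$ into the integrated expression. Writing $D(t):=1-P(t)\hat f(t)$, the quadratic coefficient $\hat f(t)/D(t)$ is exactly $\bar f(t)$, the linear coefficient $\hat g(t)/D(t)$ is exactly $\bar g(t)$, and the deterministic part $\hat h(t)-\tfrac12\log D(t)+\tfrac12\hat g^2(t)P(t)/D(t)$ reproduces $\bar h(t)$ up to \emph{precisely} the integral term appearing in \eqref{eq:rel_hat_h_bar_h}. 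Hence the conditional full-information value factors as $\hat V(t,c,\Gamma_t)\exp\{\tfrac{1-\delta}{2}\int_t^T \tfrac{P(s)}{1-P(s)\hat f(s)}[\cdots]\,\de s\}$, where $[\cdots]$ denotes the quadratic form of \eqref{eq:LOSS_UTILITY_CRRA_CASE}. Subtracting $\hat V(t,c,\Gamma_t)$ and factoring out the common $e^{\frac{\bar f(t)}2\Gamma_t^2+\bar g(t)\Gamma_t+\bar h(t)}$ gives \eqref{eq:LOSS_UTILITY_CRRA_CASE} directly. The only real obstacle is the bookkeeping in this cancellation: one must check that every $\Gamma_t$-dependent term and every deterministic term except the integral match identically, which follows term-by-term from \eqref{eq:rel_hat_f_bar_f}--\eqref{eq:rel_hat_h_bar_h}.

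Finally, for the efficiency I would specialise to $t=0$, where $\mathcal F_0$ is trivial and $Y_0\sim N(\Gamma_0,P_0)$. The defining relation \eqref{eq:efficiency_definition} reads $\mathbb E[\hat v(0,\zeta,Y_0)]=\hat V(0,1,\Gamma_0)$; using the factorisation just obtained, the common factor $e^{\frac{\bar f(0)}2\Gamma_0^2+\bar g(0)\Gamma_0+\bar h(0)}$ cancels and one is left with $\zeta^{1-\delta}\exp\{\tfrac{1-\delta}{2}\int_0^T[\cdots]\,\de s\}=1$. Taking the $(1-\delta)$-th root, which is permissible since $\delta\neq1$, yields \eqref{eq:efficiency_CRRA_CASE}. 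As a consistency check, positive definiteness of $\bmhatTheta$ and $P(s)\ge0$ make the integrand nonnegative, so $\zeta\le1$ and $L_t\ge0$, in line with a non-negative information premium.
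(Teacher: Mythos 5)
Your proposal is correct and follows essentially the same route as the paper's proof: compute $\mathbb{E}[\hat v(t,c,Y_t)\mid\mathcal F_t]$ via the Gaussian moment-generating identity for the conditional law $N(\Gamma_t,P(t))$ (well-defined since $1-P(t)\hat f(t)>0$), then substitute the relations \eqref{eq:rel_hat_f_bar_f}--\eqref{eq:rel_hat_h_bar_h} so that everything except the integral term collapses into $\hat V(t,c,\Gamma_t)$, and finally specialise to $t=0$ for the efficiency. No gaps.
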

\begin{proof}
The proof is provided in Appendix \ref{app:B_5}.
\end{proof}
\begin{prop}\label{cor:loss_utility_eff_log}
The loss of utility of a partially informed portfolio insurer endowed with a logarithmic utility function is given by
\begin{equation}\label{eq:LOSS_OF_UTILITY_LOG__CASE}
L_t=\frac{\mathbf{a}^\top\mathbf{\Theta}^{-1}\mathbf{a}}{2}\int_t^TP(s)\de s,
\end{equation}
for every $t\in[0,T]$, and the efficiency of the corresponding carbon-penalised PPI strategy is given by
\begin{equation}\label{eq:EFFICIENCY_LOG__CASE}
\zeta=\exp\left\lbrace-\frac{\mathbf{a}^\top\mathbf{\Theta}^{-1}\mathbf{a}}{2}\int_0^TP(s)\de s\right\rbrace.
\end{equation}
\end{prop}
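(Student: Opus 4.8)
The plan is to reduce the expectation defining $L_t$ to an elementary Gaussian moment computation by inserting the two explicit logarithmic value functions. From Corollary~\ref{cor:solution_full_info_log} the full-information value function is $V^{\mathrm{full}}(t,c,y)=v(t,c,y)$ of \eqref{eq:value_fun_full_info_log}, while from Corollary~\ref{cor:log_case_PARTIAL_INFO} the partial-information value function is $V^{\mathrm{partial}}(t,c,\gamma)=\tilde V(t,c,\gamma)$ of \eqref{eq:PARTIAL_INFO_VALUE_FUN_LOG_CASE}. The structural fact I would exploit is that these two functions carry the \emph{same} quadratic and linear coefficients $f(t)$ and $g(t)$ and differ only through their deterministic parts $h(t)$ and $\tilde h(t)$, which are linked by \eqref{eq:rel_log_case}.

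First I would substitute these expressions into $L_t=\mathbb{E}^{c}[V^{\mathrm{full}}(t,C,Y_t)-V^{\mathrm{partial}}(t,C,\Gamma_t)\mid\mathcal{F}_t]$. The terms $\log c$ and $r(T-t)$ occur identically in both value functions and cancel. For the factor-dependent terms I would invoke the conditional moments supplied by the Kalman filter, namely $\mathbb{E}[Y_t\mid\mathcal{F}_t]=\Gamma_t$ and, since the conditional law of $Y_t$ is Gaussian with variance $P(t)$, $\mathbb{E}[Y_t^2\mid\mathcal{F}_t]=\Gamma_t^2+P(t)$. The linear contribution then reproduces exactly $g(t)\Gamma_t$ and cancels its partial-information counterpart, whereas the quadratic contribution produces a factor $\Gamma_t^2+P(t)$ whose $\Gamma_t^2$ part cancels the corresponding partial-information term. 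After these cancellations the only survivors are the $P(t)$-contribution coming from the quadratic part of $V^{\mathrm{full}}$ and the difference $h(t)-\tilde h(t)$, so that $L_t$ becomes a purely deterministic expression in $t$.

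The second step is then a deterministic simplification. I would substitute the explicit relation \eqref{eq:rel_log_case} for $\tilde h(t)-h(t)$ together with the closed form of $f(t)$ from Corollary~\ref{cor:solution_full_info_log}. The two contributions proportional to $P(t)\bigl(e^{2\lambda(T-t)}-1\bigr)$, one coming from the surviving quadratic term and one from the $P(t)\tfrac{e^{2\lambda(T-t)}-1}{2}$ piece of \eqref{eq:rel_log_case}, cancel against each other, leaving only the integral term. This produces the loss of utility \eqref{eq:LOSS_OF_UTILITY_LOG__CASE}, i.e. $L_t=\tfrac{\mathbf{a}^\top\mathbf{\Theta}^{-1}\mathbf{a}}{2}\int_t^T P(s)\,\de s$.

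Finally, for the efficiency I would evaluate the defining equation \eqref{eq:efficiency_definition} at $t=0$, where $\mathcal{F}_0$ is trivial and $Y_0\sim N(\Gamma_0,P_0)$. Because $\log\zeta$ enters $V^{\mathrm{full}}(0,\zeta,Y_0)$ additively, repeating the moment computation of the previous steps collapses \eqref{eq:efficiency_definition} to $\log\zeta=-L_0$, and inserting the value $L_0=\tfrac{\mathbf{a}^\top\mathbf{\Theta}^{-1}\mathbf{a}}{2}\int_0^T P(s)\,\de s$ just obtained yields \eqref{eq:EFFICIENCY_LOG__CASE}. The only delicate point in the whole argument is the bookkeeping of the exponential terms in the deterministic step; once the conditional Gaussianity of the filter has been used, the additivity of the logarithm makes both identities essentially immediate, which is precisely why this case is markedly simpler than its CRRA analogue in Proposition~\ref{prop:LOSS_UTILITY_CRRA}.
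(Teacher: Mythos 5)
Your argument is correct and follows essentially the same route as the paper: substitute the two logarithmic value functions, use the conditional Gaussianity of the filter ($\mathbb{E}[Y_t\mid\mathcal{F}_t]=\Gamma_t$, $\mathbb{E}[Y_t^2\mid\mathcal{F}_t]=\Gamma_t^2+P(t)$) so that everything except the $P(t)$-term and $h(t)-\tilde h(t)$ cancels, and then read off $\log\zeta=-L_0$ for the efficiency. The one difference worth noting is where the actual work sits: you delegate the deterministic simplification entirely to the stated relation \eqref{eq:rel_log_case}, whereas the paper's proof arrives at an intermediate expression still containing $\sigma_Y^2$ and the integral of $\mathbf{\bar{P}}(s)\left(\mathbf{\tilde\Sigma}_{\mathbf{S}}\mathbf{\tilde\Sigma}_{\mathbf{S}}^\top\right)^{-1}\mathbf{\bar{P}}(s)^\top$, and eliminates it by combining the Riccati equation \eqref{eq:variance} for $P$ with an integration by parts of $\int_t^T\phi(s)\,\de P(s)$, $\phi(s)=e^{2\lambda(T-s)}-1$. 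That Riccati manipulation is precisely the content behind \eqref{eq:rel_log_case}, so your shortcut is legitimate but places the burden of proof on a relation that Corollary \ref{cor:log_case_PARTIAL_INFO} states without detailed derivation; if you were asked to make the argument self-contained you would have to reproduce that step. Be aware also that the cancellation of the two terms proportional to $P(t)\bigl(e^{2\lambda(T-t)}-1\bigr)$ only goes through once the normalisations in \eqref{eq:value_fun_full_info_log} and \eqref{eq:rel_log_case} are read consistently (the printed coefficient of $y^2$ and the factor $\tfrac{1}{2}$ versus $\tfrac{1}{2\lambda}$ do not match as typeset), so a careful write-up should verify that bookkeeping rather than assert the cancellation.
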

\begin{proof}
See Appendix \ref{app:B_6}.
\end{proof}

Proposition \ref{prop:LOSS_UTILITY_CRRA}, and more evidently Proposition \ref{cor:loss_utility_eff_log}, show that the loss of utility is strictly positive. This outcome was to be expected, since partially informed strategies constitute a subset of the fully informed ones. Consequently, a portfolio insurer with full information can always replicate, or improve upon, the performance achievable under partial information. Equivalently, the relative efficiency of the carbon-penalised strategy under partial information, vis-à-vis its full-information counterpart, is given by $\zeta < 1$, confirming that partial information entail a reduction in attainable utility.

\section{Numerical experiments}\label{sect:num_experiments}

In this section, we perform an {illustrative} simulation study to examine the behavior of the optimal carbon-penalised PPI strategy and to compare the strategies of a fully informed versus a partially informed portfolio insurer. We consider $n=4$ traded stocks and start from a given carbon classification: the first two assets are labelled as low-carbon (green), while the remaining two are labelled as high-carbon (brown). {In empirical applications, such a classification can be obtained following the procedure provided by \cite{ardia2023factor}. The approach is as follows: at each evaluation date, stocks are ranked according to their most recently available carbon intensity values. Firms in the bottom (top) quartile of the resulting cross-sectional distribution are classified as green (brown), respectively. This percentile-based sorting yields two groups that capture firms with relatively low and high carbon intensity.\footnote{{In our stylised numerical experiments, we do not implement the clustering step explicitly and work with a given green/brown classification. Carbon intensity values are therefore not reported, as they determine the group membership that triggers the penalisation.}}}\\
Unless otherwise stated, model parameters are fixed as in Table \ref{tab:model_params}. Moreover, throughout the numerical experiments, we fix the risk-free rate at $r=0.01$, the PPI protection level at $\mathrm{PL}=1$, and the initial wealth at $V_0=1$.\\

\begin{table}[!htbp]
\centering
\captionsetup[subtable]{justification=centering} 
\setlength{\tabcolsep}{6pt}
\renewcommand{\arraystretch}{1.1}
\begin{subtable}[t]{0.28\textwidth}
\centering
\begin{tabular}{lccc}
\Xhline{1.5pt}
& $\mathbf{a}$ & $\mathbf{b}$ & $\mathbf{\Sigma}_{\mathbf{S}}$ \\
\midrule
$S_1$ & $0.080$ & $-0.03$ & $0.19$ \\
$S_2$ & $0.055$ & $\phantom{-}0.01$  & $0.21$ \\
$S_3$ & $0.045$ & $\phantom{-}0.01$  & $0.22$ \\
$S_4$ & $0.075$ & $-0.03$ & $0.15$ \\
\Xhline{1.5pt}
\end{tabular}
\caption{Parameters of the stock prices.}
\label{Stocks_parameters}
\end{subtable}
\hfill
\begin{subtable}[t]{0.28\textwidth}
\centering
\begin{tabular}{ccccc}
\Xhline{1.5pt}
$\lambda$ & $\beta$ & $\sigma_Y$ & $\Gamma_0$ & $P_0$\\
\midrule
$-0.5$ & $0.5$ & $0.05$ & $1$ & $0.0025$ \\
\Xhline{1.5pt}
\end{tabular}
\caption{Parameters of the common stochastic factor $Y$.}
\label{latent_factor_Y_parameters}
\end{subtable}
\hfill
\begin{subtable}[t]{0.32\textwidth}
\centering
\setlength{\arraycolsep}{3pt}
\renewcommand{\arraystretch}{1.05}
\small
\resizebox{\linewidth}{!}{$
\mathbf{R}=
\begin{pmatrix}
1.00 & \phantom{-}0.32  & \phantom{-}0.25  & 0.10  & \phantom{-}0.35 \\
0.32 & \phantom{-}1.00  & \phantom{-}0.30  & 0.12  & -0.25 \\
0.25 & \phantom{-}0.30  & \phantom{-}1.00  & 0.20  & -0.15\\
0.10 & \phantom{-}0.12  & \phantom{-}0.20  & 1.00  & \phantom{-}0.325\\
0.35 &           -0.25  &           -0.15  & 0.325 & \phantom{-}1
\end{pmatrix}
$}
\caption{Correlation matrix $\mathbf{R}$.}
\label{R_matrix}
\end{subtable}
\caption{General parameters for the numerical study.}
\label{tab:model_params}
\end{table}

To understand the relationship between the unobservable factor process $Y$ at its filtered estimate $\Gamma$, we compare a single trajectory of these processes in Figure \ref{fig:Y_vs_filter}. The filter (dashed magenta line) shows less variability than the true trajectory, yet is able to capture the upward and downward trends of the factor $Y$ (solid blue line). We recall that the goodness of the filter depends highly on the signal-to-noise ratio. In particular, if volatility of stock prices is large, the observation is noisy, the filter gets worse.
\begin{figure}[!htbp]
\centering
\includegraphics[width=0.60\linewidth]{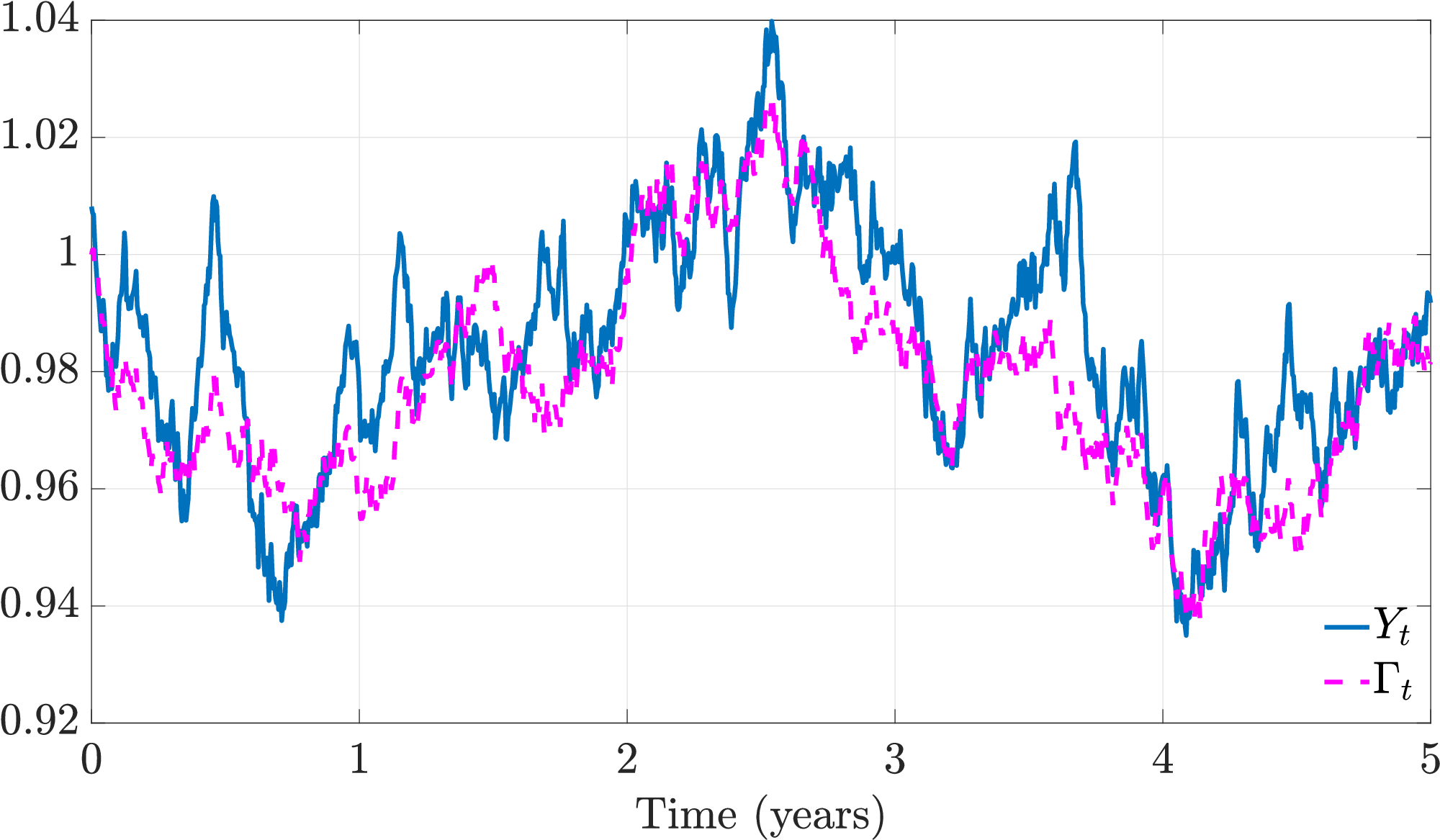}
\caption{True trajectory of the common stochastic factor $Y$ (solid blue line) and trajectory of its filtered estimate $\Gamma$ (dashed magenta line).}\label{fig:Y_vs_filter}
\end{figure}
\medskip
\subsection{Numerical Experiments for the partial information case}
We begin our analysis with a numerical study of  the optimal exposures of the carbon-penalised PPI strategy to the traded stocks. We focus on the partial information case, which is one of the key features of our model. We denote by $\mathbf{\bar E}^\star=\{\mathbf{\bar E}_{t}^\star\}_{t\in[0,T]}$ the exposure to the risky assets, where $\mathbf{\bar E}^\star_t=\left(\bar E^\star_{1,t},\,\dots,\,\bar E^\star_{n,t}\right)^{\top}$, are given by 
\begin{equation}\label{eq:exposures}
\bar{E}^{\star}_{i,t}:=\bar{m}^\star_t\bar{\pi}_{i,t}^\star\frac{(V^{\bar m^\star,\bar\bmpi^{\star}}_t-F_t)^{+}}{V^{\bar m^\star,\bar\bmpi^{\star}}_t},\quad t\in[0,T],
\end{equation}
for every $i=1,\,\dots,\,n$, and for the optimal strategy under partial information $(\bar m^\star,\bar\bmpi^{\star})$ (we recall here that $V^{\bar m^\star,\bar\bmpi^{\star}}$ is the value of the strategy under partial information). We conduct a static analysis at $t=0$ and a dynamic one thereafter. The bar charts in Figure \ref{fig:comp_risky_reference_port_t_0} show the optimal exposures $\bar E^{\star}_{i,0}$ to each traded stock at $t=0$, for every $i=1,\dots,n$, across different levels of the portfolio insurer’s risk aversion $\delta$ and carbon aversion $\varepsilon$. Each panel corresponds to a specific combination of $\delta\in\{0.7,\,1,\,3\}$ and $\varepsilon\in\{0,\,1\}$, for a direct comparison of the effects of carbon aversion. The results show that, as carbon aversion $\varepsilon$ increases, the optimal exposures to brown (carbon-intensive) stocks decrease and those to green stocks increase. 
A reduction in exposure to carbon-intensive stocks appears in every configuration, but the magnitude of this reduction depends on risk aversion. In particular, when $\delta = 3$, which corresponds to a high level of risk aversion, the percentage reduction is smaller. This is because the risky reference portfolio is conservative, hence the exposure is already low in that case. Similar results apply to the optimal PPI strategy under full information.\\
\begin{figure}[!htbp]
\centering
\includegraphics[width=0.48\linewidth]{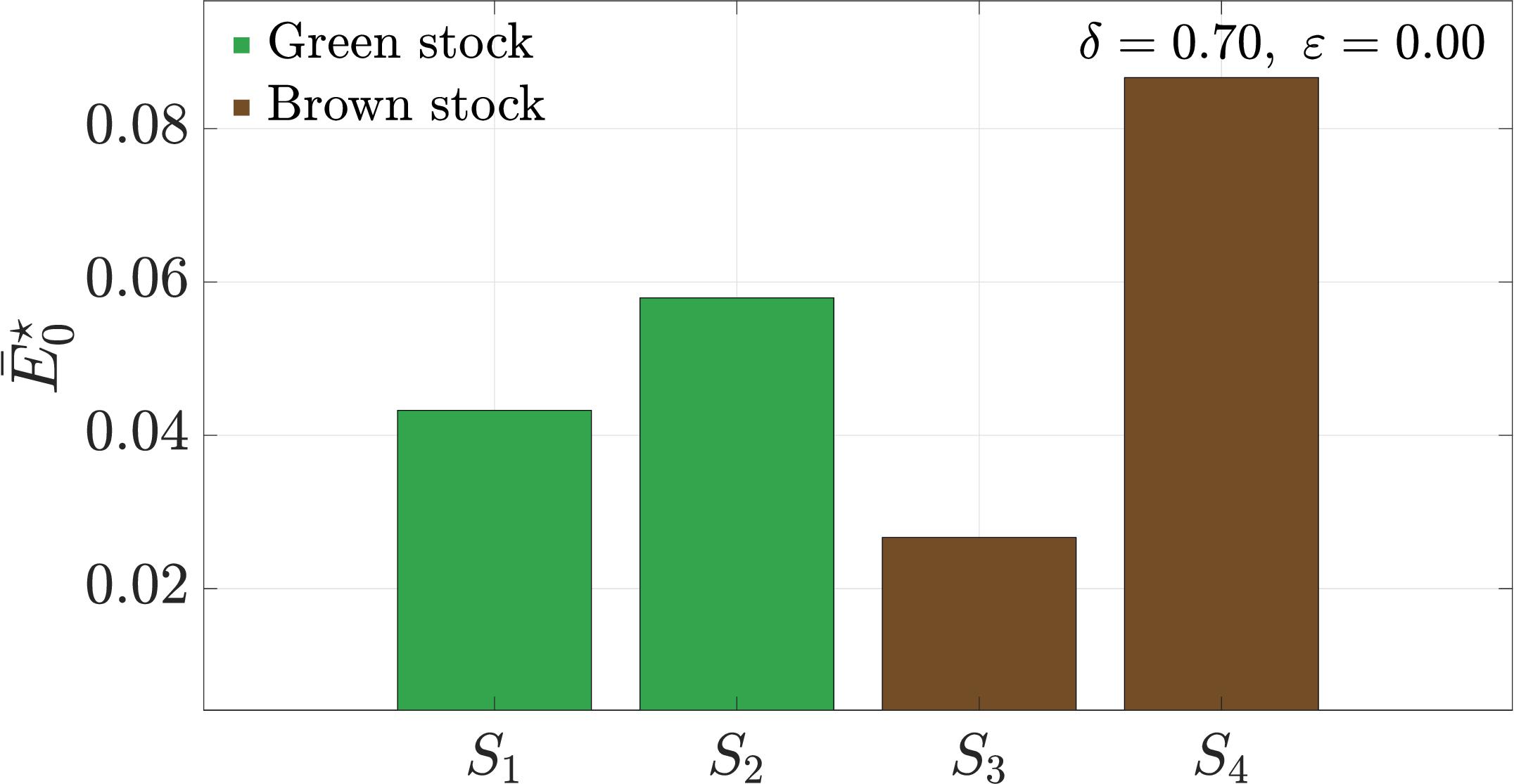} \vspace{.5cm}
\hfill 
\includegraphics[width=0.48\linewidth]{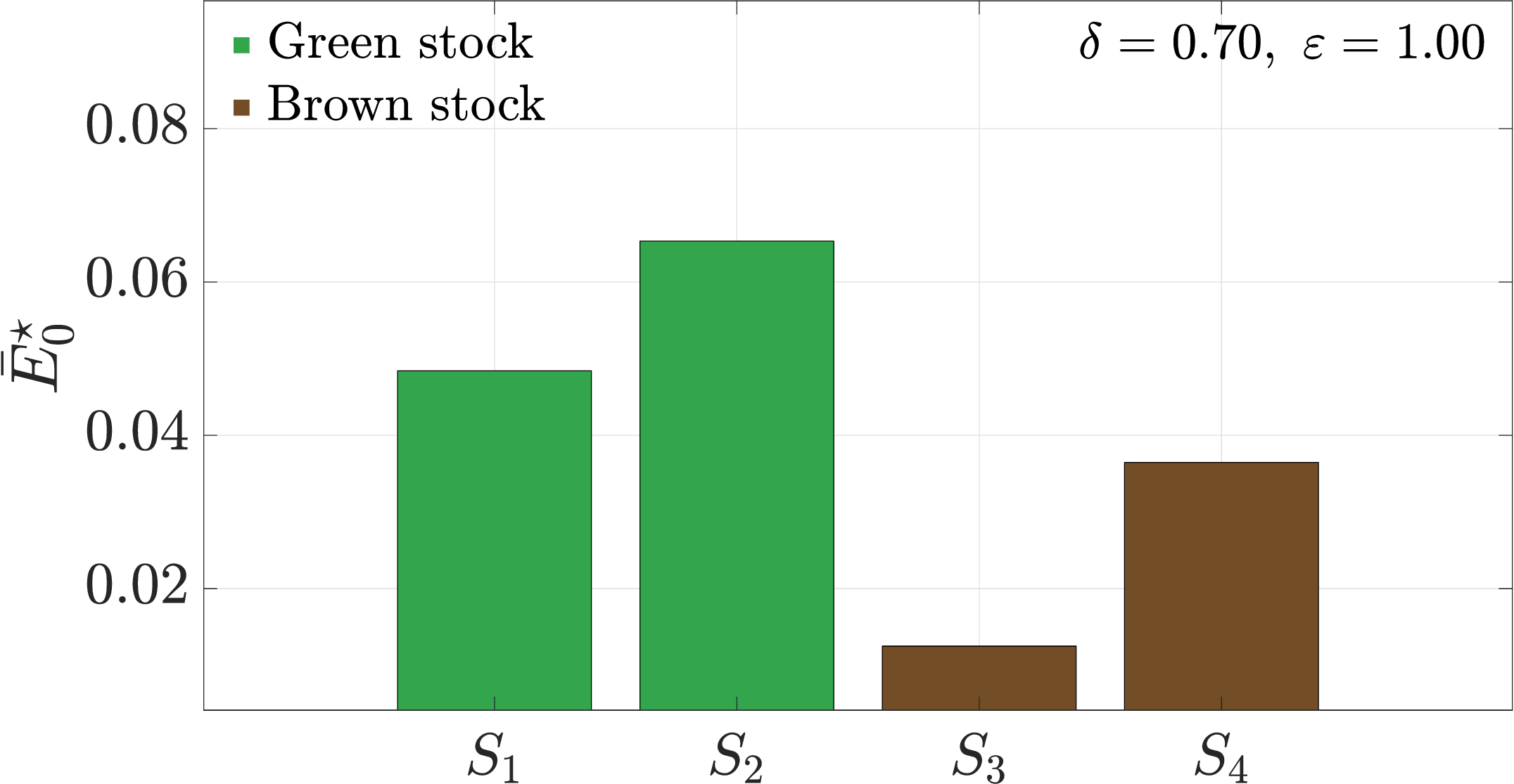}
\includegraphics[width=0.48\linewidth]{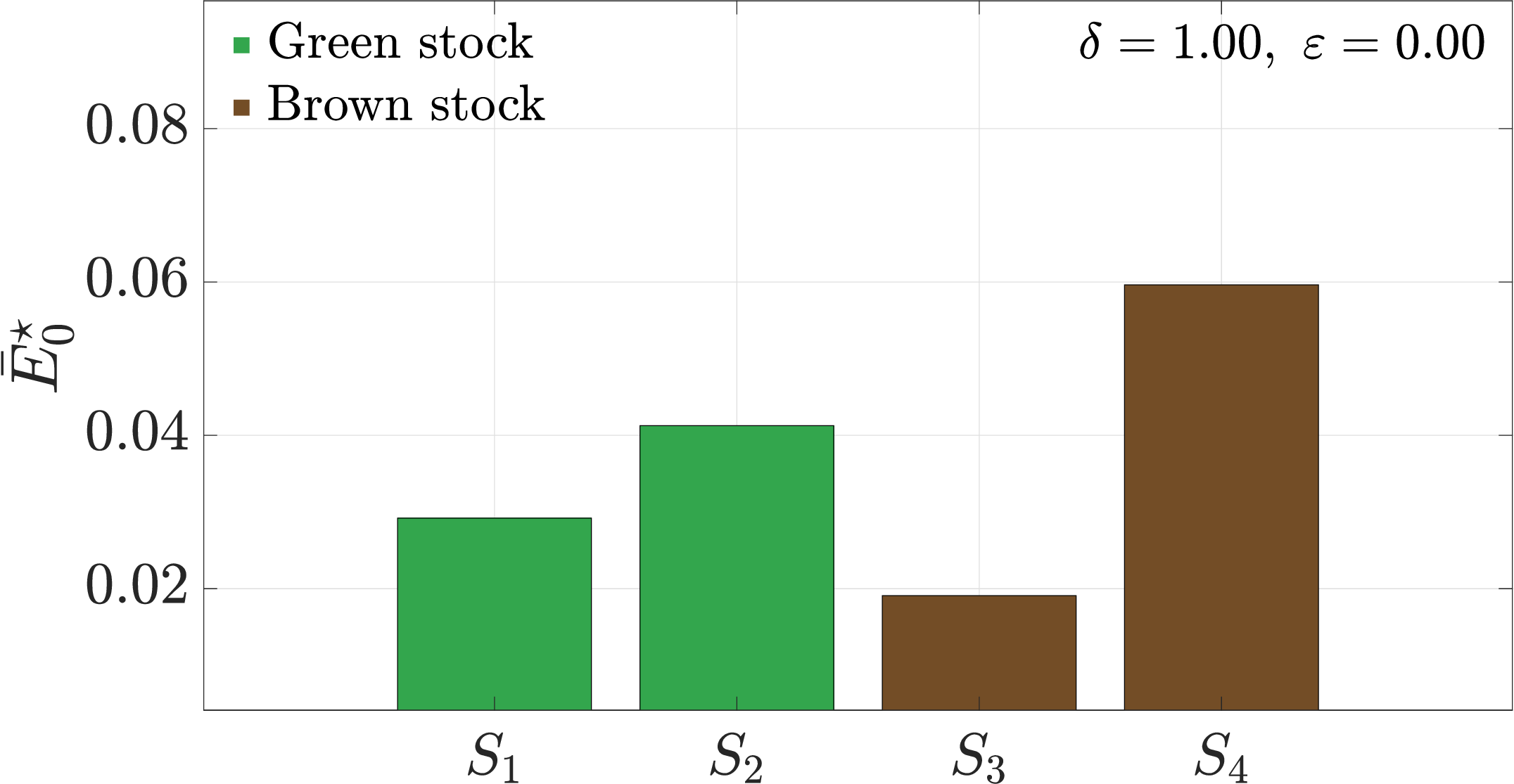} \vspace{.5cm}
\hfill 
\includegraphics[width=0.48\linewidth]{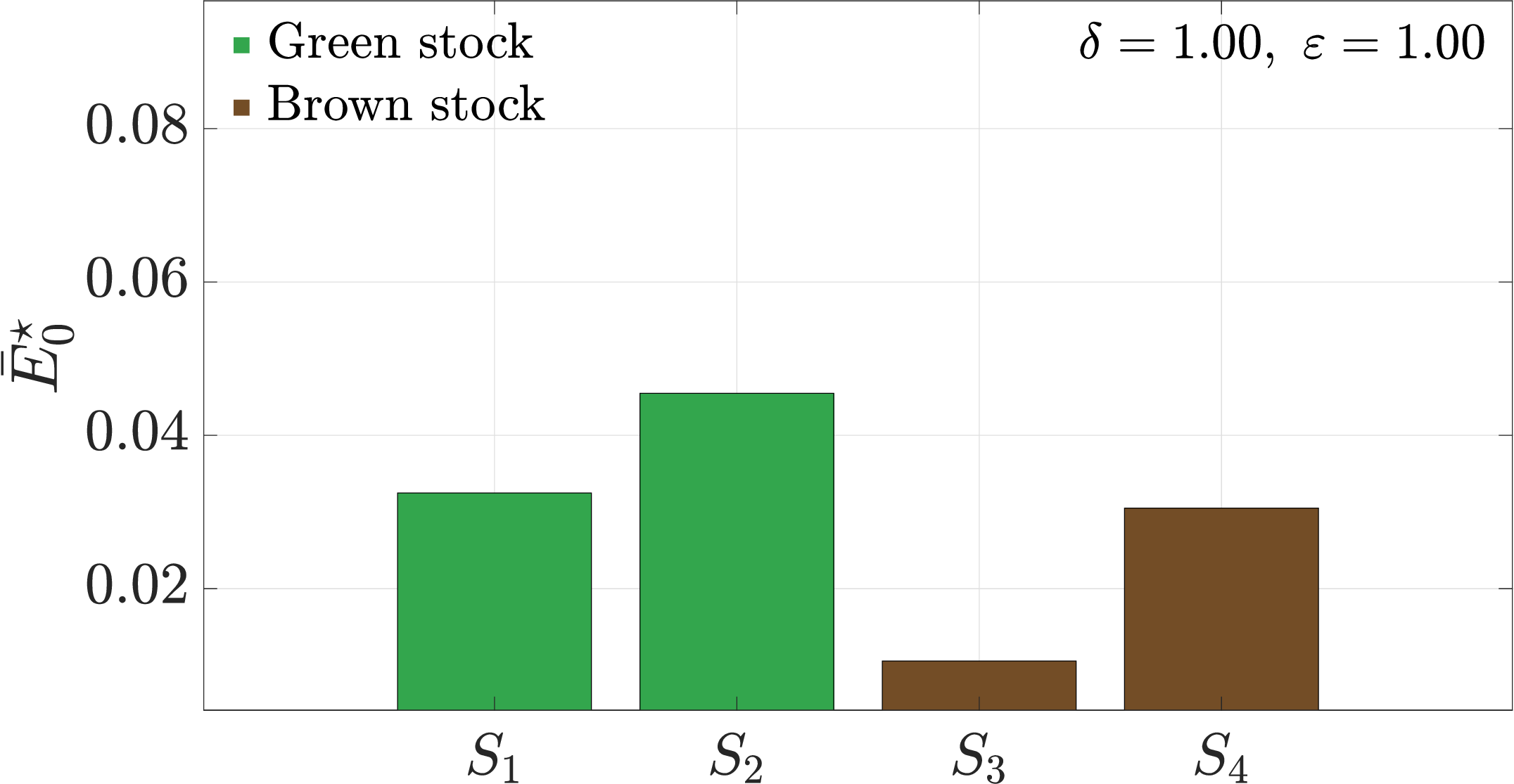}
\includegraphics[width=0.48\linewidth]{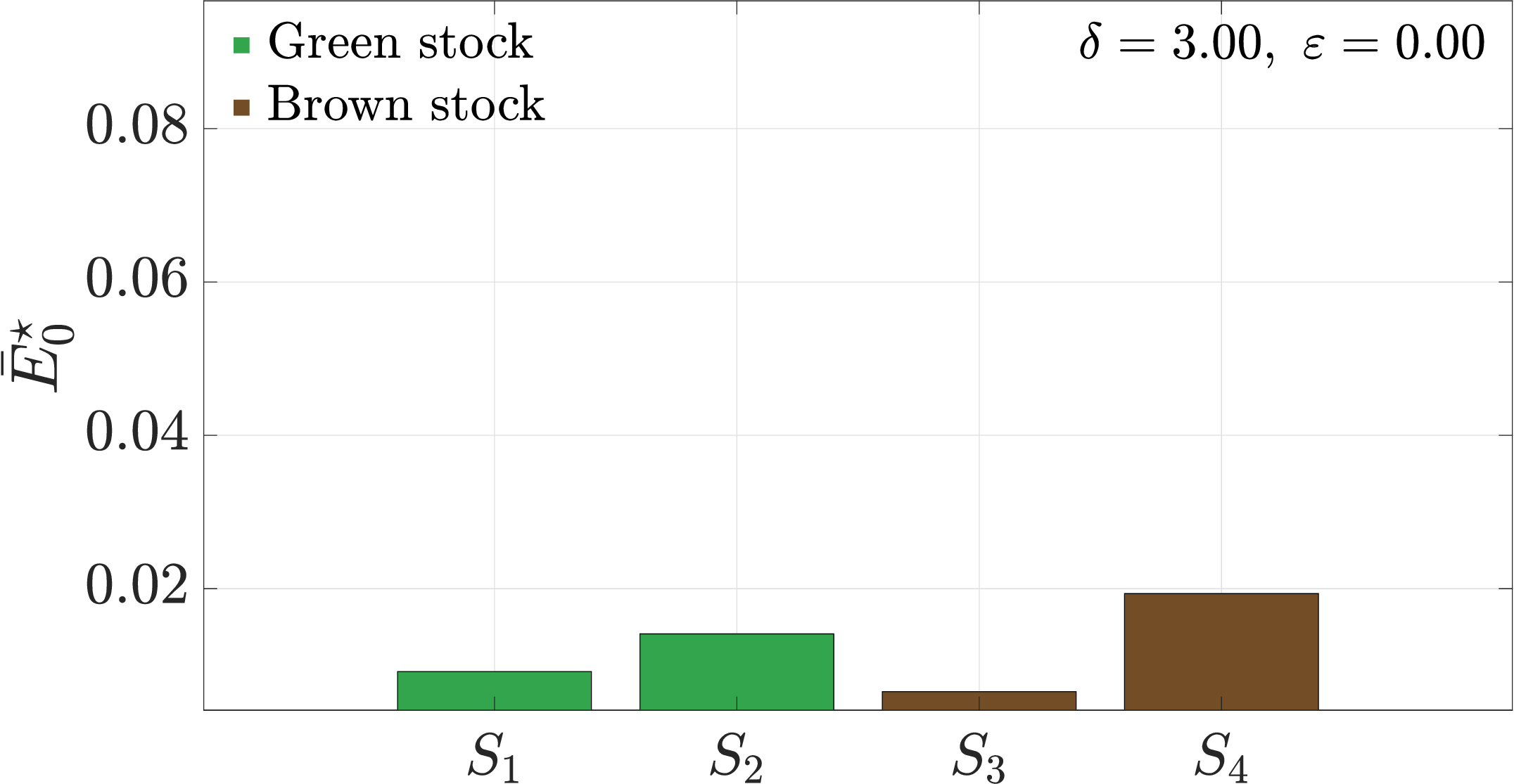}
\hfill
\includegraphics[width=0.48\linewidth]{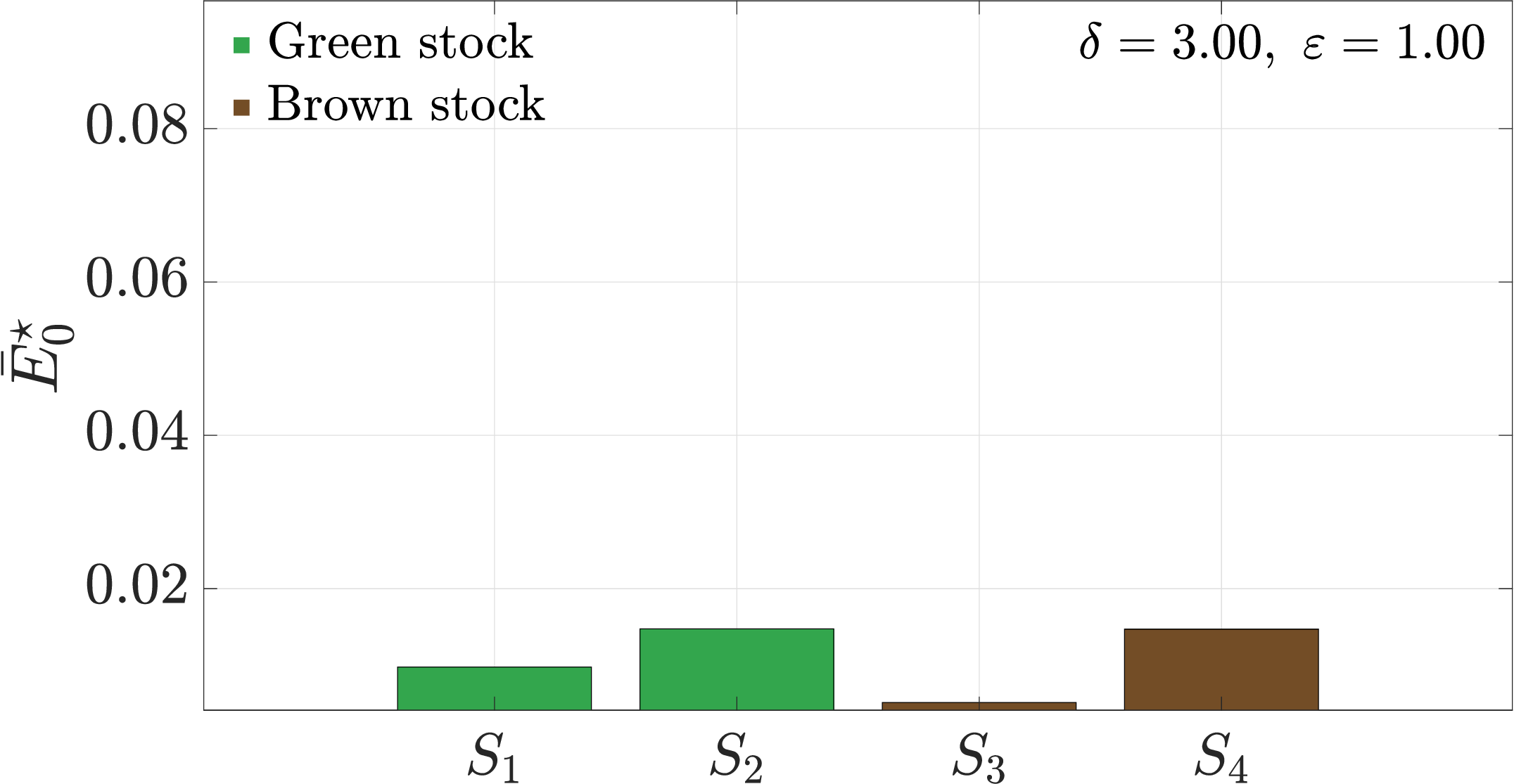}
\caption{{Bar charts displaying the optimal exposure to the $i$-th stock of the carbon-penalised PPI strategy 
at $t=0$ for different levels of $\delta$ and $\varepsilon$}.}\label{fig:comp_risky_reference_port_t_0}
\end{figure}

Figure \ref{fig:expsoure_risk_free}, illustrates the optimal multiplier $\bar m^\star$ (left panel) and the optimal exposure to the risk-free asset $S^0$ (right panel) at $t=0$ as functions of carbon aversion $\varepsilon$, and offers a description of the same effect from a different angle.
When $\varepsilon=0$, the PPI strategy’s exposure to the risky assets is entirely determined by risk aversion $\delta$. In particular, relatively low levels of the risk-aversion parameter (e.g., $\delta = 0.7$ and $\delta = 1$) lead to high values of the multiplier and large exposures to $\mathbf{S}$ (hence lower exposure to the risk free asset). Conversely, a higher $\delta$ implies a lower optimal multiplier $\bar m^\star$ and thus a smaller exposure to $\mathbf{S}$, which -- under the PPI mechanism -- results in a larger allocation to the risk-free asset $S^0$. Similarly, as $\varepsilon$ increases, $\bar m^\star$ decreases, implying a lower exposure to carbon intensive stocks. This translates in a higher allocation to $S^0$, in particular in cases where the risk aversion $\delta$ is low.
\begin{figure}[!htbp]
\centering
\includegraphics[width=0.48\linewidth]{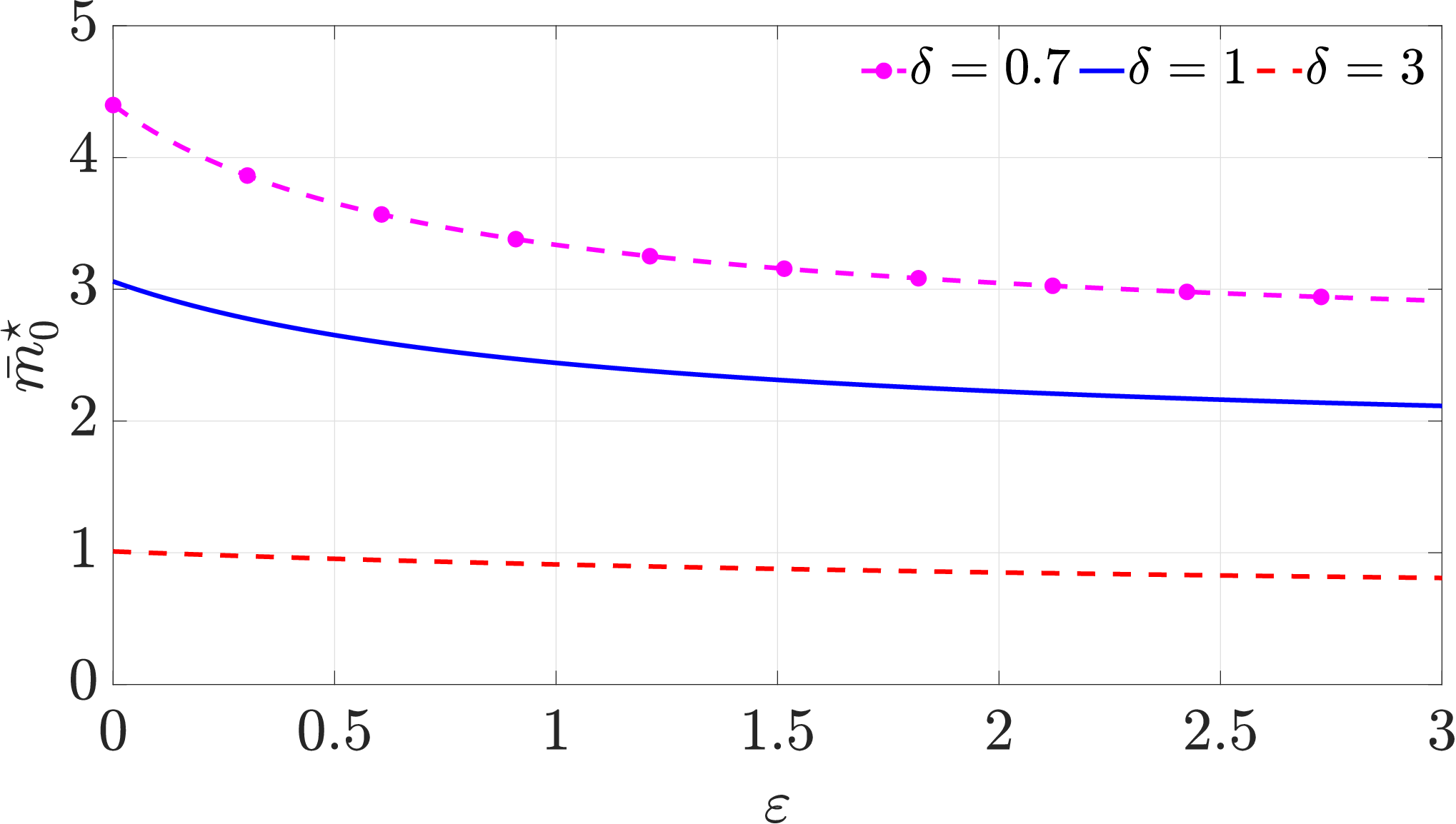}
\hfill 
\includegraphics[width=0.48\linewidth]{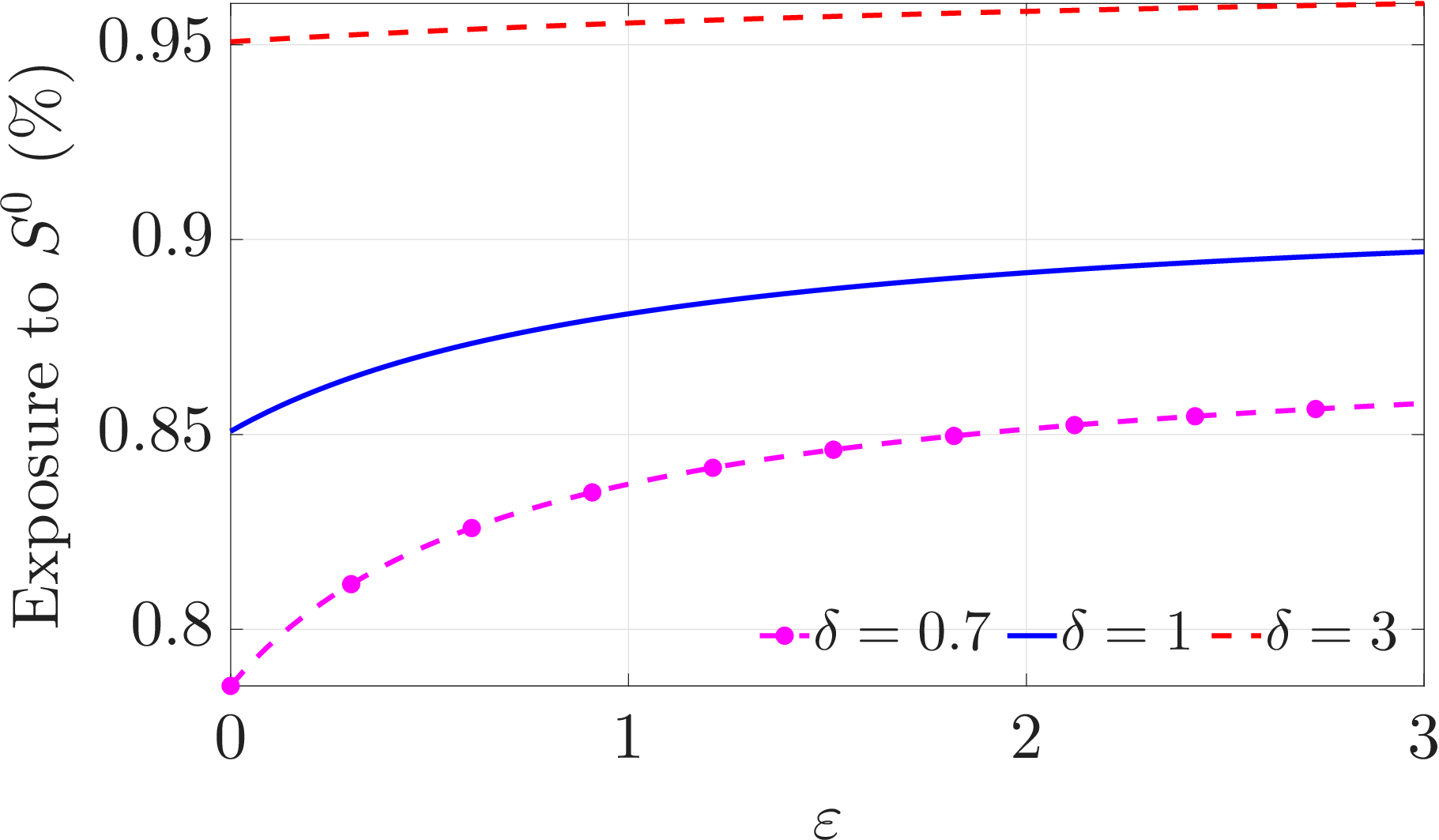}
\caption{Optimal multiplier $\bar m^\star_0$ (left panel) and optimal exposure to the risk-free asset $S^0$ (right panel) as a function of carbon aversion $\varepsilon$. The optimal PPI strategy's exposure to $S^0$ is given by $1-\mathbf{1}^\top\mathbf{\bar E}^\star_t$ for every $t\in[0,T]$.}\label{fig:expsoure_risk_free}
\end{figure}

\medskip

We now turn to the dynamic analysis. To illustrate how the proposed carbon-penalised PPI strategy shapes the allocation mechanism, we simulate the optimal exposures $\mathbf{\bar E}^\star$ over the entire investment horizon. The results, reported in Figure \ref{fig:dynamic_analysis_partial_info}, indicate that our strategy successfully {incorporates joint consideration of financial performance and carbon exposure}. In particular, brown stock number $3$ is assigned the lowest average exposure, reflecting the strategy’s sensitivity to sustainability criteria. However, the methodology is not limited to a naïve exclusion of carbon-intensive assets. Indeed, although stock number $4$ is also brown, it has a similar exposure as that of the green stock number $1$. This is because stock number $4$ exhibits the highest Sharpe ratio ($\mathrm{SR}$). This demonstrates that the penalisation mechanism does not merely exclude high-carbon assets; rather, it adjusts allocations based on a balanced evaluation of both environmental and financial features.
\begin{figure}[!htbp]
\centering
\includegraphics[width=0.75\linewidth]{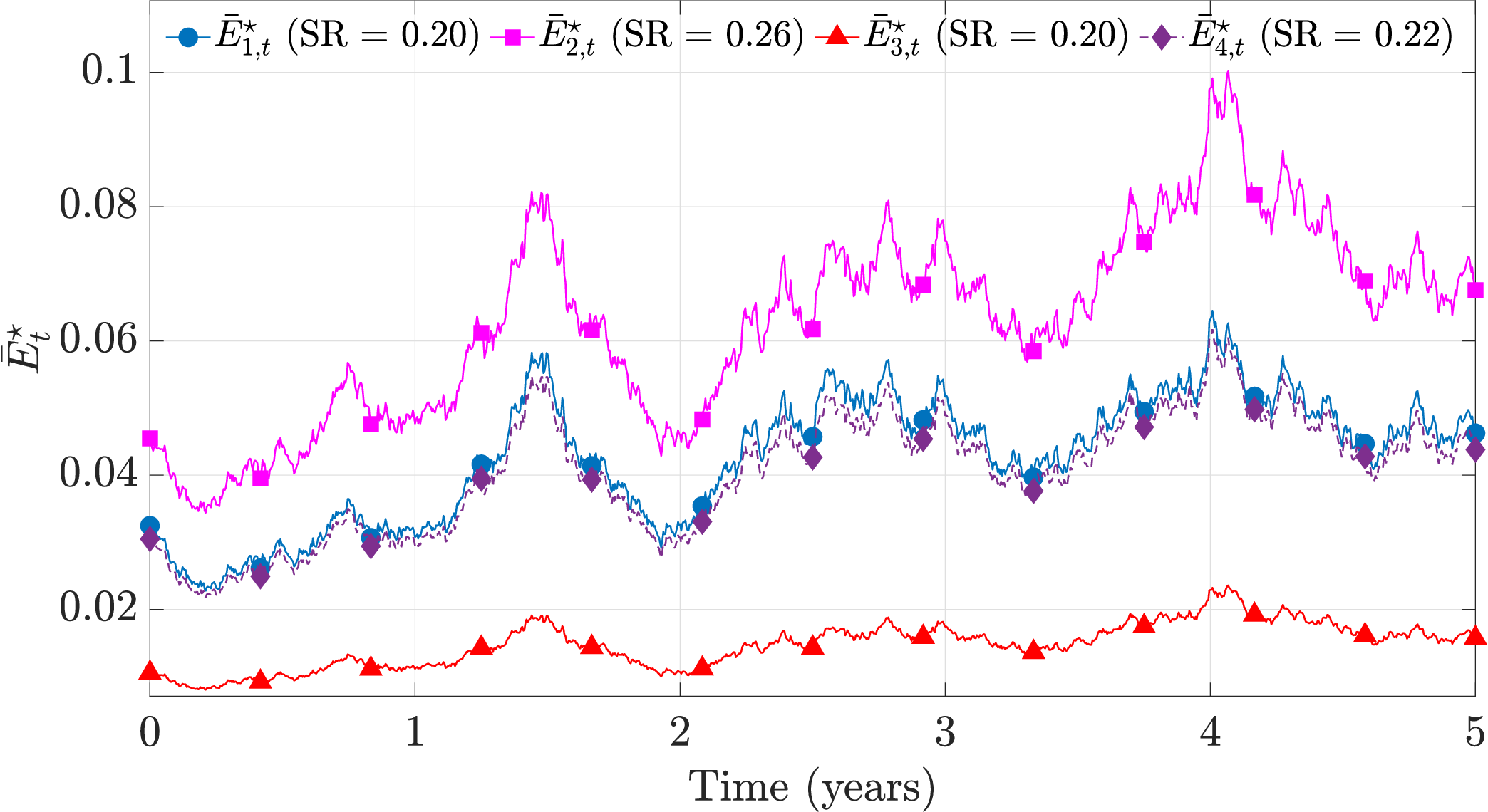}
\caption{Simulated paths of the carbon-penalised PPI strategy’s optimal exposures to $\mathbf{S}$. Parameters of $\mathbf{S}$ and $Y$ are reported in Table \ref{tab:model_params}. PPI strategy parameters: $\delta=1$, $\varepsilon=1$, $V_0=1$, $\mathrm{PL}=1$ and $T=5$ years.}\label{fig:dynamic_analysis_partial_info}
\end{figure}

Table \ref{tab:three_scenarios_perf_analysis_partial_info} shows how carbon aversion $\varepsilon$ and risk aversion $\delta$ shape the distribution of the terminal wealth of the optimal PPI strategy, under three scenarios: Scenario $1$, where green stocks outperform brown stocks; Scenario $2$, where green and brown stocks perform similarly; and Scenario $3$, where green stocks underperform brown stocks.
To generate the three scenarios, we specify three different drift vectors $\mathbf{a}$ for the stock price process $\mathbf{S}$ (reported in Table \ref{tab:scenarios}), while keeping all other parameters fixed as in Table \ref{tab:model_params}.
\begin{table}[!htbp]
\centering
\begin{tabular}{ccccc} 
\Xhline{1.5pt}
           & $a_1$   & $a_2$   & $a_3$   & $a_4$    \\ 
\hline
Scenario 1 & $0.090$ & $0.080$ & $0.045$ & $0.045$  \\
Scenario 2 & $0.080$ & $0.055$ & $0.045$ & $0.075$  \\
Scenario 3 & $0.045$ & $0.045$ & $0.080$ & $0.090$  \\
\Xhline{1.5pt}
\end{tabular}
\caption{Drift vector $\mathbf{a}$ for the three different scenarios.}
\label{tab:scenarios}
\end{table}
The results in Table \ref{tab:three_scenarios_perf_analysis_partial_info} show that, comparing the cases $\varepsilon=0$ and $\varepsilon =1$, the expected value of the optimal PPI strategy remains essentially unchanged, while the variance markedly reduces, in all scenarios and  level of risk aversion $\delta$. Furthermore, looking at the $5$th and $90$th quantiles, an increase in $\varepsilon$ raises the left tail and lowers the right tail, improving downside protection while reducing upside capture. Such a shrinkage effect is weaker in Scenario $1$ where green stocks outperform brown ones, moderate in Scenario $2$ where green and brown securities have similar performance, and stronger where brown stocks are more attractive than green ones.
\begin{table}[!htbp]
\centering
\begin{tabular}{ccccccccc} 
\Xhline{1.5pt}
\multicolumn{9}{c}{$\delta=0.7$}                                                                                       \\ 
\Xhline{1.5pt}
          & \multicolumn{2}{c}{Scenario 1} &  & \multicolumn{2}{c}{Scenario 2} &  & \multicolumn{2}{c}{Scenario 3}  \\ 
\cline{2-3}\cline{5-6}\cline{8-9}
          & $\varepsilon=0 $ & $\varepsilon=1$ &  & $\varepsilon=0$  & $\varepsilon=1$ &  & $\varepsilon=0$  & $\varepsilon=1$ \\ 
\cline{2-3}\cline{5-6}\cline{8-9}
$\mathbb{E}[V^{\bar m^\star,\bar\bmpi^{\star}}_T]$ & $1.1575$ & $1.1534$  & & $1.1208$ &   $1.1025$ & & $1.2445$	& $1.1213 $\\
$\mathrm{Var}[V^{\bar m^\star,\bar\bmpi^{\star}}_T]$ & $0.0821$ & $0.0790$  & & $0.0347$ &	$0.0142$ & & $0.3522$	& $0.0177$ \\
$q_{0.05}(V^{\bar m^\star,\bar\bmpi^{\star}}_T)$ & $1.0084$ & $1.0085$  & & $1.0117$ &	$1.0139$ & & $1.0076$	& $1.0186$ \\
$q_{0.50}(V^{\bar m^\star,\bar\bmpi^{\star}}_T)$ & $1.0771$ & $1.0773$  & & $1.0692$ &	$1.0668$ & & $1.0817$	& $1.0845$ \\
$q_{0.90}(V^{\bar m^\star,\bar\bmpi^{\star}}_T)$ & $1.3538$ & $1.3353$  & & $1.2556$ &	$1.2208$ & & $1.4960$	& $1.2473$ \\
\Xhline{1.5pt}
\multicolumn{9}{c}{$\delta=1$}                                                                                       \\ 
\Xhline{1.5pt}
          & \multicolumn{2}{c}{Scenario 1} &  & \multicolumn{2}{c}{Scenario 2} &  & \multicolumn{2}{c}{Scenario 3}  \\ 
\cline{2-3}\cline{5-6}\cline{8-9}
          & $\varepsilon=0 $ & $\varepsilon=1$ &  & $\varepsilon=0$  & $\varepsilon=1$ &  & $\varepsilon=0$  & $\varepsilon=1$ \\ 
\cline{2-3}\cline{5-6}\cline{8-9}
$\mathbb{E}[V^{\bar m^\star,\bar\bmpi^{\star}}_T]$ & $1.1154$ & $1.1135$ & & $1.0949$ & $1.0860$ & & $1.1540$ & $1.1016$ \\
$\mathrm{Var}[V^{\bar m^\star,\bar\bmpi^{\star}}_T]$ & $0.0145$ & $0.0136$ & & $0.0072$ & $0.0040$ & & $0.0447$ & $0.0060$ \\
$q_{0.05}(V^{\bar m^\star,\bar\bmpi^{\star}}_T)$ & $1.0174$ & $1.0174$	& & $1.0210$ & $1.0227$ & & $1.0172$ & $1.0263$ \\
$q_{0.50}(V^{\bar m^\star,\bar\bmpi^{\star}}_T)$ & $1.0820$ & $1.0814$ & & $1.0728$ & $1.0698$	& & $1.0912$ & $1.0816$ \\
$q_{0.90}(V^{\bar m^\star,\bar\bmpi^{\star}}_T)$ & $1.2373$ & $1.2285$	& & $1.1823$ & $1.1620$	& & $1.3175$ & $1.1873$ \\
\Xhline{1.5pt}
\multicolumn{9}{c}{$\delta=3$}                                                                                       \\ 
\Xhline{1.5pt}
          & \multicolumn{2}{c}{Scenario 1} &  & \multicolumn{2}{c}{Scenario 2} &  & \multicolumn{2}{c}{Scenario 3}  \\ 
\cline{2-3}\cline{5-6}\cline{8-9}
          & $\varepsilon=0 $ & $\varepsilon=1$ &  & $\varepsilon=0$  & $\varepsilon=1$ &  & $\varepsilon=0$  & $\varepsilon=1$ \\ 
\cline{2-3}\cline{5-6}\cline{8-9}
$\mathbb{E}[V^{\bar m^\star,\bar\bmpi^{\star}}_T]$   & $1.0681$ & $1.0679$ & & $1.0635$ & $1.0625$ & & $1.0743$ & $1.0694$ \\
$\mathrm{Var}[V^{\bar m^\star,\bar\bmpi^{\star}}_T]$ & $0.0004$ & $0.0004$ & & $0.0003$ & $0.0002$ & & $0.0007$ & $0.0004$ \\
$q_{0.05}(V^{\bar m^\star,\bar\bmpi^{\star}}_T)$     & $1.0394$ & $1.0397$ & & $1.0409$ & $1.0415$ & & $1.0403$ & $1.0422$ \\
$q_{0.50}(V^{\bar m^\star,\bar\bmpi^{\star}}_T)$     & $1.0659$ & $1.0656$ & & $1.0619$ & $1.0612$ & & $1.0706$ & $1.0666$ \\
$q_{0.90}(V^{\bar m^\star,\bar\bmpi^{\star}}_T)$     & $1.0939$ & $1.0931$ & & $1.0843$ & $1.0816$ & & $1.1068$ & $1.0935$ \\
\Xhline{1.5pt}
\end{tabular}
\caption{Mean, variance, and $5$th/$50$th/$90$th quantiles of the distribution of the optimal carbon-penalised PPI strategy at $T=5$ in the partial information case, for risk-aversion levels $\delta=0.7$ (top panel), $\delta=1$ (middle panel), and $\delta=3$ (bottom panel), comparing $\varepsilon=0$ and $\varepsilon=1$, under the three scenarios.}
\label{tab:three_scenarios_perf_analysis_partial_info}
\end{table}
As an example, at $\delta=0.7$, the variance decreases by $5.5$\% in Scenario $1$, $58.1$\% in Scenario $2$ and $94.7$\% in Scenario $3$, while the interquartile range ($[q_{0.05}, q_{0.90}]$) is reduced by $6.3$\%, $18.6$\% and $53.8$\%, respectively. Similar considerations apply to $\delta=1$ and $\delta=3$, albeit with smaller numbers. 

Figure \ref{fig:expsoure_risk_free} reports the optimal multiplier (left panel) and the corresponding exposure to the risk free asset (right panel) as functions of $\varepsilon$ for different values of $\delta$. These plots consent us to draw the following conclusions. The multiplier is decreasing in the carbon aversion $\varepsilon$ and in risk aversion $\delta$. On the contrary the exposure to the risk free asset is increasing. The effect of an increase in carbon aversion is more contained when portfolio insurer is more risk averse. In summary, $\delta$ produces a generalized reduction in the riskiness of the strategy as it indiscriminately decreases the investments in green and brown stocks. .{In contrast, carbon aversion acts  specifically on carbon-intensive stocks, reducing the exposure to assets that present undesirable brown features}. Importantly, these conclusions are not restricted to PPI strategies under partial information; they apply in a similar way to the full-information setting.

\paragraph{{Robustness check.}} {For the numerical experiments we adopt a benchmark set of parameters (see Table \ref{tab:model_params}), chosen to represent a plausible market scenario. Since results may strongly depend on this parametric specification, we perform a sensitivity analysis to assess the robustness of our findings to parameter variations. Given the large number of market parameters, we first carry out a preliminary one at a time exercise by perturbing each parameter individually by $\pm20$\% around the baseline values reported in Table \ref{tab:model_params}. Figure \ref{fig:TORNADO_PLOTS} summarizes the results through a tornado plot, showing the percentage changes in the expected value and the variance of the distribution of the optimal carbon-penalised PPI strategy at maturity relative to the baseline configuration. The plot indicates that the main drivers of the output are the factor-related drift parameters of $S_2$ and $S_4$, the volatility of $S_2$, and the latent factor dynamics, namely the long-run mean $\beta$ and the speed of mean reversion $\lambda$. By contrast, the coefficients $b_i$ and the correlation parameters (both across stocks and between stocks and the latent factor) have a negligible impact, producing less than proportional changes in $\mathbb{E}[V_T^{\bar{m}^\star,\,\bar{\pi}^\star}]$ and $\mathrm{Var}[V_T^{\bar{m}^\star,\,\bar{\pi}^\star}]$. Based on this ranking, we performe a sensitivity analysis restricted to $a_2$, $a_4$, $\sigma_2$, $\lambda$, and $\beta$. Results are reported in Appendix \ref{sect:sensitivity_analysis}.}

\begin{figure}[!htbp]
\centering
\includegraphics[width=0.48\linewidth]{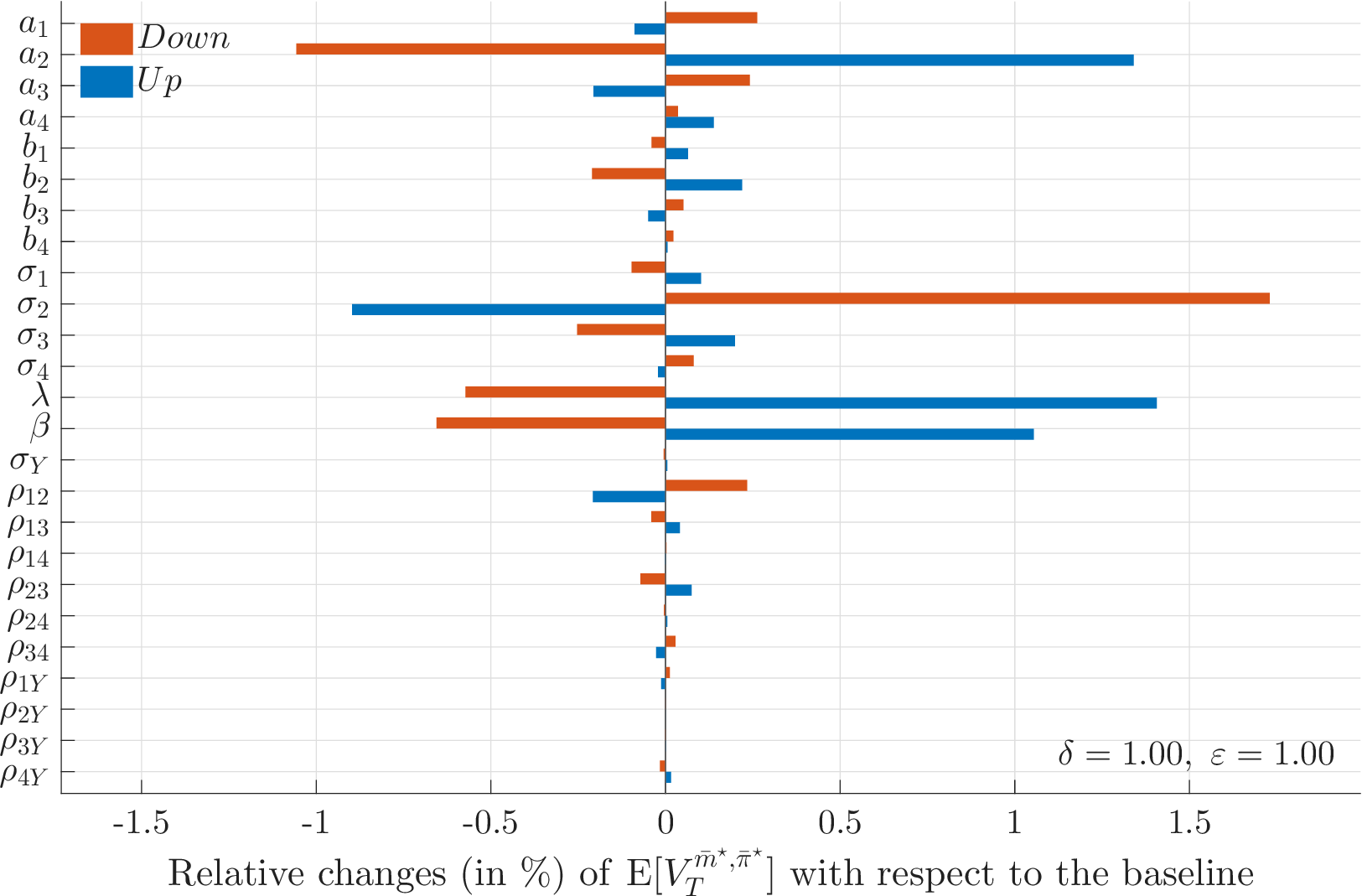} 
\vspace{.3cm}
\hfill 
\includegraphics[width=0.48\linewidth]{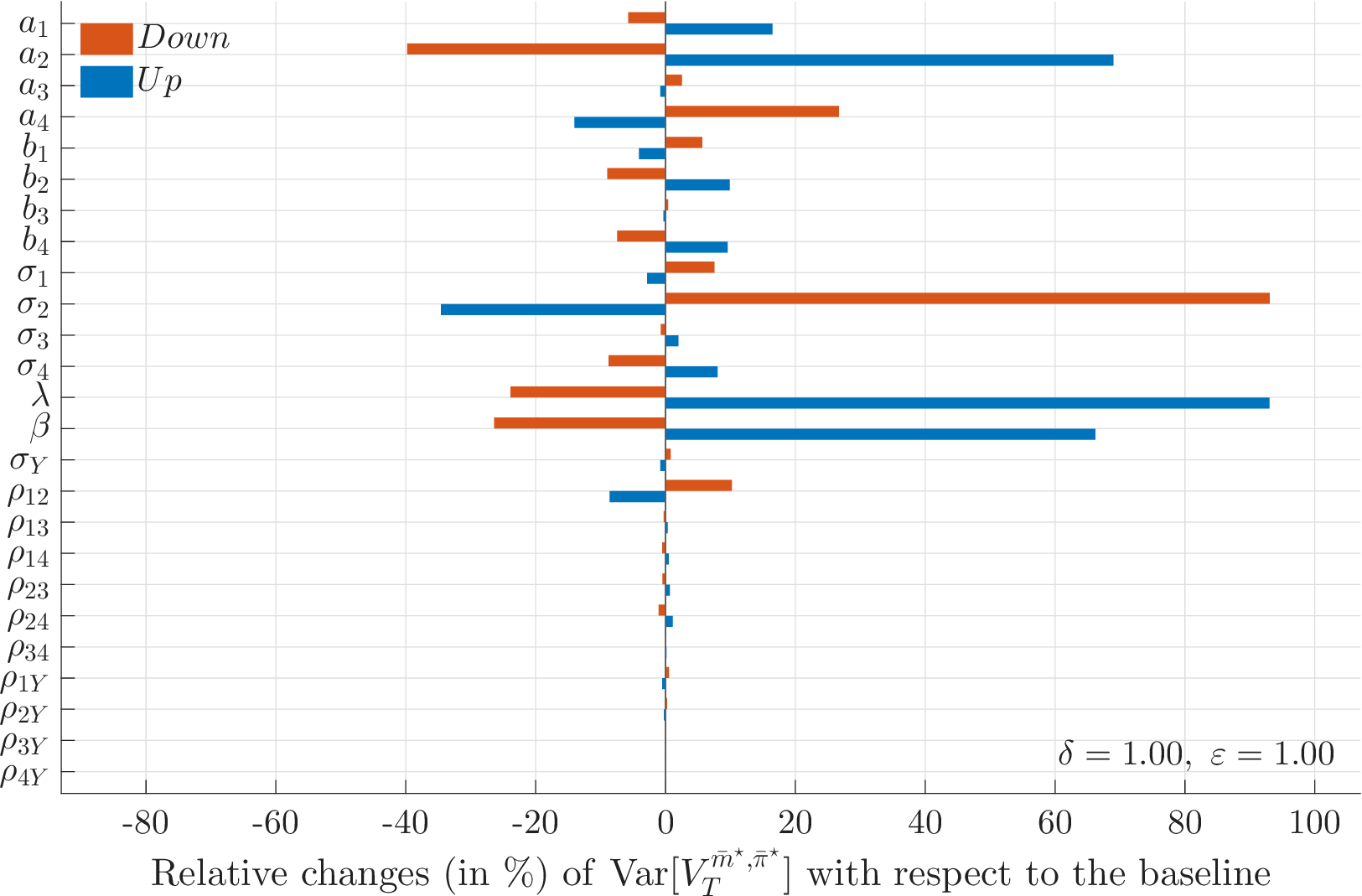}
\caption{{Tornado plots displaying the percentage change in the expected terminal value (left panels) and in the variance (right panels) of the optimal strategy at maturity, following a $20$\% one-at-a-time perturbation of the market parameters relative to the baseline values reported in Table \ref{tab:model_params}. Orange and blue bars correspond to downward and upward perturbations, respectively. Results are shown under partial information for risk aversion $\delta = 1$ and carbon aversion $\varepsilon = 1$.}
}
\label{fig:TORNADO_PLOTS}
\end{figure}

\subsection{Comparison results between the full and the partial information case}

In this final section, we compare the performance of the optimal strategies under full and partial information. Figure \ref{fig:Opt_multiplier_over_time_FULL_VS_PARTIAL} displays the optimal multiplier in full (solid blue line) and partial information (dashed magenta line).
In the left panel, we plot the standard, non-penalised case $\varepsilon =0$, and in the right panel, the penalised case $\varepsilon=1$. Both panels show that the multiplier under partial information shows slightly less variability, yet displaying very similar behaviour. The performance of the strategy under full and partial information, in terms of portfolio values are also very close as indicated in Table \ref{tab:Comparison_full_partial_Scenario_3}. This is a signal that, if markets are affected by random factors that are not easily measured, it is worth performing the portfolio analysis under partial information, rather than assuming a naive point of view and taking parameters constant.\footnote{We are ignoring here model misspecifications, which represent an additional source of error.}     

\begin{figure}[!htbp]
\centering
\includegraphics[width=0.48\linewidth]{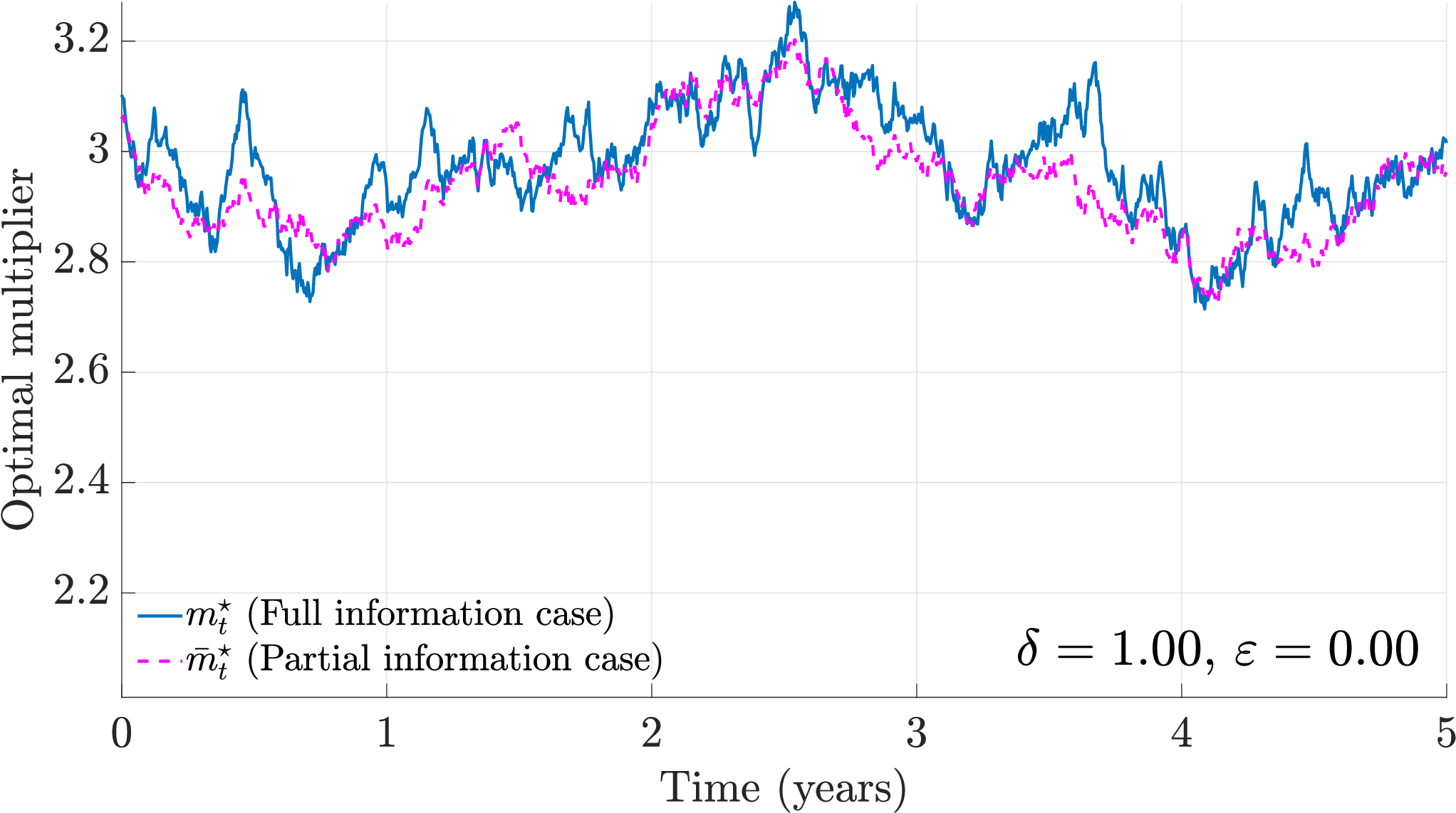}
\hfill 
\includegraphics[width=0.48\linewidth]{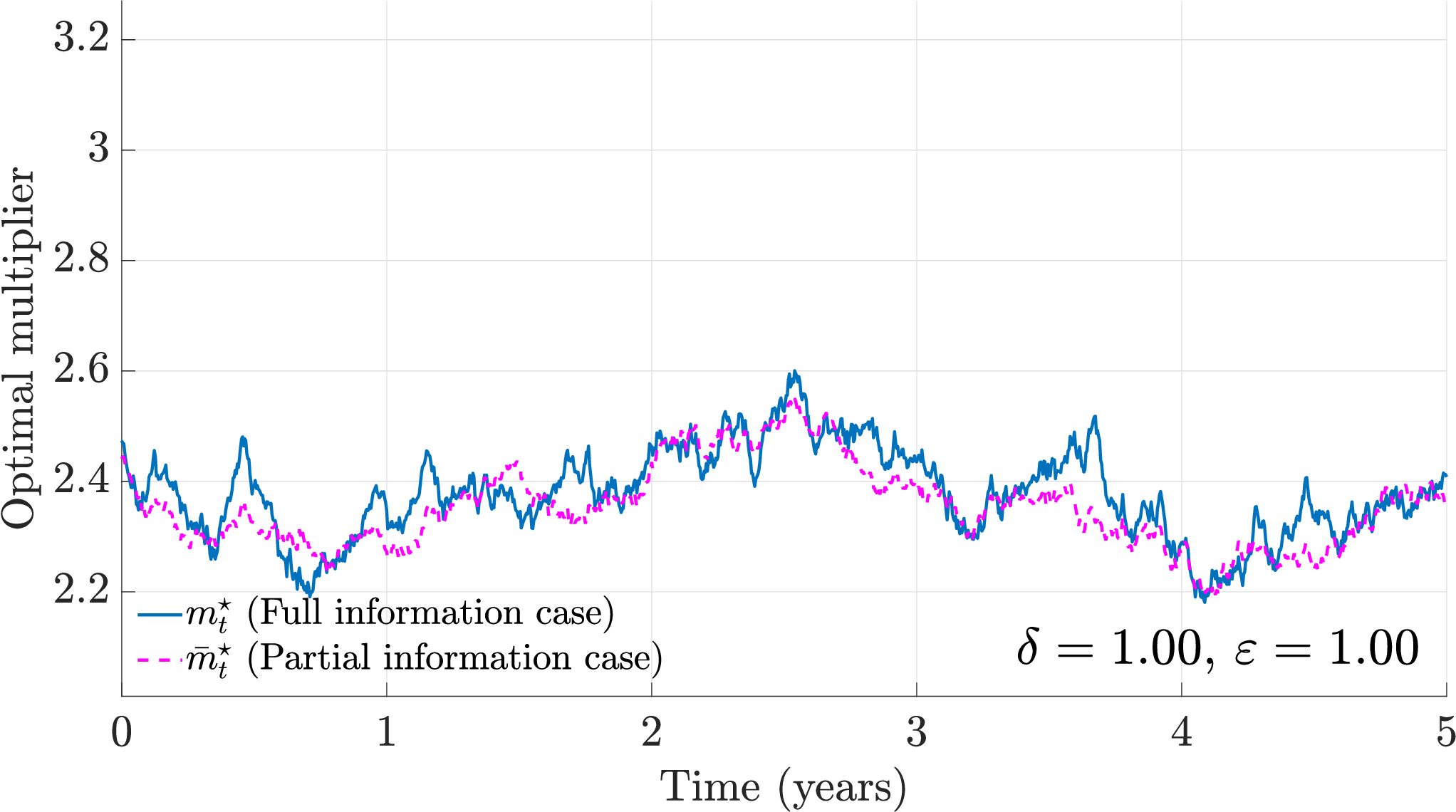}
\caption{Trajectories of the optimal multiplier under full and partial information for risk-aversion level $\delta=1$ and carbon penalisation levels $\varepsilon=0$ (left panel) and $\varepsilon=1$ (right panel). The solid blue line corresponds to the partially informed case, while the dashed magenta line corresponds to the full-information case.}\label{fig:Opt_multiplier_over_time_FULL_VS_PARTIAL}
\end{figure}

\begin{table}[!htbp]
\centering
\begin{tabular}{cccc} 
\Xhline{1.5pt}
& Full information &  & Partial information  \\ 
\cline{2-2}\cline{4-4}
$\mathbb{E}[V^{\bar m^\star,\bar\bmpi^{\star}}_T]$   & $1.1207$         &  & $1.1213$             \\
$\mathrm{Var}[V^{\bar m^\star,\bar\bmpi^{\star}}_T]$ & $0.0166$         &  & $0.0177$             \\
$q_{0.05}(V^{\bar m^\star,\bar\bmpi^{\star}}_T)$           & $1.0182$         &  & $1.0186$             \\
$q_{0.50}(V^{\bar m^\star,\bar\bmpi^{\star}}_T)$           & $1.0850$         &  & $1.0845$             \\
$q_{0.90}(V^{\bar m^\star,\bar\bmpi^{\star}}_T)$           & $1.2508$         &  & $1.2473$             \\
\Xhline{1.5pt}
\end{tabular}
\caption{Mean, variance, and $5$th/$50$th/$90$th quantiles of the distribution of the optimal carbon-penalised PPI strategy at $T=5$ in the full and partial information case, for risk-aversion level $\delta=0.7$ and $\varepsilon=1$, under Scenario $3$.}\label{tab:Comparison_full_partial_Scenario_3}
\end{table}

We conclude with an analysis of the loss of utility and efficiency. Figure \ref{fig:loss_utility_eff} reports the loss of utility at time $t=0$ (left panel) and the efficiency (right panel) as functions of the carbon-aversion parameter $\varepsilon$, for different levels of risk aversion $\delta$. The results indicate that the loss of utility decreases with both risk aversion and carbon aversion. Interestingly, the effect is more pronounced for small values of $\varepsilon$, and becomes essentially constant for larger $\varepsilon$. The opposite monotonic behavior is observed for efficiency, although the sensitivity remains greater at lower levels of $\varepsilon$. Overall, these plots suggest that carbon penalisation can improve the relative performance of the partially informed investor compared with the fully informed one, narrowing the gap between their utilities. Under high levels of risk aversion, the loss of utility becomes practically negligible, indicating that the informational advantage of the fully informed investor is largely offset by investor preferences, as both types of investors behave in an extremely prudent manner.
Finally, the presence of even a modest carbon penalisation increases the relative efficiency of the partially informed strategy vis-à-vis its fully informed counterpart.
\begin{figure}[!htbp]
\centering
\includegraphics[width=0.48\linewidth]{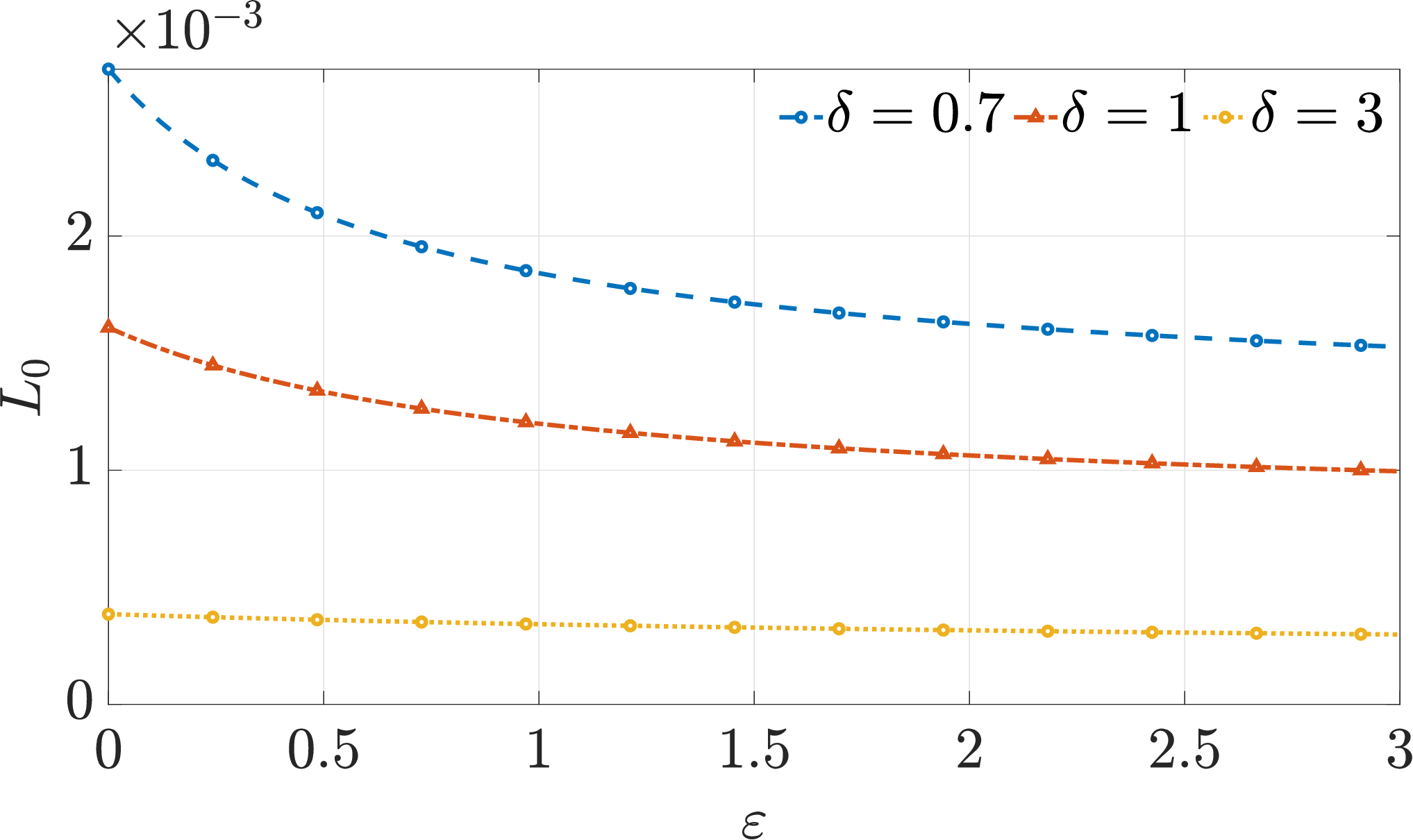}
\hfill 
\includegraphics[width=0.48\linewidth]{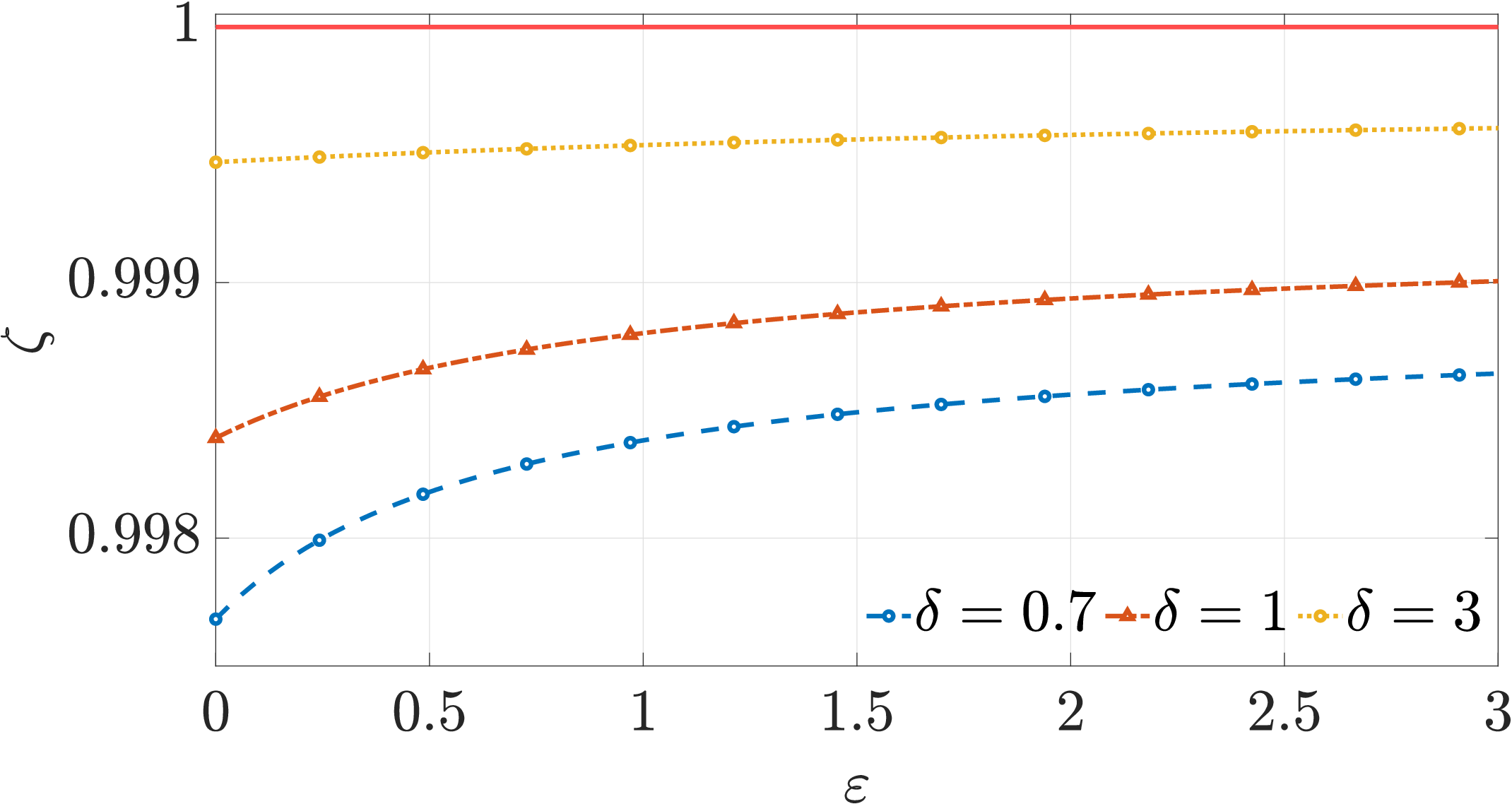}
\caption{Loss of utility (left panel) and efficiency (right panel) of a partially informed, carbon-penalised PPI insurer relative to a full-information one, with initial cushion equal to $1$, as a function of $\varepsilon$ at $t=0$ for different values of $\delta$.}\label{fig:loss_utility_eff}
\end{figure}

\section{Concluding remarks}\label{sect:conclusions}
This paper has proposed an optimal design of carbon-penalised proportional portfolio insurance (PPI) strategies in a market driven by an unobservable factor. By embedding carbon aversion into the investor’s utility function, we have shown that sustainability considerations can be consistently integrated into dynamic portfolio insurance without compromising its risk-mitigation role. The introduction of a carbon penalisation term naturally reduces exposure to carbon-intensive assets. Importantly, this reduction does not stem from an ex-ante exclusion of “brown” stocks, but from an endogenous adjustment of the optimal allocation that balances environmental impact and financial performance.

From an economic perspective, the carbon penalty operates as an implicit cost of holding high-emission assets, inducing portfolio insurers to internalise the externalities associated with carbon risk. Our numerical results indicate that even moderate levels of carbon aversion can achieve substantial emission reductions with only marginal losses in expected utility. 
Nevertheless, assets with high carbon intensity are not completely excluded; instead, a trade-off emerges between performance characteristics, e.g, a high Sharpe ratio, and carbon intensity. Consequently, a portfolio insurer considers both aspects simultaneously when designing the PPI strategy, balancing return potential against environmental impact.
Interestingly, we get that carbon penalisation improves the relative efficiency of the partially informed investor, narrowing the performance gap vis-à-vis the fully informed benchmark. When risk aversion is high, the informational premium virtually vanishes, suggesting that prudence can offset informational disadvantages.

Overall, these findings highlight that environmental preferences and informational constraints interact in shaping sustainable investment behavior. Carbon penalisation acts as a powerful mechanism to align portfolio insurance objectives with broader climate-finance goals, while partial information amplifies the conservative nature of the PPI framework.

Future research could extend this analysis in several directions. First, one may consider non-Gaussian or regime-switching latent factors to capture abrupt transitions in macro-financial or climate conditions. Second, incorporating transaction costs or market frictions would enhance the practical relevance of the model, especially for long-horizon institutional investors. Further developments might also explore multi-factor carbon risks or stochastic floors to assess how policy uncertainty and adaptive guarantees affect sustainable portfolio insurance design.

\section*{Acknowledgements and fundings}
Katia Colaneri is member of Gruppo Nazionale per l’Analisi Matematica, la Probabilità e le loro Applicazioni (GNAMPA) of Istituto
Nazionale di Alta Matematica (INdAM). The partial support through the INdAM - GNAMPA Project (CUP E5324001950001) is acknowledged. This work has been completed
while Daniele Mancinelli was affiliated with University of Rome Tor Vergata and has been funded by European Union - Next Generation EU, Mission 4, Component 2 as part of the GRINS project - Growing Resilient, INclusive and Sustainable (PE0000018, CUP: E83C22004690001) - National Recovery and Resilience Plan (PNRR). The views and opinions expressed are solely those of the authors and do not necessarily reflect those of the European Union, nor can the European Union be held responsible for them. Moreover, any views expressed are solely those of the author(s) and so cannot be taken to represent those of the Bank of England or any of its committees, or to state Bank of England policies. 

\section*{Declaration of generative AI in scientific writing}
During the preparation of this work the authors used \textit{Writefull AcademicGPT 2025} in the writing process in order to improve the readability and language of the manuscript. After using this tool, the authors reviewed and edited the content as needed and take full responsibility for the content of the published article.
\section*{Conflict of interest}
The authors declare no competing interests.
\section*{Appendix}
\appendix
\numberwithin{equation}{section}
\renewcommand{\theequation}{\thesection\arabic{equation}}
\section{Proofs of some technical results of Section \ref{sect:opt_problem_full_info}}
\subsection{Proof of Theorem \ref{thm:verification_thm_full_info_CRRA}}\label{app:A_1}
From It\^{o}'s formula applied to $f(t,\hat{C}^{\bmtheta}_t,Y_t)$ we get that, for any $0\le t\le T$ and $\bmtheta\in\mathcal{A}$, it holds
\begin{align}
f(T,\hat{C}^{\bmtheta}_T,Y_T)=&f(t,c,y)+\int_t^T\left(f_s(s,\hat{C}^{\bmtheta}_s,Y_s)+\mathcal{L}^{\bmtheta}f(s,\hat{C}^{\bmtheta}_s,Y_s)\right)\de s+\int_t^Tf_{y}(s,\hat{C}^{\bmtheta}_s,Y_{s})\tilde\sigma_YZ_s^Y\\
&+\int_t^T\left(f_{\hat{c}}(s,\hat{C}^{\bmtheta}_s,Y_s)\hat{C}^{\bmtheta}_s\bmtheta_s^\top\bmtSigma_\bmS+f_{y}(s,\hat{C}^{\bmtheta}_s,Y_{s})\bmtSigma_Y\right)\de\bmZ_s^{\bmS}. 
\end{align}
Let $M=\left\lbrace M_t\right\rbrace_{t\in[0,T]}$ be the stochastic process given by
\begin{equation}
M_t=\int_0^tf_{y}(s,\hat{C}^{\bmtheta}_s,Y_{s})\tilde\sigma_YZ_s^Y+\int_0^t\left(f_{\hat{c}}(s,\hat{C}^{\bmtheta}_s,Y_s)\hat{C}^{\bmtheta}_s\bmtheta_s^\top\bmtSigma_\bmS+f_{y}(s,\hat{C}^{\bmtheta}_s,Y_{s})\bmtSigma_Y\right)\de\bmZ_s^{\bmS},\quad t\in[0,T],
\end{equation}
and define $\tau_n= \inf\{t \ge 0: \hat C^\theta_t \ge n \text{ and } |Y_t| \le n\}$. This is an increasing sequence of stopping times such that $\tau_n \wedge T\uparrow T$ for $n\to \infty$. 
Moreover, by assumption, $f$ is a classical solution of the HJB equation \eqref{eq:HJB_equation_FULL_INFO}, hence its derivatives are continuous and bounded on compact sets. This implies that the stopped process $\{M_{t \wedge \tau_n}\}_{t \in [0,T]}$ is a martingale. Indeed, it holds that    
\begin{align}\label{eq:int_cond_ver}
&\mathbb{E}\left[\int_0^{T \wedge \tau_n}f^2_y(s,\hat{C}^{\bmtheta}_s,Y_s)\de s+\int_0^{T \wedge \tau_n}f^2_{\hat{c}}(s,\hat{C}^{\bmtheta}_s,Y_s)(\hat{C}_s^{\bmtheta})^2\bmtheta_s^\top\bmtSigma_\bmS\bmtSigma_\bmS^\top\bmtheta_s\de s\right]\\
&\le \sup_{t \le T, (c,y) \in [-n,n]^2} |f^2_y(t,\hat{C}^{\bmtheta}_t,Y_t)| T + k |f^2_c(t,\hat{C}^{\bmtheta}_t,Y_t)| n^2 \mathbb{E}\left[\int_0^T\|\bmtheta_s\|^2_2\de s\right] <\infty.
\end{align}
Now, since $f$ solves equation \eqref{eq:value_function_full_info_CRRA}, we get that for every $n \in \mathbb{N}$
\begin{align}
f(T\wedge \tau_n,\hat{C}_{T\wedge \tau_n},Y_{T\wedge \tau_n})\le&f(t\wedge \tau_n,\hat{C}_{t \wedge \tau_n},Y_{t \wedge \tau_n})+\int_{t\wedge \tau_n}^{T\wedge \tau_n}f_{y}(s,\hat{C}^{\bmtheta}_s,Y_{s})\tilde\sigma_YZ_s^Y \\
\label{eq:ineq}&+\int_{t\wedge \tau_n}^{T\wedge \tau_n}\left(f_{\hat{c}}(s,\hat{C}^{\bmtheta}_s,Y_s)\hat{C}_s\bmtheta_s^\top\bmtSigma_\bmS+f_{y}(s,\hat{C}^{\bmtheta}_s,Y_{s})\bmtSigma_Y\right)\de\bmZ_s^{\bmS},
\end{align}
for every $\bmtheta\in\mathcal{A}$. Thus, taking the conditional expectation on both sides of inequality \eqref{eq:ineq} between $t \wedge \tau_n$ and $T \wedge \tau_n$, leads to $\mathbb{E}[f(T\wedge \tau_n,\hat{C}^{\bmtheta}_{T \wedge \tau_n},Y_{T \wedge \tau_n})]\le \mathbb{E}\left[f(t\wedge \tau_n,\hat{C}^{\bm{\theta}}_{t \wedge \tau_n},Y_{t \wedge \tau_n})\right]$. 
Next we take the limit for $n \to \infty$, and thanks to condition $(i)$ of the theorem \eqref{eq:value_function_full_info_CRRA}, we obtain
\begin{equation}\label{eq:ineq_5}
\mathbb{E}^{t,c,y}\left[\dfrac{1}{1-\delta}\left(\hat{C}^{\bmtheta}_T\right)^{1-\delta}\right]\le f(t,c,y),
\end{equation}
hence $\hat{v}(t,c,y)\le f(t,c,y)$.
Similar computations prove that equality holds in \eqref{eq:ineq_5} when taking the control $\{\bmtheta^\star(t,Y_t)\}_{t\in[0,T]}\in\mathcal{A}$. Consequently, $\hat{v}(t,c,y)=\mathbb{E}^{t,c,y}\left[\frac{1}{1-\delta}(\hat{C}^{\bmtheta^\star}_T)^{1-\delta}\right]=f(t,c,y)$. This concludes the proof.
\subsection{Proof of Theorem \ref{thm:existence_CRRA}}\label{proof_existence_CRRA}
Assume that a classical solution $f$ of the Hamilton Jacobi Bellman equation \eqref{eq:HJB_equation_FULL_INFO} can be rewritten as
\begin{equation}\label{eq:GUESS}
f(t,c,y)=\dfrac{c^{1-\delta}}{1-\delta}\hat{\varphi}(t,y),
\end{equation}
where $\hat{\varphi}(t,y)$ does not depend on $c$ and is a positive function. Then, equation \eqref{eq:HJB_equation_FULL_INFO} can be rewritten as
\begin{equation}
\begin{cases}
\dfrac{\hat\varphi_t(t,y)}{1-\delta}+r\hat\varphi(t,y)+\dfrac{\left(\lambda y+\beta\right)}{1-\delta}\hat\varphi_y(t,y)+\dfrac{1}{2}\dfrac{\sigma_Y^2}{1-\delta}\hat\varphi_{y,y}(t,y)+\displaystyle\max_{\bmtheta\in\R^n}\Psi^{\bmtheta}(t,y)=0,&(t,y)\in[0,T)\times\R,\\
\hat\varphi(T,y)=1,&y\in\R,
\end{cases}
\end{equation}
where
\begin{equation}
\Psi^{\bmtheta}(t,y):=\bmtheta^\top\left(\bma y+\bmb-\bm{r}_{n}\right)\hat\varphi(t,y)-\dfrac{1}{2}\bmtheta^\top\bm{\hat{\Theta}}\bmtheta\hat\varphi(t,y)+\bmtheta^\top\bmtSigma_{\bmS}\bmtSigma_Y^\top\hat\varphi_y(t,y),\quad(t,y)\in[0,T]\times\R,
\end{equation}
with $\bmhatTheta=\left(\bmSigma_{\bmS}\bmSigma_{\bmS}^\top\odot\bme\right)+\delta\bmtSigma_{\bmS}\bmtSigma_{\bmS}^\top$. We let $\bmtheta^\star=\argmax\Psi^{\bmtheta}(t,y)$. Taking the gradient and the Hessian of $\Psi^{\bmtheta}$ with respect to $\bmtheta$, we get that
\begin{align}
\nabla_{\theta}\Psi^{\bmtheta}(t,y)&=\left(\bma y+\bmb-\bm{r}_{n}\right)\hat\varphi(t,y)-\bm{\hat{\Theta}}\bmtheta\hat\varphi(t,y)+\bmtSigma_{\bmS}\bmtSigma_Y^\top\hat\varphi_y(t,y),\\
\text{Hess}_{\bmtheta}\Psi^{\bmtheta}(t,y)&=-\bm{\hat{\Theta}}\hat\varphi(t,y).
\end{align}
Then, setting $\nabla_{\bmtheta}\Psi^{\bmtheta}(t,y)=\mathbf{0}$, provides the candidate optimal strategy $\bmtheta^{\star}(t, y)$ given by
\begin{equation}\label{eq:optimal_feedback_control}
\bmtheta^{\star}(t,y)=\bmhatTheta^{-1}\left(\bma y+\bmb-\bm{r}_{n}\right)+\bmhatTheta^{-1}\bmtSigma_{\bmS}\bmtSigma_Y^\top\dfrac{\hat\varphi_y(t,y)}{\hat\varphi(t,y)}.
\end{equation}
Moreover, since $\text{Hess}_{\bmtheta}\Psi^{\bmtheta}(t,y)$ is negative definite for every $\bmtheta\in\R^n$, this ensure that $\bmtheta^\star(t,y)$ is the well defined global maximiser. Next, we insert the optimal strategy in the HJB equation, yielding to the following PDE
\begin{align}
0=&\hat\varphi_t(t,y)+\left(1-\delta\right)r\hat\varphi(t,y)+\dfrac{1-\delta}{2}\hat\varphi(t,y)\left(\bma y+\bmb-\bm{r}_{n}\right)^\top\bmhatTheta^{-1}\left(\bma y+\bmb-\bm{r}_{n}\right)\\
&+\left(1-\delta\right)\hat\varphi_y(t,y)\bmtSigma_Y\bmtSigma_{\bmS}^\top\bmhatTheta^{-1}\left(\bma y+\bmb-\bm{r}_{n}\right)+\dfrac{1-\delta}{2}\dfrac{\left(\hat\varphi_y(t,y)\right)^2}{\hat\varphi(t,y)}\bmtSigma_Y\bmtSigma_{\bmS}^\top\bmhatTheta^{-1}\bmtSigma_{\bmS}\bmtSigma_Y^\top\\
\label{eq:PDE_CRRA_full}&+\left(\lambda y+\beta\right)\hat\varphi_y(t,y)+\dfrac{1}{2}\sigma_Y^2\hat\varphi_{y,y}(t,y),\quad (t,y)\in[0,T)\times\R,
\end{align}
with terminal condition $\varphi(T,y)=1$, for every $y\in\R$. We conjecture that $\hat{\varphi}(t,y)$ has an exponential affine form, namely 
\begin{equation}\label{eq:GUESS_CRRA_FULL}
\hat{\varphi}(t,y)=\exp\left\lbrace\frac{\hat{f}(t)}{2}y^2+\hat{g}(t)y+\hat{h}(t)\right\rbrace,
\end{equation}
with $\hat{f}(T)=\hat{g}(T)=\hat{h}(T)=0$. Clearly, the terminal value of the function in \eqref{eq:GUESS_CRRA_FULL} satisfies the terminal condition in \eqref{eq:PDE_CRRA_full} and $\hat\varphi(t,y)>0$, for every $(t,y)\in[0,T]\times\R$. Substituting this ansatz in
equation \eqref{eq:PDE_CRRA_full} results in a quadratic equation for $y$. Setting the coefficients of the terms $y^2$, $y$ and the independent term to zero yields that the functions $\hat{f}$, $\hat{g}$ and $\hat{h}$ solve the system of ODEs in equations \eqref{eq:f_hat}, \eqref{eq:g_hat} and \eqref{eq:h_hat}. If $\hat{f}$, $\hat{g}$ and $\hat{h}$  belong to the class $\mathcal{C}^{1}_b([0,T])$, then $f$ in equation \eqref{eq:GUESS} is also regular and solves the HJB equation \eqref{eq:HJB_equation_FULL_INFO}. Finally, by substituting equation \eqref{eq:GUESS_CRRA_FULL} in \eqref{eq:optimal_feedback_control}, we obtain the candidate for the optimal control in equation \eqref{eq:opt_controls_CRRA_full}. This concludes the proof.
\subsection{Proof of Proposition \ref{prop:sufficient_cond_ver_FULL_INFO}}\label{app:A_4}
We will show that $\sup_{t\in[0,T]}\mathbb{E}\left[\hat{v}^{1+\alpha}(t,\hat{C}_t,Y_t)\right]<\infty$, for some $\alpha>0$. Using the form of the function $v$ (cfr. equation \eqref{eq:value_fun_CRRA_full_info}) we get that
\begin{align}
\sup_{t\in[0,T]}\mathbb{E}\left[\hat{v}^{1+\alpha}(t,\hat{C}_t^{\bmtheta},Y_t)\right]=&\sup_{t\in[0,T]}\mathbb{E}\left[\dfrac{1}{1-\delta}(\hat{C}^{\bmtheta}_t)^{(1-\delta)(1+\alpha)}e^{\frac{(1+\alpha)\hat{f}(t)}{2}Y_t^2+(1+\alpha)\hat{g}(t)Y_t+(1+\alpha)\hat{h}(t)}\right]\\
\le&\kappa\sup_{t\in[0,T]}\mathbb{E}\left[(\hat{C}^{\bmtheta}_t)^{(1-\delta)(1+\alpha)}e^{\frac{(1+\alpha)\hat{f}(t)}{2}Y_t^2+(1+\alpha)\hat{g}(t)Y_t}\right]\\
\le&\kappa\left(\sup_{t\in[0,T]}\mathbb{E}\left[(\hat{C}^{\bmtheta}_t)^{d(1-\delta)(1+\alpha)}\right]^{\frac{1}{d}}\right)\left(\sup_{t\in[0,T]}\mathbb{E}\left[e^{\frac{q(1+\alpha)\hat{f}(t)}{2}Y_t^2+q(1+\alpha)\hat{g}(t)Y_t}\right]^{\frac{1}{q}}\right),
\end{align}
for some positive constant $\kappa$ and some $d,q>1$, where in the first inequality we have used that $\hat{h}(\cdot)\in\mathcal{C}^1_b([0,T])$, and in the second comes from applying Hölder's inequality. The first expectation is finite because of admissibility of the strategy (see the second condition of Definition \ref{defn:G_admissible_strategies_theta}). The second expectation is finite because the process $Y_t$ is Gaussian. Hence,
$$\mathbb{E}\left[e^{\frac{q(1+\alpha)\hat{f}(t)}{2}Y_t^2+q(1+\alpha)\hat{g}(t)Y_t}\right]<\infty,$$
for every $t\in[0,T]$ if and only if $1-q(1+\alpha)\hat{f}(t)\mbox{Var}[Y_t]>0$, where $\mbox{Var}[Y_t]=P_0e^{2\lambda t}+V_\infty(1-e^{2\lambda t})$, with $V_\infty=-\sigma_Y/2\lambda$. To show that $1-q(1+\alpha)\hat{f}(t)\mbox{Var}[Y_t]>0$ for every $t\in[0,T]$, we need to distinguish between two cases. If $\delta\in\mathcal{P}\cap(1,+\infty)$, $\hat{f}(t)$ is strictly negative and increasing for every $t\in[0,T]$, guaranteeing that $1-q(1+\alpha)\hat{f}(t)\mbox{Var}[Y_t]>0$. If $\delta\in\mathcal{P}\cap(0,1)$, $\hat{f}(t)$ is positive and decreasing in $[0,T]$, implying that $\hat{f}(t)<\hat{f}(0)$ for every $t\in[0,T]$. If $P_0>V_\infty$ (respectively,  $P_0\le V_\infty$), $\mbox{Var}(Y_t)$ is decreasing (respectively,  increasing) meaning that $P_0\le\mbox{Var}[Y_t]\le \mbox{Var}[Y_T]$ (respectively,  $\mbox{Var}[Y_T]\le\mbox{Var}[Y_t]<P_0$). This means that $\hat{f}(t)\mbox{Var}[Y_t]<\hat{f}(0)\max\left\lbrace P_0,\mbox{Var}[Y_T]\right\rbrace$, or equivalently, $1-q(1+\alpha)\hat{f}(t)\mbox{Var}[Y_t]>1-q(1+\alpha)\hat{f}(0)\max\left\lbrace P_0,\mbox{Var}[Y_T]\right\rbrace$, for every $t\in[0,T]$. Then the result follows from equation \eqref{eq:cond_1} and concludes the proof.
\subsection{Proof of Proposition \ref{prop:sufficient_cond_admissibility_FULL_INFO}}\label{app:A_5}
First, we discuss the first condition of Definition \ref{defn:G_admissible_strategies_theta}. For the $\mathbb{G}$-predictable process $\bmtheta^\star$ given by \eqref{eq:opt_controls_CRRA_full}, it holds that \begin{align}
\mathbb{E}&\left[\int_0^T|Y_s|\|\bm{\theta}^\star_s\|_1+\|\bm{\theta}^\star_s\|_2^2\de s\right]\\
=&\mathbb{E}\left[\int_0^T|Y_s|\|\mathbf{\hat\Theta}^{-1}\left(\mathbf{a}Y_s+\mathbf{b}-\mathbf{r}_{n}\right)+\mathbf{\hat\Theta}^{-1}\bm{\tilde\Sigma}_{\mathbf{S}}\bm{\tilde\Sigma}_Y^\top\left(\hat{f}(s)Y_s+\hat{g}(s)\right)\|_1\de s\right]\\
&+\mathbb{E}\left[\int_0^T\|\mathbf{\hat\Theta}^{-1}\left(\mathbf{a}Y_s+\mathbf{b}-\mathbf{r}_{n}\right)+\mathbf{\hat\Theta}^{-1}\bm{\tilde\Sigma}_{\mathbf{S}}\bm{\tilde\Sigma}_Y^\top\left(\hat{f}(s)Y_s+\hat{g}(s)\right)\|_2^2\de s\right]\\
\le&\mathbb{E}\left[\int_0^T|Y_s|\|\mathbf{\hat\Theta}^{-1}\left(\mathbf{a}Y_s+\mathbf{b}-\mathbf{r}_{n}\right)\|_1+|Y_s|\|\mathbf{\hat\Theta}^{-1}\bm{\tilde\Sigma}_{\mathbf{S}}\bm{\tilde\Sigma}_Y^\top\left(\hat{f}(s)Y_s+\hat{g}(s)\right)\|_1\de s\right]\\
&+\mathbb{E}\left[\int_0^T\left(\|\mathbf{\hat\Theta}^{-1}\left(\mathbf{a}Y_s+\mathbf{b}-\mathbf{r}_{n}\right)\|_2+\|\mathbf{\hat\Theta}^{-1}\bm{\tilde\Sigma}_{\mathbf{S}}\bm{\tilde\Sigma}_Y^\top\left(\hat{f}(s)Y_s+\hat{g}(s)\right)\|_2\right)^2\de s\right]\\
\le&\mathbb{E}\bigg[\int_0^TY_s^2\|\mathbf{\hat\Theta}^{-1}\mathbf{a}\|_1+|Y_s|\|\mathbf{\hat\Theta}^{-1}\left(\mathbf{b}-\mathbf{r}_{n}\right)\|_1+Y_s^2|\hat{f}(s)|\|\mathbf{\hat\Theta}^{-1}\bm{\tilde\Sigma}_{\mathbf{S}}\bm{\tilde\Sigma}_Y^\top\|_1\\
&+|Y_s||\hat{g}(s)|\|\mathbf{\hat\Theta}^{-1}\bm{\tilde\Sigma}_{\mathbf{S}}\bm{\tilde\Sigma}_Y^\top\|_1\de s\bigg]+4\mathbb{E}\bigg[\int_0^TY^2_s\|\mathbf{\hat\Theta}^{-1}\mathbf{a}\|^2_2+\|\mathbf{\hat\Theta}^{-1}\left(\mathbf{b}-\mathbf{r}_{n}\right)\|_2^2\\
&+Y_s^2\hat{f}^2(s)\|\mathbf{\hat\Theta}^{-1}\bm{\tilde\Sigma}_{\mathbf{S}}\bm{\tilde\Sigma}_Y^\top\|_2^2+\hat{g}^2(s)\|\mathbf{\hat\Theta}^{-1}\bm{\tilde\Sigma}_{\mathbf{S}}\bm{\tilde\Sigma}_Y^\top\|_2^2\de s\bigg]\\
\le&\eta_1+\eta_2\mathbb{E}\left[\int_0^T|Y_s|+Y_s^2\de s\right]<\infty
\end{align}
for some positive constant $\eta_1$ and $\eta_2$. The first inequality follows by applying the triangle inequality to the $l_1$ and $l_2$ norms, then using the Cauchy–Schwarz inequality on the second term to bound the square of the sum by the sum of squares, and finally using the positive homogeneity of norms to factor out scalar terms.
The second inequality follows by applying the same arguments as the first. The third inequality holds because $\hat f(t),\,\hat g(t)\in\mathcal{C}^1_b([0,T])$ and the last inequality comes from the fact that $Y$ is a Gaussian random variable, which implies that it has finite moments of all orders. We now discuss the second condition of Definition \ref{defn:G_admissible_strategies_theta}. We would like to show that $$\sup_{t\in[0,T]}\mathbb{E}\left[(\hat{C}^{\bmtheta^\star}_t)^{d(1-\delta)(1+\alpha)}\right]<\infty,$$ for some $\alpha>0$ and $d>1$. Using the explicit solution of equation \eqref{eq:carbon_penalised_cushion_FULL_INFO}, i.e
\begin{equation}
\hat{C}^{\bm{\theta}}_t=\hat{C}_0^{\bm{\theta}}\exp\left\lbrace\int_0^t\left[r+\bm{\theta}_u^\top\left(\mathbf{a}Y_u+\mathbf{b}-\mathbf{r}_n\right)-\frac{1}{2}\bm{\theta}_u^\top\mathbf{\hat{\Theta}}\bm{\theta}_u\right]\de u+\int_0^t\bm{\theta}_u^\top\mathbf{\tilde\Sigma}_{\mathbf{S}}\de\mathbf{Z}^{\mathbf{S}}_u\right\rbrace,
\end{equation}
we get that 
\begin{align}
\sup_{t\in[0,T]}&\mathbb{E}\left[(\hat{C}^{\bm{\theta}^\star}_t)^{d(1-\delta)(1+\alpha)}\right]\\
=&\sup_{t\in[0,T]}(\hat{C}_0^{\bm{\theta}^\star}e^{rt})^{d(1-\delta)(1+\alpha)}\mathbb{E}\left[e^{d(1-\delta)(1+\alpha)\int_0^t[(\bm{\theta}^\star_u)^\top(\mathbf{a}Y_u+\mathbf{b}-\mathbf{r}_n)-\frac{1}{2}\bm{\theta}^{\star,\top}_u\mathbf{\hat{\Theta}}\bm{\theta}^\star_u]\de u+d(1-\delta)(1+\alpha)\int_0^t\bm{\theta}^{\star,\top}_u\mathbf{\tilde\Sigma}_{\mathbf{S}}\de\mathbf{Z}^{\mathbf{S}}_u}\right]\\
\le&\dfrac{(\hat{C}^{\bm{\theta}^\star}_0)^{d(1-\delta)(1+\alpha)}}{2}\left(\sup_{t\in[0,T]}e^{rd(1-\delta)(1+\alpha)t}\mathbb{E}\left[e^{2d(1-\delta)(1+\alpha)\int_0^t\bm{\theta}^{\star,\top}_u\left(\mathbf{a}Y_u+\mathbf{b}-\mathbf{r}_n\right)\de u}\right.\right.\\
&\left.\left.e^{-d(1-\delta)(1+\alpha)\int_0^t\bm{\theta}^{\star,\top}_u\mathbf{\hat{\Theta}}\bm{\theta}^\star_u\de u}\right]+\sup_{t\in[0,T]}e^{rd(1-\delta)(1+\alpha)t}\mathbb{E}\left[e^{2d(1-\delta)(1+\alpha)\int_0^t\bm{\theta}_u^{\star,\top}\mathbf{\tilde\Sigma}_{\mathbf{S}}\de\mathbf{Z}^{\mathbf{S}}_u}\right]\right)\\
\le&\dfrac{\kappa}{4}\left(\sup_{t\in[0,T]}\mathbb{E}\left[e^{4d(1-\delta)(1+\alpha)\int_0^t\bm{\theta}^{\star,\top}_u\left(\mathbf{a}Y_u+\mathbf{b}-\mathbf{r}_n\right)\de u}\right]+\sup_{t\in[0,T]}\mathbb{E}\left[e^{-2d(1-\delta)(1+\alpha)\int_0^t\bm{\theta}^{\star,\top}_u\mathbf{\hat{\Theta}}\bm{\theta}^\star_u\de u}\right]\right.\\
&\left.+2\sup_{t\in[0,T]}\mathbb{E}\left[e^{2d(1-\delta)(1+\alpha)\int_0^t\bm{\theta}_u^{\star,\top}\mathbf{\tilde\Sigma}_{\mathbf{S}}\de\mathbf{Z}^{\mathbf{S}}_u}\right]\right)\\
=&\dfrac{\kappa}{4}\left(\sup_{t\in[0,T]}\mathbb{E}\left[e^{4d(1-\delta)(1+\alpha)\int_0^t\bm{\theta}^{\star,\top}_u\left(\mathbf{a}Y_u+\mathbf{b}-\mathbf{r}_n\right)\de u}\right]+\sup_{t\in[0,T]}\mathbb{E}\left[e^{-2d(1-\delta)(1+\alpha)\int_0^t\bm{\theta}^{\star,\top}_u\mathbf{\hat{\Theta}}\bm{\theta}^\star_u\de u}\right]\right.\\
&\label{eq:relation_1}\left.+2\sup_{t\in[0,T]}\mathbb{E}\left[e^{2d^2(1-\delta)^2(1+\alpha)^2\int_0^t\|\bmtheta_u^{\star,\top}\bmtSigma_{\bmS}\|_2^2\de u}\right]\right),
\end{align}
where $\kappa=(\hat{C}^{\bmtheta^\star}_0e^{rT})^{d(1-\delta)(1+\alpha)}$. In the first and second inequality we have used $ab\le\frac{1}{2}(a^2+b^2)$ for any $a,b\in\R$, and the last equality comes from the fact that $\mathbb{E}\left[e^{2d(1-\delta)(1+\alpha)\int_0^t\bmtheta_u^{\star,\top}\bmtSigma_{\bmS}\de\bmZ^{\bmS}_u}\right]=\mathbb{E}\left[e^{2d^2(1-\delta)^2(1+\alpha)^2\int_0^t\|\bmtheta_u^{\star,\top}\bmtSigma_{\bmS}\|_2^2\de u}\right]$. Now, we need to distinguish between two cases: $\delta\in\mathcal{P}\cap(0,1)$ and $\delta\in\mathcal{P}\cap(1+\infty)$. Assuming that $\delta\in\mathcal{P}\cap(0,1)$, equation \eqref{eq:relation_1} becomes 
\begin{align}
&\sup_{t\in[0,T]}\mathbb{E}\left[(\hat{C}_t^{\bm{\theta}^\star})^{d(1-\delta)(1+\alpha)}\right]\\
&\le\dfrac{\kappa}{4}\left(1+\sup_{t\in[0,T]}\mathbb{E}\left[e^{4d(1-\delta)(1+\alpha)\int_0^t\bm{\theta}^{\star,\top}_u\left(\mathbf{a}Y_u+\mathbf{b}-\mathbf{r}_n\right)\de u}\right]+2\sup_{t\in[0,T]}\mathbb{E}\left[e^{2d^2(1-\delta)^2(1+\alpha)^2\int_0^t\|\bmtheta_u^{\star,\top}\bmtSigma_{\bmS}\|_2^2\de u}\right]\right)\\
&\le\dfrac{\kappa}{4}\left(1+\sup_{t\in[0,T]}\mathbb{E}\left[e^{2d(1-\delta)(1+\alpha)\int_0^t\left(\|\bm{\theta}^\star_u\|_2^2+ \|\mathbf{a}Y_u+\mathbf{b}-\mathbf{r}_n\|^2\right)\de u}\right]+2\sup_{t\in[0,T]}\mathbb{E}\left[e^{2d^2(1-\delta)^2(1+\alpha)^2\int_0^tw\|\bm{\theta}_u^\star\|_2^2\de u}\right]\right)\\
\label{eq:ineq_1}&\le\dfrac{\kappa}{4}\left(1+\mathbb{E}\left[e^{2d(1-\delta)(1+\alpha)\int_0^T\left(\|\bm{\theta}^\star_u\|_2^2+\|\mathbf{a}Y_u+\mathbf{b}-\mathbf{r}_n\|_2^2\right)\de u}\right]+2\mathbb{E}\left[e^{2d^2(1-\delta)^2(1+\alpha)^2w\int_0^T\|\bm{\theta}^\star_u\|_2^2\de u}\right]\right),
\end{align}
where in the second inequality we have used $\bmtheta^{\star,\top}_u\left(\bma Y_u+\bmb-\mathbf{r}_n\right)\le \frac{1}{2}\left(\|\bmtheta^{\star}_u\|_2^2+\|\bma Y_u+\bmb-\mathbf{r}_n\|_2^2\right)$, 
and $\|\bmtheta_u^{\star,\top}\bmtSigma_{\bmS}\|_2^2\leq w\|\bmtheta_u^{\star}\|_2^2$, for every $u\in[0,T]$, with $w$ given by equation \eqref{eq:exp_for_w}. The third inequality follows from the monotonicity of the integrals in $t$, which implies that the supremum over $t\in[0,T]$ is attained at $t=T$. By Jensen's inequality, we get that 
\begin{align}
e^{2d(1-\delta)(1+\alpha)\int_0^T\left(\|\bm{\theta}^\star_u\|_2^2+\|\mathbf{a}Y_u+\mathbf{b}-\mathbf{r}_n\|_2^2\right)\de u}&\le\frac{1}{T}\int_0^Te^{2d(1-\delta)(1+\alpha)T\left(\|\bm{\theta}^\star_u\|_2^2+\|\mathbf{a}Y_u+\mathbf{b}-\mathbf{r}_n\|_2^2\right)}\de u,\\
e^{2d^2(1-\delta)^2(1+\alpha)^2w\int_0^T\|\bm{\theta}^\star_u\|_2^2\de u}&\le\frac{1}{T}\int_0^Te^{2d^2(1-\delta)^2(1+\alpha)^2wT\|\bm{\theta}^\star_u\|_2^2}\de u, 
\end{align}
therefore
\begin{align}
\mathbb{E}\left[e^{2d(1-\delta)(1+\alpha)\int_0^T\left(\|\bm{\theta}^\star_u\|_2^2+\|\mathbf{a}Y_u+\mathbf{b}-\mathbf{r}_n\|_2^2\right)\de u}\right]&\le\frac{1}{T}\int_0^T\mathbb{E}\left[e^{2d(1-\delta)(1+\alpha)T\left(\|\bm{\theta}^\star_u\|_2^2+\|\mathbf{a}Y_u+\mathbf{b}-\mathbf{r}_n\|_2^2\right)}\right]\de u,\\
\mathbb{E}\left[e^{2d^2(1-\delta)^2(1+\alpha)^2w\int_0^T\|\bm{\theta}^\star_u\|_2^2\de u}\right]&\le\dfrac{1}{T}\int_0^T\mathbb{E}\left[e^{2d^2(1-\delta)^2(1+\alpha)^2wT\|\bm{\theta}^\star_u\|_2^2}\right]\de u.
\end{align}
Hence, equation \eqref{eq:ineq_1} becomes
\begin{align}
&\sup_{t\in[0,T]}\mathbb{E}\left[(\hat{C}_t^{\bm{\theta}^\star})^{d(1-\delta)(1+\alpha)}\right]\\
&\le\dfrac{\kappa}{4}\left(1+\dfrac{1}{T}\int_0^T\mathbb{E}\left[e^{2d(1-\delta)(1+\alpha)T\left[\|\bm{\theta}^\star_u\|_2^2+ \|\mathbf{a}Y_u+\mathbf{b}-\mathbf{r}_n\|^2_2\right]}\right]\de u+\dfrac{2}{T}\int_0^T\mathbb{E}\left[e^{2d^2(1-\delta)^2(1+\alpha)^2wT\|\bm{\theta}^\star_u\|_2^2}\right]\de u\right)\\
&\le\dfrac{\kappa}{4}\left(1+\dfrac{3}{T}\int_0^T\mathbb{E}\left[e^{2d(1-\delta)(1+\alpha)\left[\left(1\vee d(1-\delta)(1+\alpha)w\right)T\|\bm{\theta}^\star_u\|_2^2 + T \|\mathbf{a}Y_u+\mathbf{b}-\mathbf{r}_n\|^2_2\right]} \right]\de u\right)\\
&\le\dfrac{\kappa}{4}\left(1+\dfrac{3}{T}\int_0^T\mathbb{E}\left[e^{2d(1-\delta)(1+\alpha)\left[\left(1\vee d(1-\delta)(1+\alpha)w\right)2nT(c_1^2Y_u^2+c_2^2)+2nT(a_M^2Y_u^2+b_M^2)\right]} \right]\de u\right)\\
&\le\dfrac{\kappa}{4}\left(1+\dfrac{3\kappa_1}{T}\int_0^T\mathbb{E}\left[e^{4d(1-\delta)(1+\alpha)nT\left[\left(1\vee d(1-\delta)(1+\alpha)w\right)c_1^2+a_M^2\right]Y_u^2} \right]\de u\right),
\end{align}
for some positive constant $\kappa_1$. In the third inequality we have used
$\max_{i=1,\dots,n}|\theta^{\star}_{i,u}|\le c_1|Y_u|+c_2$ and
$\max_{i=1,\dots,n}|\left(\mathbf{a}Y_u+\mathbf{b}-\mathbf{r}_n\right)_i|\le a_M|Y_u|+b_M$,
for every $u\in[0,T]$, where $c_1$ and $a_m$ are given by equations \eqref{eq:a} and \eqref{eq:exp_for_c_1} respectively, and 
\begin{align}
c_2&=\max_{i=1,\dots,n}\bigg|\left(\mathbf{\hat\Theta}^{-1}\left(\mathbf{b}-\bm{r}_{n}+\bm{\tilde\Sigma}_{\mathbf{S}}\bm{\tilde\Sigma}_Y^\top\sup_{t\in[0,T]}\hat{g}(t)\right)\right)_i\bigg|,\\
b_M&=\max_{i=1,\dots,n}|\left(\mathbf{b}-\mathbf{r}_n\right)_i|.
\end{align}
Consequently,
\begin{align}
\|\bm{\theta}_u\|_2^2&\le 2n\left(c_1^2Y_u^2+c_2^2\right),\\
\|\mathbf{a}Y_u+\mathbf{b}-\mathbf{r}_n\|_2^2&\le n\left(a_M^2|Y_u|+b_M\right)^2\le 2n\left(a_M^2Y_u+b_M^2\right),
\end{align}
for every $u\in[0,T]$. Finally, since $Y_t$ is Gaussian, 
\begin{equation}
\mathbb{E}\left[e^{4d(1-\delta)(1+\alpha)nT\left[\left(1\vee d(1-\delta)(1+\alpha)w\right)c_1^2+a_M^2\right]Y_u^2} \right]<\infty
\end{equation}
for every $u\in[0,T]$ if and only if 
\begin{equation}\label{eq:relation_delta_in_0_1}
1-8d(1-\delta)(1+\alpha)nT\left[\left(1\vee d(1-\delta)(1+\alpha)w\right)c_1^2+a_M^2\right]\mbox{Var}[Y_u]>0.
\end{equation}
Recalling that $\mbox{Var}[Y_u]<\max\left\lbrace P_0,\mbox{Var}[Y_T]\right\rbrace$, we get
\begin{multline}
1-8d(1-\delta)(1+\alpha)nT\left[\left(1\vee d(1-\delta)(1+\alpha)w\right)c_1^2+a_M^2\right]\mbox{Var}[Y_u]>\\1-8d(1-\delta)(1+\alpha)nT\left[\left(1\vee d(1-\delta)(1+\alpha)w\right)c_1^2+a_M^2\right]\max\left\lbrace P_0,\mbox{Var}[Y_T]\right\rbrace,
\end{multline}
for every $u\in[0,T]$. Then, the result then follows from \eqref{eq:cond_1_ammissibility_delta_in_0_1}. Now we discuss the second case where $\delta\in(1,\infty)$. Applying the same steps as in the previous case, equation \eqref{eq:relation_1} becomes
\begin{align}
\sup_{t\in[0,T]}\mathbb{E}\left[(\hat{C}_t^{\bm{\theta}^\star})^{d(1-\delta)(1+\alpha)}\right]\le\dfrac{\kappa_2}{T}\int_0^T\mathbb{E}\left[e^{4d(1-\delta)(1+\alpha)nT\left[\left(-(1+w)\wedge d(1-\delta)(1+\alpha)\tilde{w}\right)c_1^2-a_M^2\right]Y_u^2}\right]\de u.
\end{align}
where $\tilde{w}$ is given by equation \eqref{eq:exp_for_tilde_w}. As in the previous case,
\begin{equation}
\mathbb{E}\left[e^{4d(1-\delta)(1+\alpha)nT\left[\left(-(1+w)\wedge d(1-\delta)(1+\alpha)\tilde{w}\right)c_1^2-a_M^2\right]Y_u^2}\right]<\infty,
\end{equation}
for every $u\in[0,T]$ if and only if 
\begin{equation}
1-8d(1-\delta)(1+\alpha)nT\left[\left(-(1+w)\wedge d(1-\delta)(1+\alpha)\tilde{w}\right)c_1^2-a_M^2\right]\max\left\lbrace P_0,\mbox{Var}[Y_T]\right\rbrace>0.
\end{equation}
The result then follows from \eqref{eq:cond_1_ammissibility_delta_in_1_infty}.
\subsection{Proof of Proposition \ref{prop:originalcontrols}}\label{app:A_6}
The optimal controls $\bmtheta^\star$ are linked to $m^\star$ and $\bmpi^\star$ through the following system
\begin{equation}
\begin{cases}
\begin{aligned}
m_t\bm{\pi}_t&=\bm{\theta}^\star_t,\\
\bm{\pi}_t^\top\mathbf{1}_n&=1,
\end{aligned}
\end{cases}
\end{equation}
whose solutions are given by $(m^\star_t,\,\bm{\pi}_t^\star)=\left(\bm{\theta}_t^{\star,\top}\mathbf{1}_n,\,\frac{\theta^\star_{1,t}}{\bm{\theta}_t^{\star,\top}\mathbf{1}_n},\,\dots, \frac{\theta^\star_{n,t}}{\bm{\theta}_t^{\star,\top}\mathbf{1}_n}\right)$
for every $t\in[0,T]$. This concludes the proof.
\subsection{Proof of Corollary \ref{cor:solution_full_info_log}}\label{app:A_7}
We apply pointwise optimisation to obtain the optimal controls. Computing the expectation in \eqref{eq:LOG_optimisation_problem_full_info}, we get 
\begin{equation}\label{eq:obj_to_max}
\log\left(c\right)+r\left(T-t\right)+\mathbb{E}^{t,y}\left[\int_t^T\bmtheta_s^\top\left(\bma Y_s+\bmb-\mathbf{r}_n\right)\de s\right]-\dfrac{1}{2}\mathbb{E}^{t,y}\left[\int_t^T\bmtheta_s^\top\mathbf{\Theta}\bmtheta_s\de s\right],
\end{equation}
where $\mathbf{\Theta}=\bmtSigma_\bmS\bmtSigma_\bmS^\top+\bmSigma_{\bmS}\bmSigma_{\bmS}^\top\odot\bme$. 
Taking the first order conditions, we obtain the following system of linear equations $\bma Y_t+\bmb-\mathbf{r}_{n}-\mathbf{\Theta}\bmtheta_t=\mathbf{0}_n$, whose solution provide a candidate for the optimal control $\bmtheta^\star(t,y)=\mathbf{\Theta}^{-1}\left(\bma y+\bmb-\mathbf{r}_{n}\right)$. The Hessian matrix $-\mathbf{\Theta}$ is negative definite for every $\bmtheta$, ensuring that $\bmtheta^\star$ is the the unique well-defined maximiser of \eqref{eq:obj_to_max} and hence the optimal control. By inserting $\bmtheta^\star$ into \eqref{eq:obj_to_max}, we obtain a stochastic representation of the value function, namely
\begin{align}
v(t,c,y)=&\log\left(c\right)+\left[r+\dfrac{1}{2}\left(\bmb-\mathbf{r}_{n}\right)^\top\mathbf{\Theta}^{-1}\left(\bmb-\mathbf{r}_n\right)\right]\left(T-t\right)+\dfrac{1}{2}\bma^\top\mathbf{\Theta}^{-1}\bma\mathbb{E}^{t,y}\left[\int_t^TY^2_s\de s\right]\\
\label{eq:value_fun}&+\bma^\top\mathbf{\Theta}^{-1}\left(\bmb-\mathbf{r}_n\right)\mathbb{E}^{t,y}\left[\int_t^TY_s\de s\right].
\end{align}
Since $Y$ is modeled as an OU process, we can explicitly compute $\mathbb{E}^{t,y}\left[\int_t^TY_s\de s\right]$ and $\mathbb{E}^{t,y}\left[\int_t^TY_s^2\de s\right]$, which are given by
\begin{align}
\label{eq:exp_1}\mathbb E^{t,y}\left[\int_t^T Y_s\de s\right]=&y\dfrac{e^{\lambda\left(T-t\right)}-1}{\lambda}+\dfrac{\beta}{\lambda}\left[\frac{e^{\lambda\left(T-t\right)}-1}{\lambda}-\left(T-t\right)\right], \\[6pt]
\label{eq:exp_2}\mathbb E^{t,y}\left[\int_t^T Y_s^2\,ds\right]
=&\left(y+\dfrac{\beta}{\lambda}\right)^{2}\dfrac{e^{2\lambda\left(T-t\right)}-1}{2\lambda}-\frac{2\beta}{\lambda}\left(y+\dfrac{\beta}{\lambda}\right)\frac{e^{\lambda\left(T-t\right)}-1}{\lambda}+\dfrac{\beta^2}{\lambda^2}\left(T-t\right)\\
&+\dfrac{\sigma_Y^2}{2\lambda}\left[\frac{e^{2\lambda\left(T-t\right)}-1}{2\lambda}-\left(T-t\right)\right].
\end{align}
for every $t\in[0,T]$, respectively. By inserting the above expressions into \eqref{eq:value_fun} and rearranging the terms, we obtain the closed-form expression of the value function in equation \eqref{eq:value_fun_full_info_log}. This concludes the proof.
\section{An example involving two uncorrelated assets, independent of the factor process}\label{sect:example}
We consider a simplified setting in which only two stocks, $S_1$ and $S_2$, are traded on the market, representing a green and a brown stock, respectively. Moreover, we assume that $S_1$ and $S_2$ are driven by independent Brownian motions, and are also independent of the factor process $Y$. In this case, it is possible to show that the function $\Delta(x)$ is positive for $x\in(\delta^*,+\infty)$, for some $\delta^*<1$ that can be explicitly computed. In particular, we find that
\begin{equation}
\Delta(x)=\lambda^2-(1-x)\left(\dfrac{a_1^2}{x\sigma_1^2}+\dfrac{a_2^2}{(x+\varepsilon)\sigma_2^2}\right)\sigma_Y^2,\quad x\in(0,+\infty).
\end{equation}
This function is monotonic increasing, concave,  
$
\lim_{\delta\to+\infty}\Delta(\delta)=\lambda^2+\left(\dfrac{a_1^2}{\sigma_1^2}+\dfrac{a_2^2}{\sigma_2^2}\right)\sigma_Y^2>0,
$
for all $a_1,a_2\in\R$ and\footnote{Recall that for $a_1=0$ and  $a_2=0$ there is no factor process $Y$ and in this case $\Delta(x)=\lambda^2>0$, hence trivially, $\mathcal{P}=(0,1)\cup (1, +\infty)$.} 
\begin{equation}
\lim_{\delta\to 0^{+}}\Delta(\delta)=
\begin{cases}
-\infty, &\quad\mbox{if }a_1\neq 0,\\
\lambda^2-\dfrac{a_2^2}{\varepsilon\sigma_2^2}\sigma_Y^2,&\quad\mbox{if }a_1=0\mbox{ and }a_2\neq 0.
\end{cases}
\end{equation}
For $a_1=0$ and $a_2\neq 0$, we distinguish between two cases: 
\begin{itemize}
\item[(i)] if $\lambda^2-\frac{\sigma_Y^2a_2^2}{\varepsilon\sigma_2^2}\geq 0$, then $\Delta(\delta)>0$ for every $\delta\in(0,+\infty)$, hence $\delta^*=0$ and $\mathcal{P}=(0,1)\cup(1,+\infty)$,
\item[(ii)] if $\lambda^2<\frac{\sigma_Y^2a_2^2}{\varepsilon\sigma_2^2}$, then there exists a unique  $\bar \delta(\varepsilon)=\frac{a_2^2\sigma_Y^2-\varepsilon\lambda^2\sigma_2^2}{\lambda^2\sigma_2^2+a_2^2\sigma_Y^2}<1$, which depends on $\varepsilon$ such that $\Delta(\bar \delta)=0$. Hence, setting $\delta^*=\bar \delta \wedge 0$, we get that $\mathcal{P}=(\delta^*,1)\cup (1, +\infty)$. Note that the larger the value of $\varepsilon$, the larger the set of admissible risk aversion parameters.
\end{itemize}
In the case $a_1\neq 0$, $\delta^*$ is the positive solution of the equation 
\begin{equation}
\left[\lambda^2\sigma_1^2\sigma_2^2+\left(a_1^2\sigma_2^2+a_2^2\sigma_1^2\right)\sigma_Y^2\right]\delta^2+\left[\varepsilon\lambda^2\sigma_1^2\sigma_2^2-\left(\left(1-\varepsilon\right)a_1^2\sigma_2^2+a_2^2\sigma_1^2\right)\sigma_Y^2\right]\delta-\varepsilon a_1^2\sigma_2^2\sigma_Y^2=0.
\end{equation}
Note that this solution is still smaller than  $1$ and depends on $\varepsilon$, but it can never become zero or negative. Hence, $\mathcal{P}=(\delta^*, 1)\cup(1, +\infty)\subset (0,1)\cup(1,+\infty)$. This example provides additional insight. Indeed, by comparing the critical value $\delta^\star$ for different values of the penalisation $\varepsilon$, we find that the penalty for brown assets generally enlarges the set of admissible risk aversion parameters, which in turn implies that a lower risk aversion may be allowed for green assets.
\section{Proofs of some technical results of Section \ref{sect:opt_partial_info}}
\subsection{Proof of Theorem \ref{thm:CRRA_case_PARTIAL_INFO}}\label{app:B_1}
To prove the first part of the theorem we replicate the same argument as in the proof of Theorem \ref{thm:existence_CRRA}, with the ansatz 
\begin{equation}\label{eq:partial_info_GUESS}
f(t,c,\gamma)=\dfrac{c^{1-\delta}}{1-\delta}\hat{\psi}(t,\gamma) \,
\text{ and } \,
\hat{\psi}(t,\gamma)=\exp\left\lbrace\dfrac{\bar{f}(t)}{2}\gamma^2+\bar{g}(t)\gamma+\bar{h}(t)\right\rbrace.
\end{equation}

In the second part of the proof we establish 
the relationship between  between the solutions to the ODE systems in the full and partial information settings. In particular, applying equations $(28)$–$(30)$ in \cite{brendle2006portfolio}, we get \eqref{eq:rel_hat_f_bar_f}, \eqref{eq:rel_hat_g_bar_g}, and \eqref{eq:rel_hat_h_bar_h}. Moreover, since $\hat f(t),\,\hat g(t),\,\hat h(t) \in \mathcal{C}_b^{1}([0,T])$ (see Section \ref{sect:opt_problem_full_info}), to show that $\bar f(t)$, $\bar g(t)$, and $\bar h(t)$ belong to the same class of regularity, it suffices to prove that $1 - P(t)\hat f(t) > 0$ for all $t\in[0,T]$. To show $1-\hat f(t)P(t)>0$ for every $t\in[0,T]$, we start by proving that the closed set $\mathcal{T}:=\{t\in[0,T]:1-P(t)\hat{f}(t)=0\}$ is empty. Let us assume by contradiction that it is not empty and let $\bar{t}$ be its maximum. From the boundary condition of $\hat{f}$ we see that $1-P(T)\hat{f}(T)=1$, hence $\bar{t}<T$. Relation in \eqref{eq:rel_hat_f_bar_f} hold in the set $\mathcal{T}^C\cap[0,T]$, where $\mathcal{T}^C$ is the complement of $\mathcal{T}$. Therefore, for any $z>0$ such that $\bar{t}+z<T$, $(1-P(\bar t+z)\hat{f}(\bar t+z))\bar{f}(\bar t+z)=\hat{f}(\bar t+z)$ and, by continuity of all the functions involved in the equality, $(1-P(\bar t)\hat{f}(\bar t))\bar{f}(\bar t)=\hat{f}(\bar t)$. Since $\hat{f}(t)$ is a monotone function (either increasing or decreasing, depending on the parameter $\delta$) and $\hat{f}(T)=0$, then $\hat{f}(\bar{t})=0$, hence $\bar{t}\not\in\mathcal{T}$, which is a contradiction and $\mathcal{T}$ is the empty set. Since $\mathcal{T}$ is empty, $1-P(t)\hat{f}(t)$ is continuous on $[0, T ]$ and $\hat{f}(T ) = 1$, it follows that $1-P(t)\hat{f}(t)>0$ is strictly positive on $[0, T ]$. This concludes the proof.

\subsection{Proof of Proposition \ref{prop:sign_bar_f}}\label{app:B_2}
Since, as shown in Proposition \ref{thm:CRRA_case_PARTIAL_INFO}, $1 - P(t),\hat f(t) > 0$, it follows that $\hat f(t)$ and $\bar f(t)$ must have the same sign (positive if $\delta\in(0,1)\cap\mathcal{P}$ and negative if $\delta\in(1,+\infty)\cap\mathcal{P}$). We now prove that, if $\delta\in\mathcal{P}\cap(0,1)$, $\bar{f}(t)$ is positive strictly decreasing on $[0,T]$. This can be proved by rewriting the ODE in equation \eqref{eq:bar_f} as $\bar{f}_t(t)=G(\bar{f}(t))$, where
\begin{align}
G(t):=&-\left[\left(1-\delta\right)\mathbf{\bar{P}}(t)\mathbf{\hat{\Theta}}^{-1}\left(\mathbf{\bar{P}}(t)\right)^\top+\mathbf{\bar{P}}(t)\left(\mathbf{\tilde\Sigma}_{\mathbf{S}}\mathbf{\tilde\Sigma}_{\mathbf{S}}^\top\right)^{-1}\left(\mathbf{\bar{P}}(t)\right)^\top\right]t^2\\
&-2\left[\left(1-\delta\right)\mathbf{\bar{P}}(t)\mathbf{\hat{\Theta}}^{-1}\mathbf{a}+\lambda\right]t-\left(1-\delta\right)\mathbf{a}^\top\mathbf{\hat{\Theta}}^{-1}\mathbf{a},\quad t\in[0,T].
\end{align}
The boundary condition
implies that $\bar{f}(T)= 0$ and that $G(0)=-\left(1-\delta\right)\bma^\top\bmhatTheta^{-1}\bma<0$. Then, the function $G(t)$ must be negative
on $[0,T]$ for the boundary condition to be satisfied, and hence $\bar{f}(t)$ is strictly decreasing. The
same argument applies to the case $\delta\in(1+\infty)\cap\mathcal{P}$, where the derivative of $\bar{f}(t)$ is positive, and hence $\bar{f}(t)$ is strictly increasing. This concludes the proof.
\subsection{Proof of Proposition \ref{prop:sufficient_condition_for_ver_thm_partial_info}}\label{app:B_4}
The proof replicates the lines of that of Proposition \ref{prop:sufficient_cond_ver_FULL_INFO}. Also in this case, we will show that $\sup_{t\in[0,T]}\mathbb{E}\left[\hat V^{1+\alpha}(t,\hat{C}_t,\Gamma_t)\right]<\infty$, for some $\alpha>0$. Using the form of the function $\hat V$ (cfr. equation \eqref{eq:VAL_F_P_INFO}) we get that
\begin{align}
\sup_{t\in[0,T]}\mathbb{E}\left[\hat{V}^{1+\alpha}(t,\hat{C}_t^{\bmtheta},\Gamma_t)\right]=&\sup_{t\in[0,T]} \mathbb{E}\left[\dfrac{1}{1-\delta}(\hat{C}^{\bmtheta}_t)^{(1-\delta)(1+\alpha)}e^{\frac{(1+\alpha)\bar{f}(t)}{2}\Gamma_t^2+(1+\alpha)\bar{g}(t)\Gamma_t+(1+\alpha)\bar{h}(t)}\right]\\
\le&\tilde\kappa\sup_{t\in[0,T]}\mathbb{E}\left[(\hat{C}^{\bmtheta}_t)^{(1-\delta)(1+\alpha)}e^{\frac{(1+\alpha)\bar{f}(t)}{2}\Gamma_t^2+(1+\alpha)\bar{g}(t)\Gamma_t}\right]\\
\le&\tilde\kappa\left(\sup_{t\in[0,T]}\mathbb{E}\left[(\hat{C}^{\bmtheta}_t)^{d(1-\delta)(1+\alpha)}\right]^{\frac{1}{d}}\right)\left(\sup_{t\in[0,T]}\mathbb{E}\left[e^{\frac{q(1+\alpha)\bar{f}(t)}{2}\Gamma_t^2+q(1+\alpha)\bar{g}(t)\Gamma_t}\right]^{\frac{1}{q}}\right),
\end{align}
for some positive constant $\kappa$ and some $d,\,q>1$, where the first inequality comes from the fact that $\bar{h}(\cdot)\in\mathcal{C}^1_b([0,T])$, and the second follows from Hölder's inequality. The first expectation is finite because of admissibility of the strategy (see the second condition of Definition \ref{defn:F_admissible_strategies_theta}). The second expectation, instead, is finite because the process $\Gamma$ is Gaussian. Hence,
\begin{equation}\label{eq:RELATION}
\mathbb{E}\left[e^{\frac{q(1+\alpha)\bar{f}(t)}{2}\Gamma_t^2+q(1+\alpha)\bar{g}(t)\Gamma_t}\right]<\infty,
\end{equation}
for every $t\in[0,T]$ if and only if $1-q(1+\alpha)\bar{f}(t)\mbox{Var}[\Gamma_t]>0$, where $\mbox{Var}[\Gamma_t]=\mbox{Var}[Y_t]-P(t)$. If $\delta\in\mathcal{P}\cap(1,+\infty)$, from Proposition \ref{prop:sign_bar_f}, $\bar{f}(t)<0$. Hence, $1-q(1+\alpha)\bar{f}(t)\mbox{Var}[\Gamma_t]>0$ and \eqref{eq:RELATION} is satisfied. If $\delta\in\mathcal{P}\cap(0,1)$, still from Proposition \ref{prop:sign_bar_f}, $\bar{f}(t)$ is strictly positive and decreasing for every $[0,T]$. Therefore, 
\begin{align}
1-q(1+\alpha)\bar{f}(t)\mbox{Var}[\Gamma_t]&>1-q(1+\alpha)\bar{f}(0)\mbox{Var}[Y_t]\\
&\ge1-q(1+\alpha)\dfrac{\hat{f}(0)}{1-P(0)\hat{f}(0)}\max\left\lbrace P_0,\mbox{Var}[Y_T]\right\rbrace,
\end{align}
where the first inequality follows from the monotonicity of $\bar f$ and from the fact that $\mbox{Var}[\Gamma_t]<\mbox{Var}[Y_t]$. The second inequality follows from $\bar{f}(t)=\frac{\hat{f}(t)}{1-P(t)\hat{f}(t)}$ for every $t\in[0,T]$, and from the fact that $\mbox{Var}[Y_t]$ is always lower than its maximum value on $[0,T]$, that is $P_0$ or $\mbox{Var}[Y_T]$ depending on $\mbox{Var}[Y_t]$ being decreasing or increasing. Then the result follows immediately from \eqref{eq:cond_3}.
\subsection{Proof of Corollary \ref{cor:log_case_PARTIAL_INFO}}\label{app:B_3}
The proof follows the same lines as that of Corollary \ref{cor:solution_full_info_log}. Computing $\mathbb{E}^{t,c,\gamma}\left[\log(\hat C^{\bm{\theta}}_T)\right]$,
we get 
\begin{equation}
\log\left(c\right)+r\left(T-t\right)+\mathbb{E}^{t,\gamma}\left[\int_t^T\bmtheta_s^\top\left(\bma\Gamma_s+\bmb-\mathbf{r}_n\right)\de s\right]-\dfrac{1}{2}\mathbb{E}^{t,\gamma}\left[\int_t^T\bmtheta_s^\top\mathbf{\Theta}\bmtheta_s\de s\right],\\
\end{equation}
Taking the first order conditions, we obtain the following system of linear equations $$\bma\Gamma_t+\bmb-\mathbf{r}_{n}-\mathbf{\Theta}\bmtheta_t=\mathbf{0}_n,$$ whose solution $\bar{\bmtheta}^\star$ is given in equation \eqref{eq:sol_partial_info_log}. The Hessian matrix is given by $-\bm{\Theta}$ and it is negative definite for every $\bmtheta\in\R^n$. This ensure that $\bar{\bmtheta}^\star$ is the unique well-defined maximiser and hence the optimal controls.  Inserting the optimal strategy into the value function, we get
\begin{align}
\tilde V(t,c,\gamma)=&\log\left(c\right)+\left[r+\dfrac{1}{2}\left(\bmb-\mathbf{r}_{n}\right)^\top\mathbf{\Theta}^{-1}\left(\bmb-\mathbf{r}_n\right)\right]\left(T-t\right)+\dfrac{1}{2}\bma^\top\mathbf{\Theta}^{-1}\bma\mathbb{E}^{t,\gamma}\left[\int_t^T\Gamma^2_s\de s\right]\\
\label{eq:tilde_V}&+\bma^\top\mathbf{\Theta}^{-1}\left(\bmb-\mathbf{r}_n\right)\mathbb{E}^{t,\gamma}\left[\int_t^T\Gamma_s\de s\right].
\end{align}
Since $\Gamma_t$ is a Gaussian process, we can easily compute which are given by
\begin{align}\label{eq:Exp_1_Partial_Info}
\mathbb{E}^{t,\gamma}\left[\int_t^T\Gamma_s\de s\right]=&\left(\gamma+\dfrac{\beta}{\lambda}\right)\dfrac{e^{\lambda(T-t)}-1}{\lambda}-\dfrac{\beta}{\lambda}(T-t),\\[6pt]
\mathbb{E}^{t,\gamma}\left[\int_t^T\Gamma_s^2\de s\right]=&\left(\gamma+\dfrac{\beta}{\lambda}\right)^{2}\dfrac{e^{2\lambda(T-t)}-1}{2\lambda}
-2\left(\gamma+\dfrac{\beta}{\lambda}\right)\left(\dfrac{\beta}{\lambda}\right)\dfrac{e^{\lambda(T-t)}-1}{\lambda}+\left(\dfrac{\beta}{\lambda}\right)^{2}(T-t)\\
\label{eq:Exp_2_Partial_Info}&+\int_t^T\mathbf{\bar{P}}(u)\left(\mathbf{\tilde\Sigma}_{\mathbf{S}}\mathbf{\tilde\Sigma}_{\mathbf{S}}^\top\right)^{-1}\mathbf{\bar{P}}(u)^\top\dfrac{e^{2\lambda(T-u)}-1}{2\lambda}\de u,
\end{align}
for every $t\in[0,T]$, respectively.  By inserting equations \eqref{eq:Exp_1_Partial_Info} and \eqref{eq:Exp_2_Partial_Info} into \eqref{eq:tilde_V} and rearranging the
terms, we obtain the value function $\tilde V$ in equation equation \eqref{eq:PARTIAL_INFO_VALUE_FUN_LOG_CASE}. This concludes the proof.
\subsection{Proof of Proposition \ref{prop:LOSS_UTILITY_CRRA}}\label{app:B_5}
Applying the definition of $L_t$ for the CRRA utility case, we get that 
\begin{equation}\label{eq:L}
L_t=\mathbb{E}^c\left[\hat{v}(t,\hat{C}_t,Y_t)-\hat{V}(t,\hat{C},\Gamma_t)|\mathcal{F}_t\right]=\dfrac{c^{1-\delta}}{1-\delta}\left(\mathbb{E}\left[e^{\frac{\hat{f}(t)}{2}Y_t+\hat{g}(t)Y_t+\hat{h}(t)}|\mathcal{F}_t\right]-e^{\frac{\bar{f}(t)}{2}Y_t+\bar{g}(t)Y_t+\bar{h}(t)}\right).
\end{equation}
Since, $Y_t|\mathcal{F}_t\sim N(\Gamma_t, P_t)$, then 
\begin{equation}\label{eq:expected_value_normal}
\mathbb{E}\left[e^{\frac{\hat{f}(t)}{2}Y_t+\hat{g}(t)Y_t+\hat{h}(t)}|\mathcal{F}_t\right]=\frac{e^{\hat{h}(t)+\frac{1}{2}\frac{\hat{g}^2(t)P(t)}{1-\hat{f}(t)P(t)}+\frac{\hat{g}(t)\Gamma_t}{1-\hat{f}(t)P(t)}+\frac{1}{2}\frac{\hat{f}(t)\Gamma^2_t}{1-\hat{f}(t)P(t)}}}{\sqrt{1-P(t)\hat{f}(t)}},\quad t\in[0,T].
\end{equation}
It is worth noting that the above expression is well defined because $1-P(t)\hat{f}(t)>0$ for every $t\in[0,T]$ (see Theorem  \ref{thm:CRRA_case_PARTIAL_INFO}). Inserting \eqref{eq:expected_value_normal} into \eqref{eq:L} and using \eqref{eq:rel_hat_f_bar_f}, \eqref{eq:rel_hat_g_bar_g}, and \eqref{eq:rel_hat_h_bar_h} yields the result in equation \eqref{eq:LOSS_UTILITY_CRRA_CASE}. Applying the definition of efficiency (see equation \eqref{eq:efficiency_definition}), $\xi$ can be found by solving the following equation:
\begin{equation}
\dfrac{\zeta^{1-\delta}}{1-\delta}\mathbb{E}\left[e^{\frac{\hat{f}(0)}{2}Y^2_0+\hat{g}(0)Y_0+\hat{h}(0)}|\mathcal{F}_0\right]=\dfrac{1}{1-\delta}e^{\frac{\bar{f}(0)}{2}\Gamma^2_0+\bar{g}(0)\Gamma_0+\bar{h}(0)}.
\end{equation}
Using \eqref{eq:expected_value_normal} together with \eqref{eq:rel_hat_f_bar_f}, \eqref{eq:rel_hat_g_bar_g}, and \eqref{eq:rel_hat_h_bar_h}, the foregoing equation can be rewritten as
\begin{equation}
\zeta^{1-\delta}e^{\frac{1-\delta}{2}\int_0^T\frac{P(s)}{1-P(s)\hat{f}(s)}\left[\bm{\tilde\Sigma}_Y\bm{\tilde\Sigma}_{\mathbf{S}}^\top\hat{f}(s)+\mathbf{a}^\top\right]\mathbf{\hat\Theta}^{-1}\left[\bm{\tilde\Sigma}_Y\bm{\tilde\Sigma}_{\mathbf{S}}^\top\hat{f}(s)+\mathbf{a}^\top\right]^\top\de s}=1.
\end{equation}
Hence, the result in \eqref{eq:efficiency_CRRA_CASE} immediately follows. This concludes the proof.
\subsection{Proof of Corollary \ref{cor:loss_utility_eff_log}}\label{app:B_6}
Applying the definition of $L_t$ for the logarithmic utility case, noticing that $\mathbb{E}[Y_t^2|\mathcal{F}_t]=\Gamma_t^2+P(t)$, and using equation \eqref{eq:rel_log_case}, we obtain
\begin{equation}\label{eq:log_loss_proof}
L_t=\frac{\mathbf{a}^\top\mathbf{\Theta}^{-1}\mathbf{a}}{4\lambda}\left[\phi(t)P(t)+\sigma_Y^2\left(\frac{\phi(t)}{2\lambda}-\left(T-t\right)\right)-
\int_{t}^{T}\mathbf{\bar{P}}(s)\left(\mathbf{\tilde\Sigma}_{\mathbf{S}}\mathbf{\tilde\Sigma}_{\mathbf{S}}^\top\right)^{-1}\mathbf{\bar{P}}(s)^\top\phi(s)\de s\right],
\end{equation}
where $\phi(t):=e^{2\lambda(T-t)}-1$, for every $t\in[0,T]$. Since,
\begin{equation}
\int_t^T\phi(s)\de P(s)=\int_t^T\phi(s)\left(2\lambda P(s)+\sigma_Y^2\right)-\int_t^T\mathbf{\bar P}(s)\left(\mathbf{\tilde\Sigma}_{\mathbf{S}}\mathbf{\tilde\Sigma}_{\mathbf{S}}^\top\right)^{-1}\mathbf{\bar P}(s)^\top\phi(s)\de s,\quad t\in[0,T],
\end{equation}
we get that
\begin{align}\label{eq:relation_log}
\int_t^T\mathbf{\bar P}(s)\left(\mathbf{\tilde\Sigma}_{\mathbf{S}}\mathbf{\tilde\Sigma}_{\mathbf{S}}^\top\right)^{-1}\mathbf{\bar P}(s)^\top\phi(s)\de s=\int_t^T\phi(s)\left(2\lambda P(s)+\sigma_Y^2\right)\de s-\int_t^T\phi(s)\de P(s),
\end{align}
for every $t\in[0,T]$. Inserting equation \eqref{eq:relation_log} into \eqref{eq:log_loss_proof}, we obtain the expression for the loss of utility stated in \eqref{eq:LOSS_OF_UTILITY_LOG__CASE}. Applying the definition of efficiency (see equation \eqref{eq:efficiency_definition}), $\xi$ can be found by solving the following equation:
\begin{equation}\label{eq:EFF_COMP}
\mathbb{E}\left[v(0,\zeta,Y_0)-\tilde{V}(0,1,\Gamma_0)|\mathcal{F}_0\right]=0.
\end{equation}
Following the same steps used to derive the loss of utility, equation \eqref{eq:EFF_COMP} simplifies to
\begin{equation}
\log(\zeta)+\frac{\mathbf{a}^\top\mathbf{\Theta}^{-1}\mathbf{a}}{2}\int_0^TP(s)\de s=0.
\end{equation}
Hence, the result in \eqref{eq:EFFICIENCY_LOG__CASE} immediately follows. This concludes the proof.

\section{{Sensitivity analysis}}\label{sect:sensitivity_analysis}
{We perform a sensitivity analysis of the main results in Section \ref{sect:num_experiments} by varying, one at a time, the parameters $a_2$, $a_4$, $\sigma_2$, $\lambda$, and $\beta$, which, based on the evidence in Figure \ref{fig:TORNADO_PLOTS}, are the key drivers of the variability in the expected value and the variance of the distribution of the carbon-penalised PPI strategy at maturity. We denote by $\bar{E}^\star_{g}=\{\bar{E}^\star_{g,t}\}_{t\in[0,T]}$ and $\bar{E}^\star_{b}=\{\bar{E}^\star_{b,t}\}_{t\in[0,T]}$ the total exposures to green
and brown risky assets, given by
\begin{equation}
\bar{E}^\star_{g,t}=\sum_{i=1}^k\bar{E}^\star_{i,t},\quad
\bar{E}^\star_{b,t}=\sum_{i=k+1}^n\bar{E}^\star_{i,t},\quad t\in[0,T],
\end{equation}
respectively, with $\bar{E}^\star_{i,t}$ defined by equation \eqref{eq:exposures}. First, we conduct a static sensitivity analysis at time $t=0$, followed by a dynamic one. Figure \ref{fig:Static_Sensitivities_exposures} reports the total exposures to green and brown stocks at $t=0$ as a function of the five aforementioned parameters, for three different levels of carbon aversion. Figure \ref{fig:opt_mult_comp_risky_reference_port_t_0} reports the associated optimal multiplier together with the strategy’s exposure to the risk-free asset $S^0$ at $t=0$, for the same set of parameter variations. The parameter ranges coincide with those used to produce the tornado plot in Figure \ref{fig:TORNADO_PLOTS}.\\
To streamline the exposition, we discuss the sensitivity results in terms of increases in the key parameters relative to the benchmark specification (marked by the red dots in Figures \ref{fig:Static_Sensitivities_exposures} and \ref{fig:opt_mult_comp_risky_reference_port_t_0}). The outcomes associated with decreases over the same parameter ranges are read directly from the same plots and admit the opposite interpretation; we therefore do not comment on them separately.\\
Increasing the coefficient $a_2$ raises the expected return of the second green asset, making it more attractive. Equivalently, $\bar\pi^\star_2$ gets larger and this reduces the optimal weights $\bar\pi^\star_3$ and $\bar\pi^\star_4$ in brown stocks. This effect is more prominent for positive values of carbon‐aversion parameter $\varepsilon$, as a combined effect of a larger rate of return of green assets and penalization of brown assets. Consequently, the optimal strategy becomes more exposed to green assets and more conservative, i.e. with a larger exposure to the risk-free asset $S^0$, as discussed in Section \ref{sect:num_experiments}.\\
A higher value of the parameter $a_4$, instead, induces the opposite behaviour (Figure \ref{fig:Static_Sensitivities_exposures}): the rate of return of the asset $S^4$ increases, hence the exposure to this asset rises, and that of green stocks $S^1$ and $S^2$ decreases.  The  the optimal multiplier at $t=0$ (see Figure \ref{fig:opt_mult_comp_risky_reference_port_t_0}) increases, meaning that the strategy also scales up the overall exposure to the risky reference portfolio, and the exposure to the risk-free asset correspondingly decreases. This tilt toward brown is dampened when $\varepsilon$ is non zero.\\
Turning to sensitivity with respect to $\sigma_2$, in the same spirit of previous comments, a rise in riskiness of the green stock $S^2$ generates a lower exposure to this asset (hence lower green exposure overall). Moreover, the optimal multiplier at $t=0$ gets smaller, and the exposure to the risk-free asset $S^0$ increases. Again, as $\varepsilon$ increases, the strategy becomes generally greener and more conservative.\\
The tornado plot \ref{fig:TORNADO_PLOTS} also shows larger variations in the expected portfolio value and variance for parameters of the latent factor, which we further investigate. An increase in $\lambda$ makes the latent factor $Y$, hence the current filtered level, more persistent. In financial terms, good times (respectively bad times) are expected to last longer. The multiplier $\bar m_0^\star$ increases and the exposure to the risk-free asset $S^0$ decreases. Total exposures to both green and brown stocks also increase with $\lambda$. However, the usual effect on brown stocks applies for $\varepsilon>0$. As for the long-run mean of the latent factor $\beta$, when it increases, so does the expected excess returns for assets with positive factor loadings, improving the risk–return trade-off of the risky opportunity set. As a result, the strategy optimally scales up exposure to the risky reference portfolio: $\bar m_0^\star$ increases and the allocation to $S^0$ falls. Hence, both green and brown total exposures rise with $\beta$.\\
We now turn to the dynamic analysis. Figures \ref{fig:DYNAMIC_MULTIPLIER_A_2} - \ref{fig:DYNAMIC_EXPOSURES_BETA} show the trajectories of the optimal multiplier and the total exposures to green and brown assets over the investment horizon under full and partial information. For each key parameter identified by the tornado plots, three scenarios are considered: a $20$\% decrease relative to the benchmark (left panels), the benchmark values reported in Table \ref{tab:model_params} (central panels), and a $20$\% increase (right panels).}\\
{The dynamic results confirm the qualitative insights of the static analysis. To avoid repetitions we only focus on parameters in the latent factor dynamics. When $\lambda$ increases or the long-run mean $\beta$ rises, the strategy optimally scales up risky exposure, raising the multiplier and total exposures over time.\\
Across all parameter configurations in Figure\ref{fig:DYNAMIC_MULTIPLIER_A_2} - \ref{fig:DYNAMIC_EXPOSURES_BETA}, the full and partial information cases remain remarkably close: the trajectories of the optimal multiplier and the corresponding total green and brown exposures largely overlap, with the partially informed strategy displaying slightly smoother dynamics, consistent with the filtering of the latent factor. Overall, this indicates that the sensitivity patterns are robust to the information set, and that the comparison between full and partial information is qualitatively stable across the considered perturbations.}
\begin{figure}[H]
\centering
\includegraphics[width=0.32\linewidth]{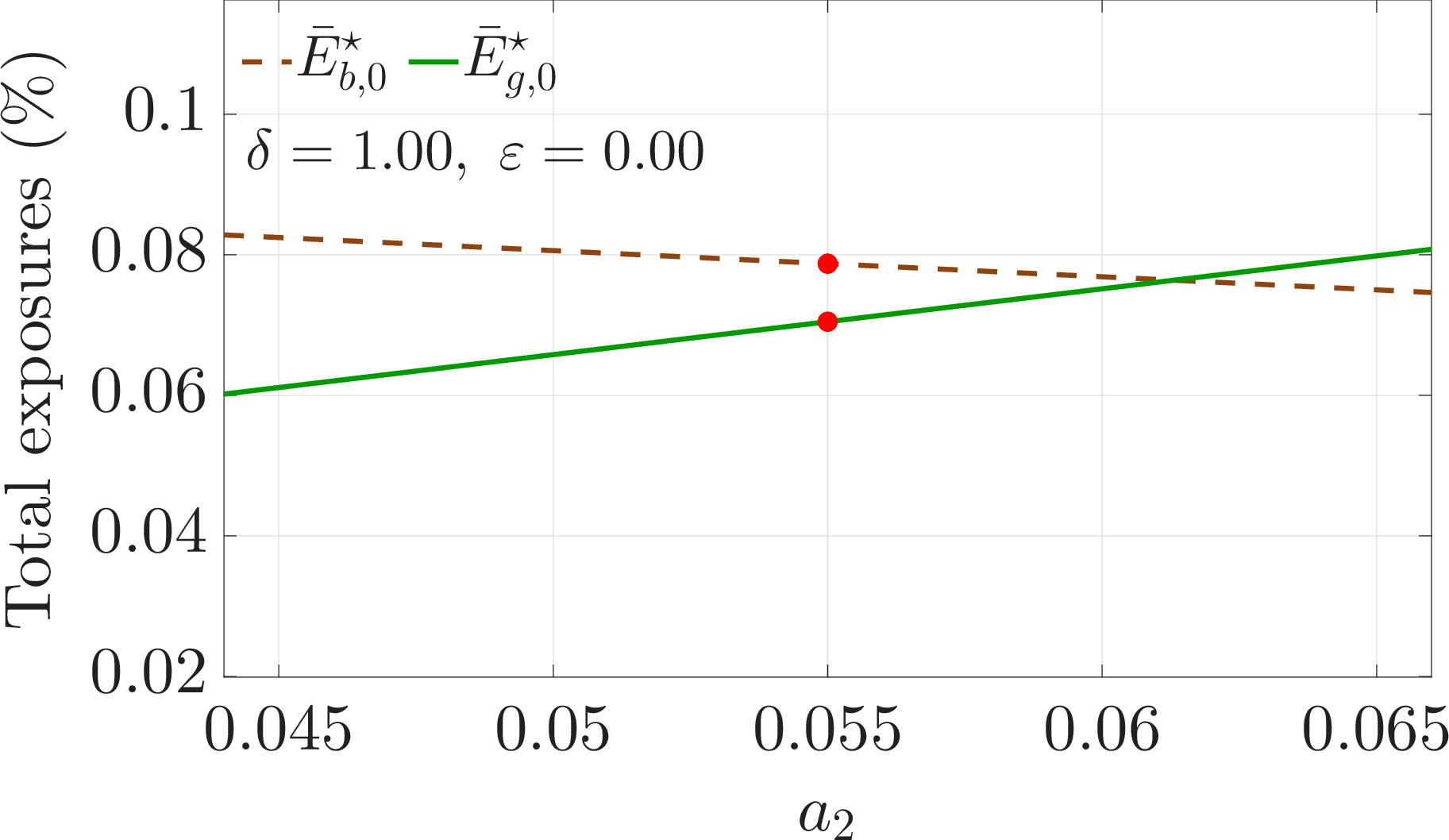} 
\vspace{.3cm}
\hfill 
\includegraphics[width=0.32\linewidth]{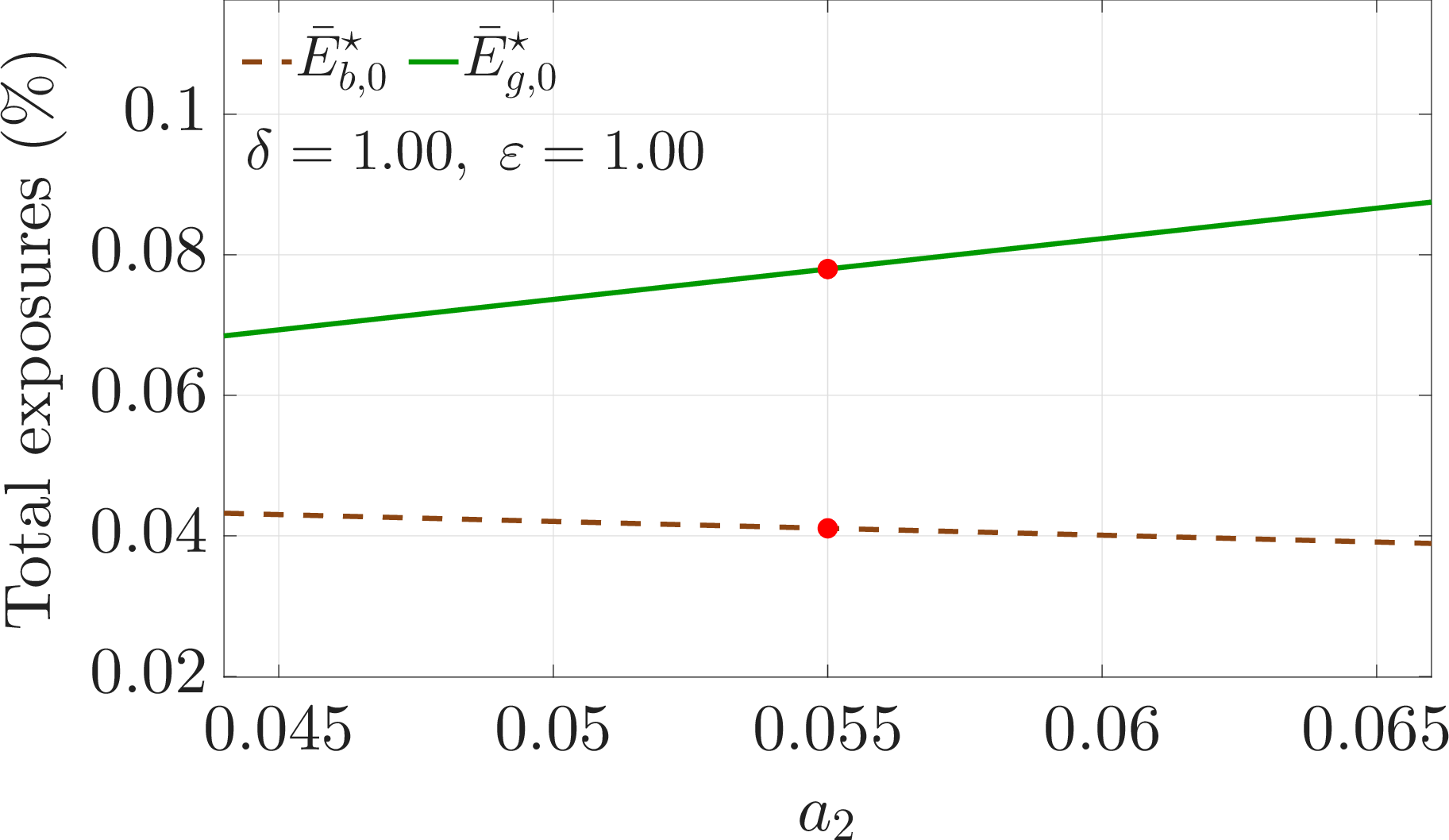}
\vspace{.3cm}
\hfill
\includegraphics[width=0.32\linewidth]{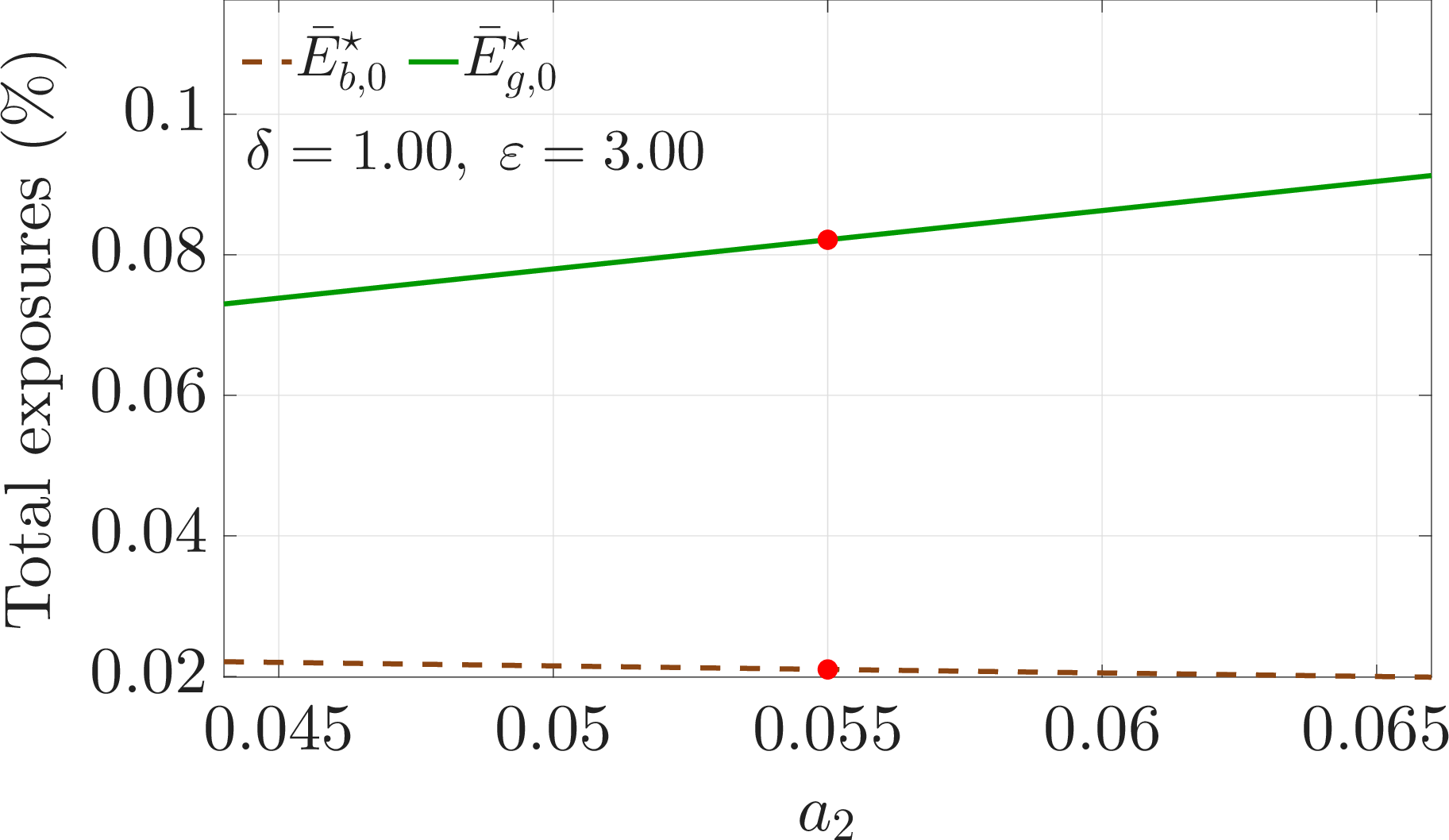}

\includegraphics[width=0.32\linewidth]{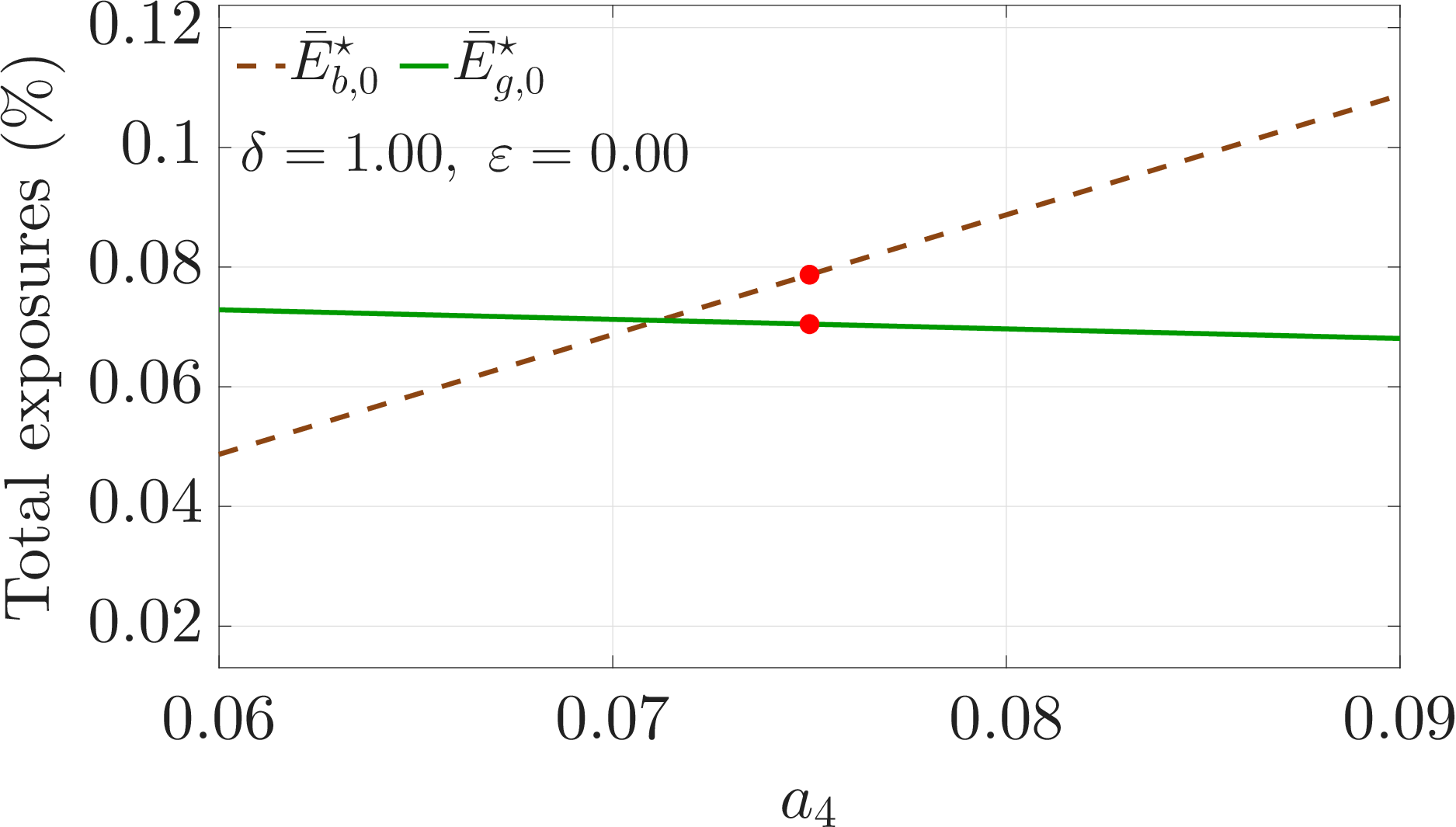} 
\vspace{.3cm}
\hfill 
\includegraphics[width=0.32\linewidth]{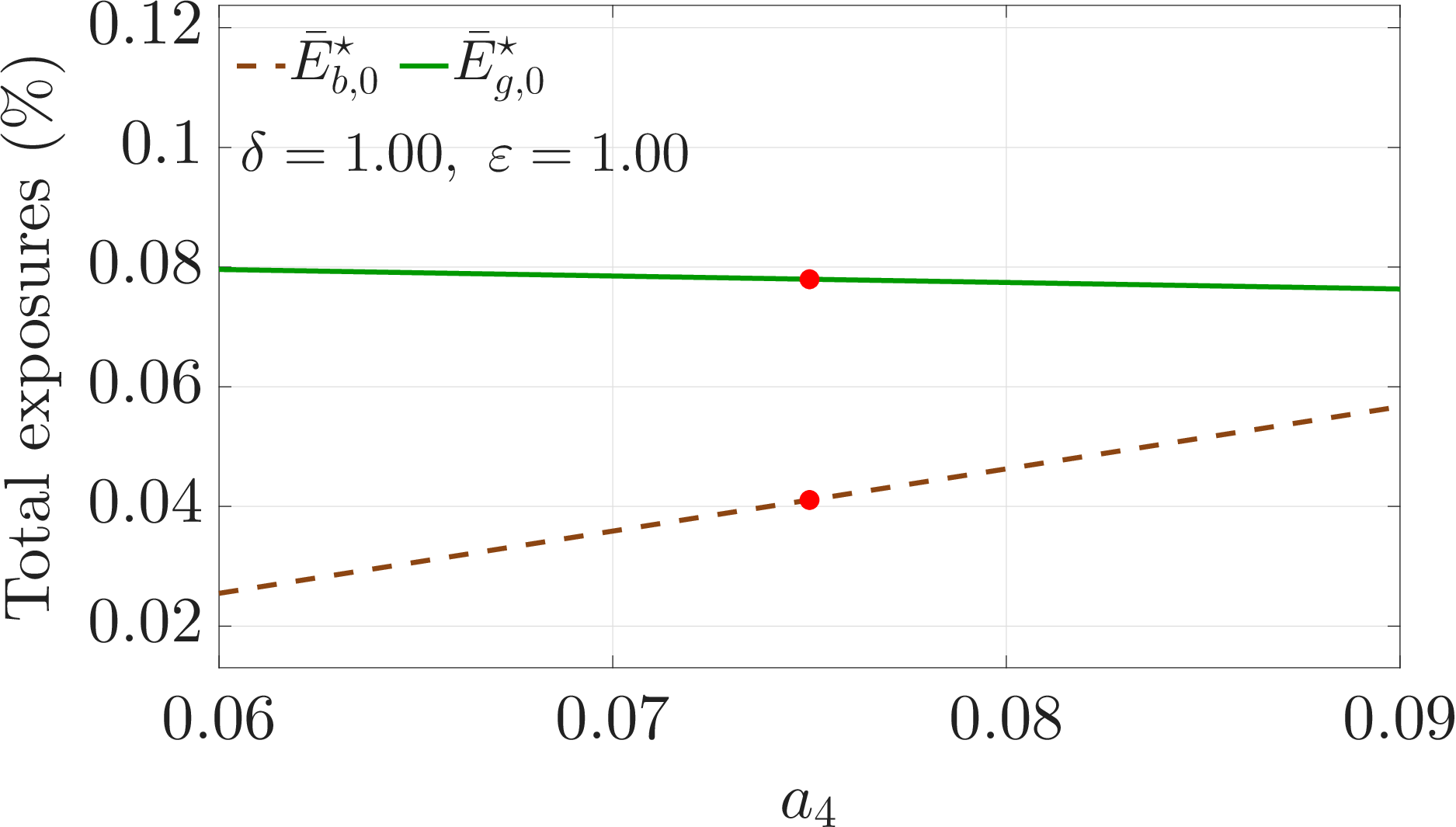}
\vspace{.3cm}
\hfill
\includegraphics[width=0.32\linewidth]{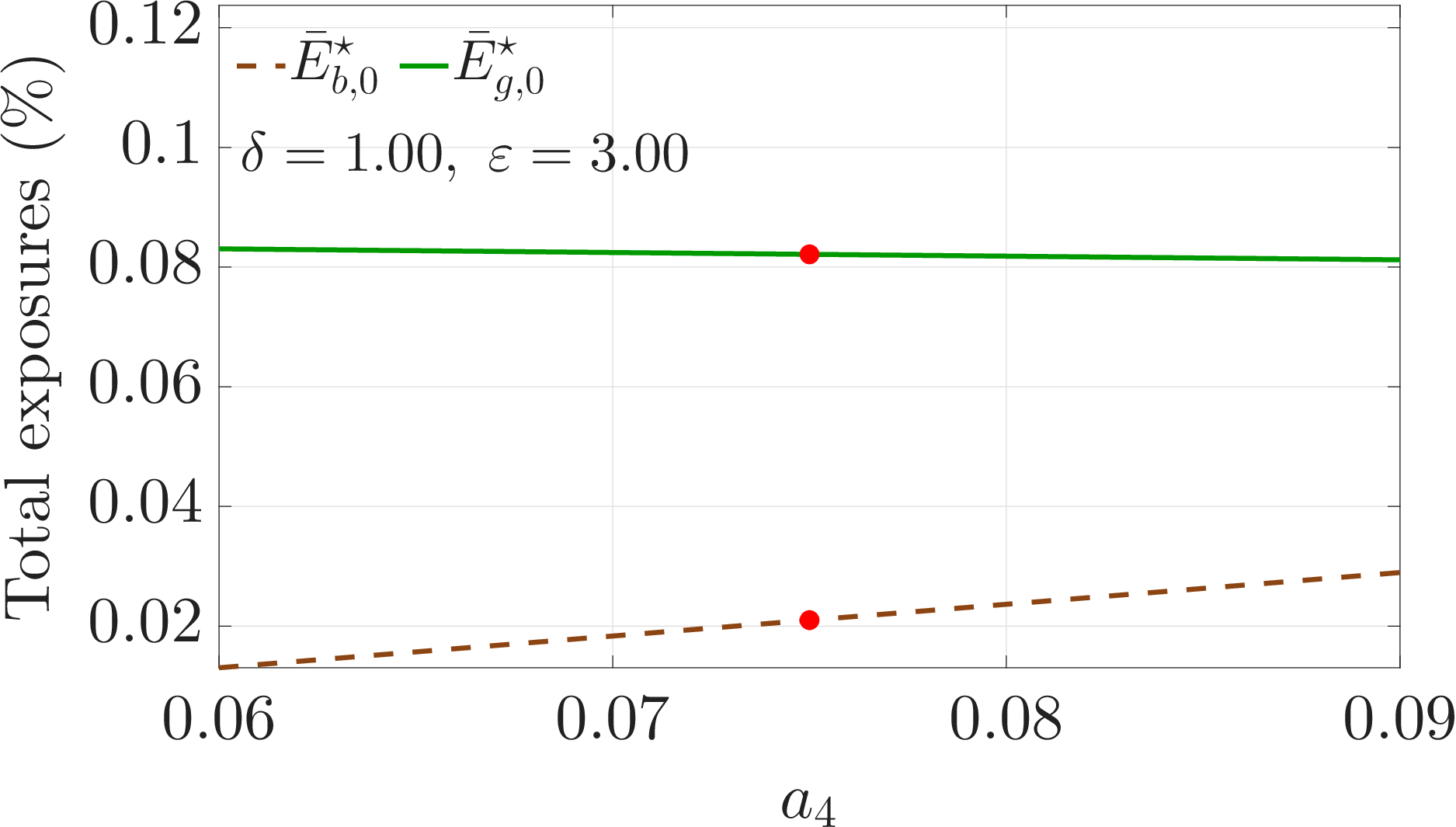}

\includegraphics[width=0.32\linewidth]{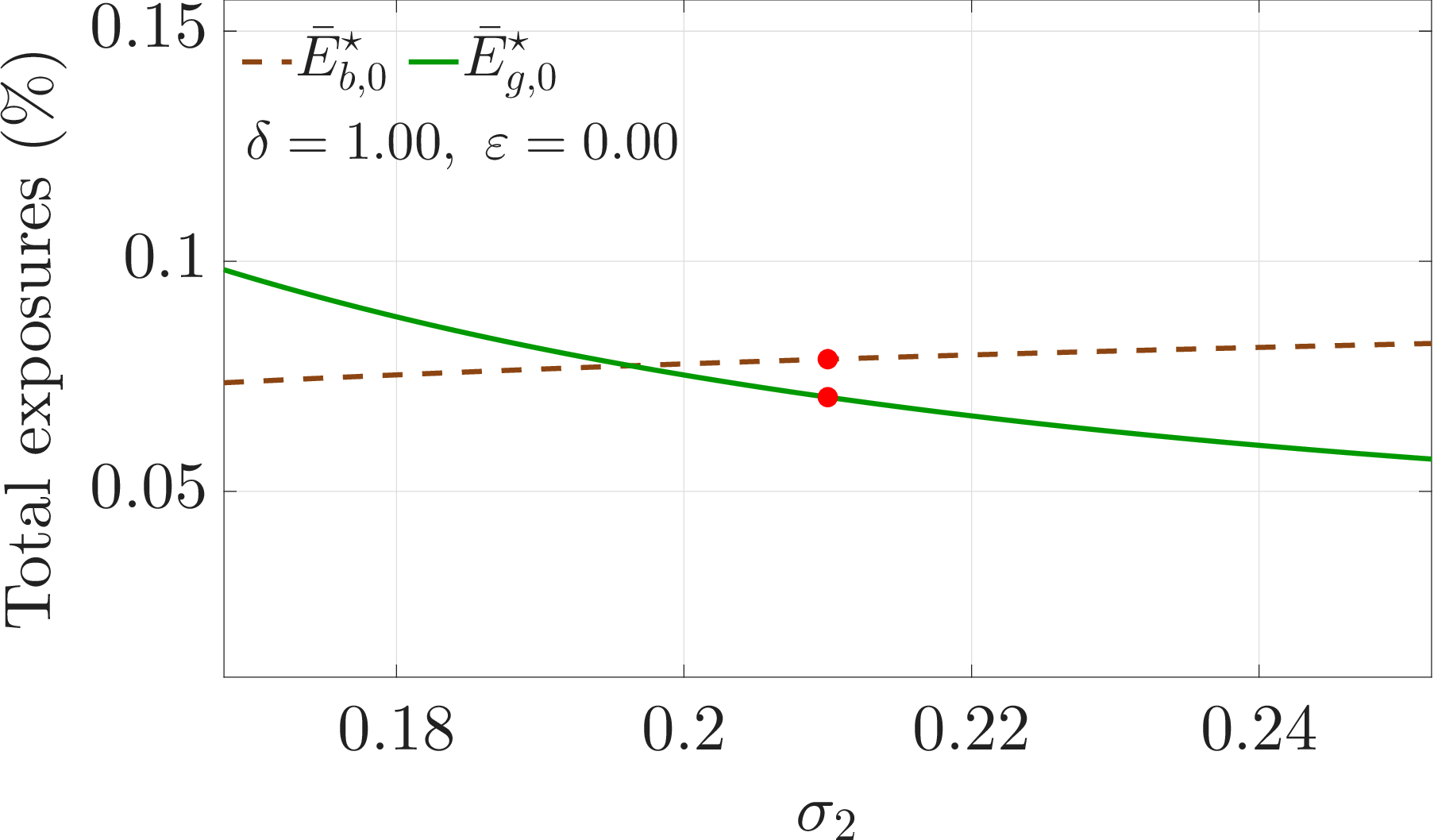} 
\vspace{.3cm}
\hfill 
\includegraphics[width=0.32\linewidth]{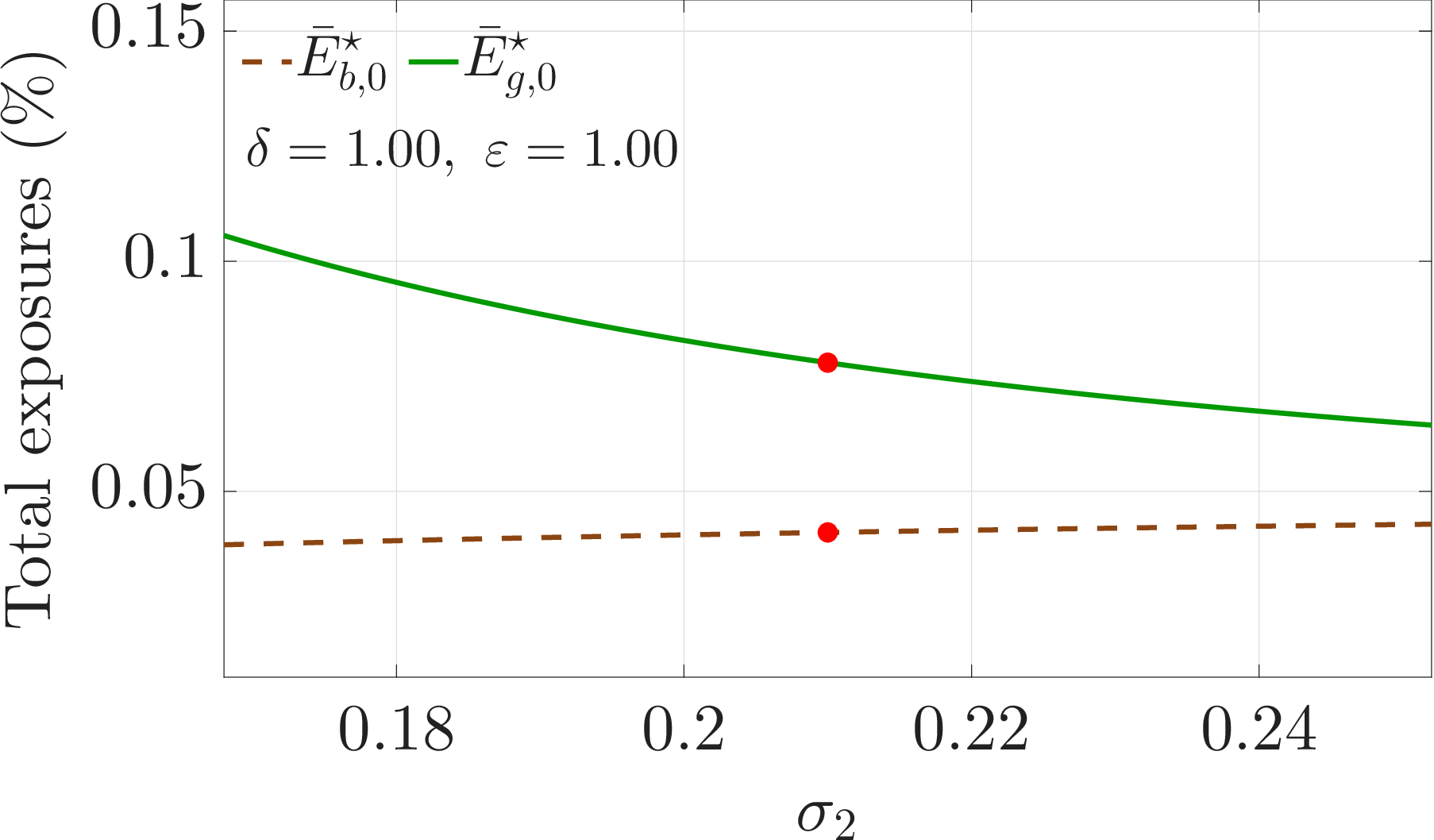}
\vspace{.3cm}
\hfill
\includegraphics[width=0.32\linewidth]{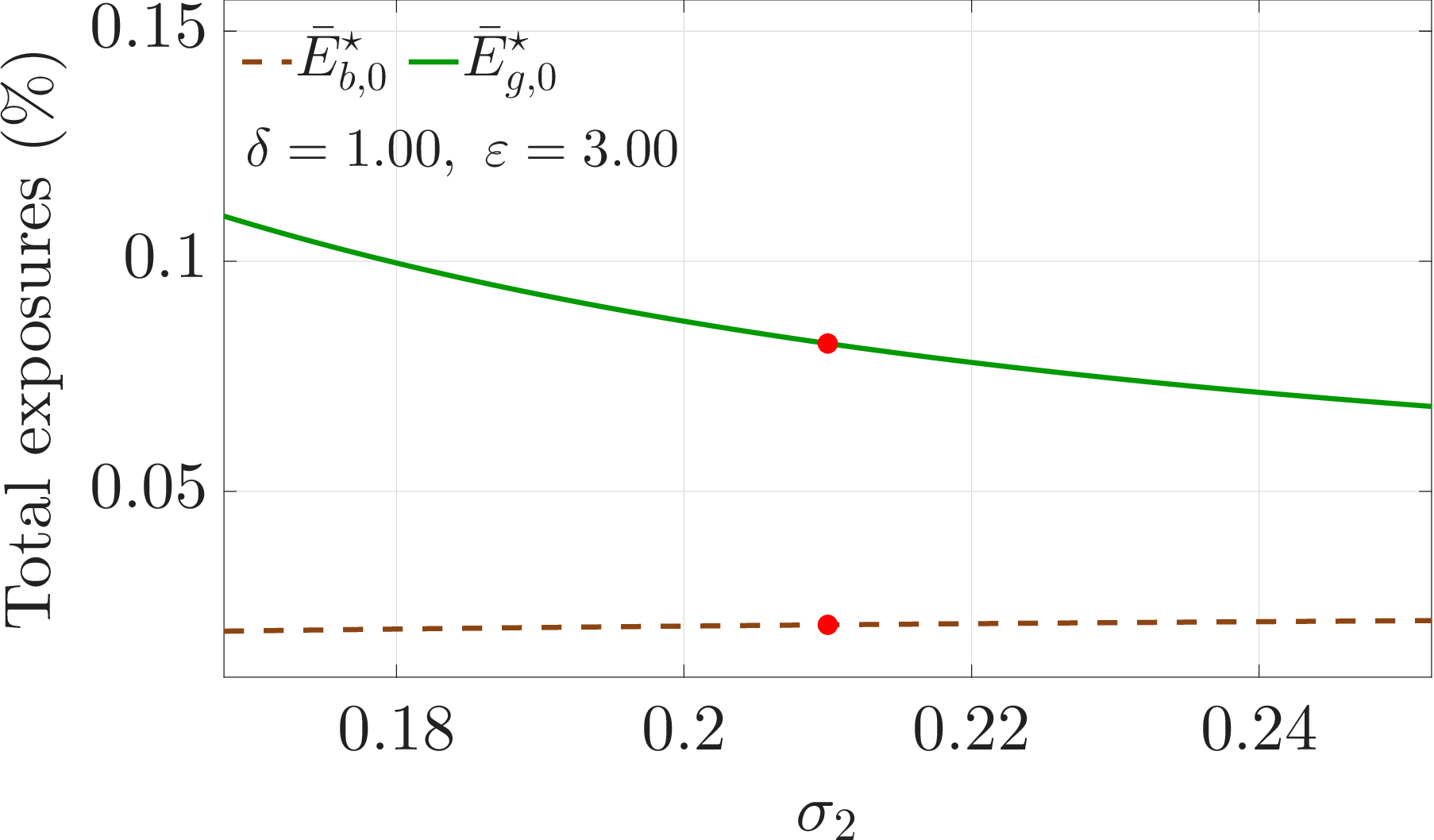}

\includegraphics[width=0.32\linewidth]{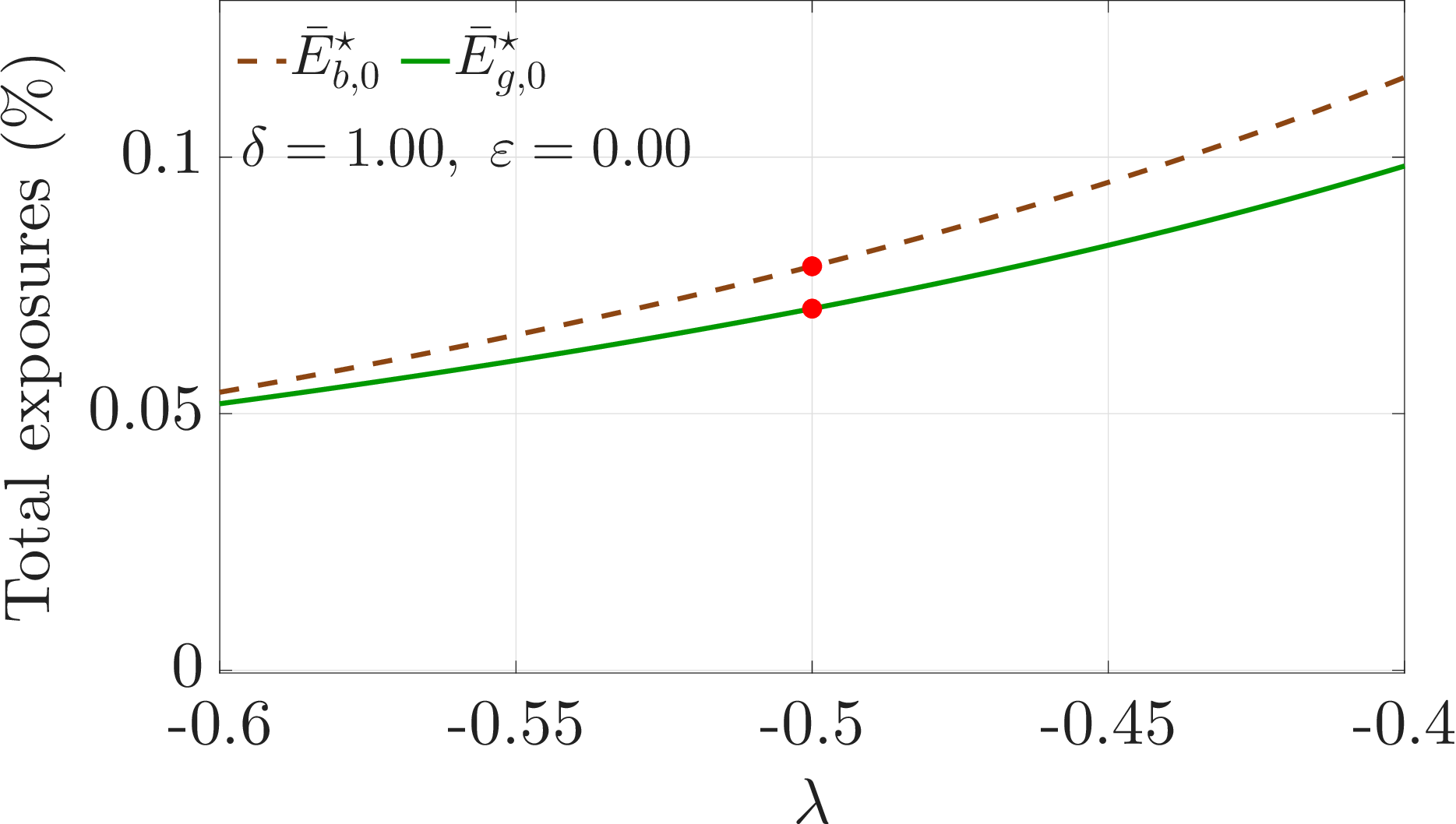} 
\vspace{.3cm}
\hfill 
\includegraphics[width=0.32\linewidth]{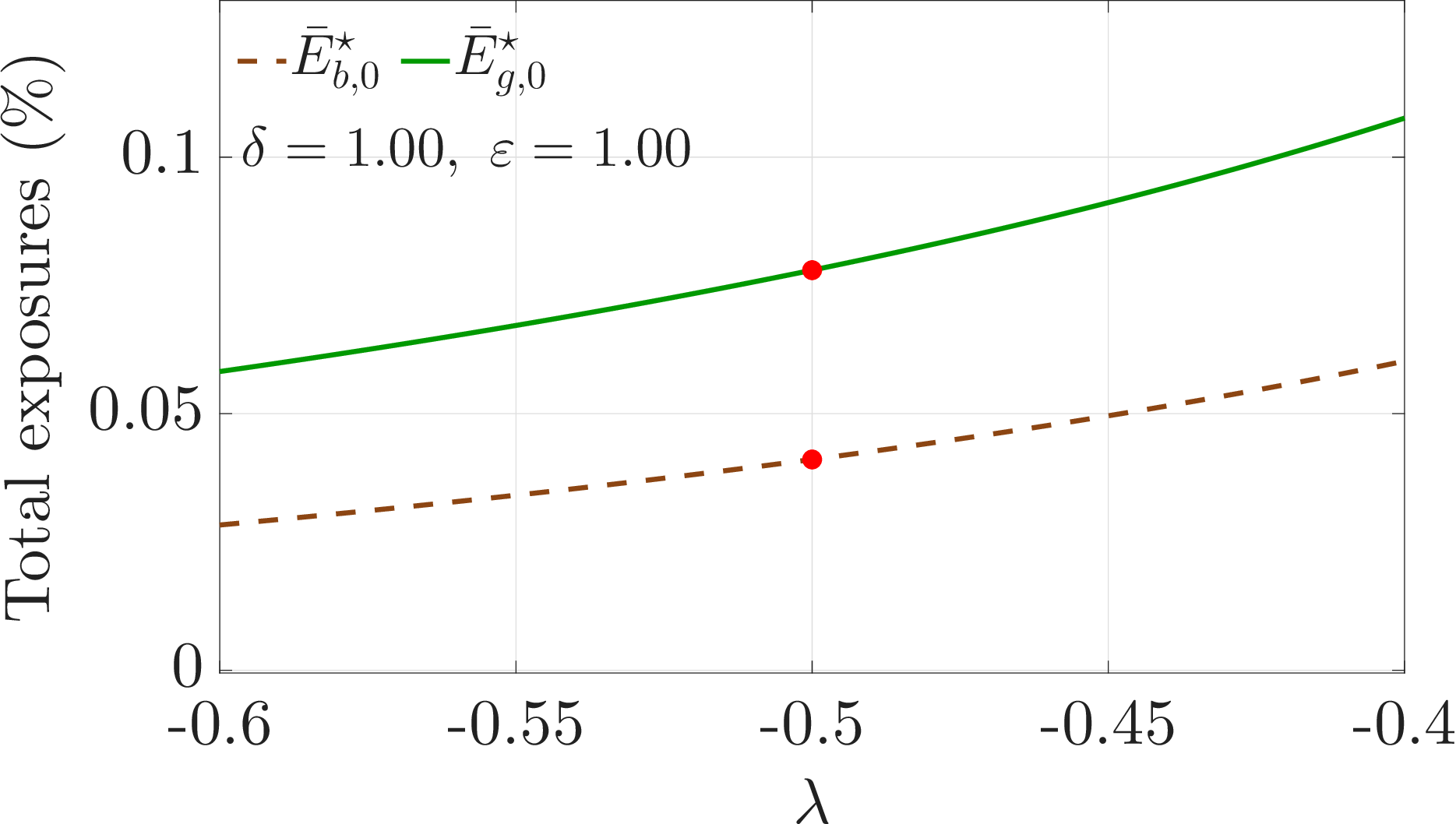}
\vspace{.3cm}
\hfill
\includegraphics[width=0.32\linewidth]{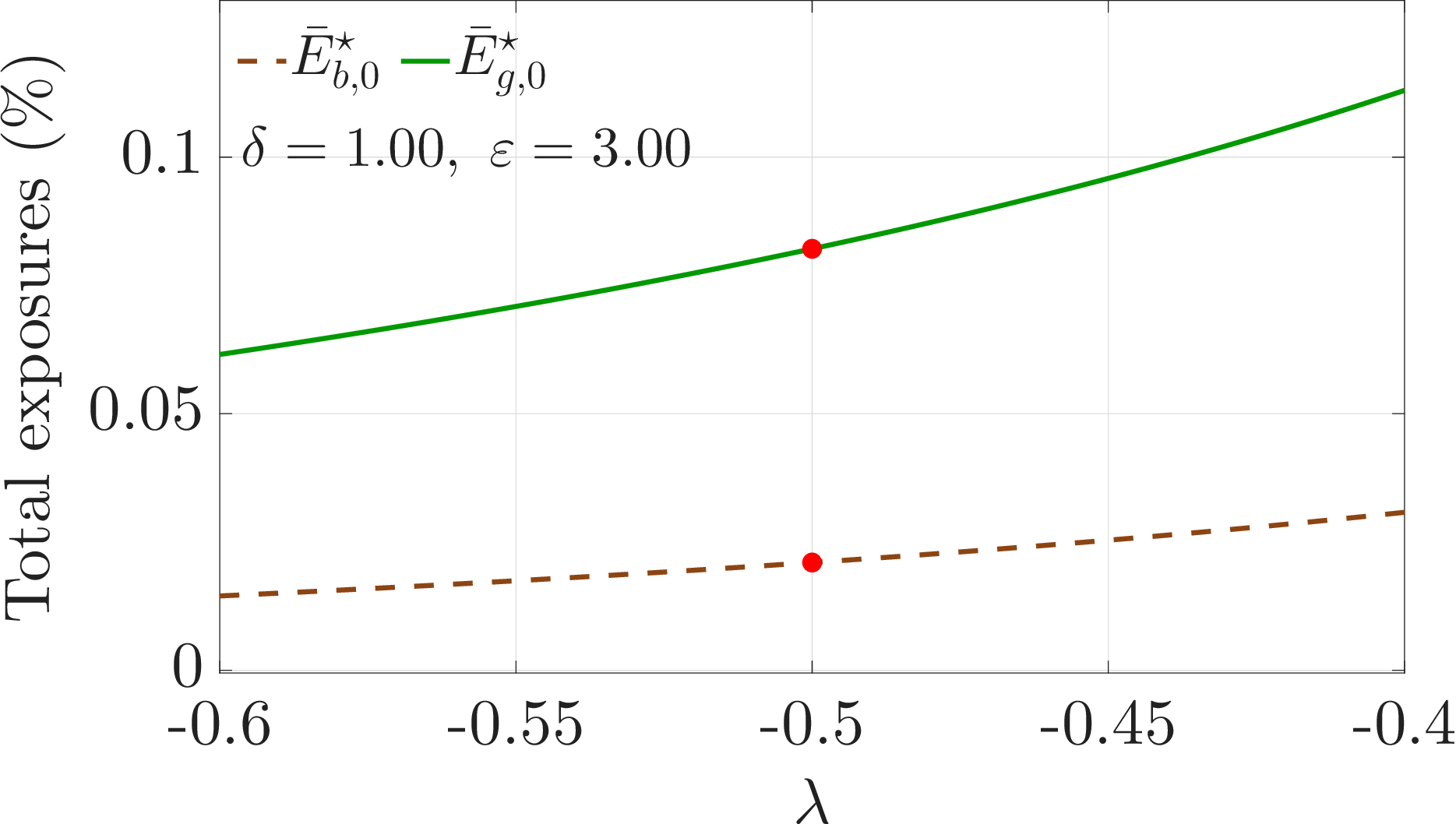}

\includegraphics[width=0.32\linewidth]{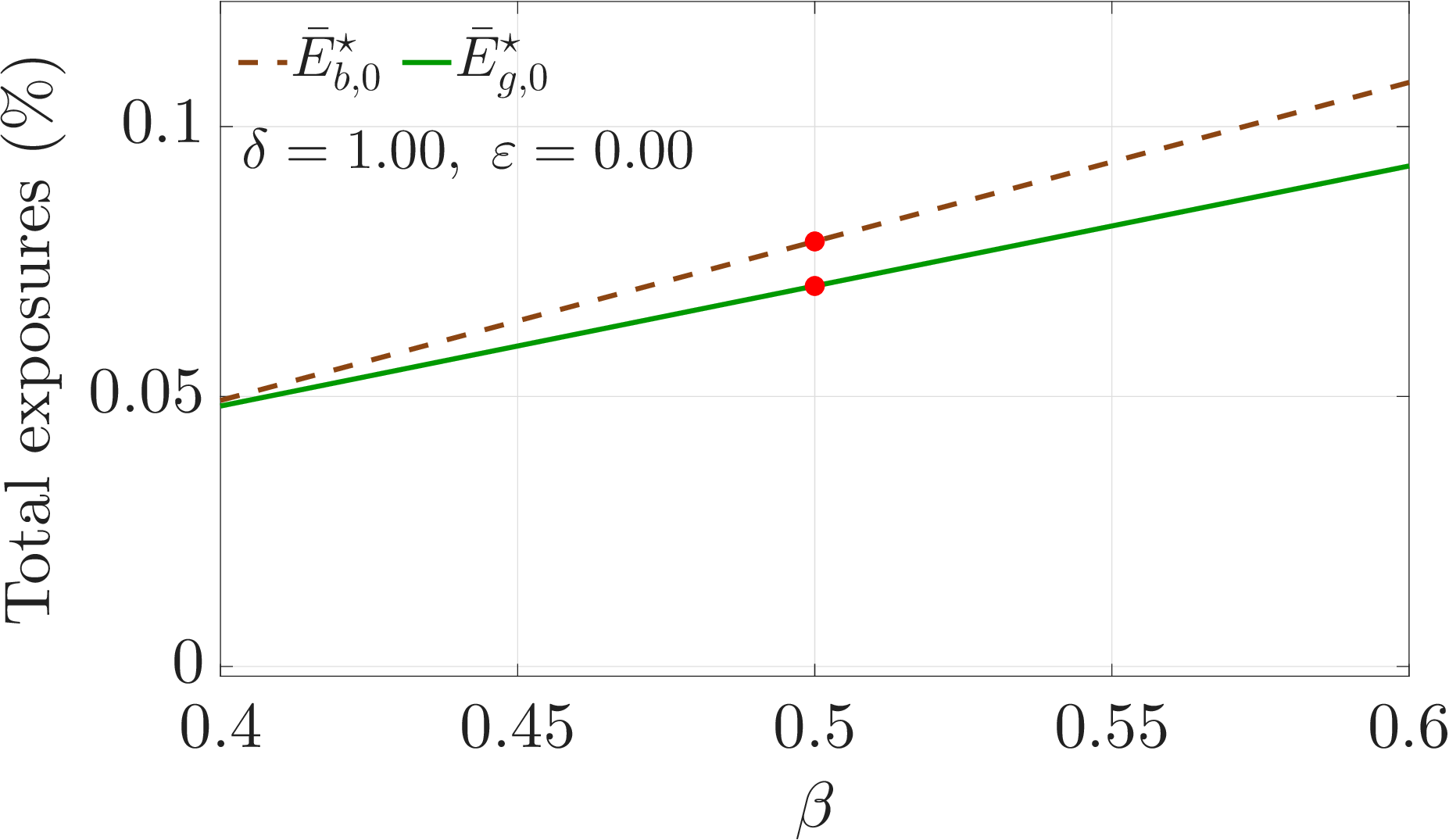} 
\vspace{.3cm}
\hfill 
\includegraphics[width=0.32\linewidth]{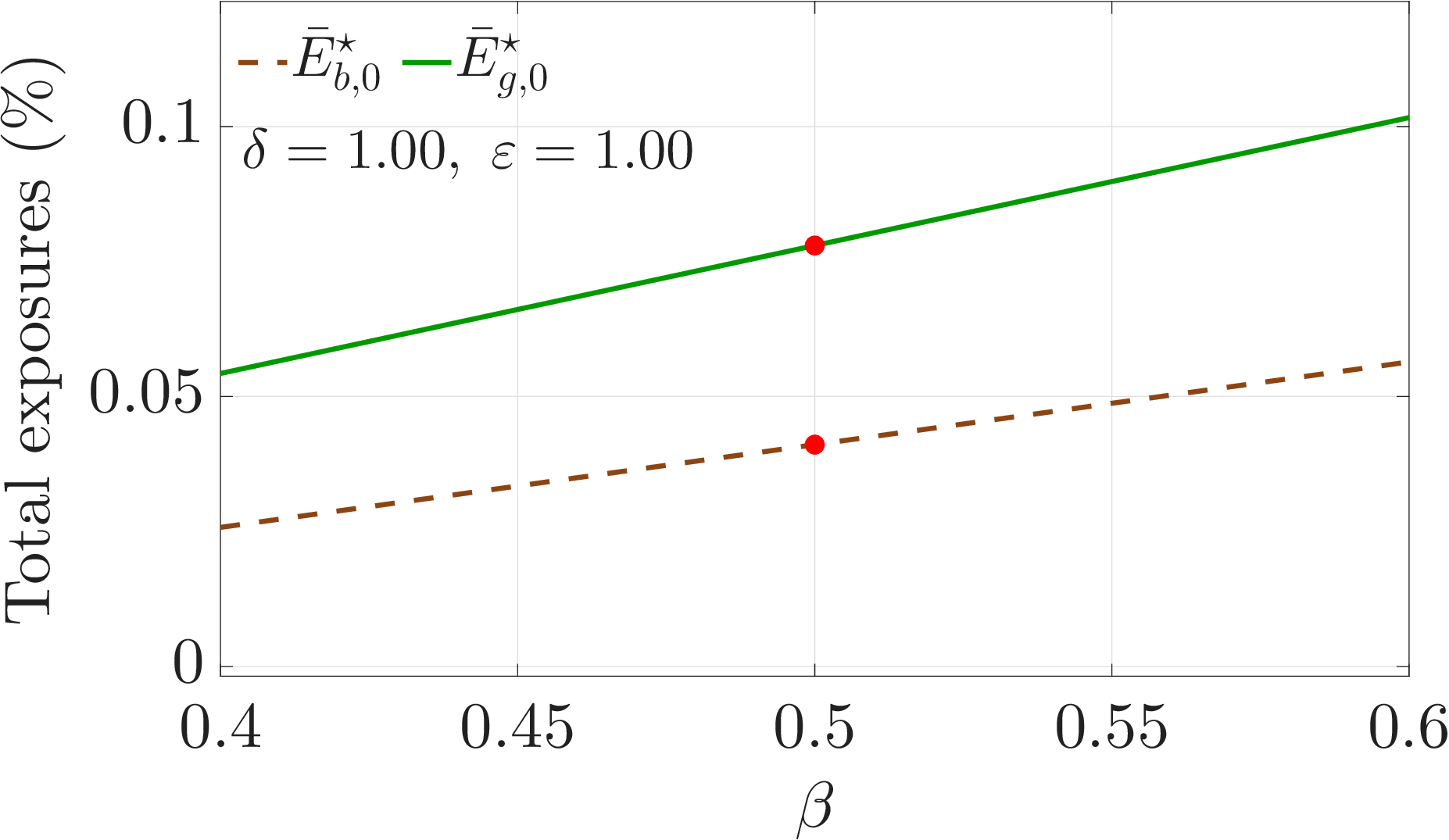}
\vspace{.3cm}
\hfill
\includegraphics[width=0.32\linewidth]{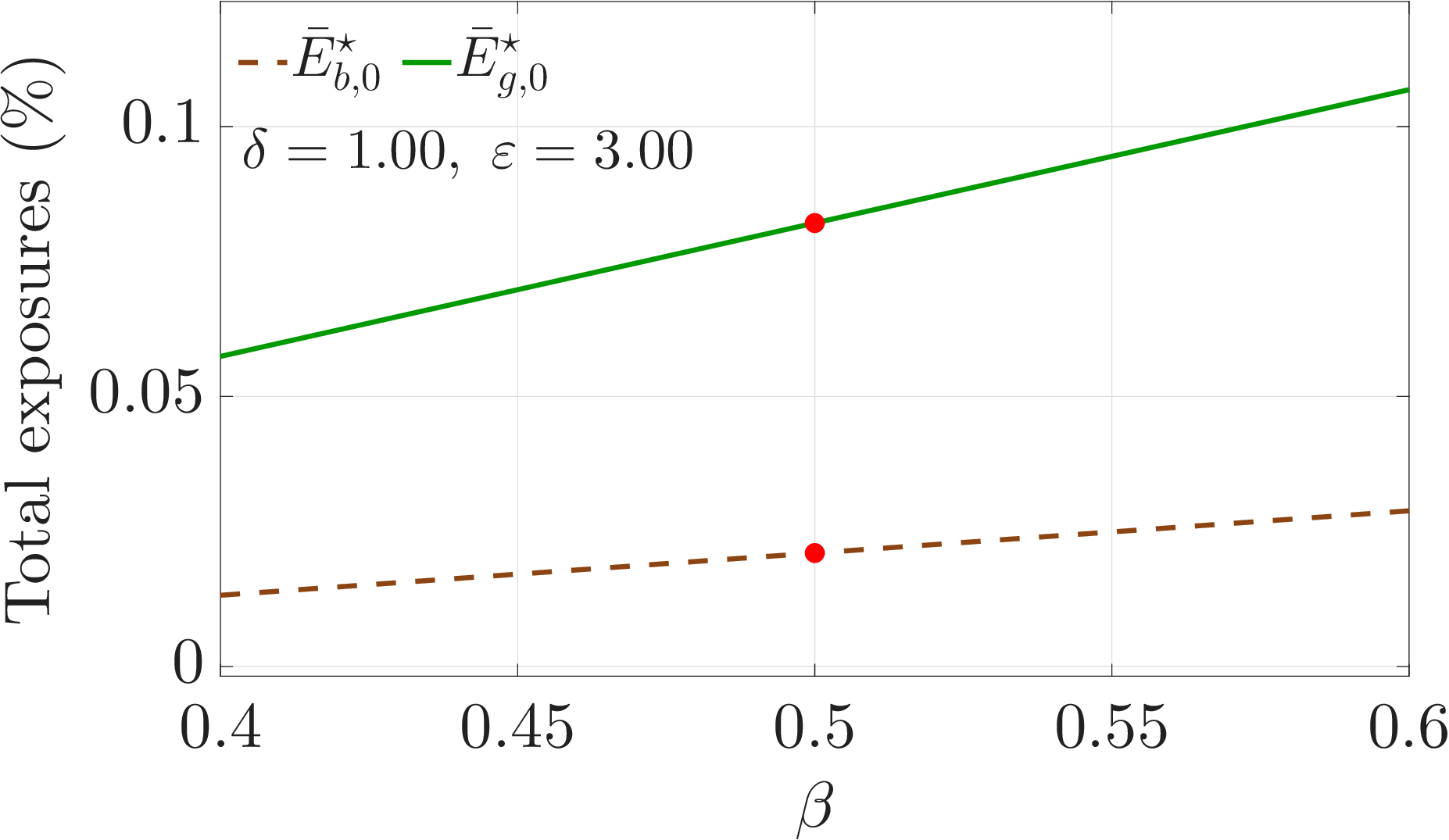}
\caption{Total optimal exposure (in \%) of the carbon penalised PPI strategy to brown and green stocks at $t=0$ as a function of the market parameters. Results are reported under partial information for $\delta=1$ and carbon aversion $\varepsilon=\{0,\,1,\,3\}$ (left-to-right columns). The red dots denote the total exposures under the baseline parameter configuration in Table \ref{tab:model_params}.}\label{fig:Static_Sensitivities_exposures}
\end{figure}
\newpage
\begin{figure}[H]
\centering
\includegraphics[width=0.38\linewidth]{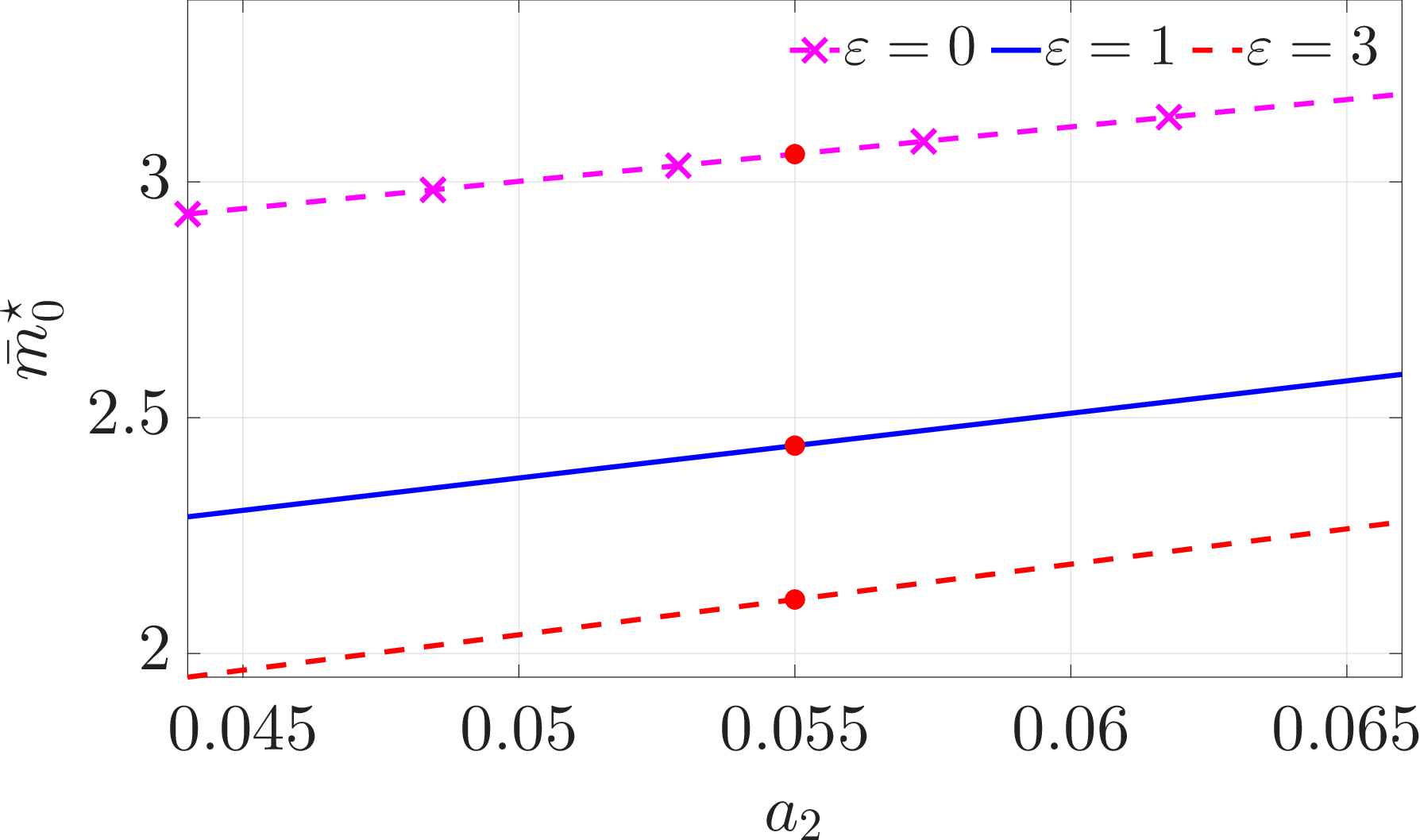} 
\vspace{.2cm}
\hspace{0.5cm}
\includegraphics[width=0.38\linewidth]{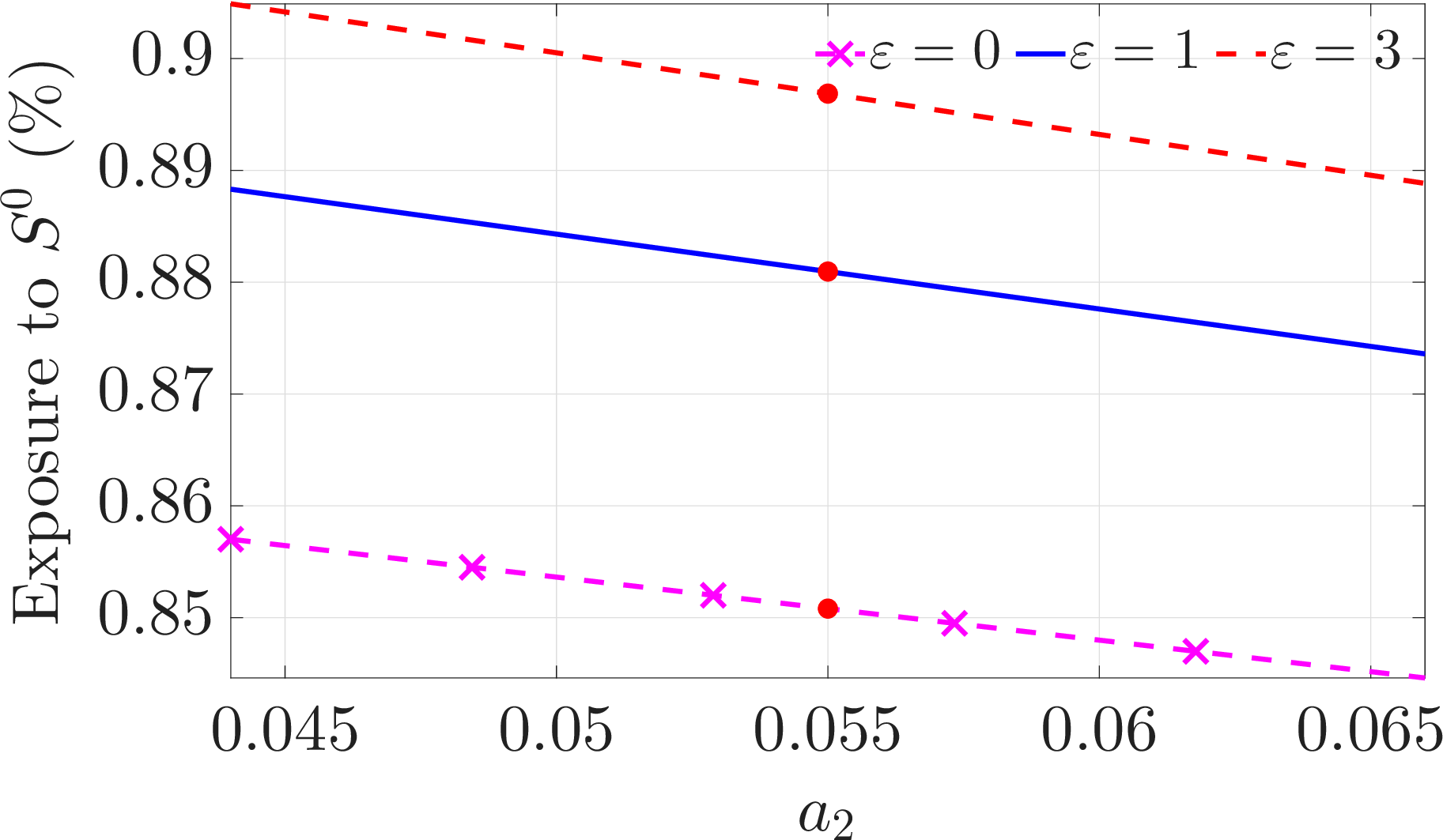}

\includegraphics[width=0.38\linewidth]{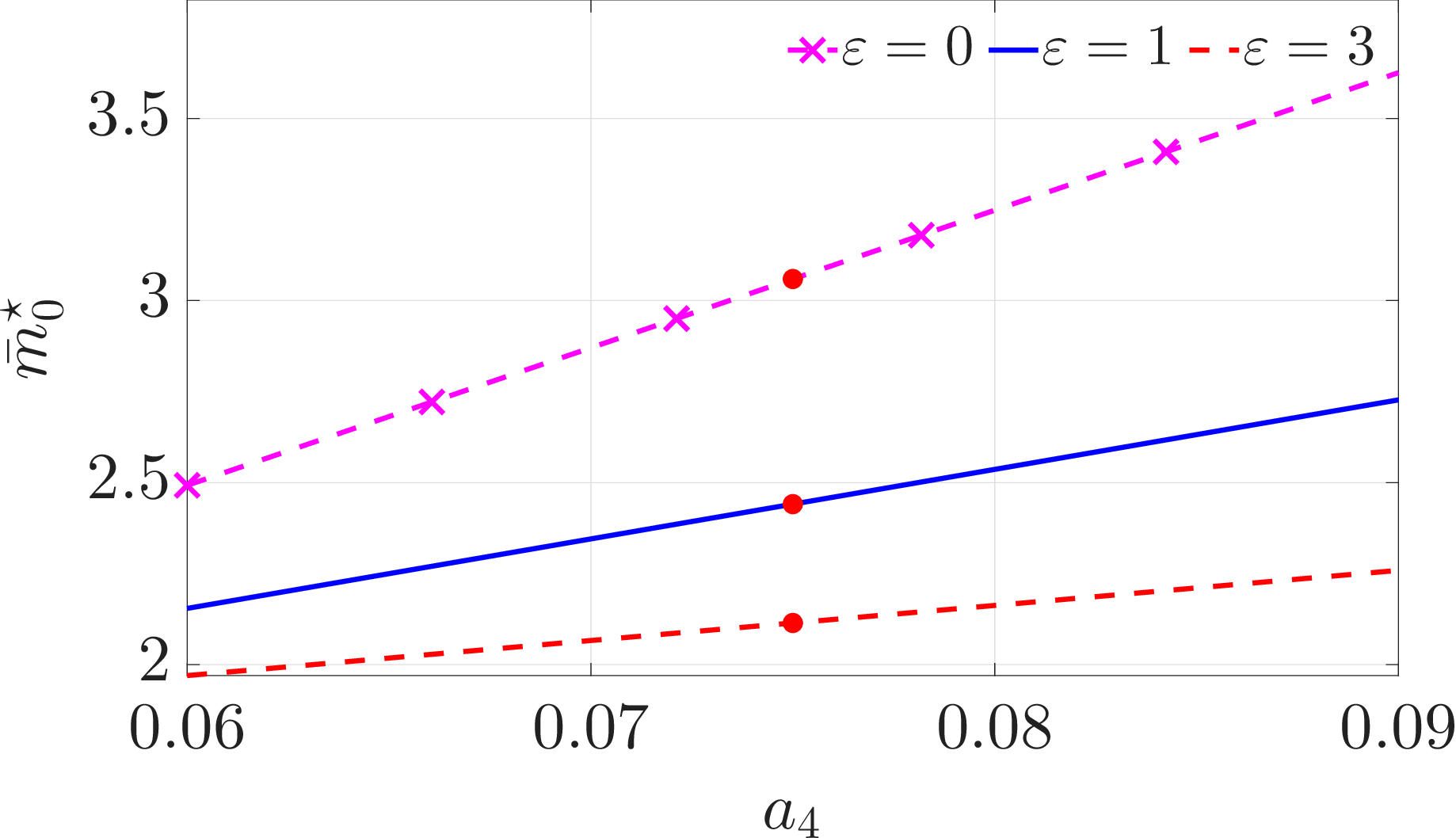} 
\vspace{.2cm}
\hspace{0.5cm}
\includegraphics[width=0.38\linewidth]{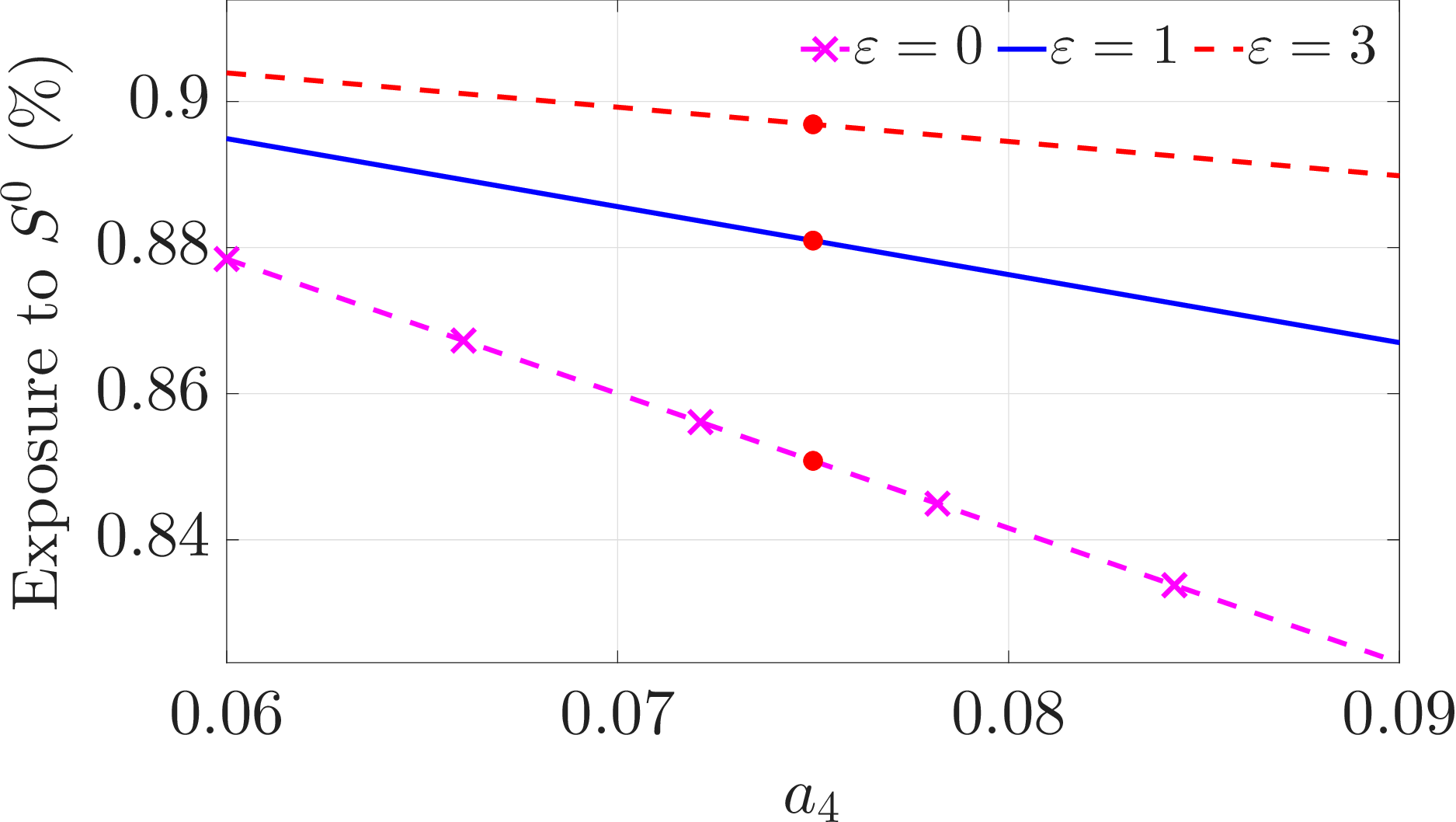}

\includegraphics[width=0.38\linewidth]{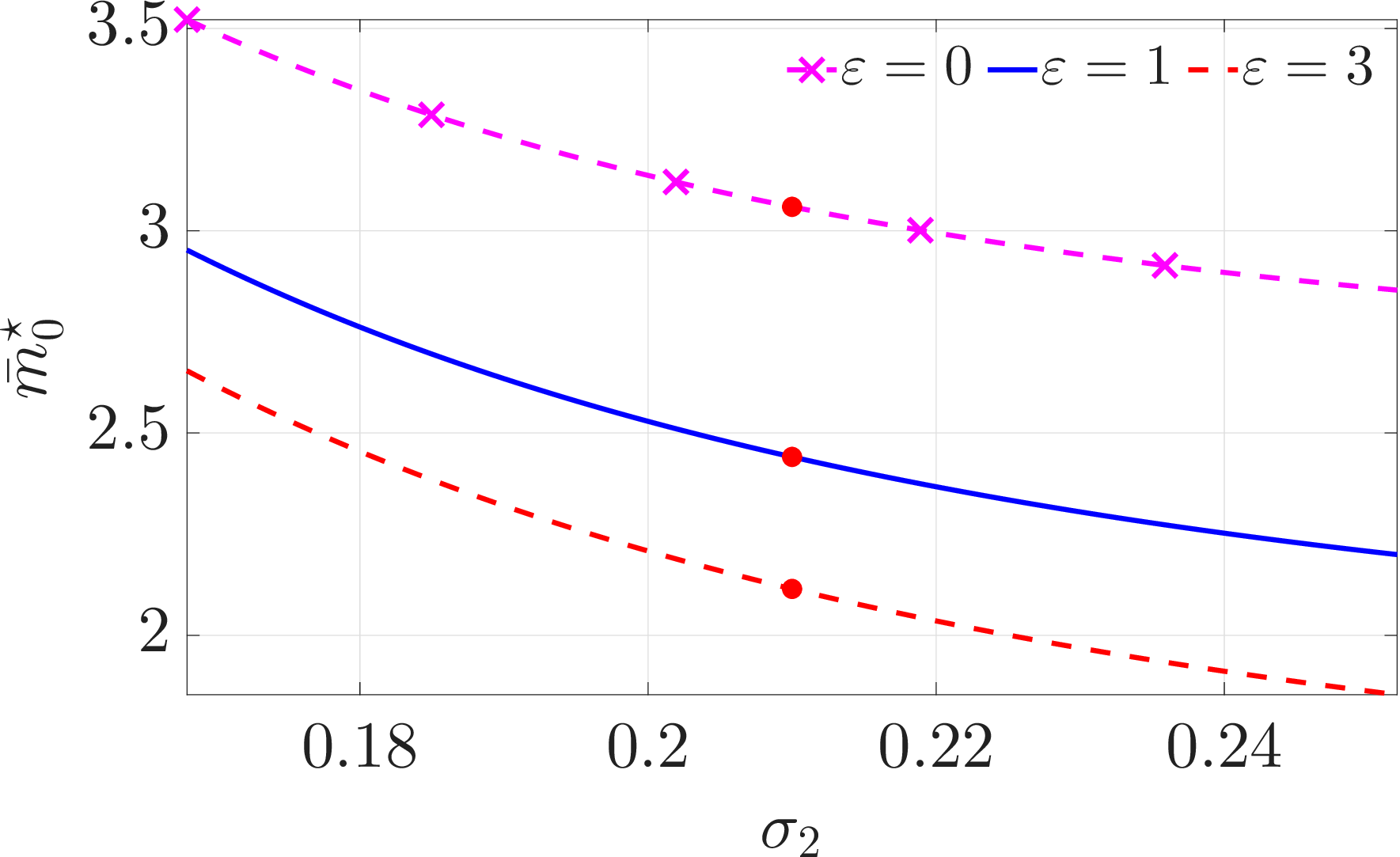} 
\vspace{.2cm}
\hspace{0.5cm}
\includegraphics[width=0.38\linewidth]{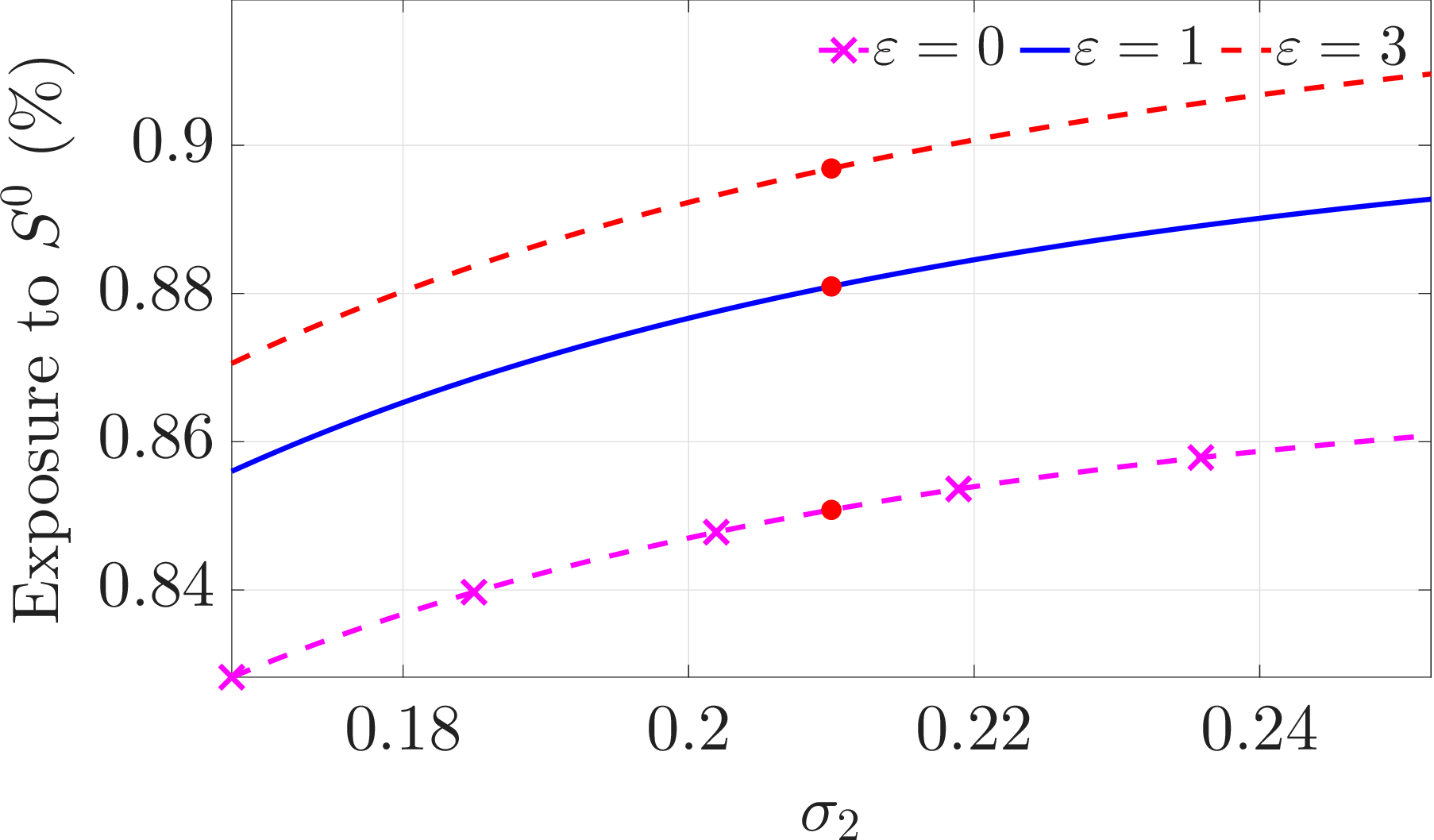}

\includegraphics[width=0.38\linewidth]{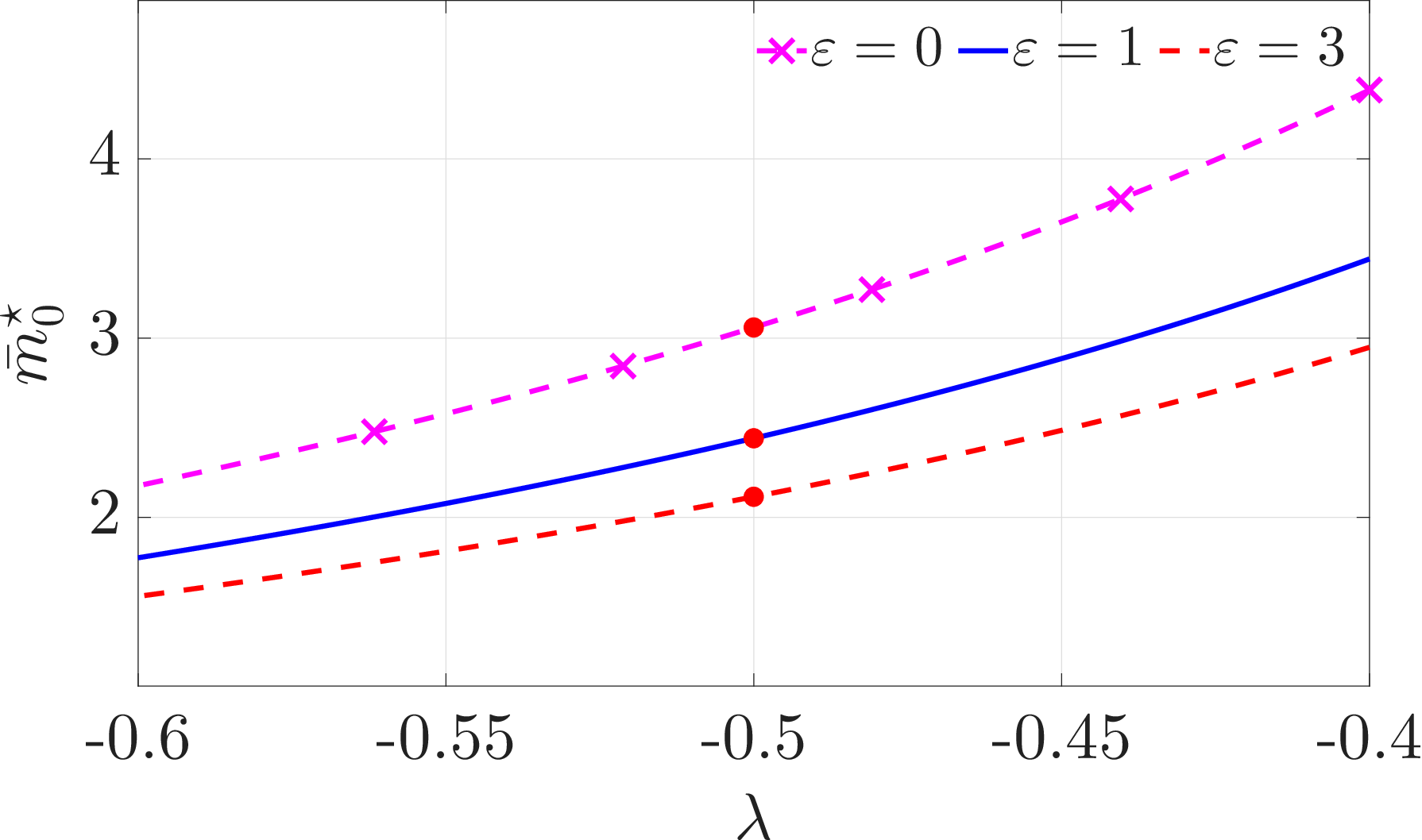} 
\vspace{.2cm}
\hspace{0.5cm}
\includegraphics[width=0.38\linewidth]{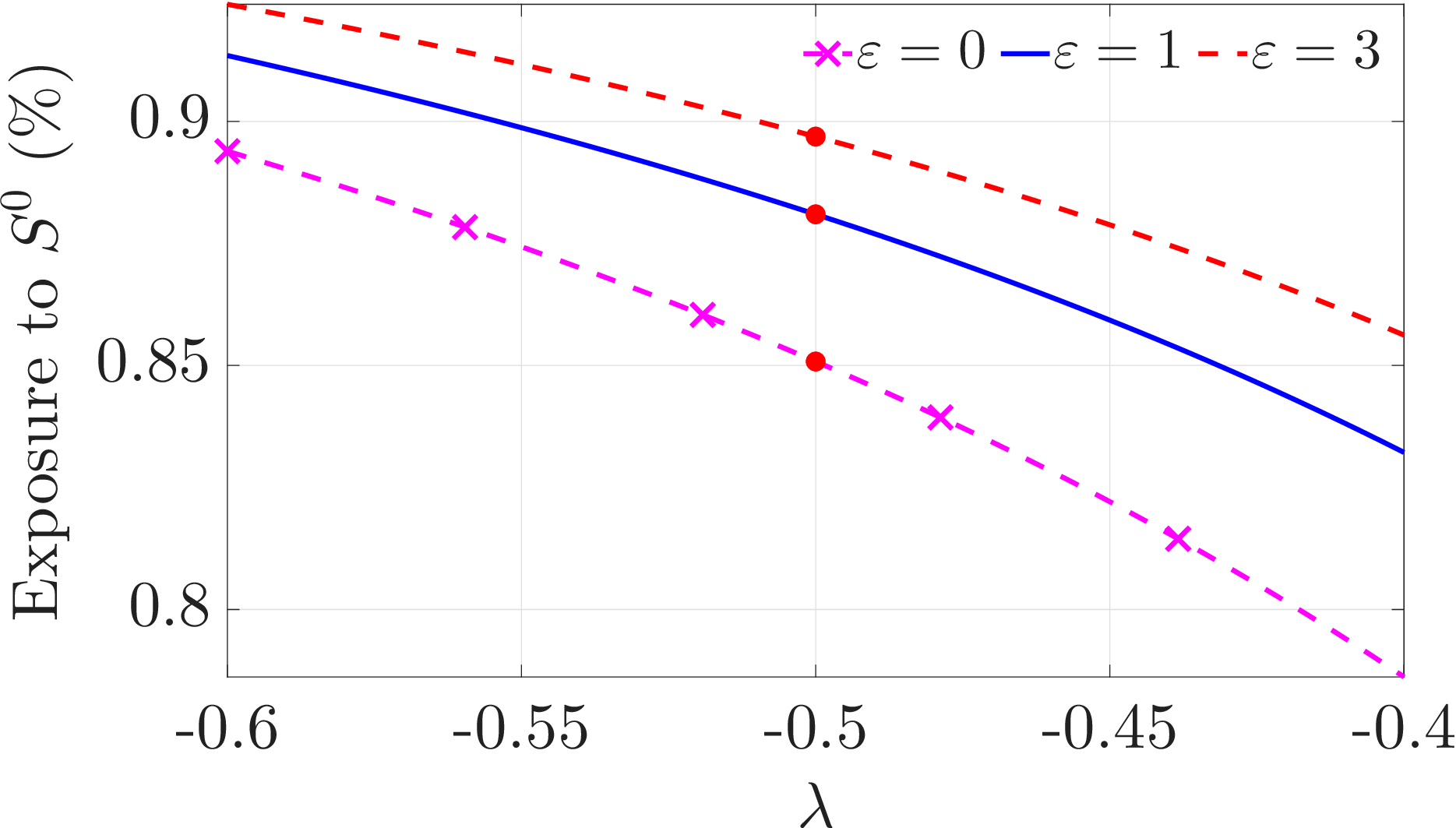}

\includegraphics[width=0.38\linewidth]{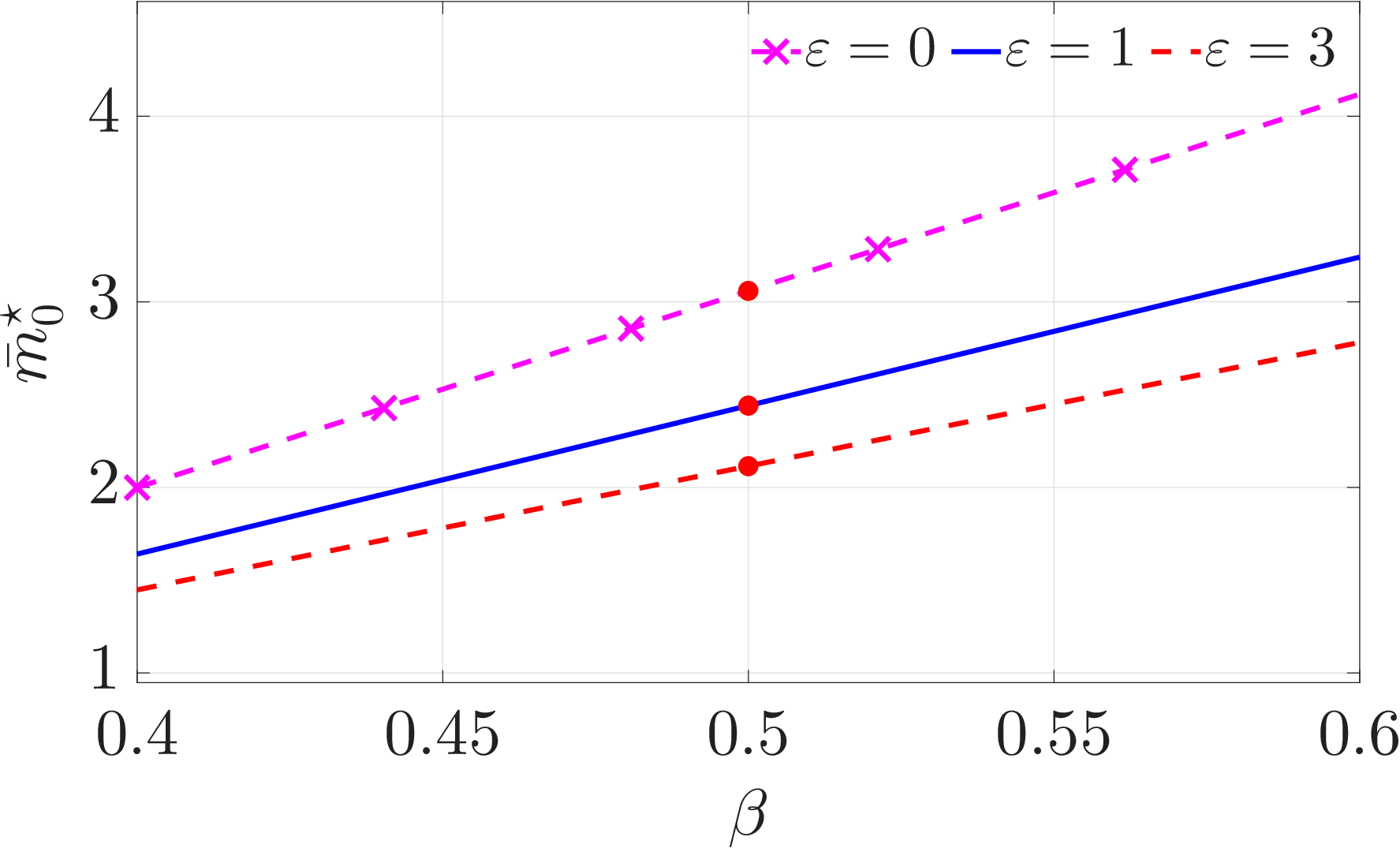}
\vspace{.2cm}
\hspace{0.5cm}
\includegraphics[width=0.38\linewidth]{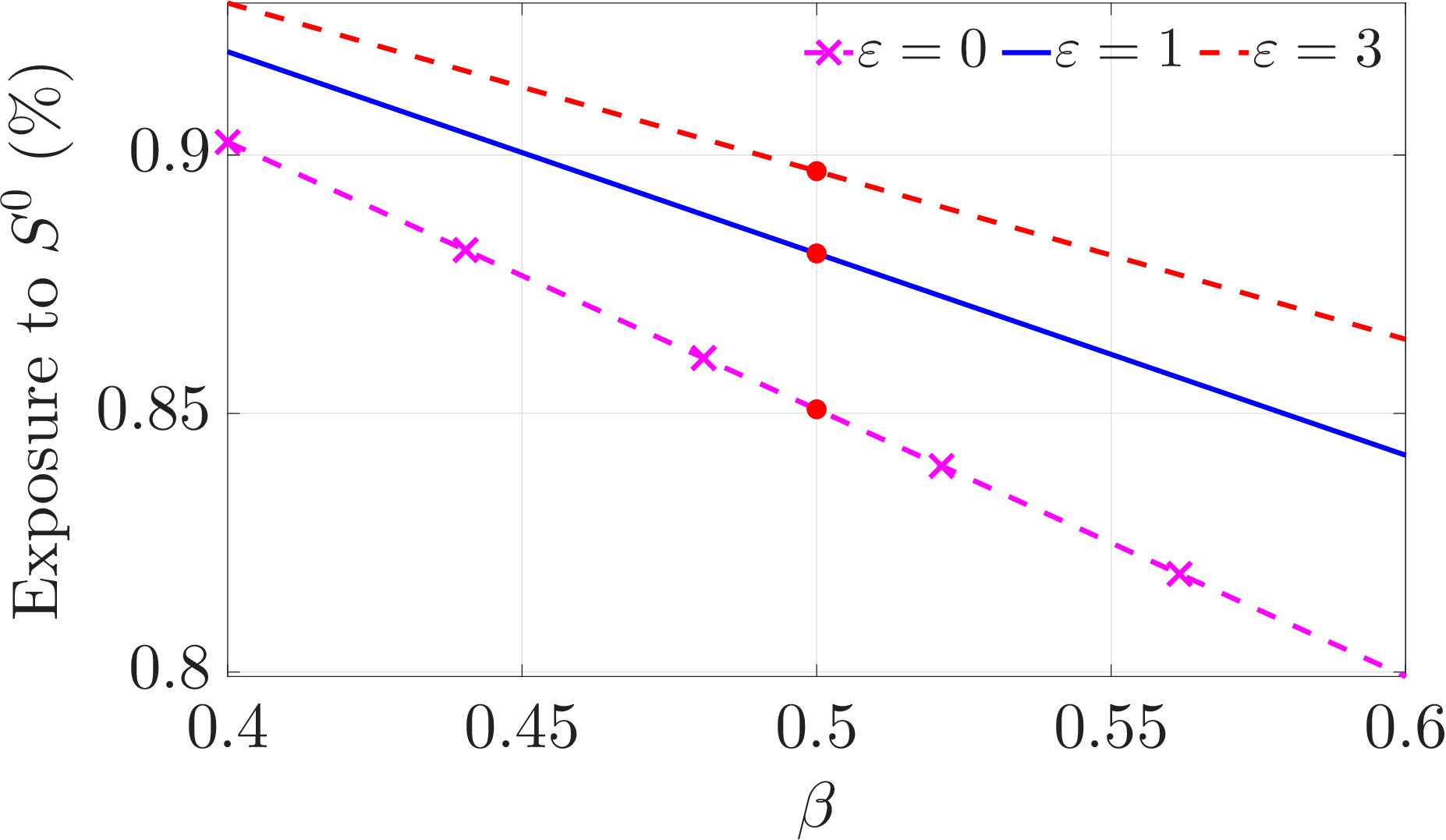}
\caption{Optimal multiplier $\bar{m}^\star_0$ (left panels) and optimal exposure to the risk-free asset $S^0$ (right panels) at $t=0$ as a function of market parameters for $\delta=1$ and different levels of carbon aversion $\varepsilon$. The red dots denote the optimal multiplier and the exposure to $S^0$ under the baseline parameter configuration reported in \ref{tab:model_params}.}\label{fig:opt_mult_comp_risky_reference_port_t_0}
\end{figure}
\begin{figure}[H]
\centering
\includegraphics[width=0.32\linewidth]{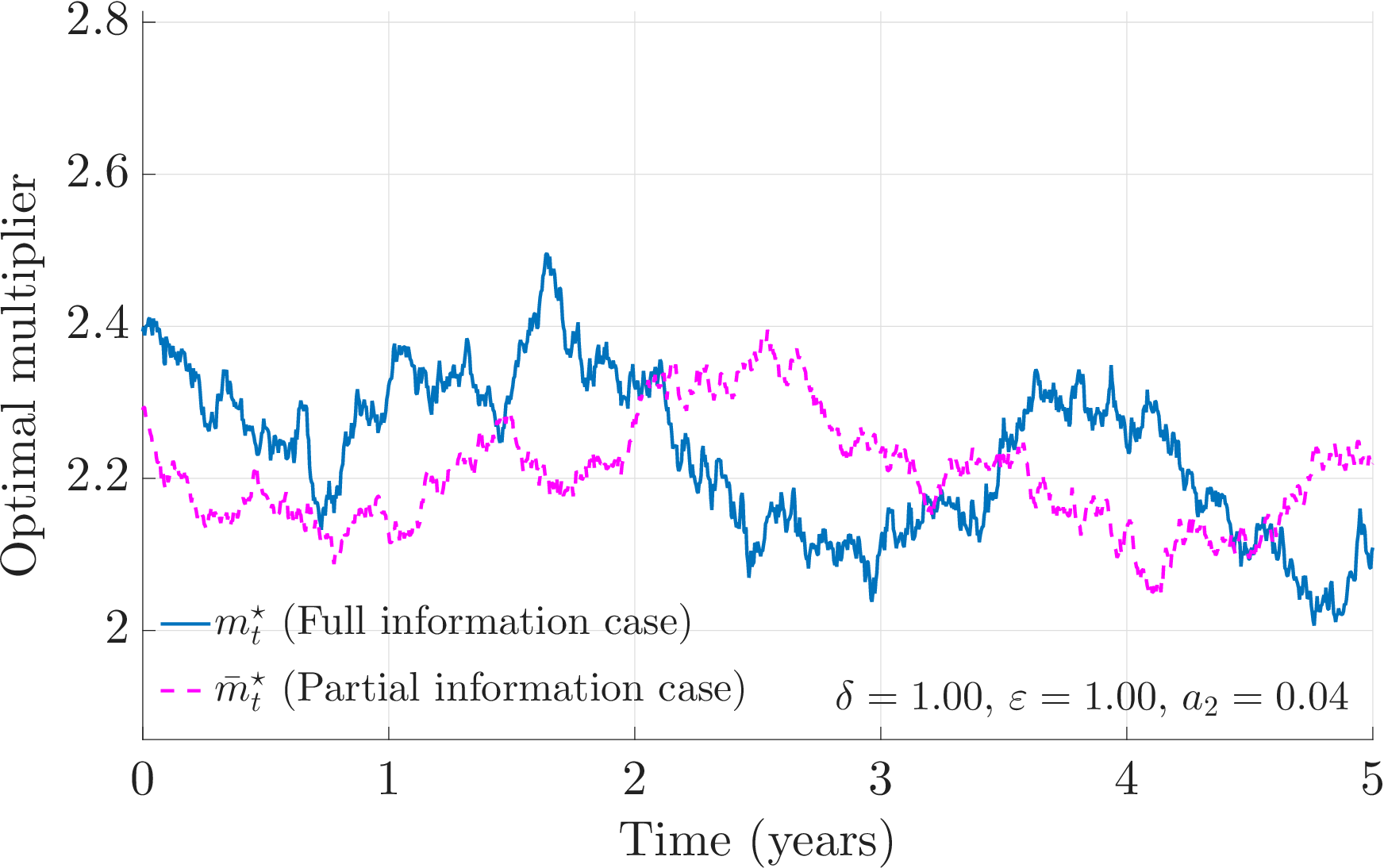} 
\vspace{.3cm}
\hfill 
\includegraphics[width=0.32\linewidth]{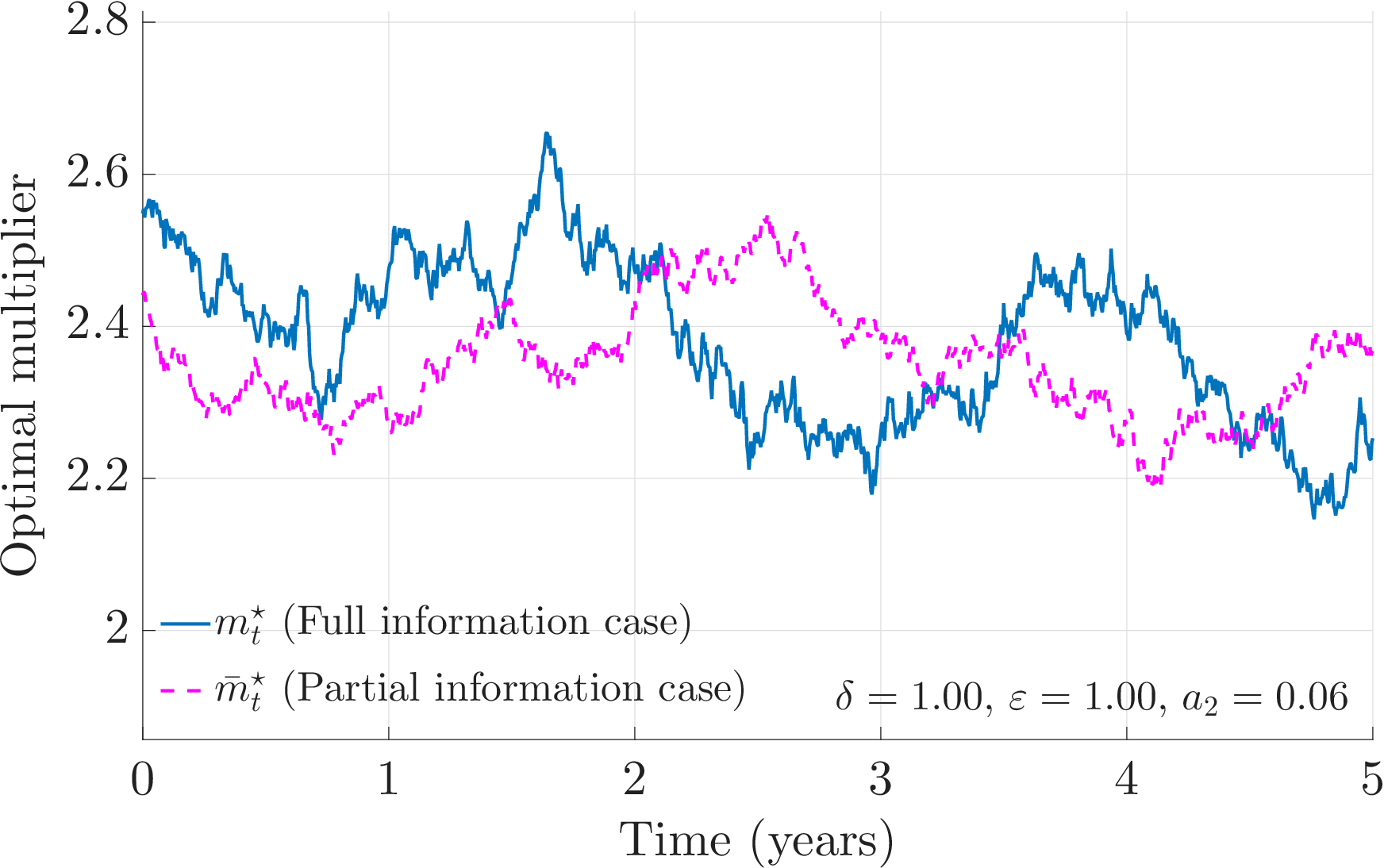}
\vspace{.3cm}
\hfill
\includegraphics[width=0.32\linewidth]{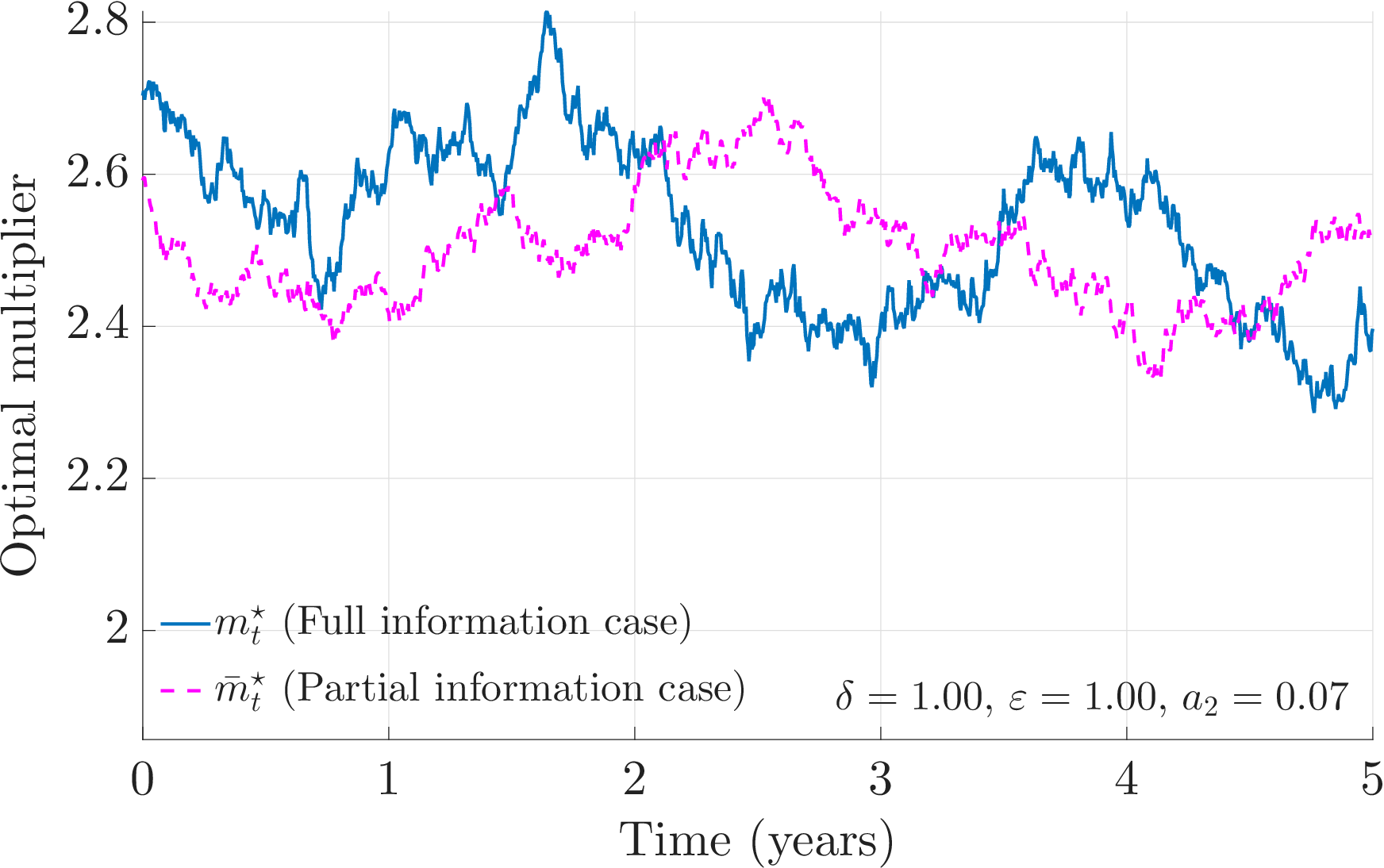}

\includegraphics[width=0.32\linewidth]{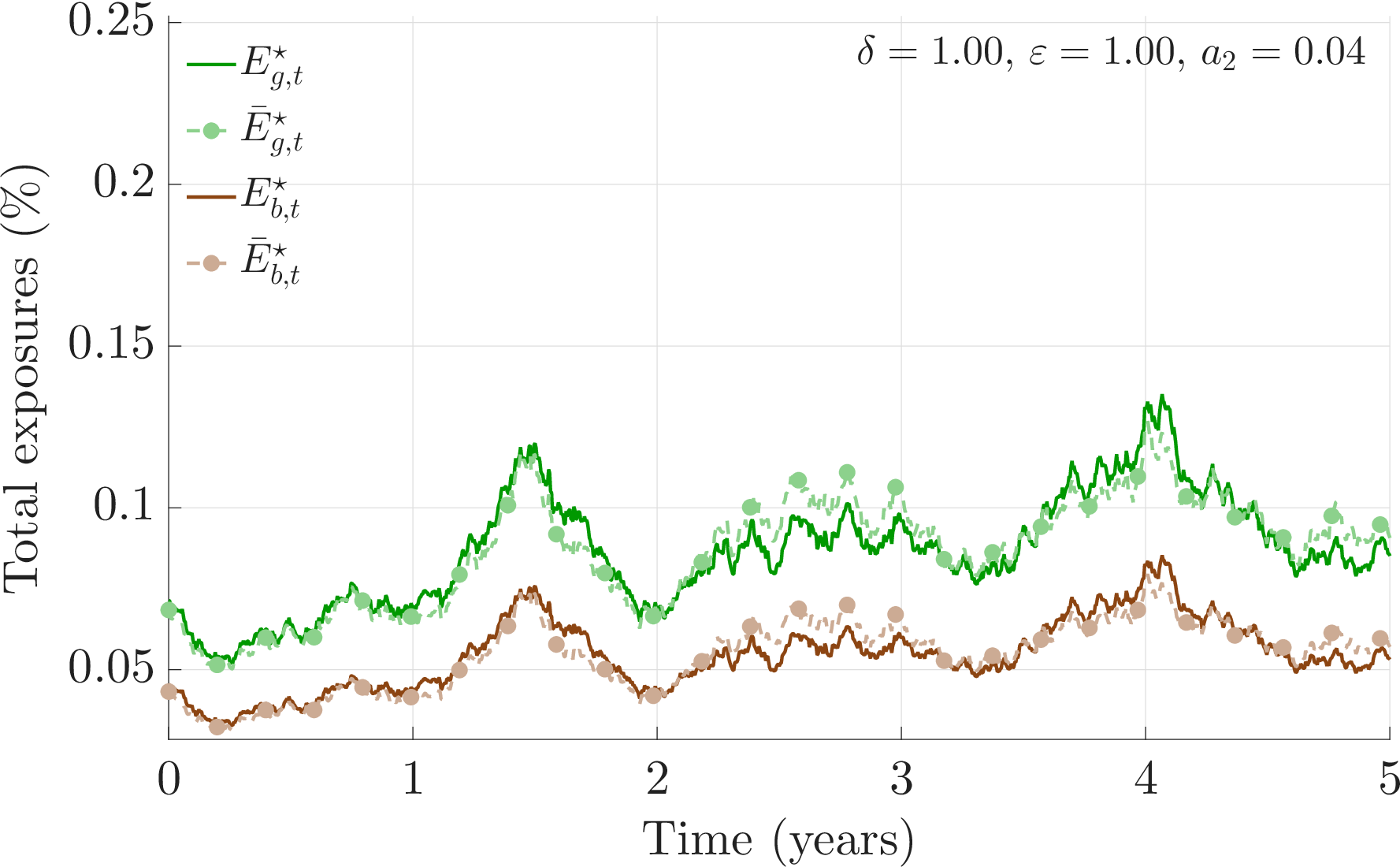} 
\vspace{.3cm}
\hfill 
\includegraphics[width=0.32\linewidth]{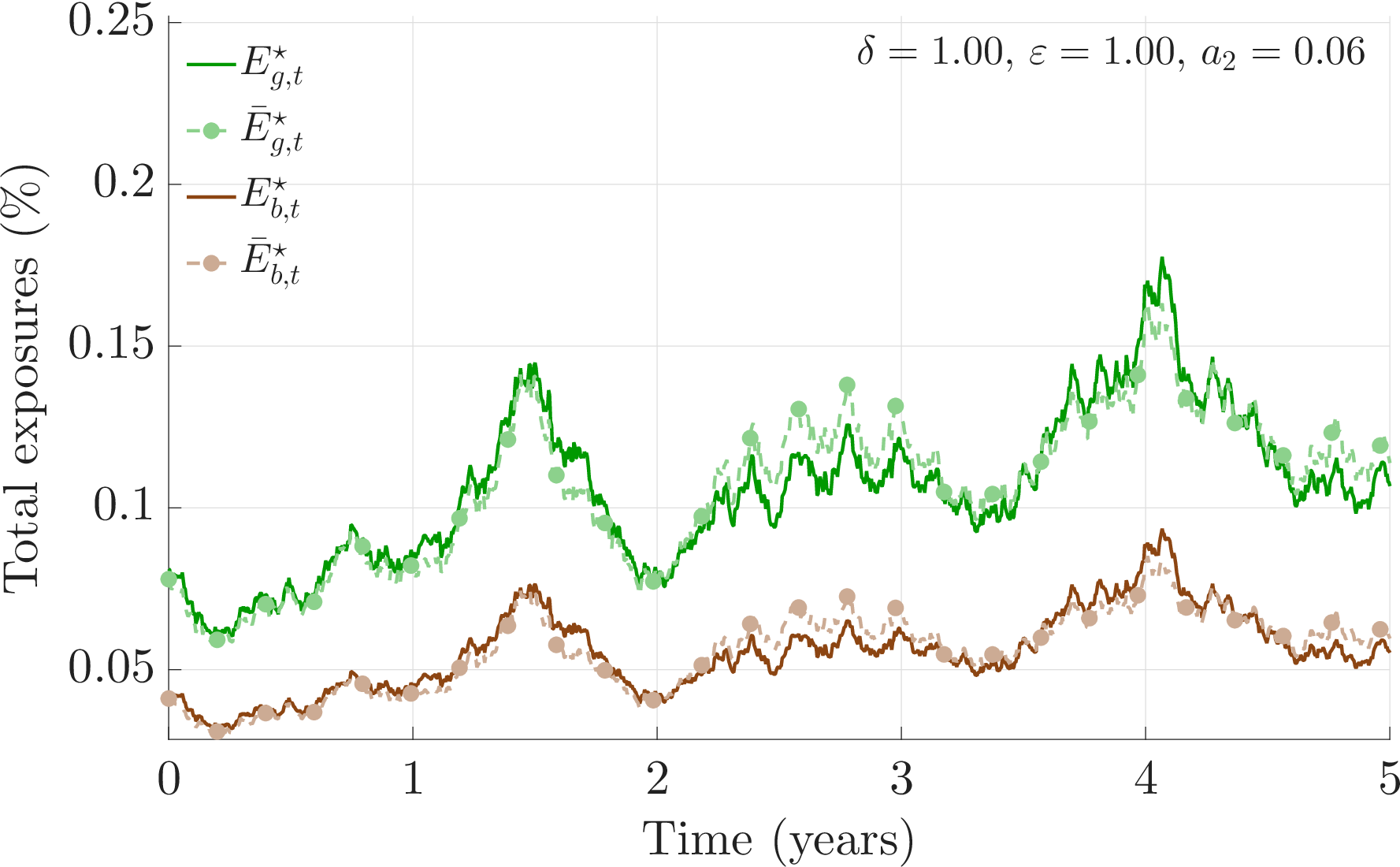}
\vspace{.3cm}
\hfill
\includegraphics[width=0.32\linewidth]{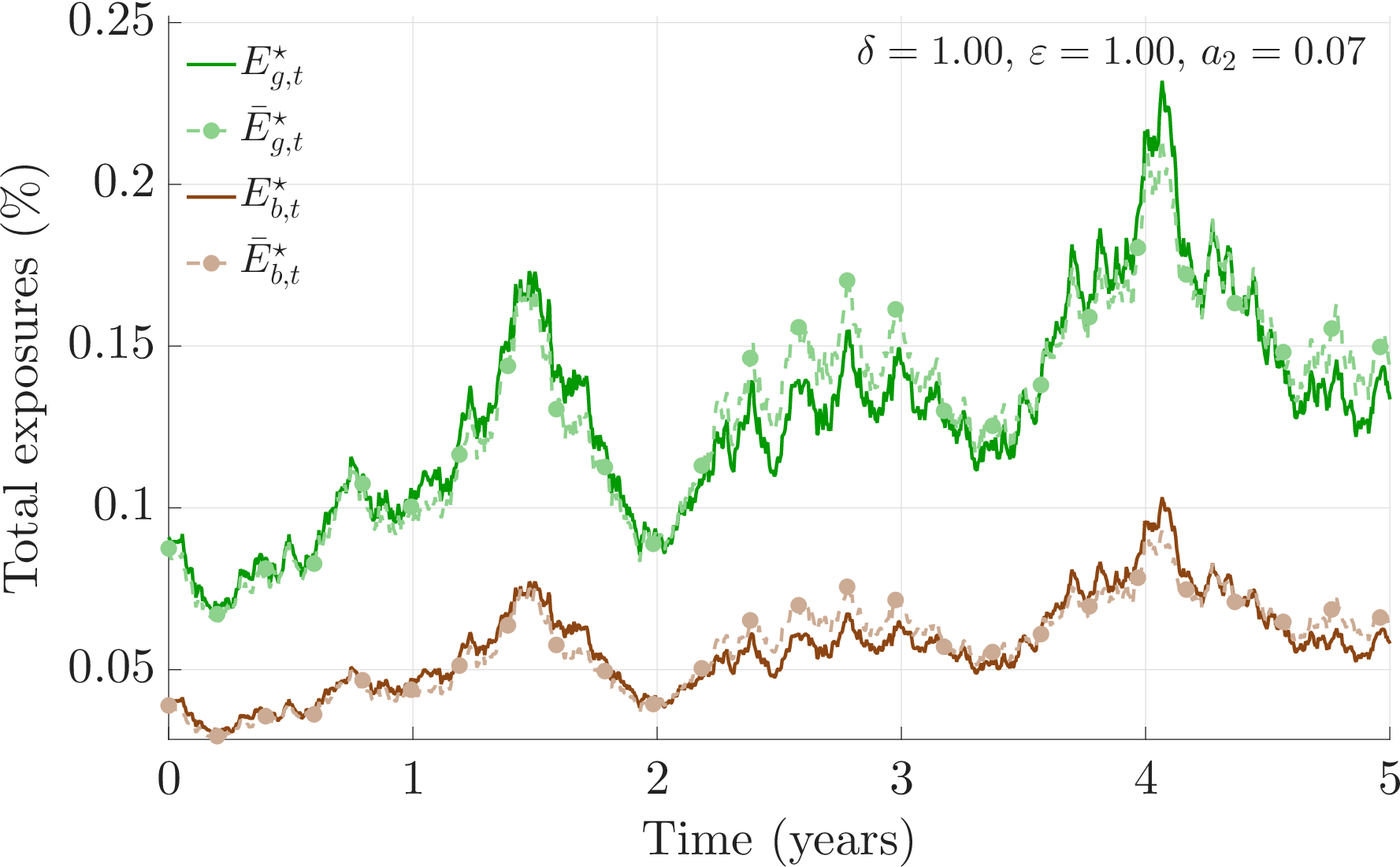}

\caption{Trajectories of the optimal multiplier under full and partial information (top panels) and of the corresponding total optimal exposure to green and brown stocks (bottom panels), for $\delta=1$ and $\varepsilon=1$. Each column corresponds to a different value of $a_2$, with the central column corresponding to the baseline configuration in Table \ref{tab:model_params}. In the bottom panels, the solid green (resp. brown) line represents the total optimal exposure to green (resp. brown) stocks under full information, while the dotted light green (resp. light brown) line represents the corresponding exposure under partial information.}
\label{fig:DYNAMIC_MULTIPLIER_A_2}
\end{figure}
\begin{figure}[H]
\centering
\includegraphics[width=0.32\linewidth]{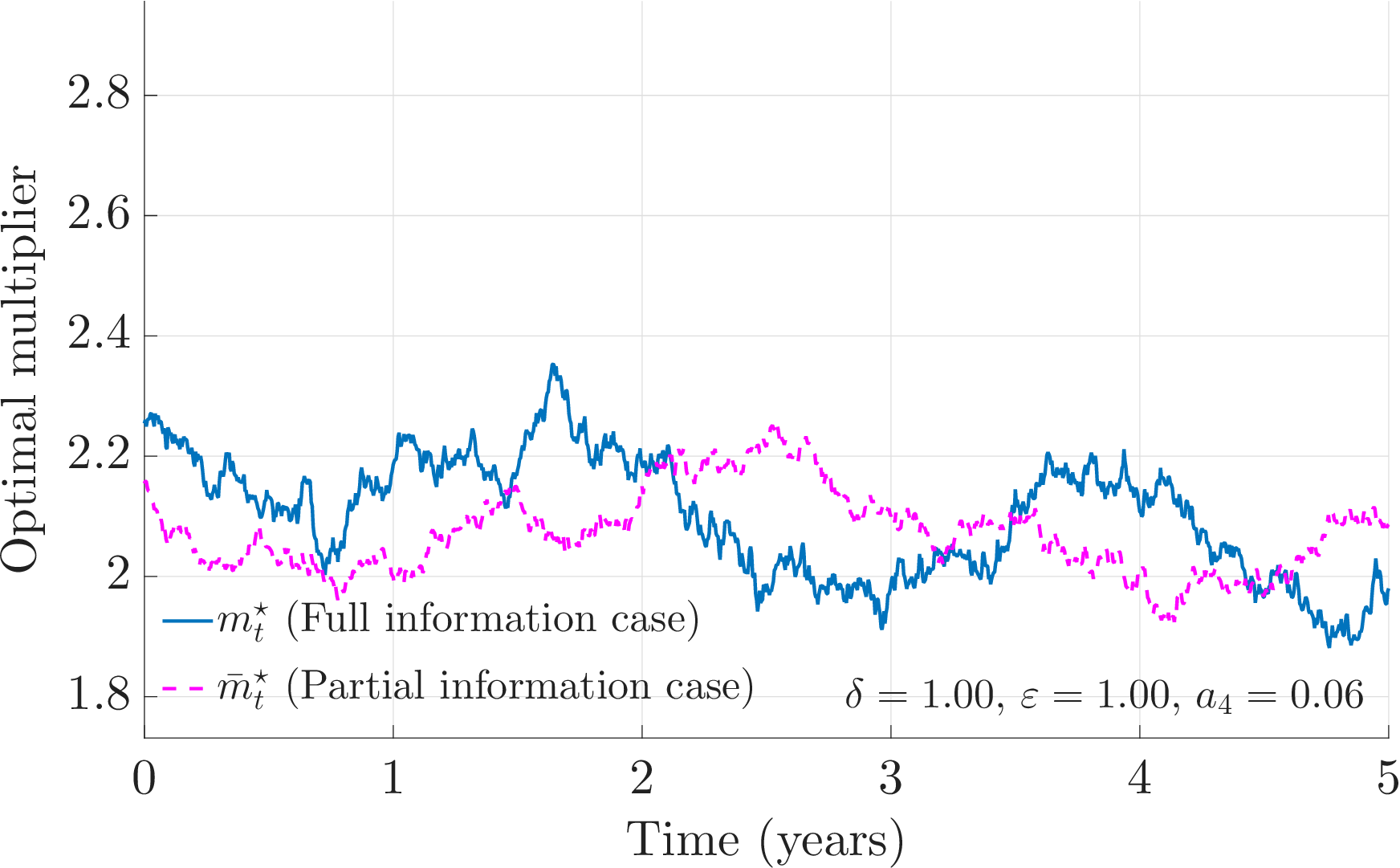} 
\vspace{.3cm}
\hfill 
\includegraphics[width=0.32\linewidth]{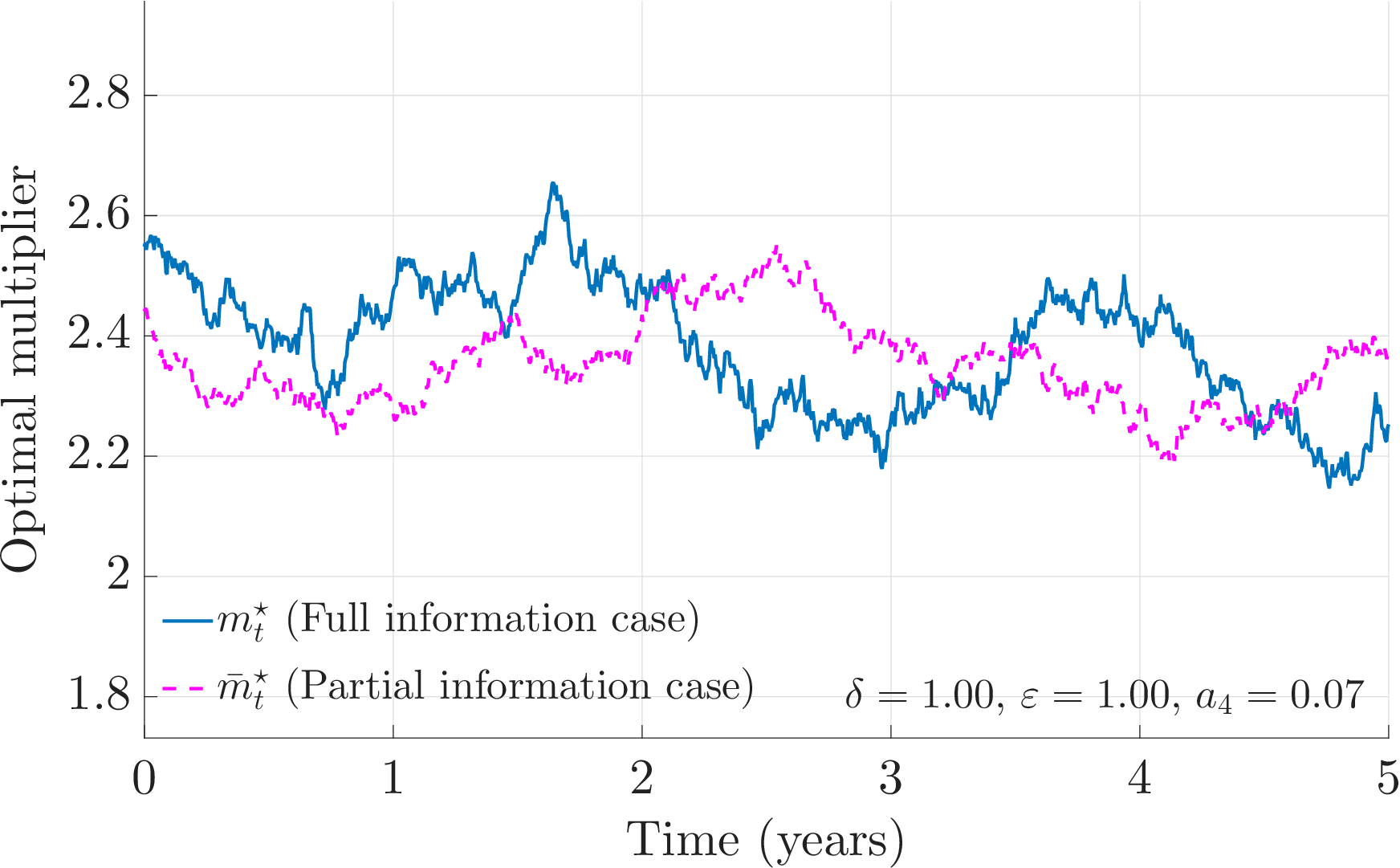}
\vspace{.3cm}
\hfill
\includegraphics[width=0.32\linewidth]{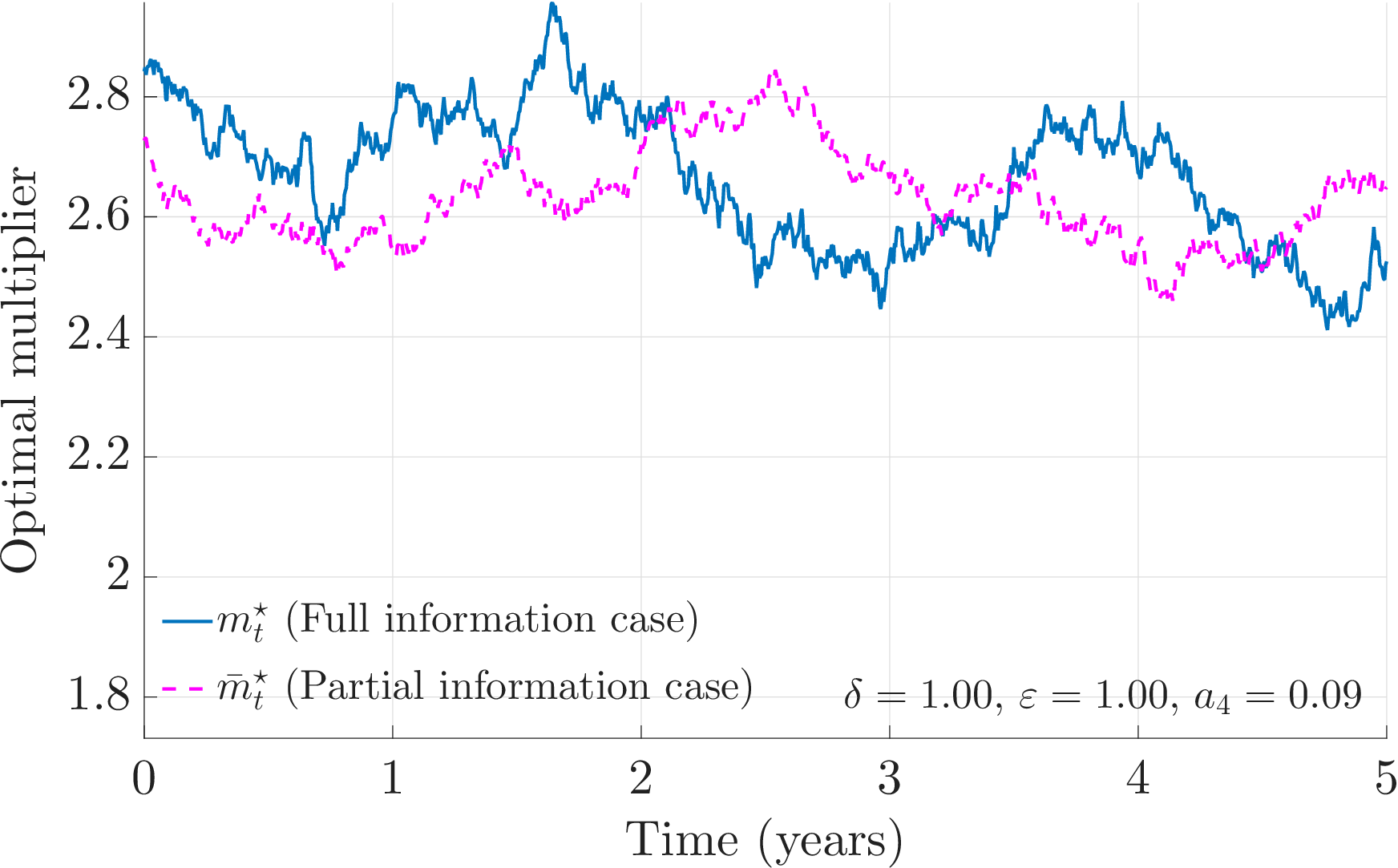}

\includegraphics[width=0.32\linewidth]{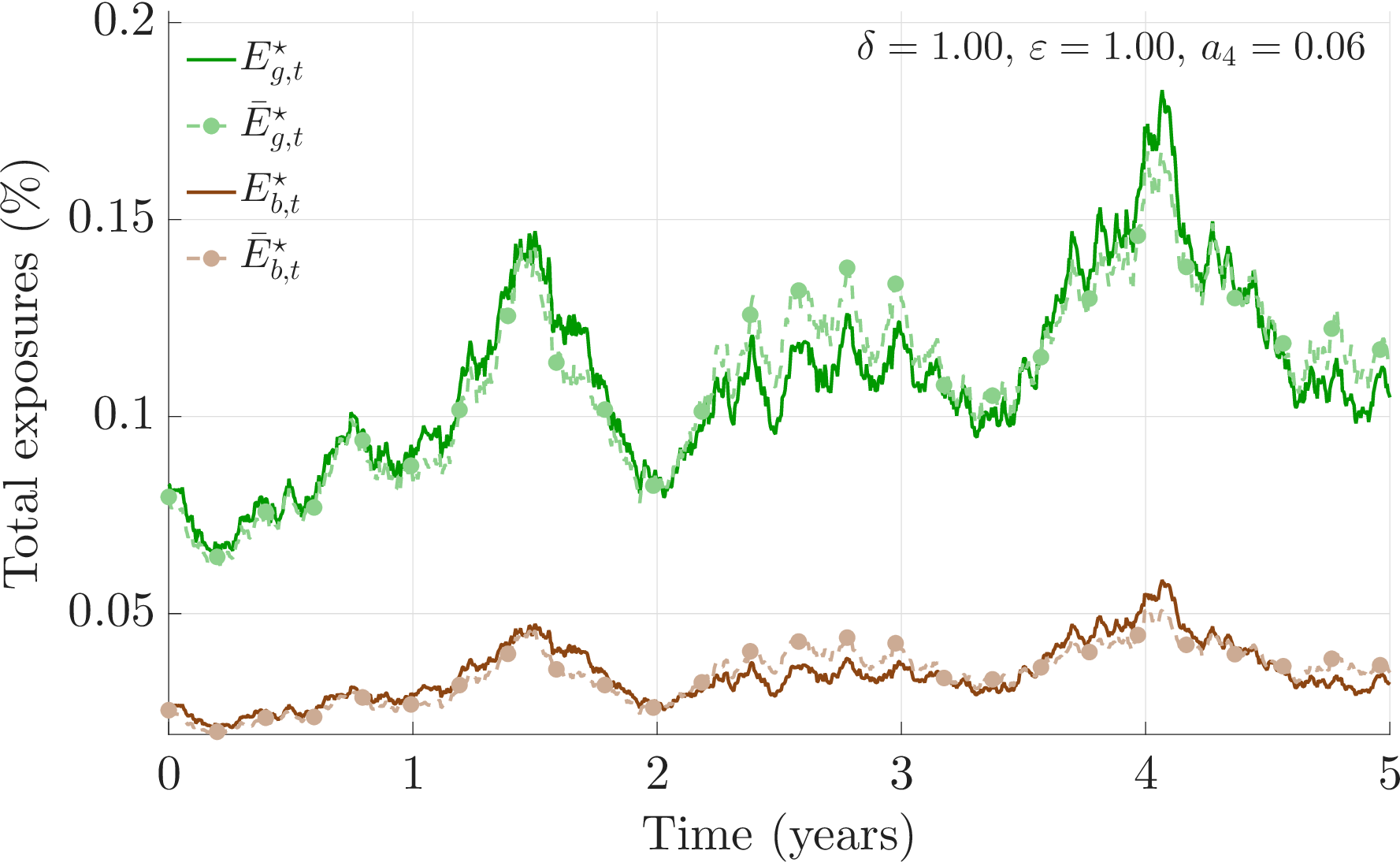} 
\vspace{.3cm}
\hfill 
\includegraphics[width=0.32\linewidth]{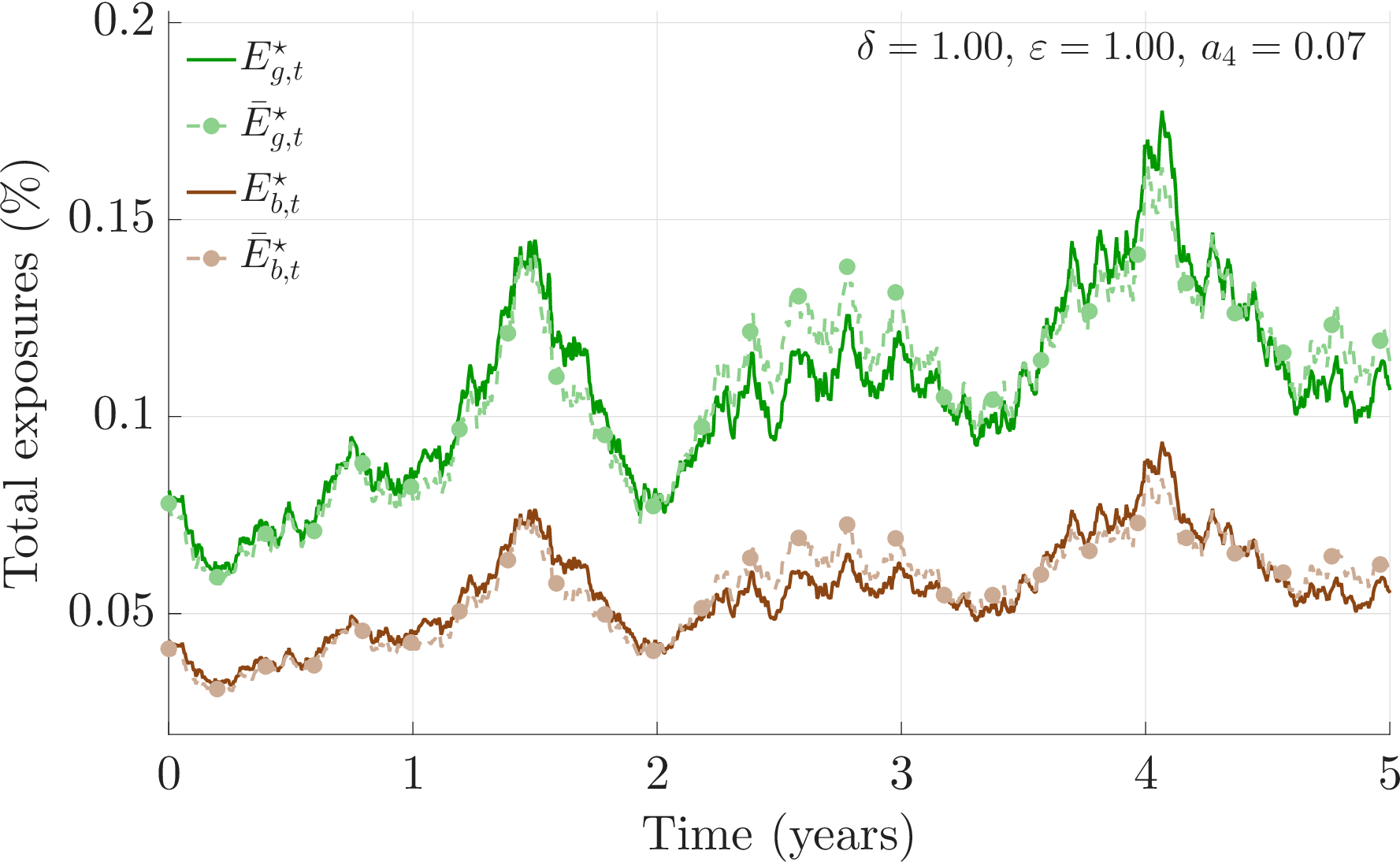}
\vspace{.3cm}
\hfill
\includegraphics[width=0.32\linewidth]{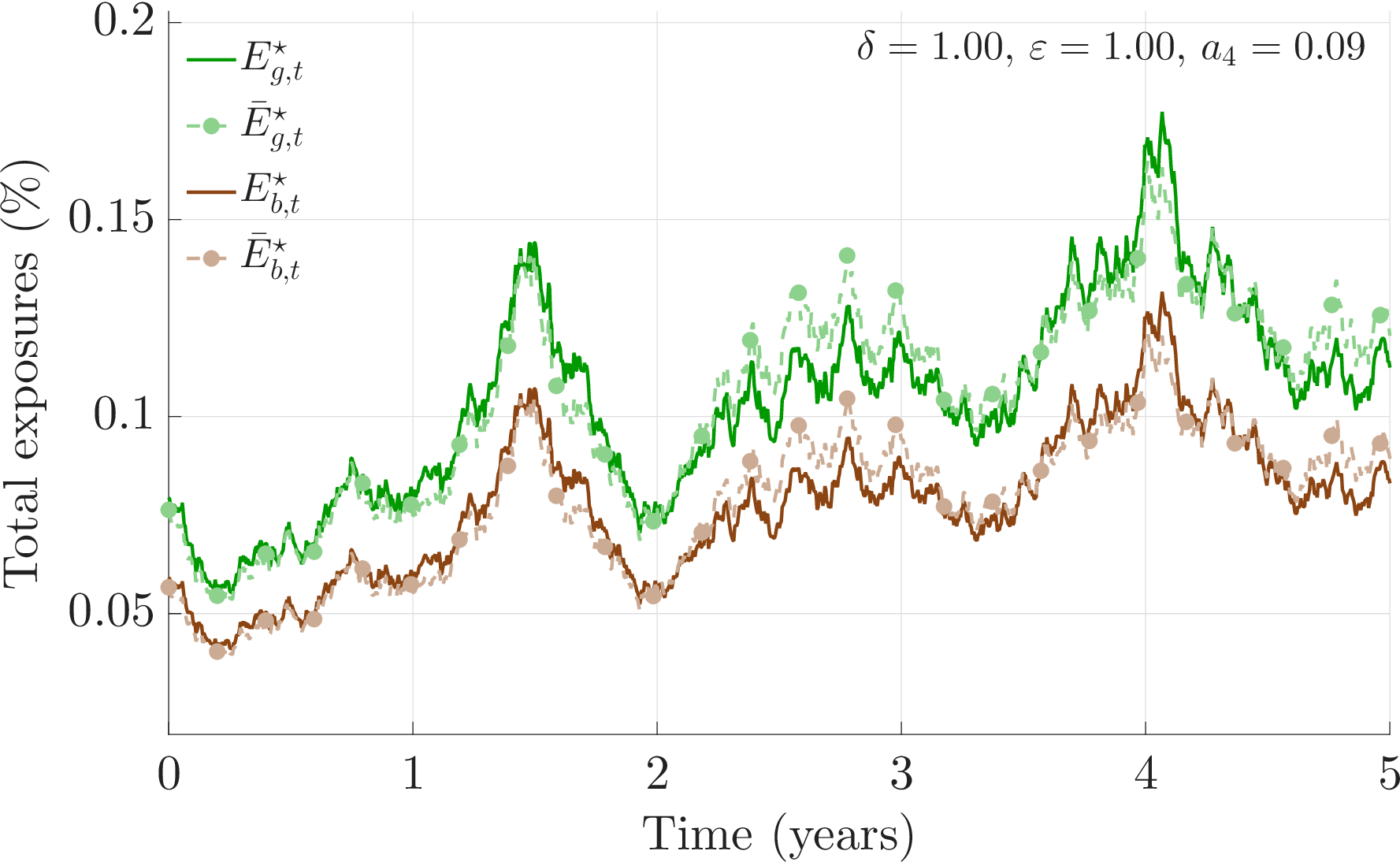}
\caption{Trajectories of the optimal multiplier under full and partial information (top panels) and of the corresponding total optimal exposure to green and brown stocks (bottom panels), for $\delta=1$ and $\varepsilon=1$. Each column corresponds to a different value of $a_4$, with the central column corresponding to the baseline configuration in Table \ref{tab:model_params}. In the bottom panels, the solid green (resp. brown) line represents the total optimal exposure to green (resp. brown) stocks under full information, while the dotted light green (resp. light brown) line represents the corresponding exposure under partial information.}\label{fig:DYNAMIC_EXPOSURES_A_4}
\end{figure}
\begin{figure}[H]
\centering
\includegraphics[width=0.32\linewidth]{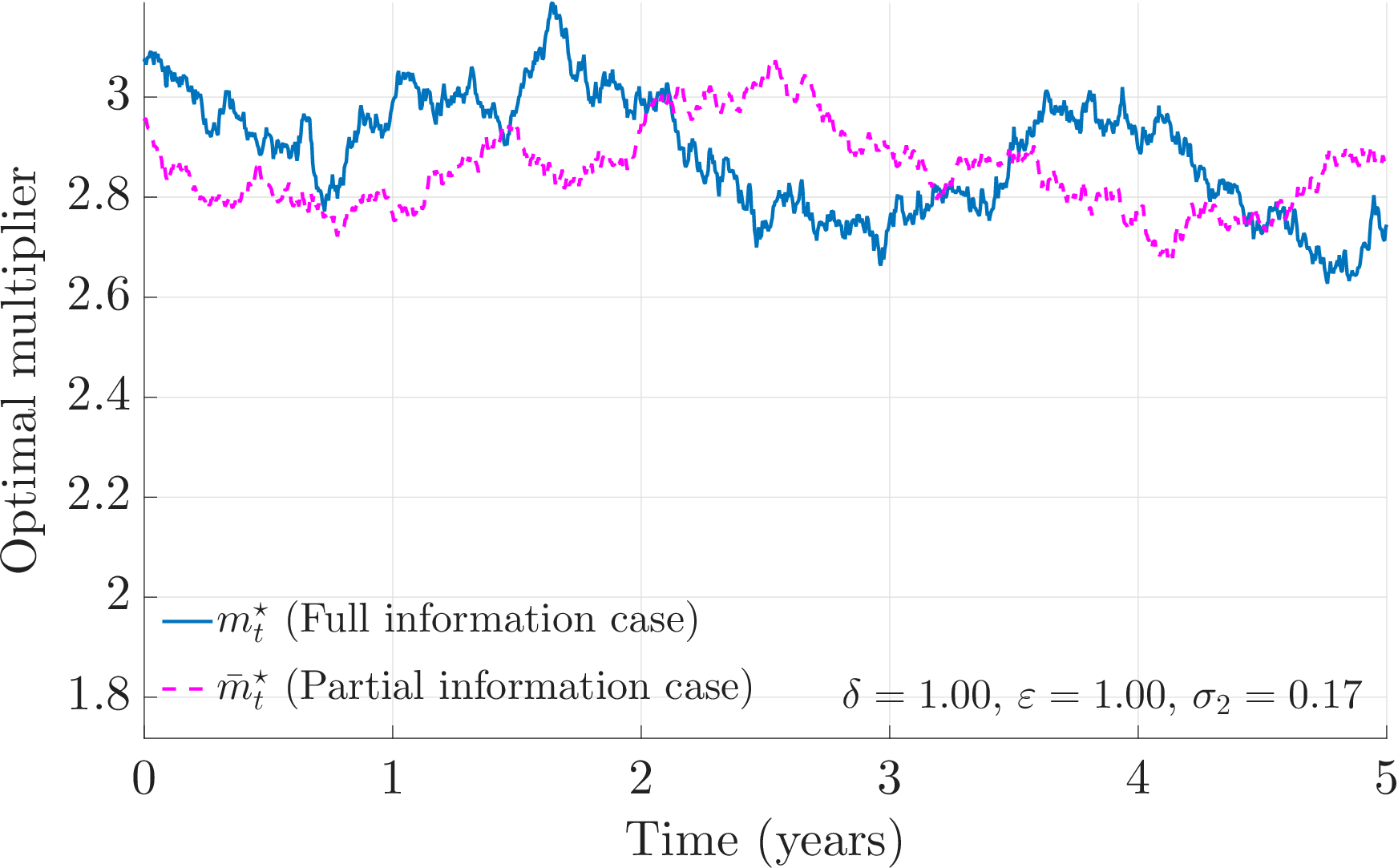} 
\vspace{.3cm}
\hfill 
\includegraphics[width=0.32\linewidth]{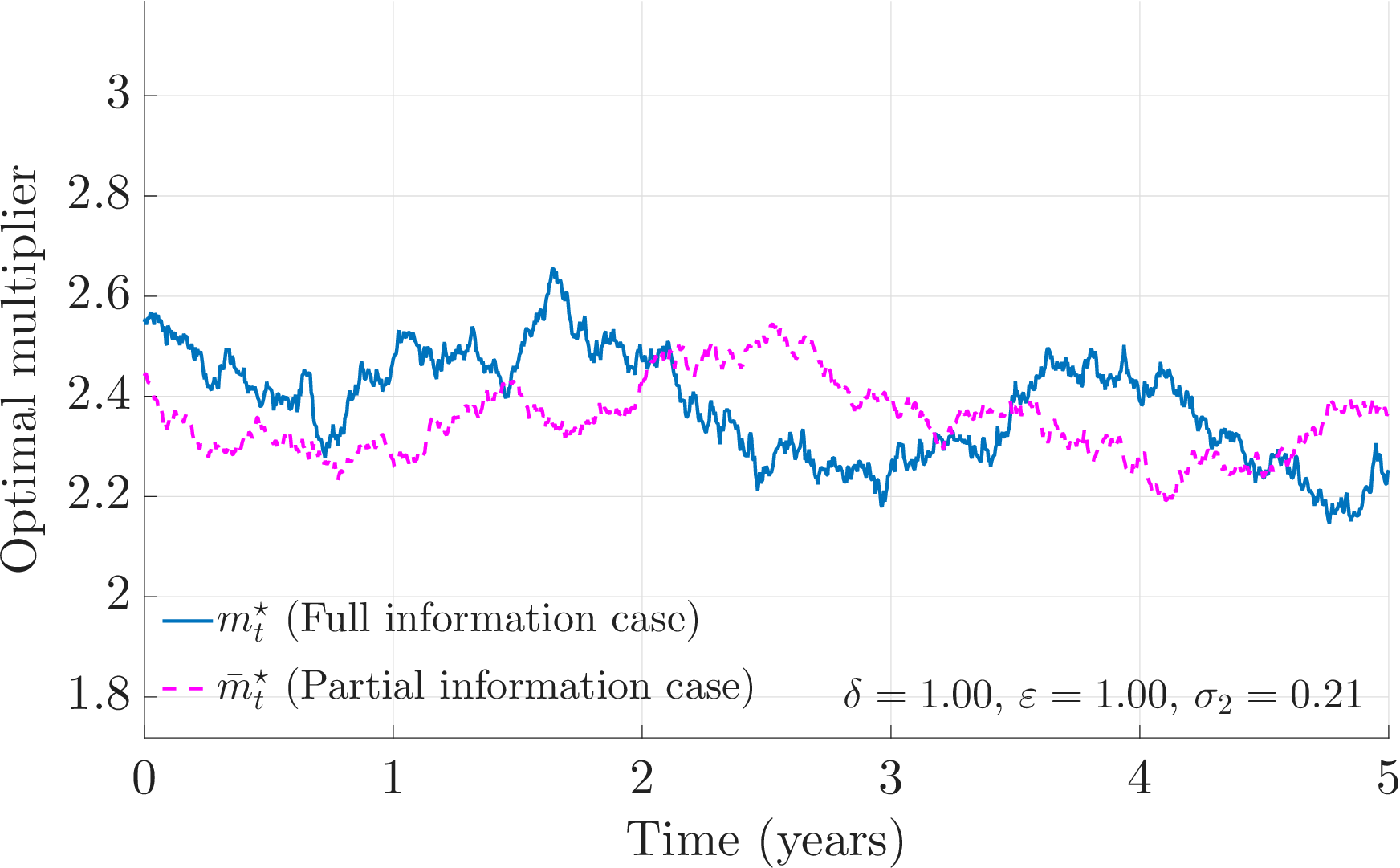}
\vspace{.3cm}
\hfill
\includegraphics[width=0.32\linewidth]{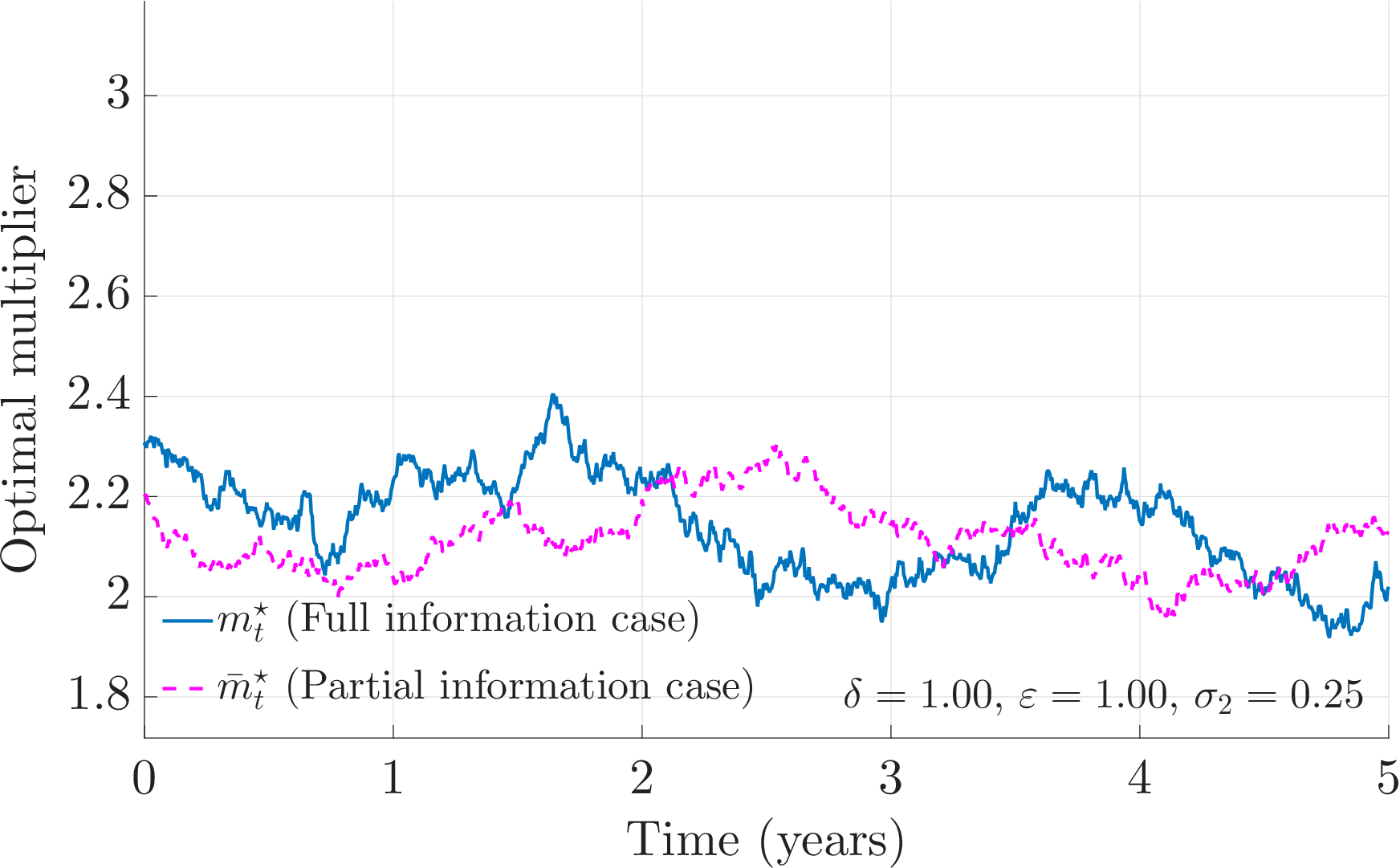}

\includegraphics[width=0.32\linewidth]{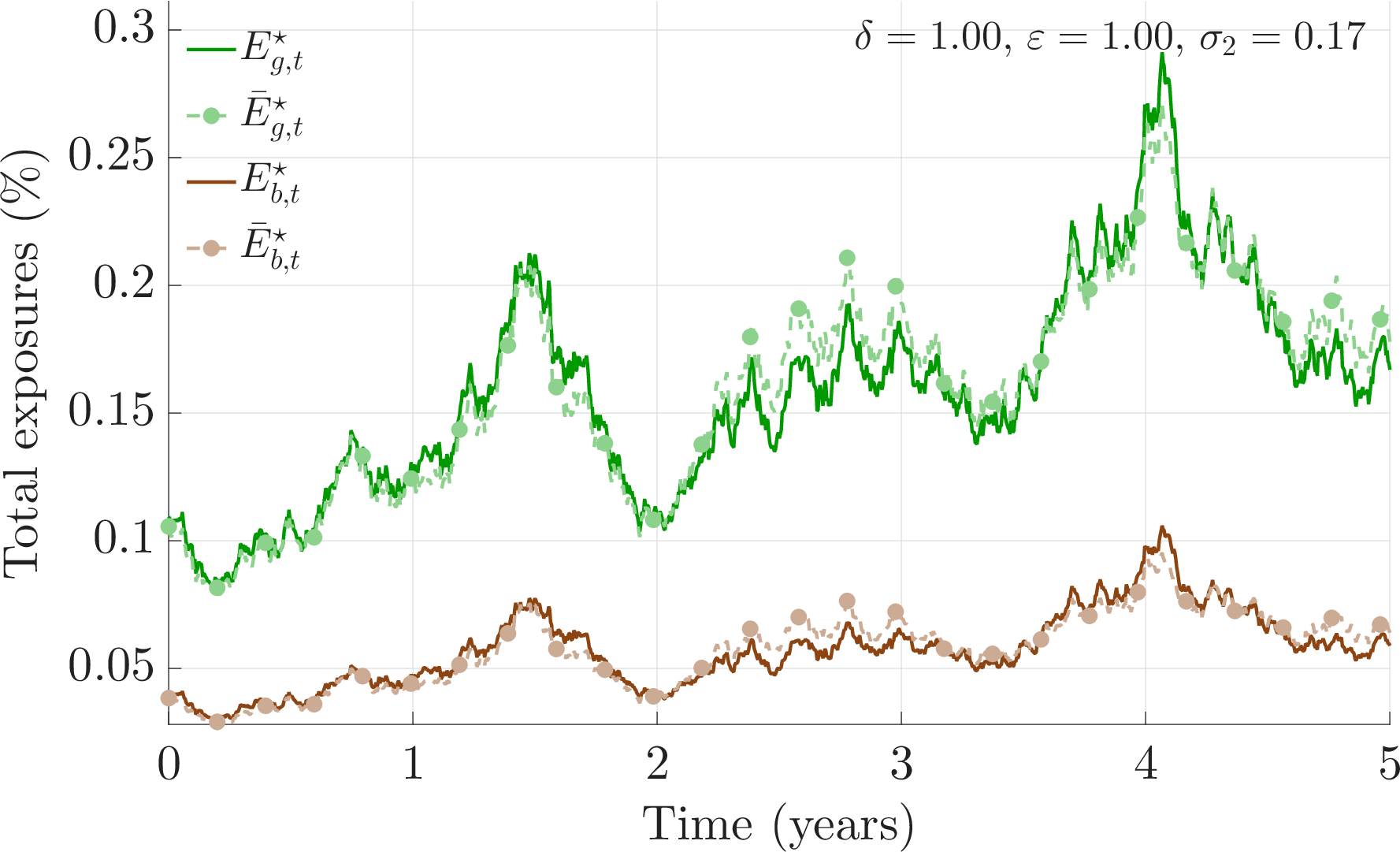} 
\vspace{.3cm}
\hfill 
\includegraphics[width=0.32\linewidth]{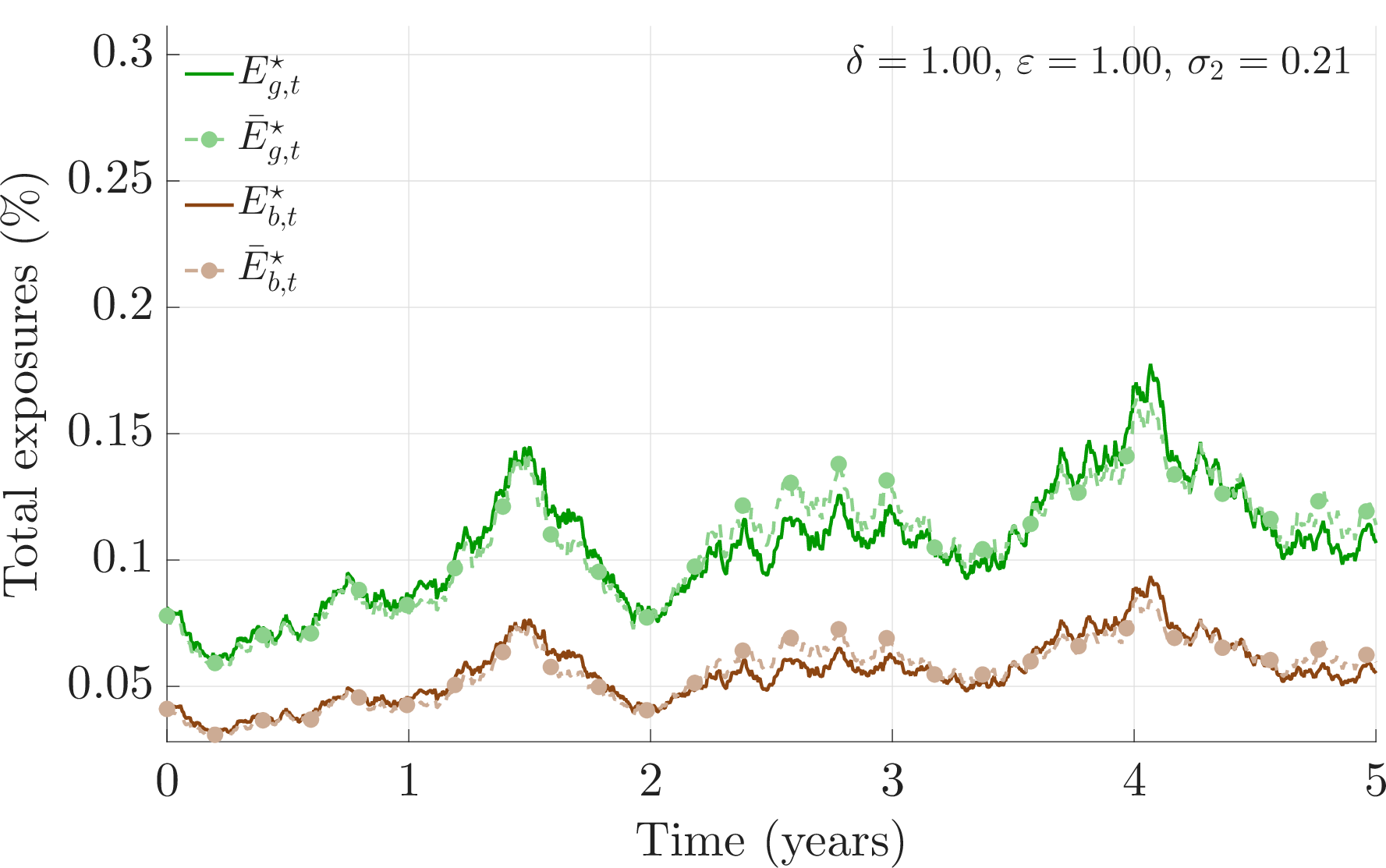}
\vspace{.3cm}
\hfill
\includegraphics[width=0.32\linewidth]{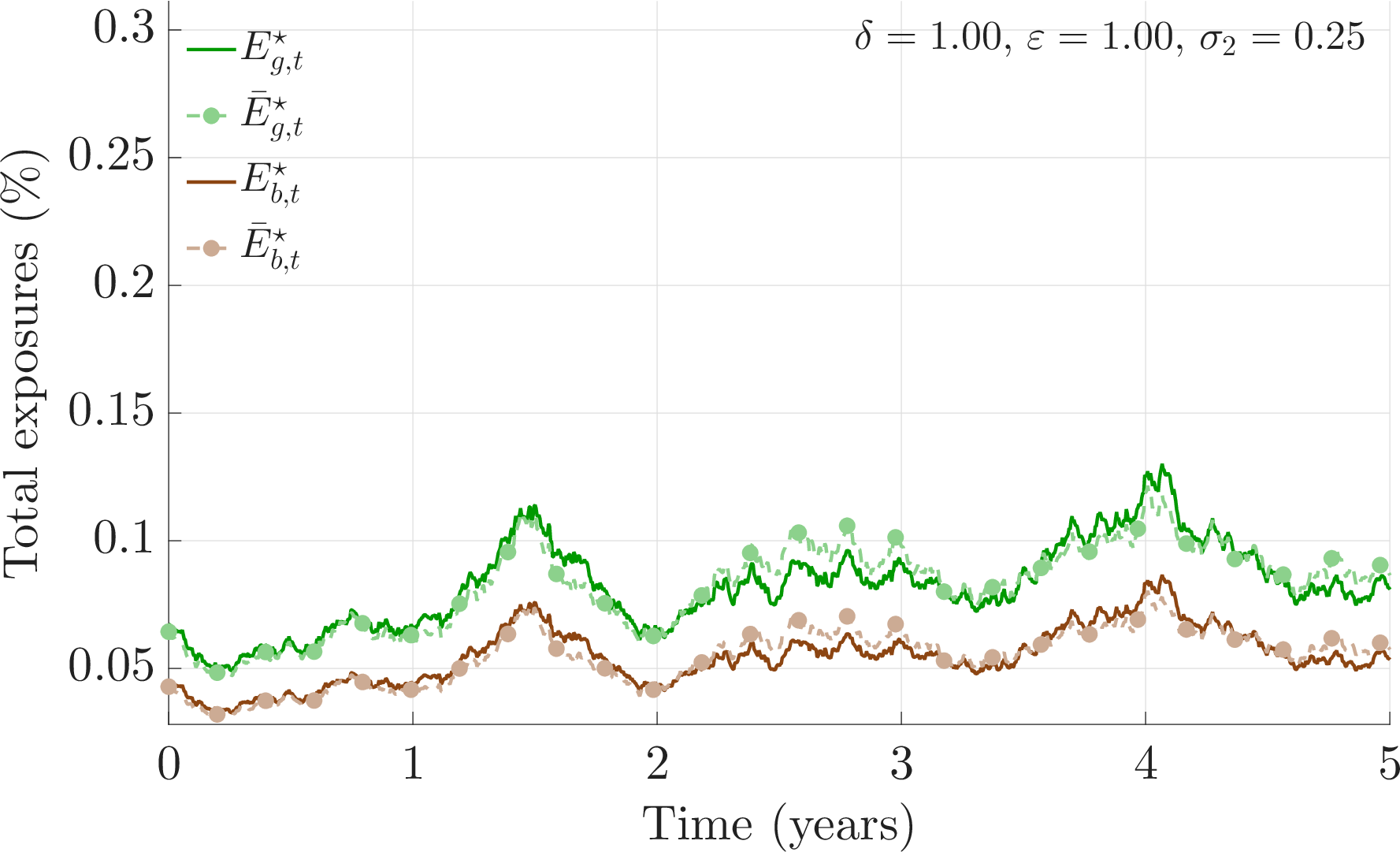}
\caption{Trajectories of the optimal multiplier under full and partial information (top panels) and of the corresponding total optimal exposure to green and brown stocks (bottom panels), for $\delta=1$ and $\varepsilon=1$. Each column corresponds to a different value of $\sigma_2$, with the central column corresponding to the baseline configuration in Table \ref{tab:model_params}. In the bottom panels, the solid green (resp. brown) line represents the total optimal exposure to green (resp. brown) stocks under full information, while the dotted light green (resp. light brown) line represents the corresponding exposure under partial information.
}\label{fig:DYNAMIC_EXPOSURES_sigma_2}
\end{figure}
\begin{figure}[H]
\centering
\includegraphics[width=0.32\linewidth]{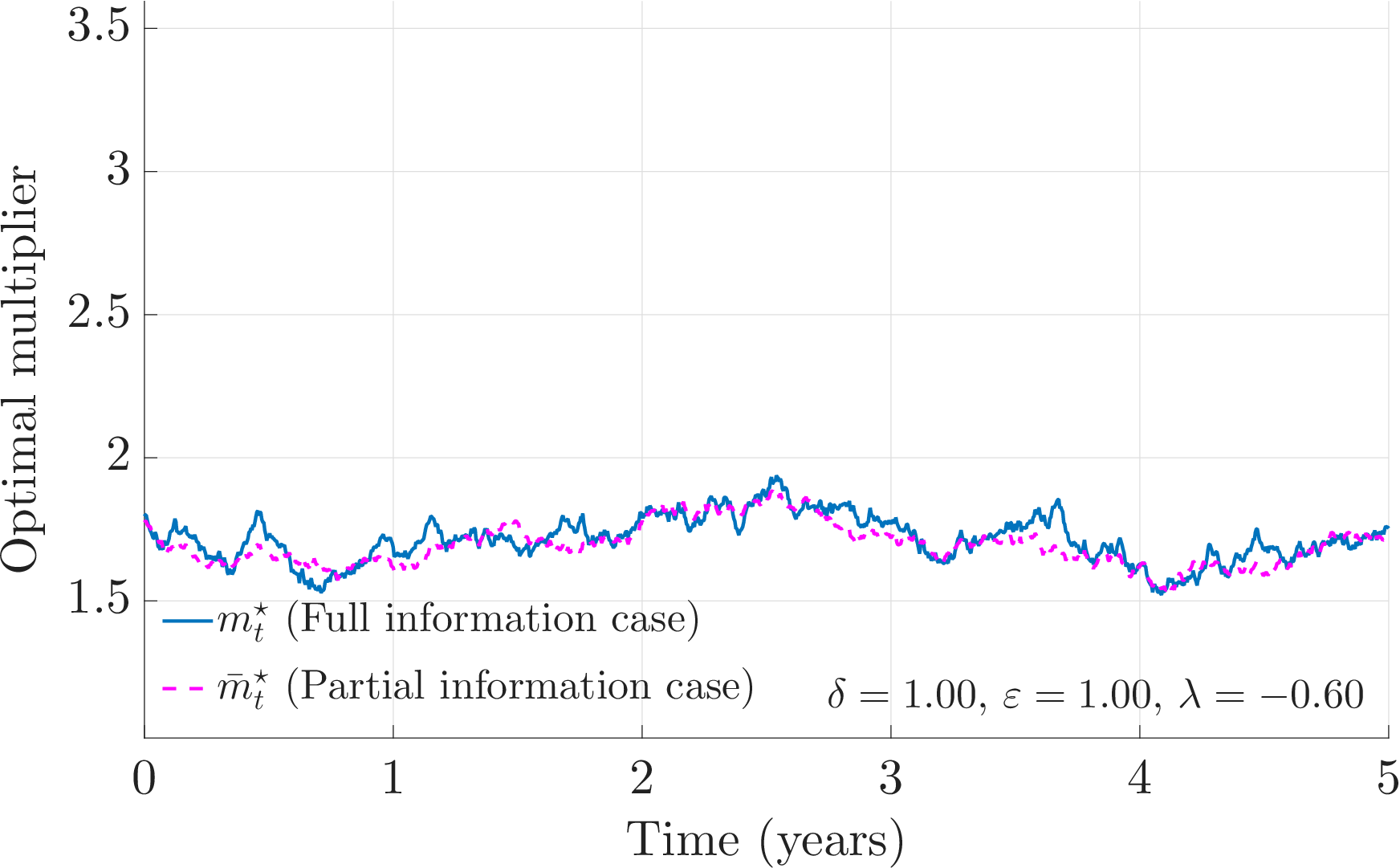} 
\vspace{.3cm}
\hfill 
\includegraphics[width=0.32\linewidth]{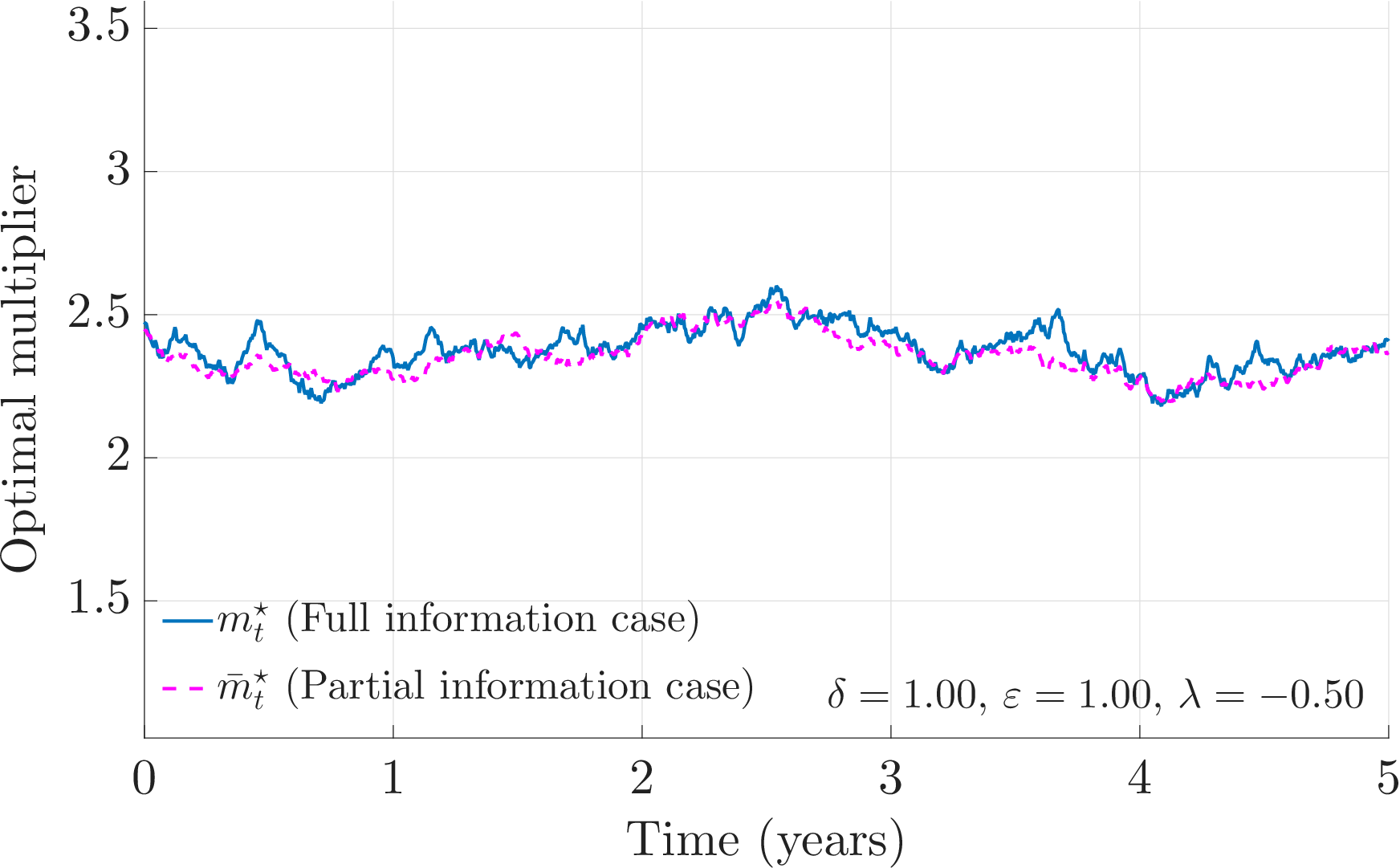}
\vspace{.3cm}
\hfill
\includegraphics[width=0.32\linewidth]{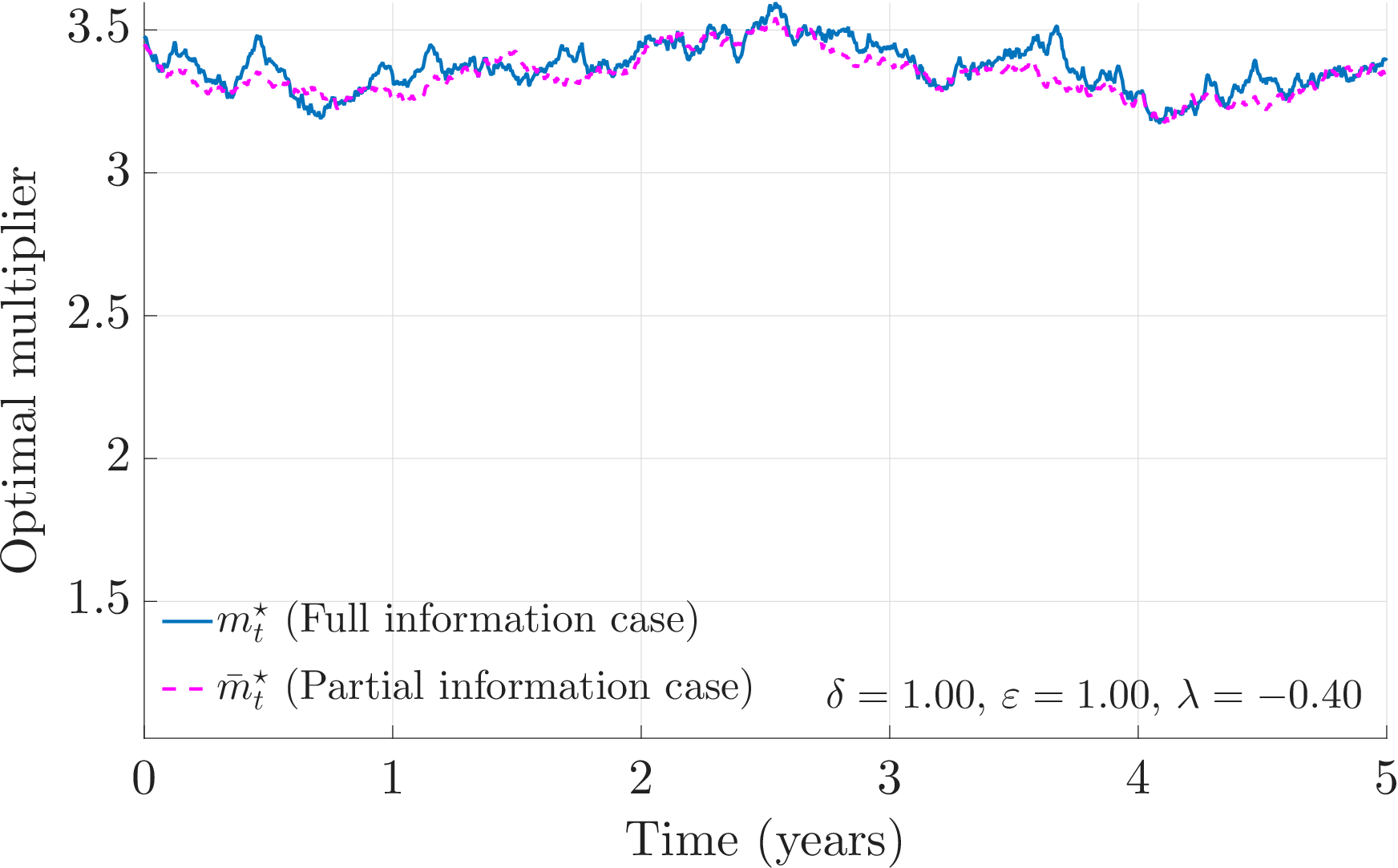}

\includegraphics[width=0.32\linewidth]{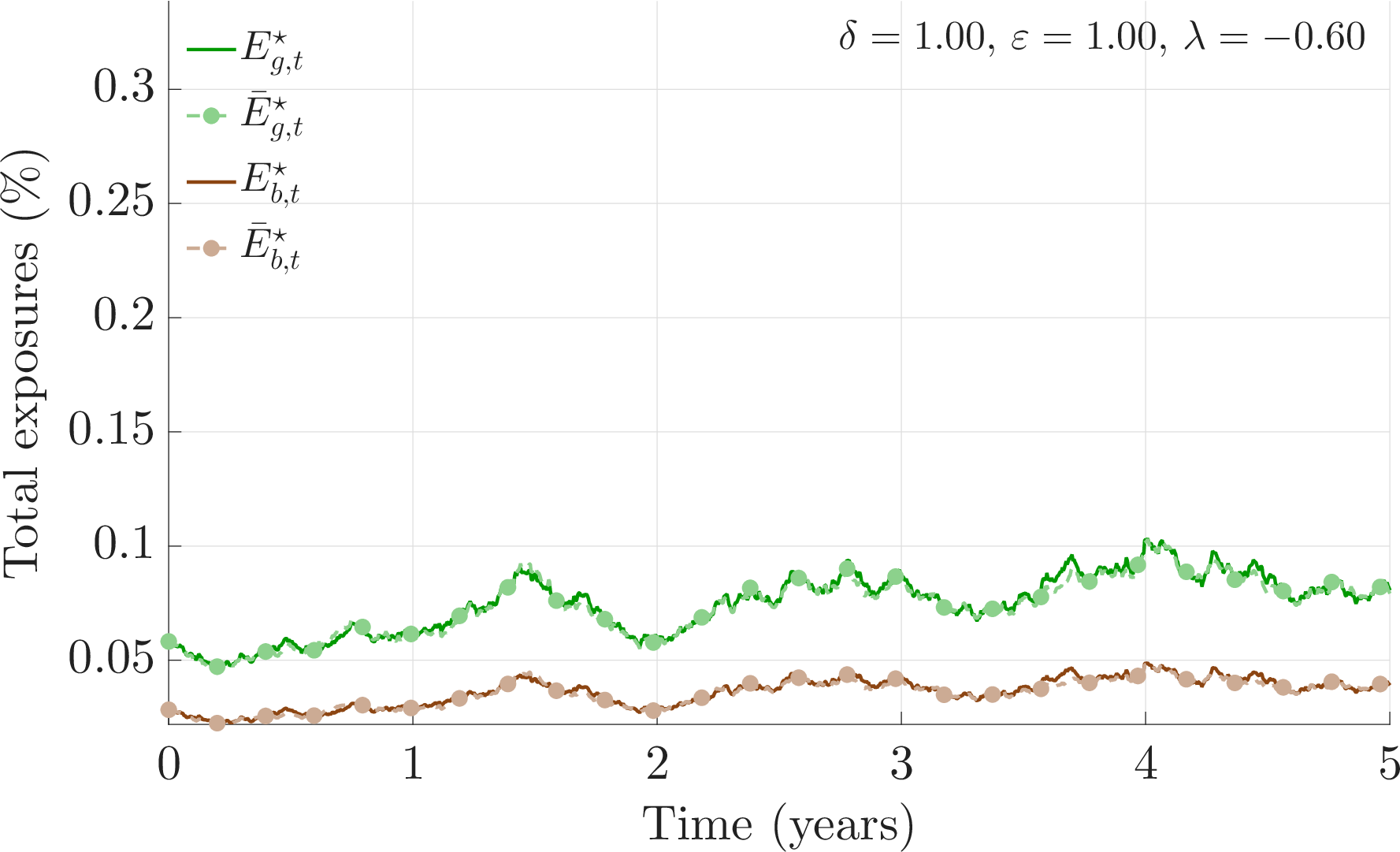} 
\vspace{.3cm}
\hfill 
\includegraphics[width=0.32\linewidth]{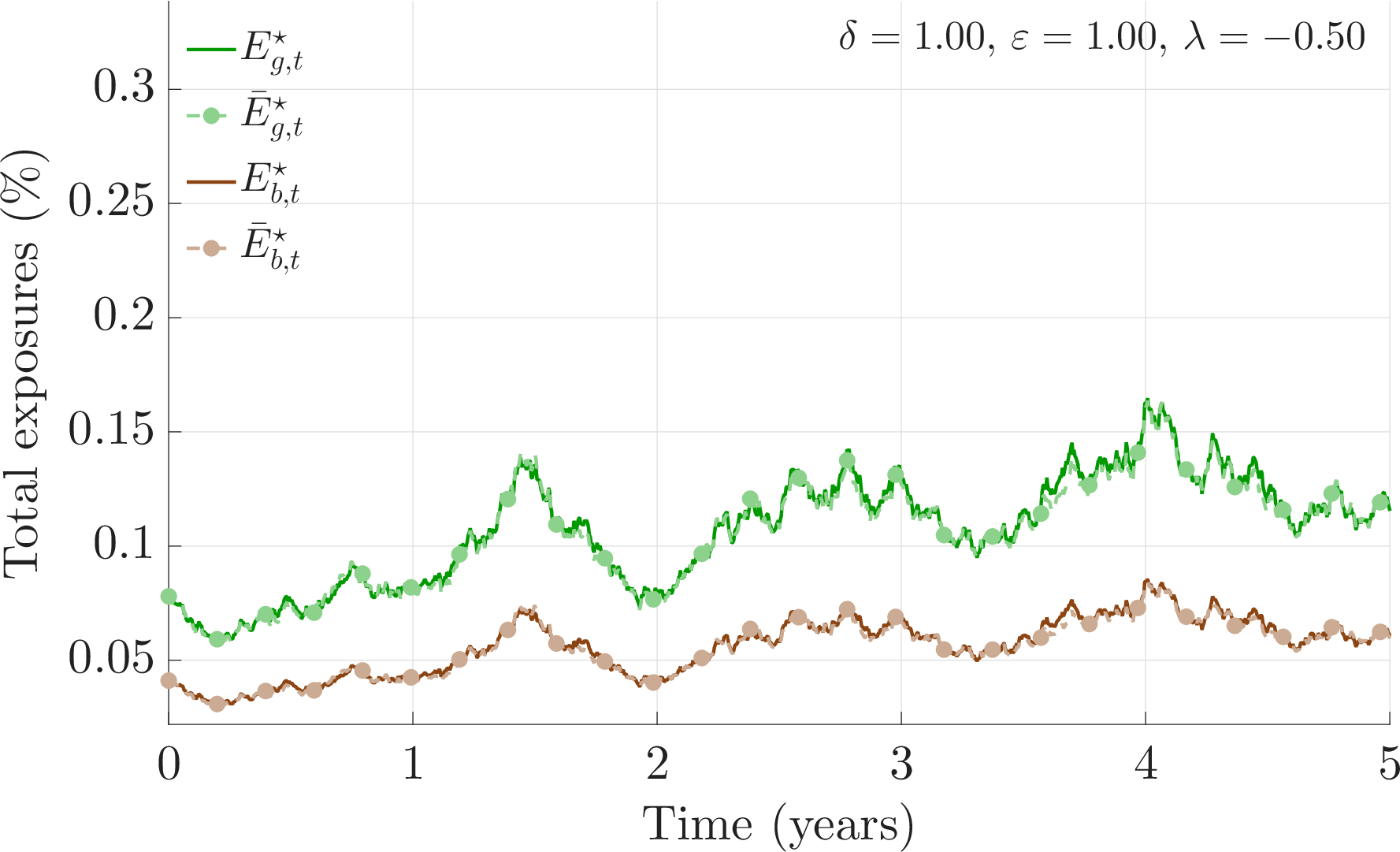}
\vspace{.3cm}
\hfill
\includegraphics[width=0.32\linewidth]{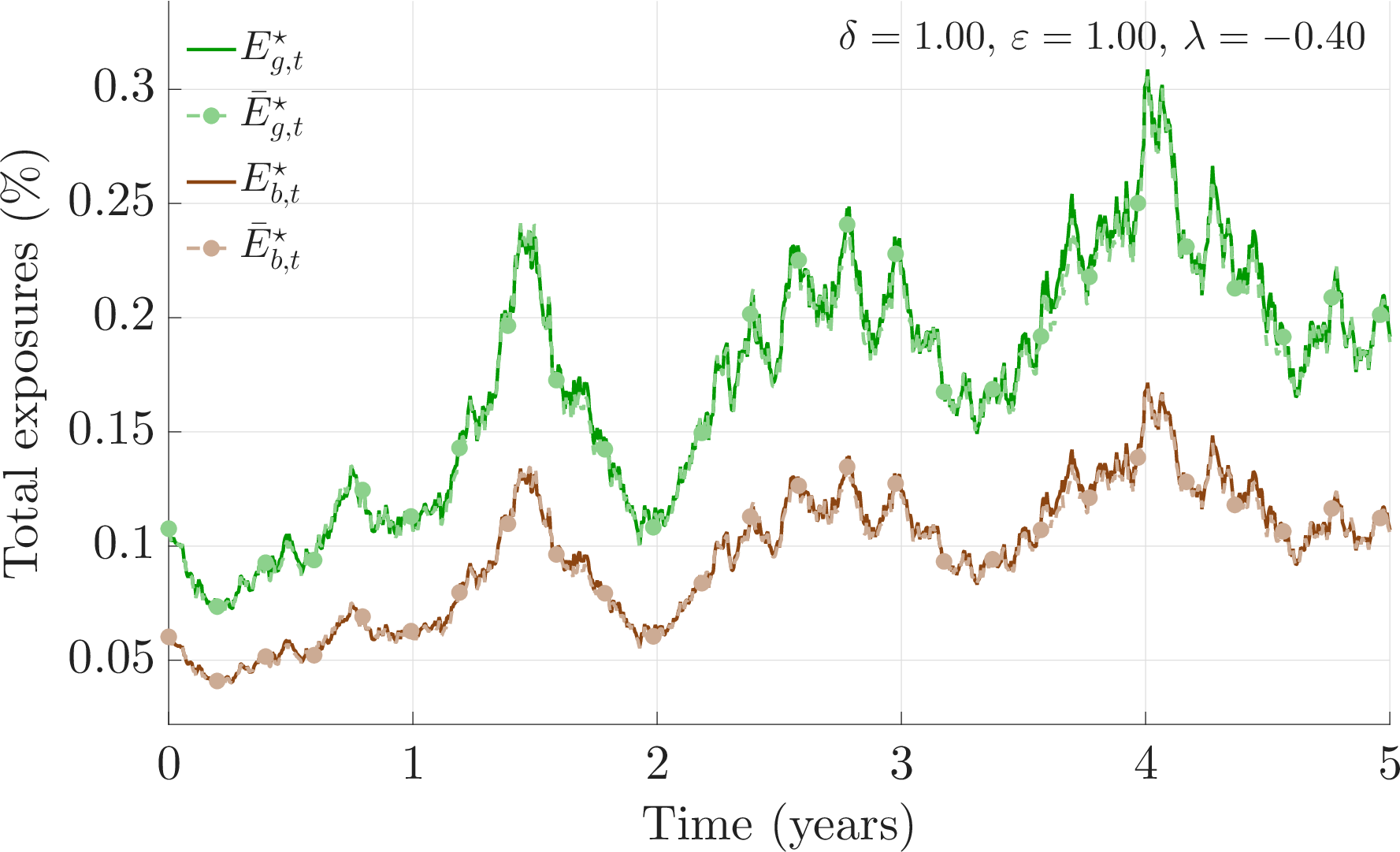}
\caption{Trajectories of the optimal multiplier under full and partial information (top panels) and of the corresponding total optimal exposure to green and brown stocks (bottom panels), for $\delta=1$ and $\varepsilon=1$. Each column corresponds to a different value of $\lambda$, with the central column corresponding to the baseline configuration in Table \ref{tab:model_params}. In the bottom panels, the solid green (resp. brown) line represents the total optimal exposure to green (resp. brown) stocks under full information, while the dotted light green (resp. light brown) line represents the corresponding exposure under partial information.}\label{fig:DYNAMIC_EXPOSURES_LAMBDA}
\end{figure}
\begin{figure}[H]
\centering
\includegraphics[width=0.32\linewidth]{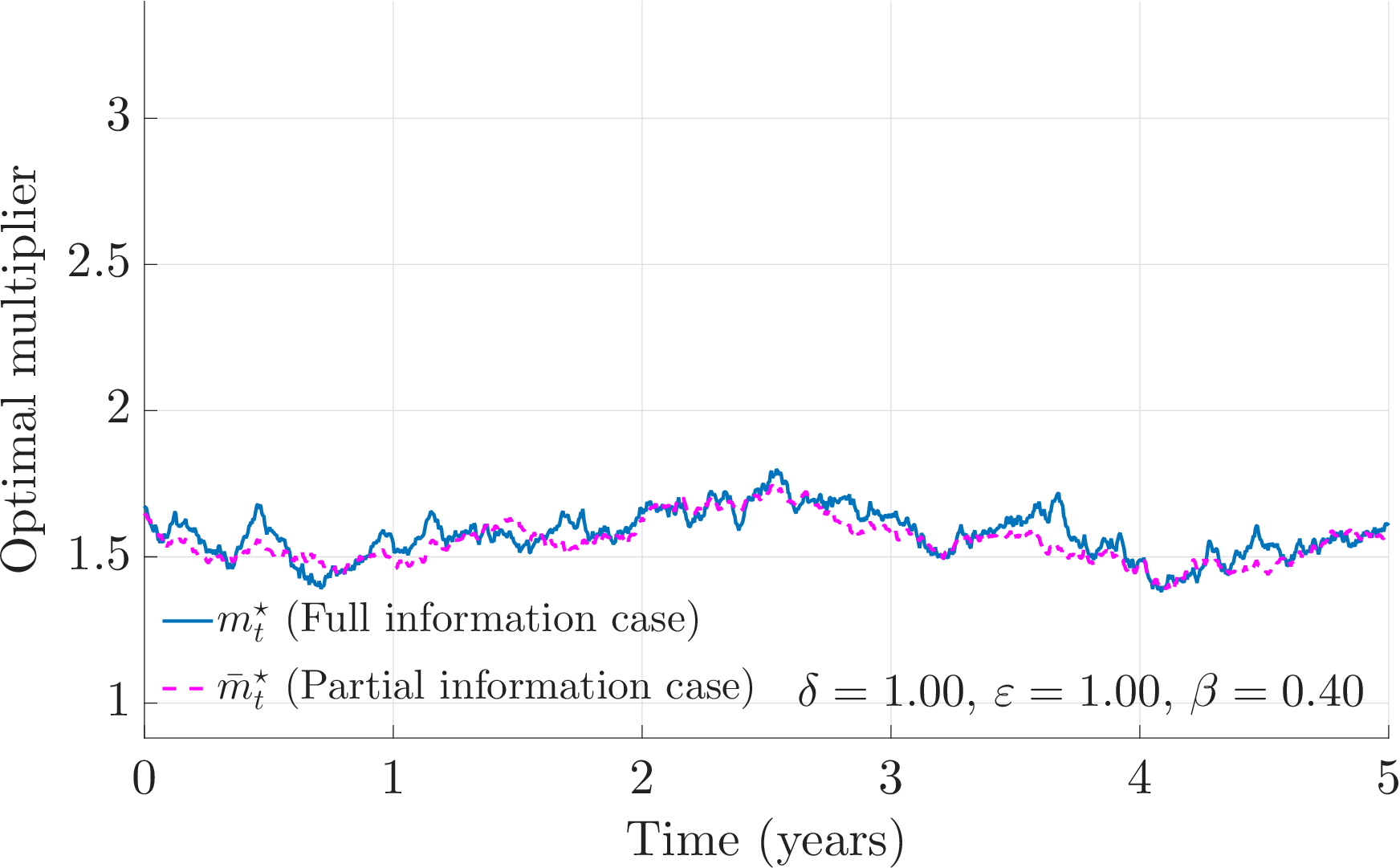} 
\vspace{.3cm}
\hfill 
\includegraphics[width=0.32\linewidth]{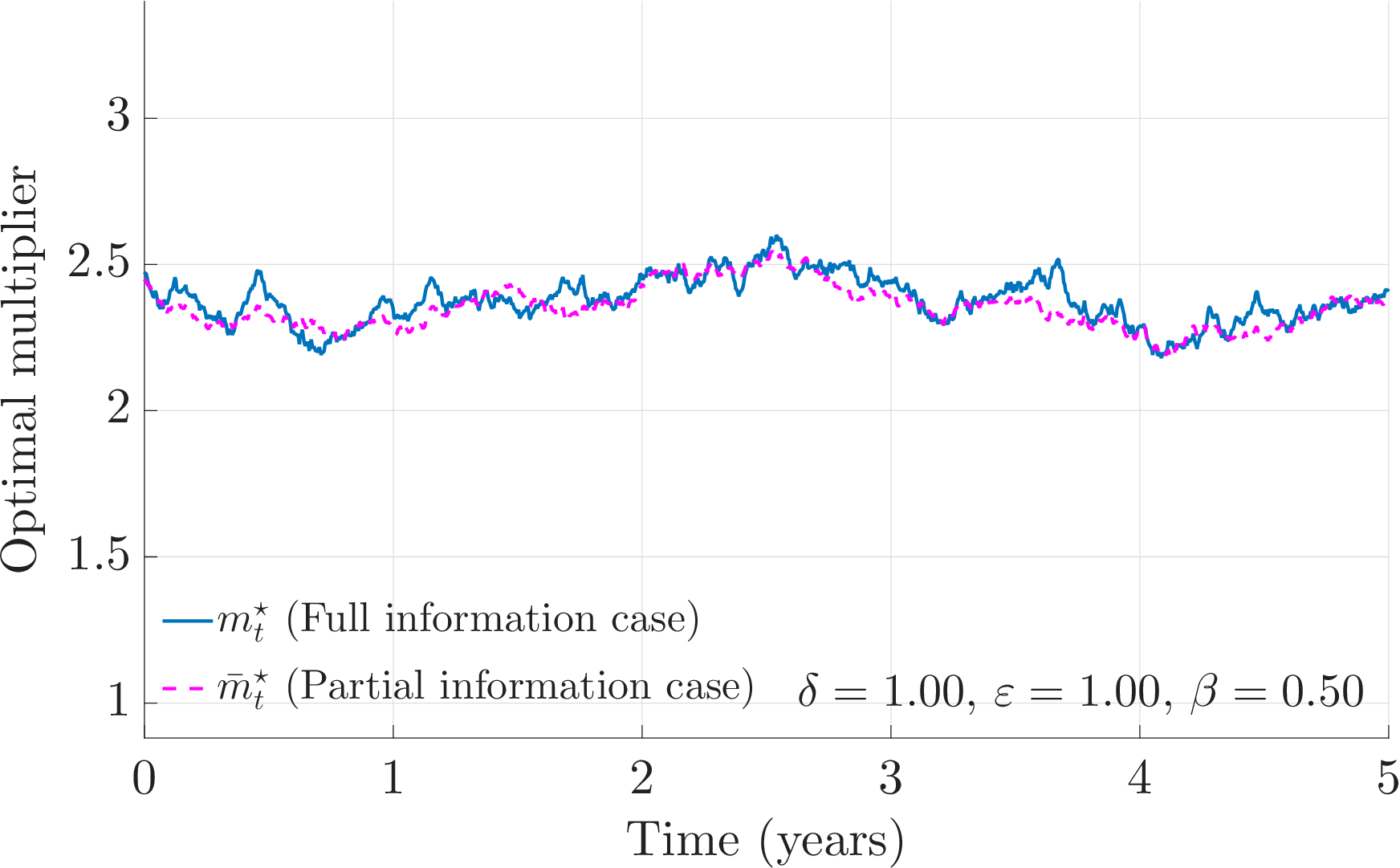}
\vspace{.3cm}
\hfill
\includegraphics[width=0.32\linewidth]{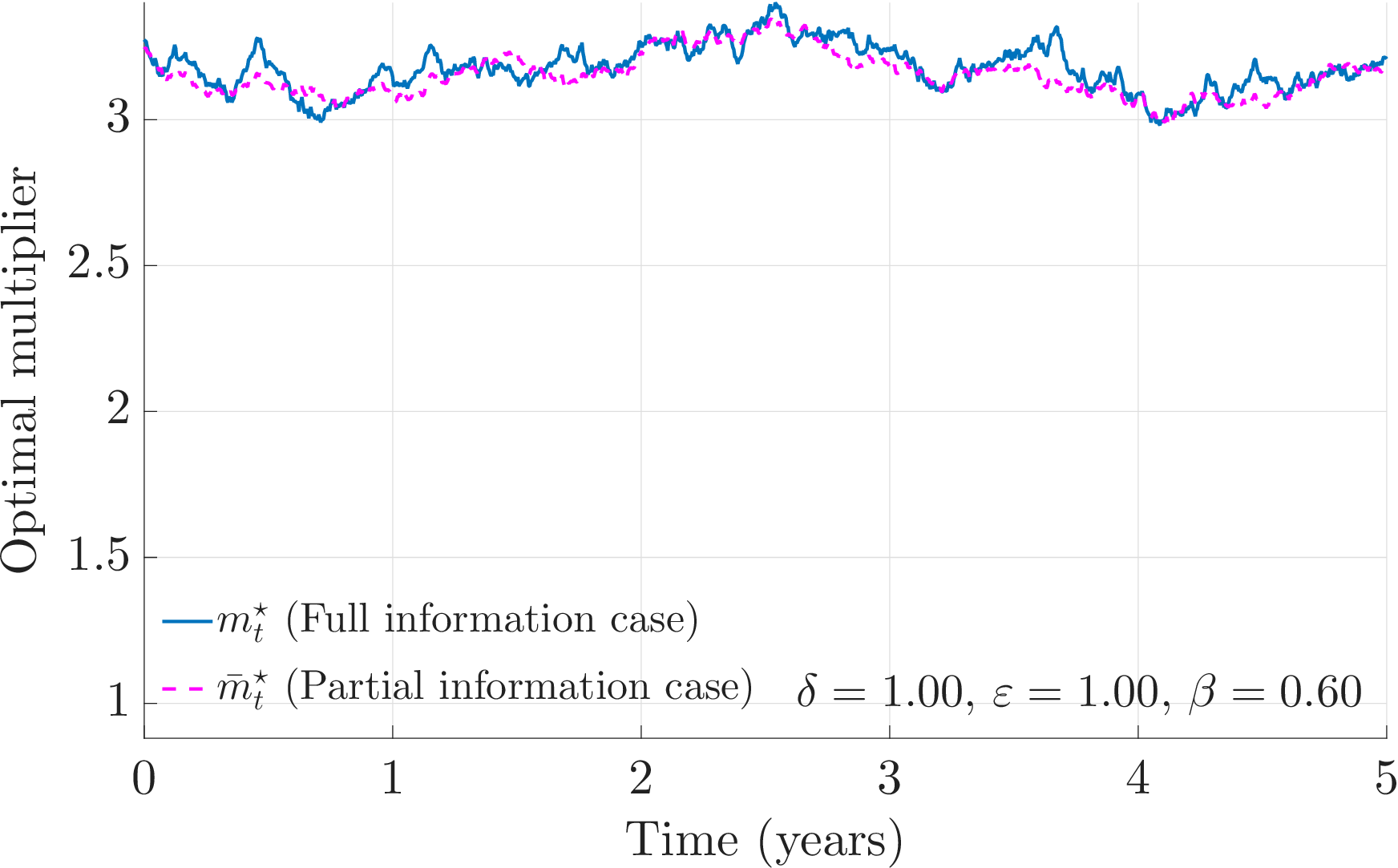}

\includegraphics[width=0.32\linewidth]{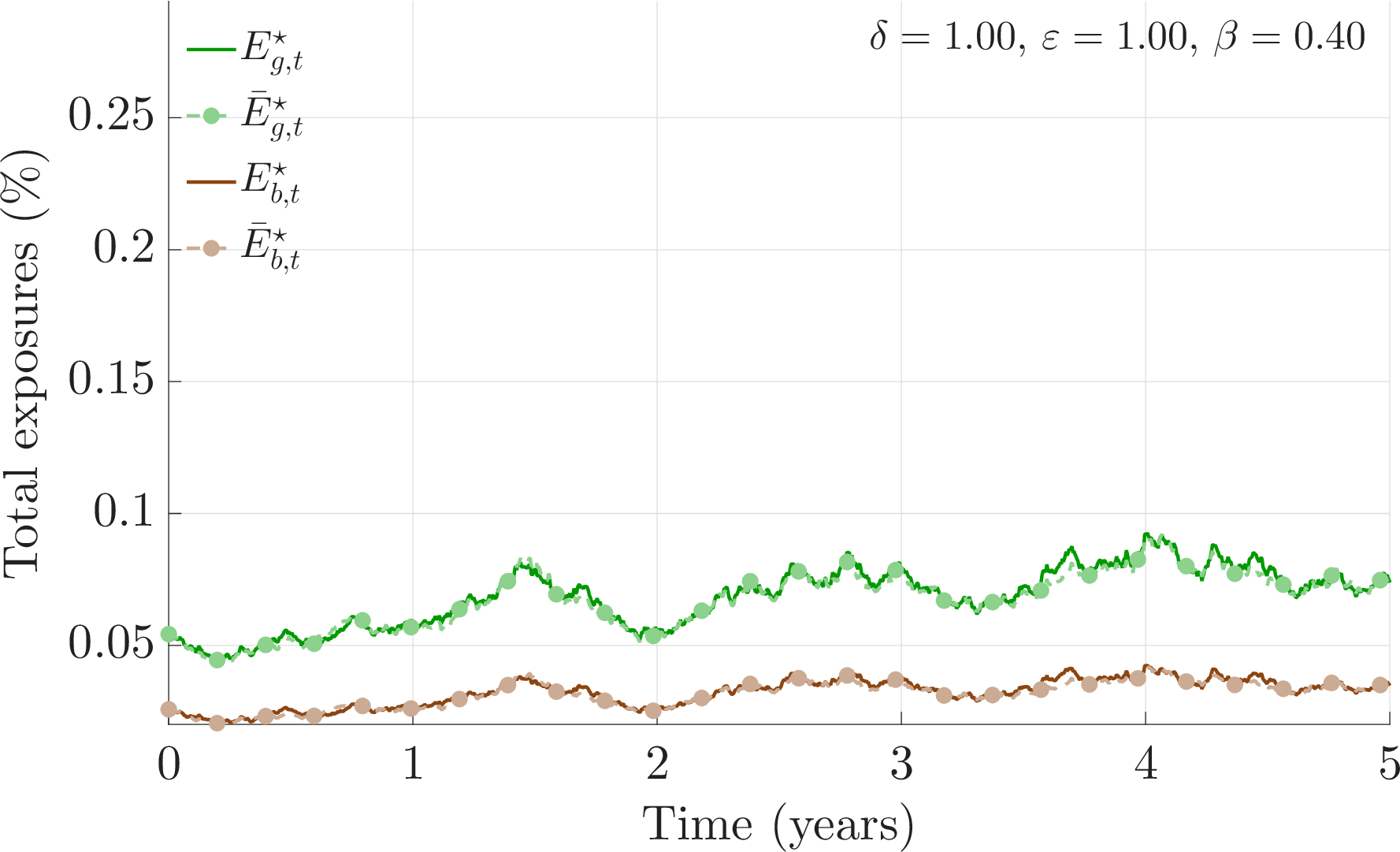} 
\vspace{.3cm}
\hfill 
\includegraphics[width=0.32\linewidth]{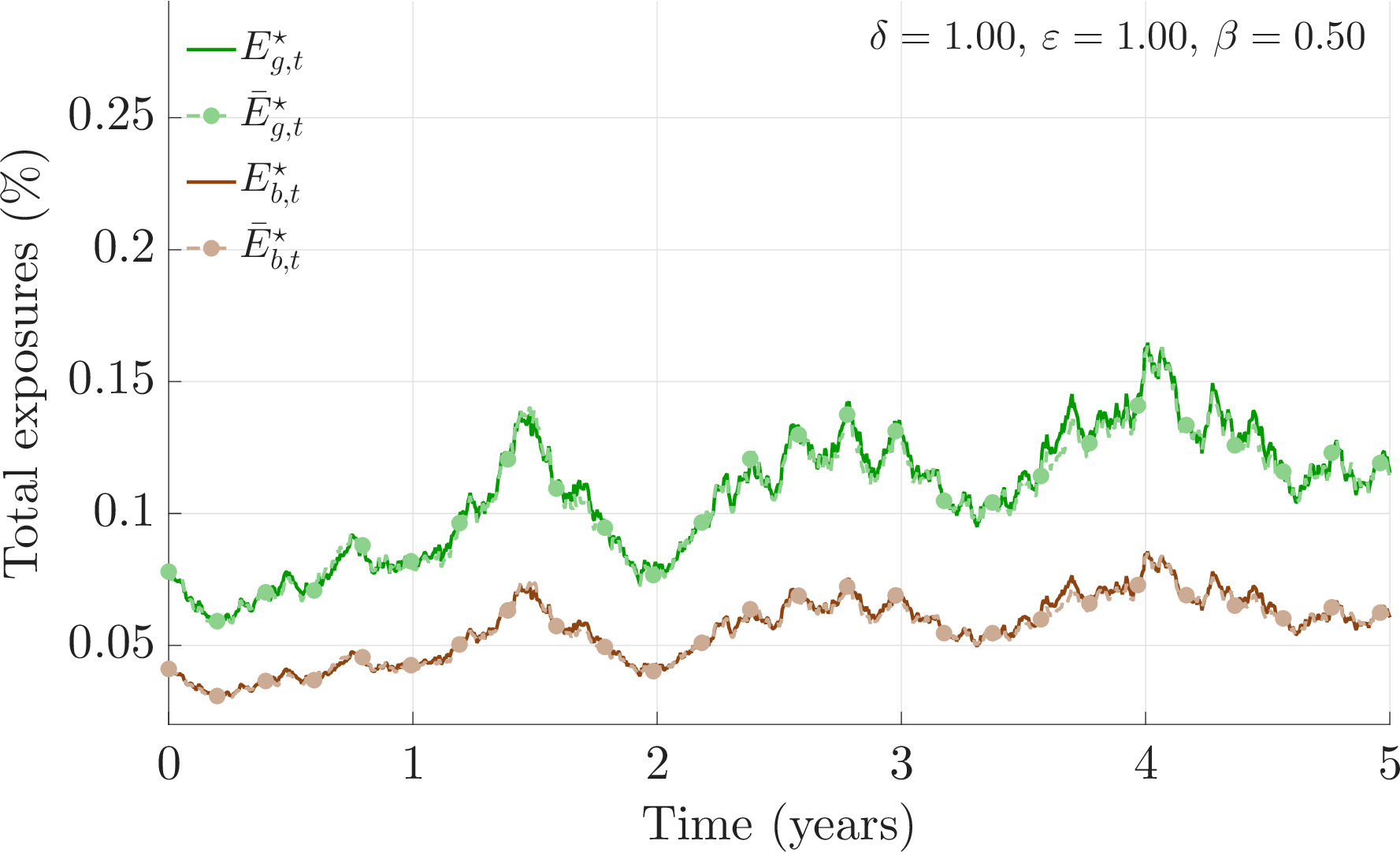}
\vspace{.3cm}
\hfill
\includegraphics[width=0.32\linewidth]{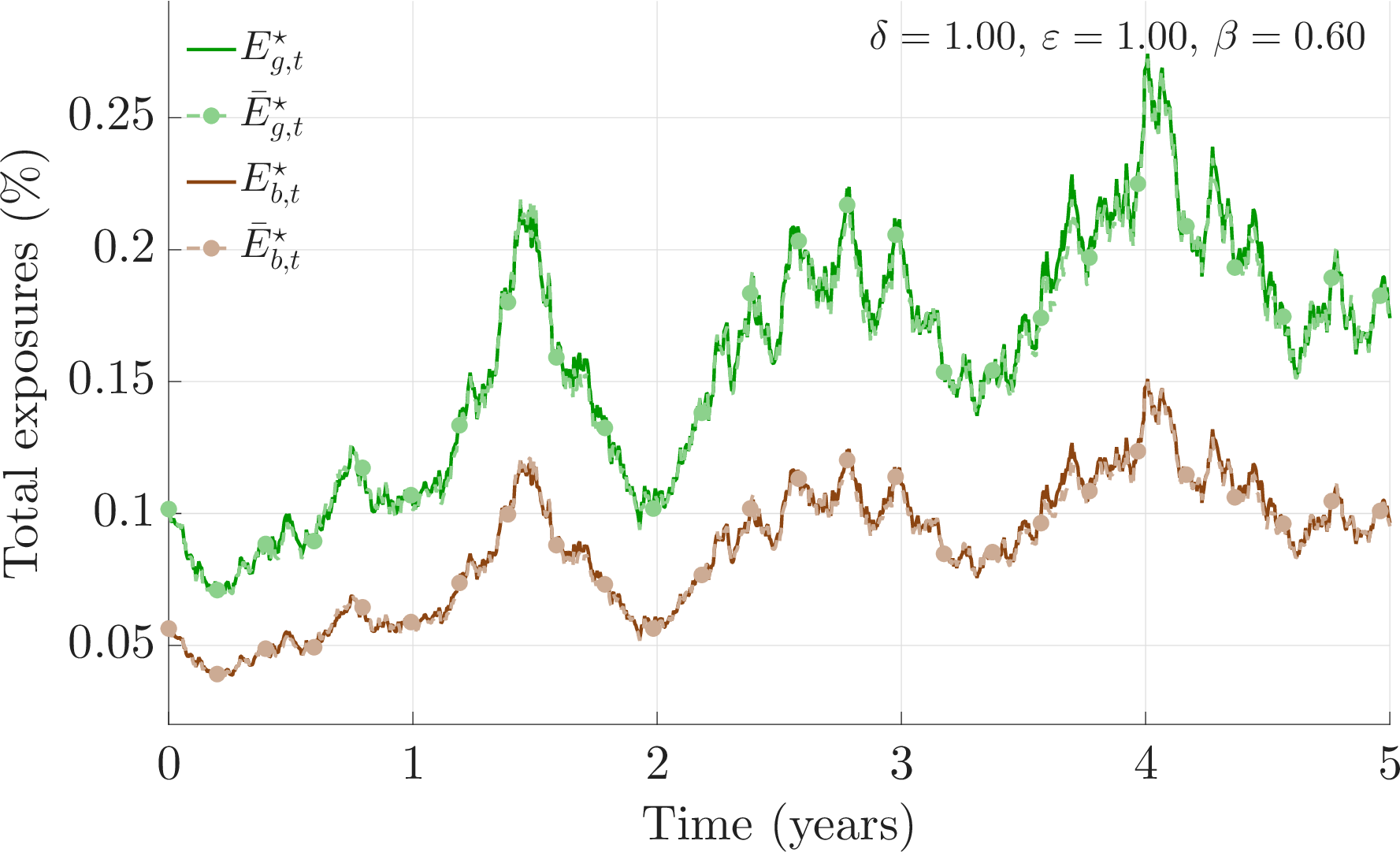}
\caption{Trajectories of the optimal multiplier under full and partial information (top panels) and of the corresponding total optimal exposure to green and brown stocks (bottom panels), for $\delta=1$ and $\varepsilon=1$. Each column corresponds to a different value of $\beta$, with the central column corresponding to the baseline configuration in Table~6.1. In the bottom panels, the solid green (resp. brown) line represents the total optimal exposure to green (resp. brown) stocks under full information, while the dotted light green (resp. light brown) line represents the corresponding exposure under partial information.}\label{fig:DYNAMIC_EXPOSURES_BETA}
\end{figure}
\bibliographystyle{plainnat}

\end{document}